\definecolor{red}{rgb}{0.7,0.15,0.15}
\definecolor{green}{rgb}{0,0.5,0}
\definecolor{blue}{rgb}{0,0,0.7}
\newtheorem{Theorem}{Theorem}[part]
\newtheorem{Definition}{Definition}[part]
\newtheorem{Proposition}{Proposition}[part]
\newtheorem{Assumption}{Assumption}[part]
\newtheorem{Lemma}{Lemma}[part]
\newtheorem{Corollary}{Corollary}[part]
\newtheorem{Remark}{Remark}[part]
\makeatletter \@addtoreset{equation}{section}
\def \E{\mathbb{E}}
\def \F{\mathbb{F}}
\def \P{\mathbb{P}}
\def \R{\mathbb{R}}
\def \G{\mathbb{G}}
\def\Ac{{\cal A}}
\def\Dc{{\cal D}}
\def\Fc{{\cal F}}
\def\Gc{{\cal G}}
\def\Vc{{\cal V}}
\newcommand{\ds}{\displaystyle}
\def\reff#1{{\rm(\ref{#1})}}
 \title{Bank monitoring incentives under moral hazard and adverse selection 
\footnote{The authors gratefully acknowledges the support of the French Ministry of Foreign Affairs and the Merlion programme.}}
 \author{Nicol\'as {\sc Hern\'andez Santib\'a\~nez} \footnote{University of Michigan, nihernan@umich.edu.} \and Dylan {\sc Possama\"{i}} \footnote{Columbia University, IEOR, dp2917@columbia.edu. This author gratefully acknowledges the support of the ANR project Pacman, ANR-16-CE05-0027.}
 \and Chao {\sc Zhou} \footnote{Department of Mathematics, National University of Singapore, Singapore, matzc@nus.edu.sg. Research supported by NUS Grant R-146-000-179-133, Singapore MOE AcRF Grants R-146-000-219-112 and R-146-000-255-114.}}
             \date{\today}
\begin{document}

 \maketitle

\begin{abstract}In this paper, we extend the optimal securitisation model of Pag\`es \cite{pages2012bank} and Possama\"i and Pag\`es \cite{pages2014mathematical} between an investor and a bank to a setting allowing both moral hazard and adverse selection. Following the recent approach to these problems of Cvitani\'c, Wan and Yang \cite{cvitanic2013dynamics}, we characterise explicitly and rigorously the so-called credible set of the continuation and temptation values of the bank, and obtain the value function of the investor as well as the optimal contracts through a recursive system of first-order variational inequalities with gradient constraints. We provide a detailed discussion of the properties of the optimal menu of contracts.

\vspace{3mm}

\noindent{\bf Key words:} bank monitoring, securitisation, moral hazard, adverse selection, principal--agent problem
\vspace{3mm}

\noindent{\bf AMS 2000 subject classification}: 60H30, 91G40
\vspace{3mm}

\noindent{\bf JEL classifications:} G21, G28, G32

\end{abstract}

\section{Introduction}

Principal--Agent problems with moral hazard have an extremely rich history, dating back to the early static models of the 70s, see among many others Zeckhauser \cite{zeckhauser1970medical}, Spence and Zeckhauser \cite{spence1971insurance}, or Mirrlees \cite{mirrlees1972population,mirrlees1974notes,mirrlees1976optimal,mirrlees1999theory}, as well as the seminal papers by Grossman and Hart \cite{grossman1983analysis}, Jewitt, \cite{jewitt1988justifying}, Holmstr\"om \cite{holmstrom1979moral} or Rogerson \cite{rogerson1985first}. If moral hazard results from the inability of the Principal to monitor, or to contract upon, the actions of the Agent, there is a second fundamental feature of the Principal-Agent relationship which has been very frequently studied in the literature, namely that of adverse selection, corresponding to the inability to observe private information of the Agent, which is often referred to as his type. In this case, the Principal offers to the Agent a menu of contracts, each having been designed for a specific type. The so-called {\it revelation principle}, states then that it is always optimal for the Principal to propose menus for which it is optimal for the Agent to truthfully reveal his type. Pioneering research in the latter direction were due to Mirrlees \cite{mirrlees1971exploration}, Mussa and Rosen \cite{mussa1978monopoly}, Roberts \cite{roberts1979welfare}, Spence \cite{spence1980multi}, Baron and Myerson \cite{baron1982regulating}, Maskin and Riley \cite{maskin1984monopoly}, Guesnerie and Laffont \cite{guesnerie1984complete}, and later by Salani\'e \cite{salanie1990adverse}, Wilson \cite{wilson1993non}, or Rochet and Chon\'e \cite{rochet1998ironing}. However, despite the early realisation of the importance of considering models involving both these features at the same time, the literature on Principal-Agent problems involving both moral hazard and adverse selection has remained, in comparison, rather scarce. As far as we know, they were considered for the first time by Antle \cite{antle1980moral}, in the context of auditor contracts, and then, under the name of generalised Principal-Agent problems, by Myerson \cite{myerson1982optimal}\footnote{There were earlier attempts in this direction, but providing a less systematic treatment of the problem; see the income tax model of Mirrlees \cite{mirrlees1971exploration}, the Soviet incentive scheme study of Weitzman \cite{weitzman1976new}, or the papers by Baron and Holmstr\"om \cite{baron1980investment} and Baron \cite{baron1982model}.}. These generalised agency problems were then studied in a wide variety of economic settings, notably by Dionne and Lasserre \cite{dionne1985dealing}, Laffont and Tirole \cite{laffont1986using}, McAfee and McMillan \cite{mcafee1986bidding}, Picard \cite{picard1987design}, Baron and Besanko \cite{baron1987monitoring,baron1988monitoring}, Melumad and Reichelstein \cite{melumad1987centralization,melumad1989value}, Guesnerie, Picard and Rey \cite{guesnerie1989adverse}, Page \cite{page1991optimal}, Zou \cite{zou1992threat}, Caillaud, Guesnerie and Rey \cite{caillaud1992noisy}, Lewis and Sappington \cite{lewis1995optimal}, or Bhattacharyya \cite{bhhha1997div}\footnote{We refer the interested reader to the more recent works of Faynzilberg and Kumar \cite{faynzilberg2000generalized}, Theilen \cite{theilen2003simultaneous}, Jullien, Salani\'e and Salani\'e \cite{jullien2007screening}, Gottlieb and Moreira \cite{gottlieb2014simultaneous}.}.

\vspace{0.5em}
All the previously mentioned models are either in static or discrete--time settings. The first study of the continuous time problem with moral hazard and adverse selection was made by Sung \cite{sung2005optimal}, in which the author extends the seminal finite horizon and continuous--time model of Holmstr\"om and Milgrom \cite{holmstrom1987aggregation}. A more recent work, to which our paper is mostly related has been treated by Cvitani\'c, Wan and Yang \cite{cvitanic2013dynamics}, where the authors extend the famous infinite horizon model of Sannikov \cite{sannikov2008continuous} to the adverse selection setting. If one of the main contributions of Sannikov \cite{sannikov2008continuous} was to have identified that the {\it continuation value} of the Agent was a fundamental state variable for the problem of the Principal, \cite{cvitanic2013dynamics} shows that in a context with both moral hazard and adverse selection, the Principal has also to keep track of the so-called {\it temptation value}, that is to say the continuation utility of the Agent who would not reveal his true type. Although close to the latter paper, our work is foremost an extension of the bank incentives model of Pag\`es and Possama\"i \cite{pages2014mathematical}, which studies the contracting problem between competitive investors and an impatient bank who monitors a pool of long term loans subject to Markovian contagion (we also refer the reader to the companion paper by Pag\`es \cite{pages2012bank} for the economic intuitions and interpretations of the model). 

\vspace{0.5em}
The home loan crash of 2008 has strongly highlighted the inherent weaknesses of the securitisation agreements created during the 2000s, and was at the heart of the decision from the US government to impose tight deadlines for the adoption of new and tighter regulations for credit risk retention. Among these, one of particular interest for us is the Dodd--Frank Act, which prescribes that sponsors retain at least five percent of the credit risk in most securitisation transactions. The purpose of \cite{pages2012bank,pages2014mathematical} was to study optimal securitisation when the sponsor remains involved with its retail originations, and can engage in unobservable actions that result in private benefits at the expense of performance. The assumption that the bank itself can have impact on the default rate of the pool over time is a metaphor for the distinction between its exogenous base quality and the endogenous default probability obtained after monitoring. Moral hazard then emerges because the bank has more "skin in the game" than the investors, and has the opportunity, {\it ex ante} and {\it ex post}, to exercise a (costly) monitoring of the non--defaulted loans. This is a stylised way to sum up all the actions that the bank can enter into to ensure itself of the solvability of the borrowers. There is much that the bank can do to improve performance over the life of a transaction. First, a strong quality control process helps lenders exercise due diligence in evaluating borrowers' current income, and keep track of those who might be getting closer to default. This is a surveillance action which has to be undertaken continuously, and not only prior to the inception of the contract. Second, the bank can efficiently assist troubled borrowers by acting early and firmly, before mortgages become seriously delinquent. The selection of bank employees in charge of these actions is also usually assumed to potentially affect loss severity by as much as 30\%. For instance, Agarwal et al. \cite{agarwal2014inconsistent} have put into light important and systematic changes in the default rates of state--chartered banks' real estate loans, when a so--called "rotation" policy between federal and state supervisors at predetermined time periods is put into place. This clearly shows  that banks are perfectly able to enter into corrective actions in the event of delinquencies, when they have incentives to do so.

\vspace{0.5em}
The findings of \cite{pages2012bank,pages2014mathematical} were that since the investors cannot observe the monitoring effort of the bank, they proposed CDS type contracts offering remuneration to the bank, and giving it incentives through postponement of payments and threat of stochastic liquidation of the contract (similarly to the seminal paper of Biais, Mariotti, Rochet and Villeneuve \cite{biais2010large}). In the present paper, we assume furthermore that there are two types of banks, which we call good and bad, co--existing in the market, differing by their efficiency in using their remuneration (or equivalently differing by their monitoring costs). Even if the investor is supposed to know the distribution of the type of banks, that is to say the probability with which the bank he is currently discussing with is good or bad, he cannot know for sure what her type is. Again, this is a stylised way to express the fact that "skin in the game" might significantly vary from one bank to the other. The fact that we consider only two types is mainly for simplicity and tractability, and because fully multidimensional screening problems are already extremely hard to solve in static one--period models, and except for specific models (see Section \ref{sec:prelim} for details), nothing more than existence of an optimal contract can be hoped for, see for instance Carlier \cite{carlier2001general}.

\vspace{0.5em}
Mathematically speaking, we follow the general dynamic programming approach of Cvitani\'c, Possama\"i and Touzi \cite{cvitanic2015dynamic}, as well as that on adverse selection problems initiated by \cite{cvitanic2013dynamics}. Intuitively, these approaches require first, using martingale (or more precisely backward SDEs) arguments, to solve the (non-Markovian) optimal control problem faced by the two types of banks when choosing contracts. This requires obviously, using the terminology introduced above, to keep track of both the {\it continuation value} and the {\it temptation value} of the banks, when they choose the contract designed for them or not. The problem of the Principal rewrites then as two standard stochastic control problems, one in which he hires the good bank, and one in which he hires the bad one. Each of these problems uses in turn the aforementioned two state variables (and these two only, because the horizon is infinite and the Principal is risk-neutral), with truth-telling constraint, asserting that the continuation value should always be greater than the temptation value. This leads to optimal control problems with state constraints, and thus to Hamilton-Jacobi-Bellman (HJB for short) equations (or more precisely variational inequalities with gradient constraints, since our problem is actually a singular stochastic control problem) in a domain, which, following \cite{cvitanic2013dynamics}, we call the credible set. This set is defined as the set containing the pair of value functions of the good and bad bank under every admissible contract offered by the investor. The determination of this set is the first fundamental step in our approach. Following the the orignal ideas of  \cite{cvitanic2013dynamics}, we prove that the determination of the boundaries of this set can be achieved by solving two so-called double-sided moral hazard problems, in which one of the type of banks is actually hiring the other one. Fortunately for us, it turned out to be possible to obtain rigorously\footnote{Notice that in this respect the study in \cite{cvitanic2013dynamics} was more formal, and our paper provides, as far as we know, the first rigorous derivation of this credible set.} explicit expressions for these boundaries by solving the associated system of HJB equations and using verification type arguments. We also would like to emphasise that unlike in \cite{cvitanic2013dynamics}, there is certain dynamic component in our model, since we have to keep track of the number of non-defaulted loans, through a time inhomogeneous Poisson process. This leads to a dynamic credible set, as well as, in the end, to a {\it recursive system} of HJB equations characterising the value function of the Principal.

\vspace{0.5em}
After having determined the credible set itself, we pursue our study by concentrating on two specific forms of contracts: the shutdown contract in which the investor designs a contract which will be accepted only by the good bank, and the more classical screening contract, corresponding to a menu of contracts, one for each type of bank, which provides incentives to reveal her true type and choose the contract designed for her. These two contracts correspond simply to the offering, over the correct domain of expected utilities of the banks (so as to satisfy the proper truth--telling and participation constraints), of the best contracts that the investor can design independently for hiring the good and the bad bank. 

\vspace{0.5em}
Since we characterise, under classical verification type arguments, the value function of the investor through a system of HJB equations, we also have classically access to the optimal contracts through this value function and its derivatives. This allows us to provide an associated qualitative and quantitative analysis. It turns out that the optimal contracts designed for the good and the bad bank share the same attributes, and are close in spirit to the ones derived in the pure moral hazard case in \cite{pages2014mathematical}. On the boundaries of the credible set, the value function of the bad bank plays the role of a state process. The payments of the optimal contracts are postponed until the moment the state process reaches a sufficiently high level, depending on the current size of the project. Similarly, when one of the loans in  the pool defaults, the project is liquidated with a probability that decreases with the value of the state process. If the value function of the bad bank at the default time is below some critical level, the project will be liquidated for sure under the optimal contracts. On the other side, if the value function of the bad bank is high enough at the default time, the project will be maintained. In the interior of the credible set, the continuation value and the temptation value of the banks are the state processes for the optimal contracts. It is possible to identify zones of \emph{good performance} inside of the credible set, where the agents are remunerated and the project is maintained in case a default occurs. It is also possible to identify zones of \emph{bad performance}, where the agents are not paid and the project is liquidated in case of default. In the rest of the credible set the optimal contracts provide intermediary situations.  

\vspace{0.5em}
The rest of the paper is organised as follows. In Section \ref{Model sec}, we present the model, we define the set of admissible contracts and we state the investor's problem. In Section \ref{sec:pure}, we recall the results obtained in \cite{pages2014mathematical} for the case of pure moral hazard, which will be useful later on for us. In Section \ref{sec:credi}, we formally study the credible set and obtain an explicit expression for it. In Section \ref{sec:opti}, we study both the optimal shutdown and screening contract, describing their characteristics and the behaviour of the banks when they accept these contracts. The Appendix contains all the technical proofs of the paper.

\vspace{0.5em}
{\bf Notations:} Let $\mathbb N$ denote the set of non--negative integers. For any $n\in\mathbb N\backslash \{0\}$, we identify $\mathbb R^n$ with the set of $n-$dimensional column vectors. The associated inner product between two elements $(x,y)\in\mathbb R^n\times\mathbb R^n$ will be denoted by $x\cdot y$. For simplicity of notations, we will sometimes write column vectors in a row form, with the usual transposition operator $\top$, that is to say $(x_1,\dots,x_n)^\top\in\mathbb R^n$ for some $x_i\in\mathbb R$, $1\leq i\leq n$. Let $\mathbb R_+$ denote the set of non--negative real numbers, and $\mathcal B(\mathbb R_+)$ the associated Borel $\sigma-$algebra. For any fixed non--negative measure $\nu$ on $(\mathbb R_+,\mathcal B(\R_+))$, the Lebesgue--Stieljes integral of a measurable map $f:\mathbb R_+\longrightarrow \mathbb R$ will be denoted indifferently 
\[
\int_{[u,t]}f(s)\mathrm{d}\nu_s\text{ or } \int_u^tf(s)\mathrm{d}\nu_s,\ 0\leq u\leq t.
\]

\section{The model\label{Model sec}}
This section is dedicated to the description of the model we are going to study, presenting the contracts as well as the criterion of both the Principal and the Agent. As recalled in the Introduction, it is actually an adverse selection extension of the model introduced first by Pag\`es in \cite{pages2012bank} and studied in depth by Pag\`es and Possama\"i \cite{pages2014mathematical}.

\subsection{Preliminaries}\label{sec:prelim}
We consider a model in continuous time, indexed by $t\in \left[ 0,\infty \right) $. Without loss of generality and for simplicity, the risk--free interest rate is taken to be $0$.\footnote{As already pointed out in the seminal paper of Biais, Mariotti, Rochet and Villeneuve \cite{biais2010large}, see also \cite{pages2014mathematical}, the only quantity of interest here is the difference between the discounting factors of the Principal and the Agent.} Our first player will be a bank (the Agent, referred to as "she"), who has access to a pool of $I$ unit loans indexed by $j=1$, $\dots $,$~I$ which are {\it ex ante} identical. Each loan is a perpetuity yielding
cash flow $\mu $ per unit time until it defaults. Once a loan defaults, it
gives no further payments. As is commonplace in the Principal--Agent literature, especially since the paper of Sannikov \cite{sannikov2008continuous}, the infinite maturity assumption is here for simplicity and tractability, since it makes the problem stationary, in the sense that the value function of the Principal will not be time--dependent. We assume that the banks in the market are different, and that two types of banks coexist, each one being characterised by a parameter taking values in the set $\mathfrak R:=\{\rho_g,\rho_b\}$ with $\rho_g>\rho_b$. We call the bank good (respectively bad) if its type is $\rho_g$ (respectively $\rho_b$). Furthermore, it is considered to be common knowledge that the proportion of the banks of type $\rho_i$, $i\in\{g,b\}$, is $p_i\in(0,1)$. 

\vspace{0.5em} Denote by 
\begin{equation*}
N_{t}:=\sum_{j=1}^{I}{\bf 1}_{\left\{ \tau ^{j}\leq t\right\} },
\end{equation*}
the sum of individual loan default indicators, where $\tau ^{j}$ is the
default time of loan $j$. The current size of the pool is, at some time $t\geq 0$, $I-N_{t}$. Since
all loans are {\it a priori} identical, they can be reindexed in any order after
defaults. The action of the banks consists in deciding at each time $t\geq 0$
whether they monitor any of the loans which have not defaulted yet. These actions are
summarised by the functions $e_{t}^{j,i}$, where for $1\leq j\leq I-N_{t},\ i\in\{g,b\}$, $e_{t}^{j,i}=1$ if loan $j$ is
monitored at time $t$ by the bank of type $\rho_i$, and $e_{t}^{j,i}=0$ otherwise. Non-monitoring renders a private benefit $B>0$ per loan and per unit time to
the bank, regardless of its type. The opportunity cost of monitoring is thus proportional to the
number of monitored loans. Once more, more general cost structures could be considered, but this choice has been made for the sake of simplicity.

\vspace{0.5em} The rate at which loan $j$ defaults is controlled by the
hazard rate $\alpha _{t}^{j}$ specifying its instantaneous probability of
default conditional on history up to time $t$. Individual hazard rates are
assumed to depend on the monitoring choice of the bank and on the size
of the pool. In particular, this allows to incorporate a type of contagion effect in the model. Specifically, we choose to model the hazard rate of a
non--defaulted loan $j$ at time $t$, when it is monitored (or not) by a bank of type $\rho_i$ as
\begin{equation}
\alpha _{t}^{j,i}:=\alpha _{I-N_{t}}\big( 1+\big(1-e_{t}^{j,i}\big)\varepsilon \big) ,\ t\geq 0,\ j=1,\dots,I-N_t,\ i\in\{b,g\},
\label{hazard eq}
\end{equation}
where the parameters $\left\{ \alpha _{j}\right\} _{1\leq j\leq I}$ are positive constants representing individual \textquotedblleft baseline\textquotedblright\ risk under monitoring when the number of loans is $j$, and $\varepsilon >0$ is the proportional impact of shirking on default risk. We assume that the impact of shirking is independent of the type of the bank. There are two main reasons for this choice. First of all, it is well--known that as soon as the dimension of the type is greater or equal to $2$, we enter into the field of multidimensional screening, which, already for static one period models is notoriously hard to analyse, and deriving meaningful economic interpretations is most often elusive (see the seminal paper of Rochet and Chon\'e \cite{rochet1998ironing} or the more recent contribution of Figalli et al. \cite{figalli2011multidimensional} for more details). Notwithstanding this difficulty, we also found out that differentiating between the banks in this regard created degeneracy in the model. We refer the reader to Section \ref{sec:extension} in the Appendix for a more detailed explanation.

\vspace{0.5em} For $i\in\{b,g\}$, we define the shirking process $k^i$ as the number of loans that the bank of type $\rho_i$ fails to monitor at time $t\geq 0$. Then, according to \textrm{(\ref{hazard eq})}, the corresponding aggregate default
intensity is given by 
\begin{equation}
\lambda _{t}^{k^i}:=\sum_{j=1}^{I-N_t}\alpha_t^{j,i}=\alpha _{I-N_{t}}\big( I-N_{t}+\varepsilon k^i_{t}\big) .
\label{aggregate eq}
\end{equation}
The banks can fund the pool internally at a cost $r\geq 0$. They
can also raise funds from a competitive investor (the Principal, referred to as "he") who values income streams
at the prevailing risk--less interest rate of zero. We assume that both the
banks and the investor observe the history of defaults and liquidations, as well as the parameters $p_b$ and $p_g$, but the monitoring choices and the type of the bank are unobservable for the investor.

\vspace{0.5em}

\subsection{Description of the contracts}

Before going on, let us now describe the stochastic basis on
which we will be working. We will always place ourselves on a probability space $
(\Omega ,\mathcal{F}, \mathbb{P})$ on which $N$ is a point process with intensity $\lambda _{t}^{0}$, which is defined by \textrm{(\ref{aggregate eq})} when all loans are monitored at all times, that is

\[
\lambda _{t}^{0}=\alpha _{I-N_{t}}\big( I-N_{t}\big).
\]
We denote by $\F:=(\mathcal{F}_{t}^{N})_{t\geq 0}$ the $\P-$completion of the natural filtration of $N$. We call $\tau$ the liquidation time of the whole pool and let $H_{t}:={\bf 1}_{\left\{ t\geq \tau \right\} }$ be the
liquidation indicator of the pool. We denote by $\G:=(\mathcal{G}_{t})_{t\geq 0}$ the minimal filtration containing $\F$ and that makes $\tau $ a $\G-$stopping time. We note that this filtration
satisfies the usual hypotheses of completeness and right--continuity.

\vspace{0.5em}Contracts are offered by the
investor to the bank and agreed upon at time $0$. As usual in contracting theory, the bank can accept or refuse the contract, but once accepted, both the bank and the investor are fully committed to the contract. 
More precisely, the investor offers a menu of contracts $\Psi_i:=(k^i, \theta^i,D^i)$, $i\in\{g,b\}$ specifying on the one hand a desired level of monitoring $k^i$ for the bank of type $\rho_i$, which is a $\G-$predictable process such that for any $t\geq 0$, $k^i_t$ takes values in $\{0,\dots,I-N_t\}$ (this set is denoted by $\mathfrak K$), as well as a flow of payment $D^i$. These payments belong to set $\mathcal D$ of processes which are c\`adl\`ag, non--decreasing, non--negative, $\G-$predictable and such that there exists some $\beta>0$ 
\begin{equation}\label{eq:integD}
\mathbb E^{\P}\bigg[\mathrm{e}^{\beta\tau}\bigg(\int_0^\tau\mathrm{e}^{-rs}\mathrm{d}D^i_s\bigg)^2\bigg]<+\infty.
\end{equation}
We do not rule out the possibility of immediate lump--sum payments at the initialisation of the contract, and therefore the processes in $\mathcal D$ are assumed to satisfy $D_{0^-}=0.$ Hence, if $D_0\neq 0$, it means that a lump--sum payment has indeed been made. Notice also that since the intensities of $N$ and $H$ under $\P$ are bounded, we know that $\tau$ has at least some exponential moments under $\P$, meaning that any bounded payment belongs to $\Dc$.

\vspace{0.5em}
The contract also specifies when liquidation occurs. We
assume that liquidations can only take the form of the stochastic
liquidation of all loans following immediately default. Hence, the contract specifies the
probability $\theta^i _{t}$, which belongs to the set $\Theta$ of $[0,1]-$valued, $\G-$predictable processes, with which the pool is maintained given default ($
\mathrm{d}N_{t}=1$), so that at each point in time, if the bank has indeed chosen the contract $\Psi_i$ 
\begin{equation*}
\mathrm{d}H_{t}=
\begin{cases}
0 \ \text{with probability $\theta^i _{t},$} \\ 
\mathrm{d}N_{t} \ \text{with probability $1-\theta^i _{t}.$}
\end{cases}
\end{equation*}
With our notations, given a contract $\Psi_i$, the hazard rates associated with the default and
liquidation processes $N_{t}$ and $H_{t}$ are, if the bank does choose the contract $\Psi_i$, $\lambda _{t}^{k^i}$ and $\left(1-\theta^i_{t}\right) \lambda _{t}^{k^i}$, respectively. The above properties translate into 
\begin{equation*}
\mathbb{P}\big[\tau\in\left\{\tau^1,...,\tau^I\right\}\big]=1,\text{ and }
\mathbb{P}\big[\tau=\tau^j|\mathcal{F}_{\tau^j},\tau>\tau^{j-1}\big]=1-\theta^i_{
\tau^j},\ j\in\left\{1,\dots,I\right\}.
\end{equation*}

For ease of notations, a contract $\Psi:=(k,\theta,D)$ will be said to be admissible if $(k,\theta, D)\in\mathfrak K\times\Theta\times\mathcal D$.
%\vspace{0.5em}
%We summarize the above details of the contracts, which are completely
%specified by the choice of $\Psi_i=(k^i,\theta^i,D^i )$. Assume that the bank chooses the contract $\Psi_i$. Then, each infinitesimal time interval $
%\left( t,t+dt\right) $ unfolds as follows:
%
%\begin{itemize}
%\item $I-N_{t}$ loans are performing at time $t$.
%
%\item The bank chooses to leave $k^i_{t}\leq I-N_{t}$ loans unmonitored and
%monitors the $I-N_{t}-k^i_{t}$ others, enjoying private benefits $k^i_{t}B\,dt$.
%
%\item The investor receives $( I-N_{t}) \mu dt$ from the cash
%flows generated by the pool and pays $dD^i_t\geq 0$ as fees to the bank.
%
%\item With probability $\lambda _{t}^{k^i}\,dt$ defined by (\ref{aggregate eq}%
%) there is a default ($dN_{t}=1$).
%
%\item Given default, the pool is maintained ($dH_{t}=0$) with probability $%
%\theta^i_{t}$ or liquidated ($dH_{t}=1$) with probability $1-\theta^i _{t}$.
%\end{itemize}
As is commonplace in the Principal--Agent literature, we assume that the monitoring choices of the banks affect only the
distribution of the size of the pool. To formalise this, recall that, by
definition, any shirking process $k\in\mathfrak K$ is $\G-$predictable and
bounded. Then, by Girsanov's theorem, we can define a probability measure $\mathbb{P}^{k}$ on $(\Omega,\Fc)$, equivalent to $\mathbb{P}$, such that $
N_{t}-\int_{0}^{t}\lambda _{t}^{k}\mathrm{d}s$
is a $\mathbb{P}^{k}-$martingale. More precisely, we have on $\mathcal{G}_{t}$ 
\begin{equation*}
\frac{\mathrm{d}\mathbb{P}^{k}}{\mathrm{d}\mathbb{P}}=Z_{t}^{k},
\end{equation*}
where $Z^{k}$ is the unique solution of the following SDE 
\begin{equation*}
Z_{t}^{k}=1+\int_{0}^{t}Z_{s^{-}}^{k}\bigg( \frac{\lambda _{s}^{k}}{\lambda
_{s}^{0}}-1\bigg) \big( \mathrm{d}N_{s}-\lambda _{s}^{0}\mathrm{d}s\big) ,\text{ }0\leq
t\leq \tau,\text{ }\mathbb{P}-\mathrm{a.s.}
\end{equation*}
Then, if the bank of type $\rho_i$ chooses the contract $\Psi_i$, her utility at $t=0$, if she follows the recommendation $k^i$, is given by 
\begin{equation}
u_{0}^i(k^i,\theta^i,D^i):=\mathbb{E}^{\mathbb{P}^{k^i}}\bigg[ \int_{0}^{\tau
}e^{-rs}\big(\rho_i\mathrm{d}D^i_s+Bk^i_s\mathrm{d}s\big)\bigg] ,  \label{bank utility eq}
\end{equation}
while that of the investor is 
\begin{equation}
v_{0}\big((\Psi_i)_{i\in\{g,b\}} \big):=\sum_{i\in\{g,b\}}p_i\mathbb{E}^{\mathbb{P}^{k^i}}\bigg[ \int_{0}^{\tau
}\big( I-N_s\big) \mu \mathrm{d}s-\mathrm{d}D^i_s\bigg] .  \label{investors utility eq}
\end{equation}
The parameter $\rho_i$ actually discriminates between the two types of banks through the way they derive utility from the cash--flows delivered by the investor. Hence, for a same level of salary, the good bank will get more utility than a bad bank. Such a form of adverse selection is also considered in the paper of Cvitani\'c, Wan and Yang \cite{cvitanic2013dynamics}. Notice that de dependence of the value functions of both the bank and the investor depend on the contract through the process $D^i$, but also through the stopping time $\tau$, whose distribution depends both on $\theta^i$ and the effort choice $k^i$ of the bank.

\subsection{Formulation of the investor's problem}\label{sec-ppalproblem}
We assume that the bank of type $\rho_i$ has an outside opportunity to the contract which provides her reservation utility $R_0^i$. The investor's problem is to offer a menu of admissible contracts $(\Psi_i)_{i\in\{g,b\}}:=(k^i,\theta^i,D^i)_{i\in\{g,b\}}$ which maximises his utility \reff{investors utility eq}, subject to the three following constraints
\begin{align}
\label{const-reser}
&u_{0}^i(k^i,\theta^i,D^i)\geq R_0^i,\ i\in\{g,b\},\\[0.5em]
\label{const-ic}
 &u_{0}^i(k^i,\theta^i, D^i)=\underset{k\in\mathfrak K}{\sup}\ u_{0}^i(k,\theta^i,D^i),\ i\in\{g,b\},\\[0.5em]
 \label{const-screen}
 &u_{0}^i(k^i,\theta^i,D^i)\geq \underset{k\in\mathfrak K}{\sup}\ u_{0}^i(k,\theta^j,D^j), \ i\neq j,\ (i,j)\in\{g,b\}^2.
\end{align}
Condition \reff{const-reser} is the usual participation constraint for the banks. Condition \reff{const-ic} is the so--called incentive compatibility condition, stating that given $(\theta^i,D^i)$, the recommended effort $k^i$ is an optimal monitoring choice for the bank of type $\rho_i$ . Finally, Condition \reff{const-screen} means that if a bank adversely selects a contract, she cannot get more utility than if she had truthfully revealed her type at time $0$. Following the literature, we call such a contract a {\it screening} contract.

\vspace{0.5em}
In the sequel, we will start by deriving the optimal contract in the pure moral hazard case, then we will look into the so--called optimal shutdown contract, for which the investor deliberately excludes the bad bank, before finally investigating the optimal screening contract. We will invoke some results from \cite{pages2012bank} in this paper, for this reason we will require later the assumptions of their main result, Theorem 3.15, which are the following.

\begin{Assumption}\label{assump} ~
Let $\overline\alpha_I$ be the harmonic mean of the $(\alpha_j)_{1\leq j\leq I}$,

\vspace{0.5em}
$(i)$ $\mu \geq \overline{\alpha }_{I}.$

\vspace{0.5em}
$(ii)$ We have for all $j\leq I$, $rB(1+\varepsilon)\leq (\mu \varepsilon -B)
\varepsilon \overline{\alpha }_{j}.$

\vspace{0.5em}
$(iii)$ Individual default risk is non--decreasing with
past default, $\alpha _{j}\leq \alpha _{j-1},\ \text{for all }j\leq I.$
\end{Assumption}

\subsection{Comments on the modelling choices and assumptions}

Let us start by discussing Assumption \ref{assump}. Concerning $(i)$, under monitoring, the expected duration until the next default in a pool of $j$ loans is $1/\lambda_j$. Hence, the average revenue from the pool over that period will be given by $\mu/\lambda_j$, of which $1/I$ is ascribed to the original loan. The payoff of a loan corresponds then to summing this quantity over $j$, and the obtained result must be above the initial unit cost for the loan to be worth anything at all under monitoring. Assumption \ref{assump}$(ii)$ imposes an upper bound on the bank's discount rate, and basically states that it should not be so large that the cost of the rent extracted by a monitoring bank outweighs the pecuniary gains stemming from the use of the monitoring technology. Finally Assumption \ref{assump}$(iii)$ simply models a contagion effect, translating the fact that past defaults impact positively the likelihood of a further default to happen. 
 
 \vspace{0.5em}
A second important point in the model is the liquidation policy of the contract. Even though liquidations are inefficient in the first--best situation without moral hazard nor adverse selection (see \cite{pages2012bank}), they are necessary in the second--best in order to restore incentives to monitor when performance is poor. However, liquidation can take many forms, and for instance liquidating all loans with state--dependent probability is not necessarily better than partially liquidating the pool with fixed probability. Another option would be to downsize the pool by a potentially larger number of loans, possibly state--dependent as well, whenever a default occurs. Given that in practice liquidations are rarely decided in such a random fashion, it is of the utmost importance to verify that such liquidation policies {\color{black}cannot improve on} social welfare. In the case of pure moral hazard, \cite[Proposition 6]{pages2012bank} has shown that stochastic liquidation was optimal among all policies under an assumption which is met if changes in default intensities for the loans are gradual. Since the capital structure of subprime mortgage--backed securities is typically split up into a large number of tranches, where it is then reasonable to assume that default intensities are constant, such an assumption will be verified in the real--world applications of the model.

 \vspace{0.5em}
Another very important assumption here is the fact that we consider a "full commitment" dynamic contracting problem between the investor and the bank. In other words, both parties are fully committed to the long--term dynamic contract at the onset of the relationship. However, one of the central features of banks, namely the fragility of their capital structures, stems from the fact that there is usually a limited commitment in the relationship between clients and banks, since, as highlighted by the seminal paper of Diamond and Dybvig \cite{diamond1983bank}, bank cleints can withdraw funds from banks at any time. We are perfectly conscient of this fact and have chosen to postpone the discussion of how to integrate non--commitment in our model to Section \ref{sec:} below, since the mathematics behind are similar.

 \vspace{0.5em}
 In our model, adverse selection stems only from the monitoring costs of the two types of  banks. A possible extension of our model could rely on a further differentiation between the work of the banks, i.e. when both good bank and bad bank work, the good one would be more efficient in the sense that the associated default intensity is
strictly smaller than that of the bad bank. This would also be a possible way to model the fact that the pool of loans of each bank could have different qualities. However, as will be explained in Appendix \ref{sec:extension}, such a feature would actually make the problem degenerate, in the sense that the upper boundary of the credible set that will be defined in Section \ref{sec:credi} becomes infinite. This would be a rather undesirable feature of the model and would create unwanted discontinuities. We give potential solutions to extend the model in this direction in Appendix \ref{sec:extension}, but leave the exact study to future research.
\section{The pure moral hazard case}\label{sec:pure}
In this section, we assume that the type of the bank is publicly known and is fixed to be some $\rho_i$, $i\in\{g,b\}$, which makes the problem exactly similar to the one considered in \cite{pages2014mathematical} (up to the modification of some constants). In particular, the investor only offers one contract. %We will briefly explain how to solve the general maximisation problem for the bank and then recall the results obtained in \cite{pages2014mathematical}. Furthermore, t
The results we obtain here, in particular the dynamics of the continuation utilities of the banks, will be crucial to the study of the shutdown and screening contracts later on. Therefore, they will be used throughout the paper without further references.

\vspace{0.5em}
In this setting, the utility of the investor, when he offers a contract $(k^i,\theta^i,D^i)\in\mathfrak K\times\Theta\times\mathcal D$ is given by
\begin{equation}
v_{0}^{{\rm pm}}(k^i,\theta^i,D^i ):=\mathbb{E}^{\mathbb{P}^{k^i}}\bigg[ \int_{0}^{\tau
}\big( I-N_s\big) \mu \mathrm{d}s-\mathrm{d}D^i_s\bigg],  \label{investors utility eq2}
\end{equation}
for which we define the following dynamic version for any $t\geq 0$
\[
v_{t}^{{\rm pm}}(k^i,\theta^i,D^i ):=\mathbb{E}^{\mathbb{P}^{k^i}}\bigg[\int_{t\wedge\tau}^{\tau
}\big( I-N_{s}\big) \mu \mathrm{d}s-\mathrm{d}D^i_{s}\bigg|\mathcal G_t\bigg].
\]

\subsection{The bank's problem}
%\subsubsection{Dynamics of the bank's value function}
As usual, the so--called continuation value of the bank (that is to say her future expected payoff) when offered $(\theta^i,D^i)\in\Theta\times\mathcal D$ plays a central role in the analysis. It is defined, for any $(t,k)\in\R_+\times\mathfrak K$ by
\[
u_{t}^i(k,\theta^i,D^i):=\mathbb E^{\mathbb P^k}\bigg[\int_{t\wedge\tau}^\tau e^{-r(s-t)}\big(\rho_i\mathrm{d}D^i_s+k_sB\mathrm{d}s\big)\bigg|\mathcal G_t\bigg].
\]

We also define the value function of the bank for any $t\geq 0$
\[
U_t^i(\theta^i,D^i):=\underset{k\in\mathfrak K}{\rm ess \ sup}\ u_t^i(k,\theta^i,D^i).
\]
Departing slightly from the usual approach in the literature, initiated notably by Sannikov \cite{sannikov2008continuous, sannikov2012contracts}, we reinterpret the problem of the bank in terms of BSDEs, which, we believe, offers an alternative approach which may be easier to apprehend for the mathematical finance community. Of course, such an interpretation of optimal stochastic control problem with control on the drift is far from being original, and we refer the interested reader to the seminal papers of Hamad\`ene and Lepeltier \cite{hamadene1995backward} and El Karoui and Quenez \cite{el1995dynamic} for more information, as well as to the recent articles by Cvitani\'c, Possama\"i and Touzi \cite{cvitanic2014moral,cvitanic2015dynamic} for more references and a systematic treatment of Principal--Agent type problems with this backward SDE approach. Before stating the related result, let us denote by $(Y^i,Z^i)$ the unique (super--)solution (existence and uniqueness will be justified below) to the following BSDE
\begin{equation}\label{bsde}
Y^i_t=0-\int_t^\tau g^i(s,Y^i_s,Z^i_s)\mathrm{d}s+ \int_t^\tau Z^i_s\cdot \mathrm{d}\widetilde M^i_s+\int_t^\tau \mathrm{d}K^i_s,\ 0\leq t\leq \tau,\ \mathbb P-\mathrm{a.s.},
\end{equation}
where
\begin{align*}
& M_t:=\begin{pmatrix}N_t \\ H_t\end{pmatrix}, \ \widetilde M_t^i:=M_t-\int_0^t\lambda^0_s\begin{pmatrix}1\\ 1-\theta^i_s\end{pmatrix} \mathrm{d}s,\ K_t^i:=\rho_iD^i_t, \;  f^i(t,k,y,z):=ry - Bk+k\alpha_{I-N_t}\varepsilon z\cdot \begin{pmatrix}1\\1-\theta^i_t\end{pmatrix}, \\
& g^i(t,y,z):=\underset{k\in\{0,\dots,I-N_t\}}{\inf}\ f^i(t,k,y,z)=ry-(I-N_t)\bigg(\alpha_{I-N_t}\varepsilon z\cdot\begin{pmatrix}1\\ 1-\theta^i_t\end{pmatrix}-B\bigg)^-.
\end{align*}
Uniqueness holds among processes $Y^i$ and $Z^i$ which are respectively $\G-$progressively measurable and continuous, and $\G-$predictable, and satisfy
\begin{equation}\label{eq:integ}
\E^\P\bigg[\sup_{0\leq t\leq \tau}\mathrm{e}^{(\beta\varepsilon^2-2r)t}\big|Y_t^i\big|^2\bigg]<+\infty,\; \text{\rm and}\; \E^\P\bigg[\int_0^\tau\mathrm{e}^{(\beta\varepsilon^2-2r)s}\|Z_s^i\|^2\mathrm{d}s\bigg]<+\infty,
\end{equation}
where $\beta$ is the exponent associated to $D^i$ by \eqref{eq:integD}. We have the following proposition, which is basically a reformulation of \cite[Proposition 3.2]{pages2014mathematical}. The proof is postponed to Appendix \ref{app:1}
\begin{Proposition}\label{prop:rep}
For any $(\theta^i,D^i)\in\Theta\times\mathcal D$, the value function of the bank has the dynamics, for $t\in[0,\tau]$, $\mathbb P-\mathrm{a.s.}$
\begin{align*}\label{dyn-U}
 \mathrm{d}U_t^i(\theta^i,D^i)=& \bigg(rU_t^i(\theta^i,D^i)-Bk^{\star,i}_t+\lambda_t^{k^{\star,i}}Z^i_t\cdot\begin{pmatrix}1\\1-\theta^i_t\end{pmatrix}\bigg)\mathrm{d}t-\rho_i\mathrm{d}D^i_t-Z^i_t\cdot \mathrm{d}\widetilde{M}^i_t,
\end{align*}
where $Z^i$ is the second component of the solution to the {\rm BSDE} \reff{bsde}. In particular, the optimal monitoring choice of the bank is given by
\[
k^{\star,i}_t=(I-N_t){\bf 1}_{\{Z^i_t\cdot(1,1-\theta^i_t)^\top<b_t\}},\text{ \rm with }b_t:=\frac{B}{\alpha_{I-N_t}\varepsilon},\ t\geq 0.
\]
\end{Proposition}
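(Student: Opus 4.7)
The plan is to reformulate the bank's optimal control problem as a BSDE, along the now classical lines of El Karoui and Quenez \cite{el1995dynamic}, Hamad\`ene and Lepeltier \cite{hamadene1995backward}, and the systematic treatment of Cvitani\'c, Possama\"i and Touzi \cite{cvitanic2015dynamic}. Since the shirking process $k$ enters the problem only through the $\P-$compensator of the marked point process $M$ via Girsanov's theorem, one can work throughout under the fixed reference measure $\P$ and encode the maximisation over $k$ in the driver of a single BSDE with terminal condition $0$ at the $\G-$stopping time $\tau$.

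\vspace{0.5em}

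\textbf{Step 1: well-posedness of the BSDE \reff{bsde}.} The driver $g^i$ is the pointwise infimum, over the finite set $\{0,\dots,I-N_t\}$, of the affine function $k\mapsto f^i(t,k,y,z)$. Because $N$ is bounded by $I$, every coefficient of $f^i$ is uniformly bounded, so $g^i$ is Lipschitz in $(y,z)$ with a deterministic Lipschitz constant. Existence and uniqueness of $(Y^i,Z^i)$ on the random horizon $[0,\tau]$ with null terminal value then follow from the BSDE theory for pure--jump martingales that is invoked in \cite{pages2014mathematical}, provided one works in an adequate weighted space of $\G-$predictable processes that accommodates the possibly unbounded horizon.

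\vspace{0.5em}

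\textbf{Step 2: identification $Y^i=U^i$ and optimality of $k^{\star,i}$.} For any admissible $k\in\mathfrak K$, let $(Y^{i,k},Z^{i,k})$ be the solution of the linear BSDE obtained by replacing $g^i$ by the driver $f^i(\cdot,k,\cdot,\cdot)$. A direct application of the martingale representation theorem for marked point processes together with Girsanov identifies $Y^{i,k}_t=u^i_t(k,\theta^i,D^i)$ $\P-$a.s. Since $g^i\leq f^i(\cdot,k,\cdot,\cdot)$ by construction, a comparison argument for BSDEs driven by $M$ gives $Y^i\geq Y^{i,k}$, hence $Y^i\geq U^i$. Now the infimum of the affine function $k\mapsto f^i(t,k,y,z)=ry+k(\alpha_{I-N_t}\varepsilon z\cdot(1,1-\theta^i_t)^\top-B)$ over $\{0,\dots,I-N_t\}$ is attained at $I-N_t$ precisely when its slope is strictly negative, i.e.~when $z\cdot(1,1-\theta^i_t)^\top<b_t$, and at $0$ otherwise; this is exactly the stated $k^{\star,i}$. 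Plugging $k^{\star,i}$ back into the BSDE realises the infimum $(dt\otimes d\P)-$a.e., so $Y^i=Y^{i,k^{\star,i}}=u^i(k^{\star,i},\theta^i,D^i)$, which, combined with $Y^i\geq U^i$, yields $U^i=Y^i$ and the optimality of $k^{\star,i}$.

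\vspace{0.5em}

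\textbf{Step 3: dynamics and main obstacle.} Rewriting \reff{bsde} in differential form gives $dU^i_t=g^i(t,U^i_t,Z^i_t)\,dt-Z^i_t\cdot d\widetilde M^i_t-\rho_i dD^i_t$, and evaluating the driver at $k^{\star,i}_t$ yields $g^i(t,U^i_t,Z^i_t)=rU^i_t-Bk^{\star,i}_t+\alpha_{I-N_t}\varepsilon k^{\star,i}_t Z^i_t\cdot(1,1-\theta^i_t)^\top$; absorbing this drift contribution into the appropriate compensator via the identity $\lambda^{k^{\star,i}}_t-\lambda^0_t=\alpha_{I-N_t}\varepsilon k^{\star,i}_t$ produces the announced expression. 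The principal technical hurdle is really Step~1: one must set up the correct $r-$weighted norms so that existence and uniqueness of the BSDE hold on the random, possibly unbounded, horizon $[0,\tau]$ with null terminal value. The strict positivity of $r$, the integrability imposed on $D^i$ in $\mathcal D$, and the boundedness of $N$ by $I$ make this manageable, and the construction mirrors the arguments already carried out in \cite{pages2014mathematical}.
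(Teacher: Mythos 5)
Your proposal follows essentially the same route as the paper: represent each $u^i(k,\theta^i,D^i)$ as a (super--)solution of a linear BSDE with driver $f^i(\cdot,k,\cdot,\cdot)$ via martingale representation and Girsanov, pass to the infimum driver $g^i$, and conclude by comparison that $Y^i=U^i$ with the bang--bang minimiser $k^{\star,i}$. Steps 2 and 3, including the identification of the slope condition $Z^i_t\cdot(1,1-\theta^i_t)^\top<b_t$ and the bookkeeping $\lambda^{k^{\star,i}}_t-\lambda^0_t=\alpha_{I-N_t}\varepsilon k^{\star,i}_t$ that converts the driver into the announced drift, are correct.

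The one genuine gap is in the justification of the comparison step. For BSDEs driven by jump martingales, Lipschitz continuity of the driver in $(y,z)$ is \emph{not} sufficient for a comparison theorem (this is the content of the classical Barles--Buckdahn--Pardoux counterexample); one needs a structural condition guaranteeing that the implicit change of measure linearising the driver difference in $z$ has a positive density. The paper verifies precisely this: it checks that $g^i(t,y,z)-g^i(t,y,z')\leq \gamma_t(z,z')\,\lambda^0_t\,(z-z')\cdot(1,1-\theta^i_t)^\top$ with $0\leq\gamma_t\leq\varepsilon$, i.e.\ Royer's condition $(A_\gamma)$, which is what licenses the inequality $Y^{i,k}\leq Y^i$ for every $k$ (and its super--solution variant, since the common non--decreasing term $K^i=\rho_iD^i$ must also be handled). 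Your Step 1 asserts Lipschitz continuity and then invokes ``a comparison argument'' generically; as written this step would fail for a general Lipschitz jump driver. Here the verification is routine because the driver difference in $z$ is exactly of the admissible form, but it must be carried out. The identified ``main obstacle'' (weighted norms on the random horizon) is, by contrast, handled in the paper simply by noting that $\tau$ has exponential moments of all orders and $g^i(t,0,0)=-B(I-N_t)$ is bounded.
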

Notice that the above result implies that the monitoring choices of the bank are necessarily of bang--bang type, in the sense that she either monitors all the remaining loans, or none at all, which in turn implies that the investor can never give the bank incentives to monitor only a fraction of the loans at a given time\footnote{We assume here, as is commonplace in the Principal--Agent literature, that in the case where the bank is indifferent with respect to her monitoring decision, that is when $Z^i_t\cdot(1,1-\theta^i_t)^\top=b_t$, she acts in the best interest of the investors, and thus monitors all the $I-N_t$ remaining loans.}.

\subsection{Introducing feasible sets}
Following the terminology of Cvitani\'c, Wan and Yang \cite{cvitanic2013dynamics}, let us discuss the so--called feasible set for the banks.
\begin{Definition}
We call $\mathcal V_t^i$ the feasible set for the expected payoff of banks of type $\rho_i$, starting from some time $t\geq 0$, that is to say all the possible utilities that a bank of type $\rho_i$ can get from all the admissible contracts offered by the investor from time $t$ on.
\end{Definition}

Our next result gives an explicit form of the the feasible set $\mathcal V_t^i$, which turns out to be independent of the type of the bank. The proof is relegated to Appendix \ref{app:1}, and requires the introduction of $k^{\rm SH}$, the strategy of a bank which does not monitor any loan at any time, {\it i.e.} $k_s^{\rm SH}:=I-N_s$ for every $s\geq 0$
\begin{Lemma} \label{lemma:feasible set}
For $i\in\left\{g,b\right\}$ and for any $t\geq 0$, we have that $\mathcal V_t^i = \mathcal V_t$, with
\[
\mathcal V_t := \left[\frac{B(I-N_t)}{r+\lambda^{k^{\rm SH}}_t},+\infty\right).
\]
\end{Lemma}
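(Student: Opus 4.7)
The plan is to prove two matching inclusions: $(i)$ for every admissible contract $(\theta^i,D^i)\in\Theta\times\mathcal D$, the value satisfies $U_t^i(\theta^i,D^i)\geq v_{\min}(t):=B(I-N_t)/(r+\lambda^{I-N_t}_t)$; and $(ii)$ every real number in $[v_{\min}(t),+\infty)$ is attained as $U_t^i(\theta^i,D^i)$ for some admissible pair, with both the bound and the construction being independent of the type $\rho_i$. The guiding intuition is that the harshest contract the investor can offer is the one with zero payment and immediate liquidation at the first default, against which the bank reacts by shirking on every remaining loan so as to extract the private benefit $B$ per loan for as long as possible before the inevitable liquidation; the independence of $\rho_i$ then comes from the fact that this worst contract involves no cash--flow, and any higher value is reached by an additional lump--sum whose monetary size can be rescaled by $\rho_i$.

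For $(i)$, I would fix an arbitrary admissible $(\theta^i,D^i)$ and test the sub-optimal control $\bar k_s:=I-N_s$ which shirks on every non-defaulted loan at all times. Let $\sigma_t:=\inf\{s>t\colon N_s>N_t\}$ denote the first default strictly after $t$. Since $D^i$ is non-decreasing and non-negative and the liquidation time always satisfies $\tau\geq\sigma_t$,
\[
U_t^i(\theta^i,D^i)\geq u_t^i(\bar k,\theta^i,D^i)\geq \E^{\P^{\bar k}}\!\left[\left.\int_t^{\sigma_t} e^{-r(s-t)} B(I-N_t)\,ds\,\right|\,\mathcal G_t\right].
\]
On $[t,\sigma_t)$ one has $N_s=N_t$, so under $\P^{\bar k}$, conditional on $\mathcal G_t$, the waiting time $\sigma_t-t$ is exponentially distributed with rate $\lambda^{I-N_t}_t=\alpha_{I-N_t}(I-N_t)(1+\varepsilon)$; computing the expectation yields exactly $v_{\min}(t)$. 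Since no payments are collected on this deviation, the bound does not depend on $\rho_i$.

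For $(ii)$, I would first show that the contract $(0,0)$ attains the bound. Under $\theta^i\equiv 0$ the pool is liquidated at the first default, so $\tau=\sigma_t$ and the bank's problem reduces to $\sup_{k}\E^{\P^k}[\int_t^{\sigma_t}e^{-r(s-t)}B k_s\,ds\,|\,\mathcal G_t]$. Applying Proposition~\ref{prop:rep} to the BSDE~\reff{bsde} with $K^i\equiv 0$ and $\theta^i\equiv 0$, one checks that $Y^i_s=v_{\min}(s)\mathbf 1_{\{s<\sigma_t\}}$, together with a suitable $Z^i$ matching the jump of size $-v_{\min}(t)$ at $\sigma_t$, is the unique solution, so $U_t^i(0,0)=v_{\min}(t)$; an equivalent verification is that the stationary HJB $rV=\sup_{k\in\{0,\dots,I-N_t\}}\{Bk-\alpha_{I-N_t}(I-N_t+\varepsilon k)V\}$ admits $V=v_{\min}(t)$ as its unique solution with optimiser $k=I-N_t$, because the map $k\mapsto Bk/(r+\alpha_{I-N_t}(I-N_t+\varepsilon k))$ is strictly increasing. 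To reach any $v>v_{\min}(t)$, I would perturb $(0,0)$ by prescribing an immediate lump-sum payment of size $d:=(v-v_{\min}(t))/\rho_i$, namely $D^i_s=d\,\mathbf 1_{\{s\geq t\}}$; this process is c\`adl\`ag, non-decreasing, non-negative, $\G$--predictable (since $t$ is a deterministic time for the purposes of the feasible set computation), and trivially integrable at $\tau$, hence admissible. By Proposition~\ref{prop:rep} the lump sum adds exactly $\rho_i d=v-v_{\min}(t)$ to the value at $t$, producing $U_t^i=v$; the factor $\rho_i$ in $d$ cancels the one produced by the payment, which is precisely why $\mathcal V_t^g=\mathcal V_t^b=\mathcal V_t$.

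The main obstacle is step $(ii)$: demonstrating the equality $U_t^i(0,0)=v_{\min}(t)$ rather than merely an inequality, which requires ruling out better shirking strategies. I expect Proposition~\ref{prop:rep} to handle this cleanly, since the BSDE~\reff{bsde} with vanishing data admits the explicit closed-form solution above, allowing a standard verification argument; the monotonicity of the one-dimensional payoff $Bk/(r+\alpha_{I-N_t}(I-N_t+\varepsilon k))$ in $k$ provides an elementary cross-check and identifies the shirk-all strategy as optimal.
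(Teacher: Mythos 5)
Your proof is correct and lands on the same key computation as the paper's --- the extremal contract is $(\theta^i,D^i)=(0,0)$, under which the shirk-all strategy yields exactly $B(I-N_t)/(r+\lambda_t^{I-N_t})$ --- but the mechanism for the lower bound is genuinely different. The paper first invokes the comparison theorem for the BSDE \reff{bsde} (monotonicity in $D^i$, and, after checking from the jump constraints that $Z^i_t\cdot(0,1)^\top\geq 0$, monotonicity of the generator in $\theta^i$) to conclude that the infimum over contracts is attained at $(0,0)$, and only then computes that value; you instead obtain the uniform bound $U_t^i(\theta^i,D^i)\geq B(I-N_t)/(r+\lambda_t^{I-N_t})$ for \emph{every} admissible contract directly, by testing the sub-optimal shirk-all control, discarding the non-negative payment term, and truncating the integral at the first default, which is more elementary and bypasses the comparison theorem entirely. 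You are also more explicit than the paper on two points it leaves implicit: that shirk-all is genuinely optimal under $(0,0)$ (your monotonicity check of $k\mapsto Bk/(r+\alpha_{I-N_t}(I-N_t+\varepsilon k))$, equivalently the observation that the candidate value sits below the threshold $b_t$, so that $k^{\star,i}=I-N_t$ in Proposition \ref{prop:rep}), and that every intermediate value of $[v_{\min}(t),+\infty)$ --- not merely the endpoints --- is attained, via the lump-sum perturbation $D^i_s=d\,{\bf 1}_{\{s\geq t\}}$ with $d=(v-v_{\min}(t))/\rho_i$, whereas the paper only exhibits arbitrarily large values through $dD^i_s=n\,ds$. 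Both routes give type-independence for the same reason: the minimizing contract pays nothing, and the monetary size of any additional payment can be rescaled by $1/\rho_i$.
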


\section{Credible set}\label{sec:credi}

In this section we come back to the case in which there are two types of banks in the market, and study the so--called credible set, which is formed by the pairs of value functions of the banks under the admissible contracts. As in \cite{cvitanic2013dynamics}, we do not expect all the points in the feasible set to correspond to a pair of reachable values of the banks under some admissible contract. We will therefore follow the approach initiated by \cite{cvitanic2013dynamics} and we will characterise the credible set. We emphasise an important difference with \cite{cvitanic2013dynamics} though, in the sense that in our context, the credible set becomes dynamic as it depends on the current size of the pool. In this section we work with generic contracts $(\theta,D)\in\Theta\times\Dc$, not necessarily designed for a particular type of bank.

\subsection{Definition of the credible set and its boundaries}
We introduce some notations first. Let $\widehat\lambda_j^{\rm SH}$ be the default intensity under $k^{\rm SH}$ when there are $j$ loans left, that is to say 
\[\widehat\lambda_j^{\rm SH}:=\alpha_{j}j(1+\varepsilon).\]
Observe that $\widehat\lambda_j^{\rm SH}=\lambda_t^{k^{\rm SH}}=\alpha_{I-N_t}(I-N_t)(1+\varepsilon),$ for every $t\geq 0$ such that $I-N_t=j$. Define then for any integer $j$ between $1$ and $I$, the set $\widehat{\mathcal{V}}_j:=\big[Bj/\big(r+\widehat\lambda_j^{\rm SH}\big),\infty\big)$. Observe that the feasible set 
\[
{\mathcal{V}}_t=\left[\frac{B(I-N_t)}{r+\lambda_t^{k^{\rm SH}}},+\infty\right),
\]
satisfies ${\mathcal{V}}_t=\widehat{\mathcal{V}}_{I-N_t}$ for every $t\geq 0$, so the only dependence of the feasible set in time is due to the number of loans left. The rigorous definition of the credible set is the following.

\begin{Definition}
For any time $t\geq 0$, we define the credible set ${\mathcal C}_{t}$ as the set of $(u^{b},u^g)\in{\mathcal{V}}_{t}\times{\mathcal{V}}_{t}$ such that there exists some admissible contract $(\theta,D)\in \Theta \times \mathcal D$ satisfying $U_t^b(\theta,D)=u^b$, $U_t^g(\theta,D)=u^g$ and $(U_s^b(\theta,D), U_s^g(\theta,D)) \in{\mathcal{V}}_{s}\times{\mathcal{V}}_{s}$ for every $s\in [t,\tau)$, $\P-${\rm a.s.}
\end{Definition}
% Para un tiempo t, definimos el conjunto creible en el tiempo t, que depende del numero prestamos disponibles.

Given a starting time $t\geq 0$ and $u^{b}\in{\mathcal{V}}_{t}$, define the set of contracts under which the value function of the bad bank at time $t$ is equal to $u^b$

\begin{equation*}
\mathcal{A}^b(t,u^b) := \big\{ (\theta,D)\in\Theta\times \mathcal D: U_t^b(\theta,D)=u^b  \big\}.
\end{equation*}

We denote by $\mathfrak U_t(u^{b})$ the largest value $U_t^g(\theta,D)$ that the good bank can obtain from all the contracts $(\theta,D)\in\mathcal{A}^b(t,u^b)$. We also denote the lowest value by ${\mathfrak L}_t(u^{b})$. Next, define
\[
\overline{\mathcal C}_t := \big\{ (u^{b},u^g)\in{\mathcal{V}}_t\times{\mathcal{V}}_t : {\mathfrak L}_t(u^{b}) \leq u^g \leq {\mathfrak U}_t(u^{b}) \big\}.
\]
We will prove in Proposition \ref{credibleset} below that ${\mathcal C}_t=\overline{\mathcal C}_t$ for every $t\geq 0$, and that the dependence on time of the credible set, exactly as for the feasible set, only comes from the value of $I-N_t$. In particular, this allows us to call respectively the functions $ {\mathfrak L}_t$ and ${\mathfrak U}_t$ the lower and upper boundary of the credible set when there are $I-N_t$ loans left. The aim of the next sections is prove all these claims and to obtain explicit formulas for the boundaries. We start with some useful technical results concerning specific contracts for which the banks do not monitor the loans at all.

\subsection{Utility of not monitoring} 
Consider any starting time $t$ such that $I-N_t=j$ and any $\theta\in\Theta$. The continuation utility that the banks get from always shirking (without considering the payments) is
\begin{equation}\label{always shirking}
u_{t}^g\big(k^{\rm SH},\theta,0\big) = u_{t}^b\big(k^{\rm SH},\theta,0\big) = \E^{\P^{k^{\rm SH}}}\bigg[ \int_{t\wedge\tau}^\tau \mathrm{e}^{-r(s-t)}Bk_s^{\rm SH}\mathrm{d}s \bigg| \mathcal G_t  \bigg].
\end{equation}
This quantity is obviously non--decreasing in $\theta$, so that \eqref{always shirking} attains its minimum value under any contract with $\theta\equiv0$, which is equal to $
c(j,1):=B j/\big(r+\widehat\lambda_j^{\rm SH}\big).
$
The following proposition provides the value of \eqref{always shirking} when the pool is liquidated exactly after a fixed number of defaults $m$.

\begin{Proposition} \label{prop:utility-shirking}
Fix some $t\geq 0$ and let $j:=I-N_t$. For $m \in\{ 1,\dots,j\}$, let $\theta^m\in\Theta$ be such that the pool is liquidated exactly after the $m$th default occurring after time $t$, that is 
\[
\theta_s^m := \begin{cases}
1,\ t\leq s \leq \tau^{N_t+m}, \\ 0, \ s > \tau^{N_t+m}.
\end{cases}
\]
The utility that the bank of type $\rho_i$ gets from shirking is
%$$
%u_t(k^{SH},\theta,0) = \frac{B(I-N_t)}{r+\lambda_{I-N_t}^{SH}} + \ds\sum_{i=I-N_t-m+1}^{I-N_t-1} \frac{Bi}{r + \lambda_{i}^{SH}}   \prod_{l=i+1}^{I-N_t} \ds\frac{\lambda_{l}^{SH}}{r+\lambda_{l}^{SH}}.
%$$
\[
c(j,m): = \frac{B j}{r+\widehat\lambda_{j}^{\rm SH}} + \ds\sum_{i=j-m+1}^{j-1} \frac{Bi}{r +\widehat \lambda_{i}^{\rm SH}}   \prod_{\ell=i+1}^{j} \ds\frac{\widehat\lambda_{\ell}^{\rm SH}}{r+\widehat\lambda_{\ell}^{\rm SH}}.
\]
\end{Proposition}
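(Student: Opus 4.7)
The strategy is to exploit the fact that under $\P^{k^{SH}}$ the process $N$ is a pure jump Markov chain whose holding times between successive defaults are independent exponentials with explicitly known rates, so the required expectation decomposes into a telescoping sum over the inter-default intervals.

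First I would write $\tau_{N_t} := t$ by convention and, on the event that the pool still has $j-n$ loans, recall from \eqref{aggregate eq} that the aggregate default intensity under $k^{SH}$ is $\widehat\lambda_{j-n}^{SH}=\alpha_{j-n}(j-n)(1+\varepsilon)$. Writing $\tau = \tau_{N_t+m}$ under the specified $\theta$, I would split the integral defining \eqref{always shirking} over the $m$ inter-default intervals
\[
u_t^i(k^{SH},\theta,0)=\sum_{n=0}^{m-1}\E^{\P^{k^{SH}}}\!\left[\int_{\tau_{N_t+n}}^{\tau_{N_t+n+1}} e^{-r(s-t)} B(j-n)\,ds\,\Big|\,\Gc_t\right],
\]
because $k^{SH}_s = I-N_s = j-n$ exactly on $[\tau_{N_t+n},\tau_{N_t+n+1})$.

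Next, setting $T_k:=\tau_{N_t+k}-\tau_{N_t+k-1}$ for $k\geq1$, the strong Markov property applied at each default time yields, under $\P^{k^{SH}}$ and conditional on $\Gc_t$, that $T_1,\dots,T_m$ are independent, with $T_k$ exponentially distributed with rate $\widehat\lambda_{j-k+1}^{SH}$. Using the identities
\[
\E\!\left[e^{-rT_k}\right]=\frac{\widehat\lambda_{j-k+1}^{SH}}{r+\widehat\lambda_{j-k+1}^{SH}}, \qquad \E\!\left[\frac{1-e^{-rT_{n+1}}}{r}\right]=\frac{1}{r+\widehat\lambda_{j-n}^{SH}},
\]
and writing $\int_{\tau_{N_t+n}}^{\tau_{N_t+n+1}} e^{-r(s-t)}ds = e^{-r(\tau_{N_t+n}-t)}\frac{1-e^{-rT_{n+1}}}{r}$, each term in the sum factorises by independence into
\[
B(j-n)\,\frac{1}{r+\widehat\lambda_{j-n}^{SH}}\,\prod_{k=1}^{n}\frac{\widehat\lambda_{j-k+1}^{SH}}{r+\widehat\lambda_{j-k+1}^{SH}}.
\]

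The final step is a reindexing: setting $i:=j-n$ (so $i$ runs from $j$ down to $j-m+1$ as $n$ goes from $0$ to $m-1$) and $\ell:=j-k+1$ in the product, the $n=0$ term produces the $Bj/(r+\widehat\lambda_j^{SH})$ contribution (empty product), while the $n\geq1$ terms give precisely $\frac{Bi}{r+\widehat\lambda_i^{SH}}\prod_{\ell=i+1}^{j}\frac{\widehat\lambda_\ell^{SH}}{r+\widehat\lambda_\ell^{SH}}$, yielding the stated formula. No real obstacle is expected here beyond careful index bookkeeping; the only subtle point is to justify the independence and exponential distribution of the $T_k$, which follows from the strong Markov property and the fact that on $[\tau_{N_t+n},\tau_{N_t+n+1})$ the intensity is the deterministic constant $\widehat\lambda_{j-n}^{SH}$ under $\P^{k^{SH}}$.
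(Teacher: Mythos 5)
Your proposal is correct and follows essentially the same route as the paper: both decompose the integral over the inter-default intervals, use the independence and exponential distribution of the holding times under $\P^{k^{SH}}$ to compute $\E[e^{-rT_k}]=\widehat\lambda^{SH}_{j-k+1}/(r+\widehat\lambda^{SH}_{j-k+1})$, factorise each term by independence, and reindex to obtain the stated formula. The index bookkeeping in your final step checks out.
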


In particular, under any contract such that $\theta\equiv1$, \eqref{always shirking} attains its maximum value, which is equal to 
\begin{equation}\label{eq:C}
C(j):= c(j,j) = \frac{B j}{r+\widehat\lambda_{j}^{\rm SH}} + \ds\sum_{i=1}^{j-1} \frac{Bi}{r +\widehat \lambda_{i}^{\rm SH}}   \prod_{\ell=i+1}^{j} \ds\frac{\widehat\lambda_{\ell}^{SH}}{r+\widehat\lambda_{\ell}^{\rm SH}}.
\end{equation}

\subsection{Lower boundary of the credible set}
The lower boundary of the credible set is the simpler of the two boundaries and it can be computed directly. We will see that it is a piecewise linear function corresponding to two lines with different slopes. All proofs for this section are collected in Appendix  \ref{sec:lower}. The next proposition states the main inequalities that determine the lower boundary.

\begin{Lemma}\label{lemma:inequalities}
For any $t\in[0,\tau]$ and any admissible contract $(\theta,D)\in\Theta\times\mathcal D$, the value functions of the good and the bad banks satisfy, $\P-${\rm a.s.} 
\begin{align}
& U_{t}^g(\theta,D)  \geq U_{t}^{b}(\theta,D), \label{ineq1} \\
& U_{t}^g(\theta,D)  \geq \frac{\rho_g}{\rho_b}U_{t}^{b}(\theta,D)-\frac{(\rho_g-\rho_b)}{\rho_b} C(I-N_{t}) \label{ineq2}, 
\end{align}
where the function $C$ is defined in \eqref{eq:C}.
\end{Lemma}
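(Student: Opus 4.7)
The plan is to compare the two continuation utilities pointwise in the effort process $k$ via their linearity in the preference parameter $\rho_i$, and then to transfer the pointwise inequalities to the value functions by specialising $k$ to the bad bank's optimiser $k^{\star,b}$ provided by Proposition \ref{prop:rep}. Concretely, for any admissible $(k,\theta,D)\in\mathfrak K\times\Theta\times\Dc$, the definition of $u_t^i$ yields
\begin{align*}
u_t^g(k,\theta,D)-u_t^b(k,\theta,D) &= (\rho_g-\rho_b)\,\E^{\P^k}\!\left[\int_{t\wedge\tau}^{\tau}e^{-r(s-t)}dD_s\,\Big|\,\mathcal G_t\right]\geq 0, \\
u_t^g(k,\theta,D) &= \frac{\rho_g}{\rho_b}\,u_t^b(k,\theta,D)-\frac{\rho_g-\rho_b}{\rho_b}\,B\,\E^{\P^k}\!\left[\int_{t\wedge\tau}^{\tau}e^{-r(s-t)}k_s\,ds\,\Big|\,\mathcal G_t\right],
\end{align*}
the second being obtained by eliminating $\E^{\P^k}[\int e^{-r(s-t)}dD_s\,|\,\mathcal G_t]$ between the expressions of $u_t^g$ and $u_t^b$. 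Since $U_t^g(\theta,D)\geq u_t^g(k^{\star,b},\theta,D)$ and $U_t^b(\theta,D)=u_t^b(k^{\star,b},\theta,D)$, the first identity at $k=k^{\star,b}$ immediately gives \reff{ineq1}, and the second reduces \reff{ineq2} to the generic claim
\begin{equation}\label{eq:shirkbound}
B\,\E^{\P^k}\!\left[\int_{t\wedge\tau}^{\tau}e^{-r(s-t)}k_s\,ds\,\Big|\,\mathcal G_t\right]\leq C(I-N_t),\quad\text{for every admissible }(k,\theta).
\end{equation}

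To prove \reff{eq:shirkbound}, the cleanest route is a verification argument identifying $C$ with the value function of the auxiliary control problem of maximising the expected discounted shirking benefit over all admissible $(k,\theta)$. The associated HJB equation reads
$$rV(j)=\sup_{k\in\{0,\dots,j\},\,\theta\in[0,1]}\bigl\{Bk+\alpha_j(j+\varepsilon k)[\theta V(j-1)-V(j)]\bigr\},\quad V(0)=0,$$
and one checks, using the recursion $rC(j)=Bj+\widehat\lambda_j^{SH}(C(j-1)-C(j))$ with $\widehat\lambda_j^{SH}=\alpha_j j(1+\varepsilon)$, that $C$ solves this equation and that the pointwise maximiser is $(k,\theta)=(j,1)=(k^{SH},1)$. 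The non-trivial algebraic step is to show that, along the candidate $\theta=1$, the Hamiltonian is non-decreasing in $k$, which amounts to $B+\alpha_j\varepsilon(C(j-1)-C(j))\geq 0$, i.e.\ $C(j)-C(j-1)\leq B/(\alpha_j\varepsilon)$. But the recursion gives $C(j)-C(j-1)=(Bj-rC(j-1))/(r+\widehat\lambda_j^{SH})\leq Bj/\widehat\lambda_j^{SH}=B/(\alpha_j(1+\varepsilon))<B/(\alpha_j\varepsilon)$, using $C(j-1)\geq 0$. A standard dynamic-programming/supermartingale argument applied to $e^{-r(s-t)}C(I-N_s)\mathbf{1}_{s<\tau}+\int_t^{s\wedge\tau}Be^{-r(u-t)}k_u\,du$ under $\P^k$ then yields $V=C$, and \reff{eq:shirkbound} follows.

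The main obstacle is the verification of \reff{eq:shirkbound}: in particular, the supermartingale/optimality argument requires handling carefully the simultaneous jumps of $N$ and $H$ at the liquidation time $\tau$, so that the compensator of the liquidation jump is accounted for alongside the default jump. Once this algebraic and optimisation step is settled, the two inequalities \reff{ineq1} and \reff{ineq2} follow immediately from the reduction described above.
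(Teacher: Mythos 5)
Your proof is correct, and its skeleton is the same as the paper's: evaluate both banks' utilities at the bad bank's optimal effort $k^{\star,b}$, exploit the affine dependence on $\rho_i$ to get \reff{ineq1} and to eliminate the payment term, and thereby reduce \reff{ineq2} to the bound $B\,\E^{\P^k}[\int_{t\wedge\tau}^\tau e^{-r(s-t)}k_s\,ds\,|\,\mathcal G_t]\leq C(I-N_t)$. Where you diverge is in how this last bound is justified. The paper proceeds in two asserted monotonicity steps: the supremum over $k$ of the discounted shirking benefit is the value function of a bank facing a zero-payment contract, hence is attained at $k^{SH}$ (by Proposition \ref{prop:rep}), and for $k^{SH}$ the shirking value is non-decreasing in $\theta$ with maximum $C(I-N_t)$ at $\theta\equiv 1$ (the computation behind Proposition \ref{prop:utility-shirking} and \eqref{eq:C}). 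You instead set up the auxiliary control problem directly over the pair $(k,\theta)$ and verify that $C$ solves its HJB equation, with the explicit check that the Hamiltonian at $\theta=1$ is non-decreasing in $k$ because $C(j)-C(j-1)\leq Bj/(r+\widehat\lambda_j^{SH})<B/(\alpha_j\varepsilon)$; your recursion $(r+\widehat\lambda_j^{SH})C(j)=Bj+\widehat\lambda_j^{SH}C(j-1)$ is indeed what \eqref{eq:C} satisfies, and the subsequent supermartingale argument is standard. Your route is somewhat more self-contained — it replaces the paper's two unproved monotonicity assertions by a single verification — at the cost of having to handle the simultaneous jumps of $N$ and $H$ at liquidation, which you correctly flag; the paper's route is shorter because it recycles Proposition \ref{prop:rep} and the explicit computation of the shirking utilities. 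Both are valid.
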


\vspace{0.5em}
Using Lemma \ref{lemma:inequalities}, we prove the following characterisation of the lower boundary of the credible set.
\begin{Proposition}\label{lowerboundary}
For any $t\geq 0$, and any $u^b\in\Vc_t$, the lower boundary of the credible set is given by
\[
{\mathfrak L}_t(u^b)= \begin{cases}
\displaystyle u^b, \ c(I-N_t,1) \leq u^b \leq C(I-N_t), \\
\displaystyle \frac{\rho_g}{\rho_b}u^b-\frac{(\rho_g-\rho_b)}{\rho_b}C(I-N_t), \ C(I-N_t)\leq u^b<+\infty. 
\end{cases}
\]
In particular, the dependence in $t$ of ${\mathfrak L}_t(u^b)$ only comes from the number of non--defaulted loans at time $t$ and we can define for any $j\in\{1,\dots,I\}$, the quantity $\widehat{\mathfrak L}_j(u^b)$ given by
\[
\widehat{\mathfrak L}_j(u^b):= \begin{cases}
\displaystyle u^b, \ c(j,1) \leq u^b \leq C(j), \\
\displaystyle \frac{\rho_g}{\rho_b}u^b-\frac{(\rho_g-\rho_b)}{\rho_b}C(j), \ C(j)\leq u^b<+\infty,
\end{cases}
\]
for which we have $\widehat{\mathfrak L}_{I-N_t}(u^b)={\mathfrak L}_t(u^b)$.
\end{Proposition}

%$$L_{I-N_t}(u_t^b)= \left\{\ba{ccl} u_t^b & , & c(I-N_t,1) \leq u_t^b \leq C(I-N_t) \cmb\ds\frac{\rho_g}{\rho_b}u_t^b-\frac{(\rho_g-\rho_b)}{\rho_b}C(I-N_t) & , & C(I-N_t)\leq u_t^b<\infty  \ea\right.$$ 

\begin{Remark}
Of course, the computations of this section depend on our modelling choices, and are unlikely to be directly adaptable to other situations. There is however a generic way of finding the lower boundary $($as well as the upper one$)$ which we give details in the next section. It amounts to solving a fictitious contract situation where the good bank hires the bad one and minimises $($maximises for the upper boundary$)$ her utility over her monitoring choices, and over all contracts for which the bad bank receives a fixed utility $u^b$. The dynamic value function of this control problem is exactly $\mathfrak L_t(u^b)$, since it corresponds to the minimal utility that the good bank can have when the bad one receives $u^b$.
\end{Remark}

\subsection{Upper boundary of the credible set}

The upper boundary of the credible set is not as simple to obtain as the lower boundary and we have to solve a specific stochastic control problem to identify it. Notice that this approach is similar to the one used in \cite{cvitanic2013dynamics}.

\vspace{0.5em}
Let us fix any contract $(\theta,D)\in\Theta\times\Dc$. We remind the reader that thanks to Proposition \ref{prop:rep}, we know that there exist $\G-$predictable integrable processes $(h^{1,g}(\theta,D),h^{2,g}(\theta,D))$ satisfying the second integrability condition in \eqref{eq:integ} and such that
\begin{align}\label{dynamic ug}
\nonumber \mathrm{d}U_s^g(\theta,D) = & \big( rU_s^g(\theta,D)-Bk_s^{\star,g}(\theta,D) \big) \mathrm{d}s-\rho_g \mathrm{d}D_s-h_s^{1,g}(\theta,D)\big(\mathrm{d}N_s-\lambda_s^{k^{\star,g}(\theta,D)}\mathrm{d}s\big)\\
&-h_s^{2,g}(\theta,D)\big(\mathrm{d}H_s-(1-\theta_s)\lambda_s^{k^{\star,g}(\theta,D)}\mathrm{d}s\big), ~ s\in[0,\tau],
\end{align}
where the optimal monitoring choice $k^{\star,g}(\theta,D)$ is given by $k_s^{\star,g}(\theta,D) = (I-N_s) {\bf 1}_{ \{ h_s^{1,g}(\theta,D)+(1-\theta_s)h_s^{2,g}(\theta,D) < b_s \} }$. Similarly, there exist $\G-$predictable processes $(h^{1,b}(\theta,D),h^{2,b}(\theta,D))$ satisfying the second integrability condition in \eqref{eq:integ} and such that
\begin{align}\label{dynamic ubc}
\nonumber \mathrm{d}U_s^b(\theta,D) =& \big( rU_s^b(\theta,D)-Bk_s^{\star,b}(\theta,D)\big) \mathrm{d}s -\rho_b \mathrm{d}D_s-h_s^{1,b}(\theta,D)\big(\mathrm{d}N_s-\lambda_s^{k^{\star,b}(\theta,D)}\mathrm{d}s\big)\\
&-h_s^{2,b}(\theta,D)\big(\mathrm{d}H_s-(1-\theta_s)\lambda_s^{k^{\star,b}(\theta,D)}\mathrm{d}s\big),  ~ s\in[0,\tau],
\end{align}
with $
k_s^{\star,b}(\theta,D) = (I-N_s)  {\bf 1}_{ \{ h_s^{1,b}(\theta,D)+(1-\theta_s)h_s^{2,b}(\theta,D) < b_s \} }. 
$ We will use the dynamics \eqref{dynamic ug}--\eqref{dynamic ubc} to define a simple set of admissible contracts in which we will reinterpret both the value functions of the agents as controlled diffusion processes, where the controls are $(D,\theta,h^{1,g},h^{2,g},h^{1,b},h^{2,b})$, and which satisfy the instantaneous conditions \eqref{eq:yz}. Obviously, doing so makes us, at least at first sight, look at a larger class of "contracts", in the sense that in the above representation of the value functions of the bank, the choice of the processes $(h^{1,g},h^{2,g},h^{1,b},h^{2,b})$ is not free, since they are completely determined by the choice of $(\theta,D)$. Nonetheless, as we will see below, this still describes exactly the same set of contracts.

\vspace{0.5em}
In the meantime, let us denote by $\mathcal{H}$ the set of non--negative, $\G-$predictable processes $h$ satisfying for some $\beta>0$
\[
\E^\P\bigg[\int_0^\tau\mathrm{e}^{(\beta\varepsilon^2-2r)s}|h_s|^2\mathrm{d}s\bigg]<+\infty.
\]
We abuse notations and define, for every $\Psi:=(D,\theta,h^{1,g},h^{2,g},h^{1,b},h^{2,b})\in{\mathcal D}\times\Theta\times{\mathcal H}^4$, the processes $U^g(\Psi)$ and $U^b(\Psi)$ which satisfy the following SDEs \begin{align}\label{eq:sde good value}
\mathrm{d}U_s^g(\Psi) &=  \big(rU_s^g(\Psi)-Bk_s^{\star,g}(\Psi)\big)\mathrm{d}s-\rho_g \mathrm{d}D_s-h_s^{1,g}\big(\mathrm{d}N_s-\lambda_s^{k^{\star,g}(\Psi)}\mathrm{d}s\big)-h_s^{2,g}\big(\mathrm{d}H_s-(1-\theta_s)\lambda_s^{k^{\star,g}(\Psi)}\mathrm{d}s\big),\\
\mathrm{d}U_s^b(\Psi) &=  \big(rU_s^b(\Psi)-Bk_s^{\star,b}(\Psi)\big)\mathrm{d}s-\rho_b \mathrm{d}D_s-h_s^{1,b}\big(\mathrm{d}N_s-\lambda_s^{k^{\star,b}(\Psi)}\mathrm{d}s\big) -h_s^{2,b}\big(\mathrm{d}H_s-(1-\theta_s)\lambda_s^{k^{\star,b}(\Psi)}\mathrm{d}s\big),
\label{eq:sde bad value}
\end{align}
where we defined
\begin{align*}
k_s^{\star,g}(\Psi)& := (I-N_s) {\bf 1}_{ \{ h_s^{1,g}+(1-\theta_s)h_s^{2,g} < b_s \} },\ k_s^{\star,b}(\Psi): = (I-N_s)  {\bf 1}_{ \{ h_s^{1,b}+(1-\theta_s)h_s^{2,b}< b_s \} }.
\end{align*}

\begin{Remark}
In the model, there is no need to consider $h^{1,g}$ and $h^{1,b}$ as positive processes and we do this just for technical reasons. Intuitively, the optimal contracts should satisfy this additional constraint because the investor does not benefit from earlier defaults and if a contract increases the banks' continuation utilities after one of the defaults, the banks should increase the default intensity as much as possible.
\end{Remark}
\begin{Remark}
It is immediate from the definition that given $\Psi:=(D,\theta,h^{1,g},h^{2,g},h^{1,b},h^{2,b})\in{\mathcal D}\times\Theta\times{\mathcal H}^4$, the dynamics of $U^g(\Psi)$ only depends on $(D,\theta,h^{1,g},h^{2,g})$, while the dynamics of $U^b(\Psi)$ only depends on $(D,\theta,h^{1,b},h^{2,b})$. We will thus sometimes also use the notations $U^g(\Psi)$, $U^b(\Psi)$, $k^{\star,g}(\Psi)$ and $k^{\star,b}(\Psi)$ when $\Psi\in {\mathcal D}\times\Theta\times{\mathcal H}^2$.
\end{Remark}

For fixed $(t,u^b,u^g)\in\R_+\times \Vc_t^2$, we define the set of contracts $\overline{\Ac}(t,u^b,u^g)$ as the set of $\Psi:=(D,\theta,h^{1,g},h^{2,g},h^{1,b},h^{2,b})\in{\mathcal D}\times\Theta\times{\mathcal H}^4$ such that \eqref{eq:sde good value} and \eqref{eq:sde bad value} have at least one weak solution\footnote{In general, all the processes in $\Psi$ could for instance functionals of the paths of $U^g(\Psi)$ and $U^b(\Psi)$, in which case wellposedness of the SDEs has to be assumed as part of the definition.}, which satisfies the first integrability condition in \eqref{eq:integ}, and in addition
\[
U_{s^-}^{i} (\Psi)= h_s^{1,i}+h_s^{2,i} ,\ U_{s^-}^{i}(\Psi)-h_s^{1,i}\geq\frac{B(I-N_s)}{r+\lambda_s^{\rm SH}},\; \forall s\in[t,\tau],\  U_t^{i}(\Psi)=u^{i},\; i\in\{b,g\}.
\]
What we claimed above is that all processes $(D,\theta)\in\Dc\times\Theta$ can  be obtained from a contract $(D,\theta,h^{1,g},h^{2,g},h^{1,b},h^{2,b})=:\Psi\in\overline{\Ac}(0,u^b,u^g)$, meaning that we are not enlarging at all the class of admissible contracts in our reformulation. Indeed, we already know by the results from Section \ref{sec:pure}, that the continuation utilities of the good and the bad bank given a contract $(D,\theta)\in\Dc\times\Theta$ were completely characterised as being the unique solutions of the corresponding BSDEs \eqref{bsde} satisfying in addition \eqref{eq:integ}. If we take some $\Psi\in \overline{\Ac}(0,u^b,u^g)$, then it is immediate that the processes $U^g(\Psi)$, $U^b(\Psi)$ solve the corresponding BSDEs \eqref{bsde}, since the dynamics is the correct one by definition, we have $U^g_\tau(\Psi)=U^b_\tau(\Psi)=0$, and all the required integrability conditions are satisfied. By uniqueness of the solution to the BSDEs, we thus must have $U^g(\Psi)=U^g(\theta,D)$, and $U^b(\Psi)=U^b(\theta,D)$.

\vspace{0.5em}
To describe the stochastic control problem for the upper boundary of the credible set, we need to introduce additional notations. For any starting time $t\in[0,\tau]$ and for every $u^b\geq B(I-N_t)/\big(r+\widehat\lambda_{I-N_t}^{\rm SH}\big)$, we let $\overline{\mathcal A}^{b}(t,u^b)$ be the set of quadruplets $\Psi=(D,\theta,h^{1,b},h^{2,b})\in{\mathcal D}\times\Theta\times{\mathcal H}^2$ such that \eqref{eq:sde bad value} has at least one weak solution, which satisfies the first integrability condition in \eqref{eq:integ} as well as

\begin{align*}
U_{s^-}^{b} (\Psi)= h_s^{1,b}+h_s^{2,b} ,\ U_{s^-}^{b}(\Psi)-h_s^{1,b}\geq\frac{B(I-N_s)}{r+\lambda_s^{I-N_s}},\; \forall s\in[t,\tau],\ U_t^{b}(\Psi)=u^{b}.
\end{align*}

We will abuse notations and also call elements of $\overline{\mathcal A}^{b}(t,u^b)$ contracts. The upper boundary $\mathfrak U_t$ solves the following control problem
\[
{\mathfrak U}_t(u^b)=\underset{(k^g,\Psi)\in \mathfrak K\times{\overline{\mathcal{A}}}^b(t,u^b)}{\rm{ess\ sup}} \ \E^{\P^{k^{g}}} \bigg[  \int_{t\wedge\tau}^\tau \mathrm{e}^{-r(s-t)} \big(\rho_g \mathrm{d}D_s+B k_s^{g} \mathrm{d}s\big) \bigg| \mathcal{G}_t \bigg],
\]
subject to the dynamics 
\begin{align*}
U_r^b(\Psi)= u^b+\int_t^r\big(  ru_s^{b}-Bk_s^{\star,b}(\Psi) +h_s^{1,b}\lambda_s^{k^{\star,b}} + h_s^{2,b}(1-\theta_s)\lambda_s^{k^{\star,b}} \big)\mathrm{d}s -\rho_b \mathrm{d}D_s-\int_t^rh_s^{1,b}\mathrm{d}N_s-\int_t^rh_s^{2,b}\mathrm{d}H_s,\; r\in[t,\tau].
\end{align*}
%with
%\begin{align*}
% k^g(\Psi)  \in  \ds\argmax_{k\in\mathfrak{K}} \E^{\P^k} \bigg[ \int_t^\tau \mathrm{e}^{-r(s-t)} (\rho_g \mathrm{d}D_s + B k_s \mathrm{d}s) \bigg|\mathcal{G}_t \bigg].  
%\end{align*}

\vspace{0.5em}
Indeed, the above stochastic control problem corresponds to the highest value that the good bank can obtain from any admissible contract, while ensuring that when the bad bank takes said contract, she receives exactly $u^b$, which is exactly the definition of the upper boundary of the credible set. Another way to interpret this problem is that it corresponds to the (fictitious) situation where the good bank hires the bad one, when the latter wants to receive a utility of $u^b$, and maximises her utility among all contracts ensuring that this constraint is satisfied. The importance of the results of Proposition \ref{prop:rep} is that it allows us to obtain easily the dynamic behaviour of the continuation utility of the bad bank for any initial utility, which in turns allows us to express simply the constraint in the problem for the good bank through the set $\overline{\Ac}^b$ and the state variable $U^b(\Psi)$.

\vspace{0.5em}
The next subsections are devoted to first obtaining the HJB equation associated with the above problem, its resolution and then finally to the proof of a verification theorem adapted to our framework. Notice that the above is actually a singular stochastic control problem, since the control $D$ is a non--decreasing process, which is not necessarily absolutely continuous with respect to the Lebesgue measure. We refer the reader to the monograph by Fleming and Soner \cite{fleming2006controlled} for more details. In particular, this implies that the HJB equation associated to the problem will be a variational inequality with gradient constraints.

\subsubsection{HJB equation for the upper boundary}

Exactly as in the case of the lower boundary, we expect that the time dependence of of the upper boundary only comes from the current number of remaining loans. In such a case, the HJB equations that will describe the behaviour of the upper boundary necessarily form a recursive system, with the upper boundary when $j$ loans are left depending on the one with $j-1$ loans left. We will write down this system, solve it explicitly, and prove a verification theorem ensuring that our initial guess was indeed correct.

\vspace{0.5em}
Fix some $1\leq j\leq I$, and define for every $k=0,1, \cdots, j$, $\widehat{\lambda}_j^{k}:=\alpha_{j}(j+k \varepsilon).$ The system of HJB equations associated to the previous control problem is given by $\widehat{\mathcal U}_0\equiv0$, and for any $1\leq j\leq I$ and $u^b\geq \frac{Bj}{r+\widehat\lambda_j^{\rm SH}}$
\begin{equation} \label{DPequation}
\min \Bigg\{ 
- \sup_{(\theta,h^1,h^2)\in C^j} 
\left\{ 
\begin{array}{c} 
\widehat{\mathcal U}_j^\prime(u^b) \big(u^b - Bk^b + (h^1 + (1-\theta)h^2) \widehat\lambda^{k^b}_j \big) \\[0.3em] 
+ \widehat\lambda^{k^g}_j\theta \widehat{\mathcal U}_{j-1}(u^b - h^1) -(\widehat\lambda^{k^g}_j+r) \widehat{\mathcal U}_j(u^b)  +Bk^g   
\end{array} 
\right\},  \
 \widehat{\mathcal U}_j^\prime(u^b)-\frac{\rho_g}{\rho_b} 
\Bigg\}
=0,
\end{equation}
with the additional boundary condition $\widehat{\mathcal U}_j(Bj/(r+\widehat\lambda_j^{SH}))= Bj/(r+\widehat\lambda_j^{SH}),$
and where we defined for simplicity 
\[
k^b:=j{\bf 1}_{\{h^1+(1-\theta)h^2<\widehat b_j\}},\ k^g:=j{\bf 1}_{\{\widehat{\mathcal U}_j(u^b)-\theta \widehat{\mathcal U}_{j-1}(u^b-h^1)<\widehat b_j\}},
\]
as well as
\[
C^j:=\bigg\{(\theta,h^1,h^2)\in[0,1]\times\R_+^2:h^1+h^2=u^{b},~ h^2\geq\frac{B(j-1)}{r+\widehat\lambda_{j-1}^{\rm SH}}  \bigg\} .
\]

\begin{Remark} \label{remark incentive compatibility}
Notice that if our guess on the time dependence of the upper boundary is correct, we must have for any $s\geq 0$, $\mathfrak U_s=\widehat{\mathfrak U}_{I-N_s}$. Then. the incentive compatibility condition for the good bank is implicit in the {\rm HJB} equation. Indeed, at every $s\geq 0$ we have
\begin{align*}
 \widehat{\mathcal U}_{I-N_s}\big(U_s^{b}(\Psi)\big)- \widehat{\mathcal U}_{I-N_{s^-}}\big(U_{s^-}^{b}(\Psi)\big) 
= &~ \big( \widehat{\mathcal U}_{I-N_{s^-}-1}\big(U_{s^-}^{b}(\Psi)-h_s^{1,b}(\Psi)\big)- \widehat{\mathcal U}_{I-N_{s^-}}\big(U_{s^-}^{b}(\Psi)\big)\big)\Delta N_s \\
& -  \widehat{\mathcal U}_{I-N_{s^-}-1}\big(U_{s^-}^{b}(\Psi)-h_s^{1,b}(\Psi)\big)\Delta H_s,
\end{align*}
which implies that on the upper boundary $h_s^{1,g}(\Psi)= \widehat{\mathcal U}_{I-N_{s^-}}\big(U_{s^-}^{b}(\Psi)\big)- \widehat{\mathcal U}_{I-N_{s^-}-1}\big(U_{s^-}^{b}(\Psi)-h_s^{1,b}(\Psi)\big)$ and $h_s^{2,g}(\Psi)= \widehat{\mathcal U}_{I-N_{s^-}-1}\big(U_{s^-}^{b}(\Psi)-h_s^{1,b}(\Psi)\big)$. Therefore
\[
h_s^{1,g}(\Psi)+(1-\theta_s^g)h_s^{2,g}(\Psi)= \widehat{\mathcal U}_{I-N_{s^-}}\big(U_{s^-}^{b}(\Psi)\big)-\theta_s^g  \widehat{\mathcal U}_{I-N_{s^-}-1}\big(U_{s^-}^{b}(\Psi)-h_s^{1,b}(\Psi)\big).
\]
\end{Remark}

\vspace{0.5em}
At the points where $ \widehat{\mathcal U}_j^\prime(u^{b})>\rho_g/\rho_b$, the first term of the variational inequality \eqref{DPequation} must be equal to zero, so the upper boundary must satisfy the following equation 
\begin{equation}\label{HJBequation}
r \widehat{\mathcal U}_j(u^{b})=\sup_{(\theta,h^1,h^2)\in C^j} 
\Big\{ 
 \widehat{\mathcal U}_j^\prime(u^{b}) \big( ru^{b}-Bk^{b} + (h^1+(1-\theta)h^2)\widehat\lambda^{k^{b}}_j \big) +\big( \widehat{\mathcal U}_{j-1}(u^{b}-h^1)\theta- \widehat{\mathcal U}_j(u^{b})\big)\widehat\lambda^{k^g}_j +Bk^g 
\Big\} .
\end{equation}
We will refer to this equation as the diffusion equation.

\vspace{0.5em}
\hspace{3em}{$\bullet$ \bf Step 1: case of 1 loan, solving the diffusion equation}

\vspace{0.5em}
Before dealing with the variational inequality \eqref{DPequation}, we will solve the diffusion equation \eqref{HJBequation}. When $j=1$, it reduces to 
\begin{equation} \label{hjb1}
r \widehat{\mathcal U}_1(u^{b}) =  \widehat{\mathcal U}_1^\prime(u^{b})\big(ru^{b}-Bk^{b}+u^{b}\widehat\lambda^{k^{b}}_1 \big) - \widehat{\mathcal U}_1(u^{b})\widehat\lambda^{k^g}_1 +Bk^g,
\end{equation}
with $k^{b}={\bf 1}_{\{u^{b}<\widehat b_1\}},~ k^g={\bf 1}_{\{ \widehat{\mathcal U}(u^{b})<\widehat b_1\}}$.

\begin{Remark}
Notice that the boundary condition $ \widehat{\mathcal U}_1\big(\frac{B}{r+\widehat\lambda_1^1}\big)=\frac{B}{r+\widehat\lambda_1^1}$ is implicit in the equation. 
\end{Remark}
Our first result is the following, whose proof is deferred to Appendix \ref{sec:D}.
\begin{Lemma}\label{lemma:diffusion}
There is a family of continuously differentiable solutions to the diffusion equation \eqref{HJBequation}, indexed by some constant $C_0>0$, which are given by 
\[
\widehat{\mathcal U}_1^{C_0}(u^{b}):=\begin{cases}
\displaystyle
C_0^\frac{r+\widehat\lambda_1^1}{r+\widehat\lambda_1^0}\bigg(u^{b}-\ds\frac{B}{r+\widehat\lambda_1^1}\bigg)+\ds\frac{B}{r+\widehat\lambda_1^1}, \ u^{b}\in\bigg[\frac{B}{r+\widehat\lambda_1^1},x_1^{C_0,\star}\bigg),\\[0.8em] 
\displaystyle C_0{\widehat b_1}^\frac{\widehat\lambda_1^1-\widehat\lambda_1^0}{r+\widehat\lambda_1^1}\bigg(\ds\frac{r+\widehat\lambda_1^1}{r+\widehat\lambda_1^0}\bigg)^\frac{r+\widehat\lambda_1^0}{r+\widehat\lambda_1^1}\bigg(u^{b}-\ds\frac{B}{r+\widehat\lambda_1^1}\bigg)^\frac{r+\widehat\lambda_1^0}{r+\widehat\lambda_1^1}, \ u^{b}\in\big[x_1^{C_0,\star},\widehat b_1\big),\\[0.8em] 
\displaystyle C_0  u^{b}, \  u^{b}\in\big[\widehat b_1,+\infty\big),
\end{cases}
\]
where $\ds x_1^{C_0,\star}:=\bigg(\frac{1}{C_0}\bigg)^{\frac{r+\widehat\lambda_1^1}{r+\widehat\lambda_1^0}}\widehat b_1\frac{r+\widehat\lambda_1^0}{r+\widehat\lambda_1^1}+\frac{B}{r+\widehat\lambda_1^1}$.

\end{Lemma}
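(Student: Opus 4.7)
The plan is to reduce the nonlinear ODE \eqref{hjb1} to a piecewise linear one by first resolving the indicators $k^{b}={\bf 1}_{\{u^{b}<\widehat b_1\}}$ and $k^g={\bf 1}_{\{\widehat{\mathcal U}_1(u^{b})<\widehat b_1\}}$, then integrating each branch separately, and finally pasting them into a continuously differentiable function. Since $\widehat{\mathcal U}_0\equiv 0$, the $\widehat{\mathcal U}_{j-1}$ contribution vanishes, so the control $(\theta,h^1,h^2)\in C^1$ enters only through the combination $h^1+(1-\theta)h^2=u^b-\theta h^2\in[0,u^b]$ and the induced value of $k^b$. Assuming the candidate solution satisfies $\widehat{\mathcal U}_1'\geq 0$ (which will be verified a posteriori), the supremum in \eqref{hjb1} is attained at $\theta h^2=0$, forcing $h^1+(1-\theta)h^2=u^b$ and thus $k^b={\bf 1}_{\{u^b<\widehat b_1\}}$. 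This delineates three regimes: (A) $u^b<\widehat b_1$ and $\widehat{\mathcal U}_1<\widehat b_1$, with $k^b=k^g=1$; (B) $u^b<\widehat b_1$ but $\widehat{\mathcal U}_1\geq\widehat b_1$, with $k^b=1,\ k^g=0$; and (C) $u^b\geq\widehat b_1$ with $k^b=k^g=0$.

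In each regime the HJB becomes a tractable linear ODE. In (A) it reads $(r+\widehat\lambda_1^1)\widehat{\mathcal U}_1=\widehat{\mathcal U}_1'[(r+\widehat\lambda_1^1)u^b-B]+B$, which under the shift $v:=\widehat{\mathcal U}_1-B/(r+\widehat\lambda_1^1)$ reduces to $v=v'(u^b-B/(r+\widehat\lambda_1^1))$ and integrates to an affine family $\widehat{\mathcal U}_1=A(u^b-B/(r+\widehat\lambda_1^1))+B/(r+\widehat\lambda_1^1)$; I note that this form automatically enforces the implicit boundary value $\widehat{\mathcal U}_1(B/(r+\widehat\lambda_1^1))=B/(r+\widehat\lambda_1^1)$ since at that point the coefficient of $\widehat{\mathcal U}_1'$ vanishes. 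In (B) the equation $(r+\widehat\lambda_1^0)\widehat{\mathcal U}_1=\widehat{\mathcal U}_1'[(r+\widehat\lambda_1^1)u^b-B]$ is separable and yields a power-law family with exponent $(r+\widehat\lambda_1^0)/(r+\widehat\lambda_1^1)\in(0,1)$. In (C) the ODE collapses to $\widehat{\mathcal U}_1=u^b\widehat{\mathcal U}_1'$, whose solutions are the rays $\widehat{\mathcal U}_1=Cu^b$ with $C>0$.

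The third step is to paste these branches into a continuously differentiable function, parameterised by the slope $C$ of the outer ray in (C), with $x_1^{C,\star}$ defined as the unique point where the affine branch (A) first hits the switching level $\widehat b_1$. The algebraic identity $\widehat b_1(\widehat\lambda_1^1-\widehat\lambda_1^0)=B$, immediate from $\widehat b_1=B/(\varepsilon\alpha_1)$ and $\widehat\lambda_1^1-\widehat\lambda_1^0=\varepsilon\alpha_1$, is exactly what makes differentiability automatic at the two junctions: at any point where $\widehat{\mathcal U}_1=\widehat b_1$, the two constants $(r+\widehat\lambda_1^1)\widehat b_1-B$ and $(r+\widehat\lambda_1^0)\widehat b_1$ coincide, while the coefficient $(r+\widehat\lambda_1^1)u^b-B$ multiplying $\widehat{\mathcal U}_1'$ is common to both ODEs, so continuity of $\widehat{\mathcal U}_1$ automatically forces continuity of $\widehat{\mathcal U}_1'$. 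Enforcing the matching conditions $\widehat{\mathcal U}_1(\widehat b_1)=C\widehat b_1$ and $\widehat{\mathcal U}_1(x_1^{C,\star})=\widehat b_1$ then pins down $A=C^{(r+\widehat\lambda_1^1)/(r+\widehat\lambda_1^0)}$, the prefactor $K$ of the power-law, and the stated formula for $x_1^{C,\star}$ through elementary algebra using $\widehat b_1-B/(r+\widehat\lambda_1^1)=\widehat b_1(r+\widehat\lambda_1^0)/(r+\widehat\lambda_1^1)$.

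The main subtlety is the self-referential definition of $k^g$ via the unknown $\widehat{\mathcal U}_1$, which in principle could accommodate alternative regime orderings. I would rule these out by constructing the solution from the left boundary outward: the affine branch (A) is strictly increasing and, by the very choice of $x_1^{C,\star}$, stays strictly below $\widehat b_1$ on $[B/(r+\widehat\lambda_1^1),x_1^{C,\star})$, so $k^g=1$ is self-consistent there; the power-law branch (B) is strictly concave and sits in $[\widehat b_1,Cu^b]$ on $[x_1^{C,\star},\widehat b_1)$ by construction, so $k^g=0$ is self-consistent; and on $[\widehat b_1,\infty)$ the ray $Cu^b\geq C\widehat b_1$ trivially lies above $\widehat b_1$ as long as $C$ is sufficiently large, closing the construction and completing the verification of $\widehat{\mathcal U}_1'\geq 0$ in all three regimes.
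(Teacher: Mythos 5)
Your proposal is correct and follows essentially the same route as the paper: resolve the indicators $k^b$ and $k^g$ into the three regimes, integrate the resulting linear ODE on each piece, and paste the affine, power-law and linear branches into a $C^1$ function parameterised by the outer slope $C$, with $x_1^{C,\star}$ determined by the level-crossing $\widehat{\mathcal U}_1=\widehat b_1$. Your observation that the identity $\widehat b_1(\widehat\lambda_1^1-\widehat\lambda_1^0)=B$ makes derivative matching automatic once values match is a cleaner justification of the "smooth fit" that the paper only verifies by direct computation, and your closing self-consistency check of the regime ordering is a welcome addition rather than a deviation.
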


\vspace{0.5em}

\hspace{3em}{$\bullet$ \bf Step 2: case of 1 loan, solving the HJB equation} 

\vspace{0.5em}
In this case the variational inequality \eqref{DPequation} reduces to 
\begin{equation}\label{dp1}
\min\left\{r \widehat{\mathcal U}_1(u^{b}) -\widehat{\mathcal U}^\prime(u^{b})\big(ru^{b}-Bk^{b}+u^{b}\widehat\lambda^{k^{b}}_1 \big) +\widehat{\mathcal U}_1(u^{b})\widehat\lambda^{k^g}_1 -Bk^g,~ \widehat{\mathcal U}_1^\prime(u^{b})-\frac{\rho_g}{\rho_b}   \right\}=0.
\end{equation}
We already found the solutions of the diffusion equation inside of this variational inequality and now we will take care of the whole HJB equation. We expect the upper boundary to saturate the second term in the variational inequality for big values of $u^{b}$, so we will search for a solution of \eqref{dp1} satisfying the following condition: there exists $x^\star \in[B/(r+\widehat\lambda_1^1),\infty)$ such that 
\begin{equation}\label{saturation}
\widehat{\mathcal U}_1^\prime(x^\star)=\frac{\rho_g}{\rho_b}~\textrm{and }  \widehat{\mathcal U}_1^\prime(u^{b}) >\frac{\rho_g}{\rho_b},~ \textrm{for } u^{b}<x^\star.
\end{equation}
At first sight it could seem that by doing this we face the risk of not finding the correct solution of the dynamic programming equation. Nevertheless, this is not the case and we will prove later a verification result which assures us that the solution that we find under this condition corresponds indeed to the upper boundary of the credible set. The proof of the following Lemma will be given in Appendix \ref{sec:D}.

\vspace{0.5em}
\begin{Lemma}\label{lemma:HJB1}
The unique solution of the {\rm HJB} equation \eqref{dp1} which satisfies condition \eqref{saturation} is given by, defining $x^\star_1:=x_1^{\rho_g/\rho_b,\star}$ 
\begin{equation}\label{HJBsolution1}
\widehat{\mathcal U}_1^\star(u^{b}):=\widehat{\mathcal U}_1^{\rho_g/\rho_b}(u^{b})=\begin{cases}
\displaystyle\bigg(\frac{\rho_g}{\rho_b}\bigg)^\frac{r+\widehat\lambda_1^1}{r+\widehat\lambda^0_1}\bigg(u^{b}-\displaystyle\frac{B}{r+\widehat\lambda_1^1}\bigg) +\displaystyle\frac{B}{r+\widehat\lambda_1^1}, \ u^{b}\in\bigg[\frac{B}{r+\widehat\lambda_1^1},x_1^{\star}\bigg), \\[0.8em]
\displaystyle\frac{\rho_g}{\rho_b}{\widehat b_1}^\frac{\widehat\lambda_1^1-\widehat\lambda^0_1}{r+\widehat\lambda_1^1}\bigg(\displaystyle\frac{r+\widehat\lambda_1^1}{r+\widehat\lambda^0_1}\bigg)^\frac{r+\widehat\lambda^0_1}{r+\widehat\lambda_1^1}\bigg(u^{b}-\displaystyle\frac{B}{r+\widehat\lambda_1^1}\bigg)^\frac{r+\widehat\lambda^0_1}{r+\widehat\lambda_1^1}, \ u^{b}\in\big[x_1^\star,\widehat b_1\big), \\[0.8em] 
\displaystyle\frac{\rho_g}{\rho_b}  u^{b}, \  u^{b}\in\big[\widehat b_1,+\infty\big).
\end{cases} 
\end{equation}
\end{Lemma}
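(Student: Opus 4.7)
The approach is to exploit Lemma \ref{lemma:diffusion}, which already produces a one-parameter family $(\widehat{\mathcal U}_1^C)_{C>0}$ of $C^1$ solutions to the diffusion equation \eqref{hjb1} on $[B/(r+\widehat\lambda_1^1),+\infty)$. The variational inequality \eqref{dp1} splits the positive half-line into (i) a region where the gradient constraint $\widehat{\mathcal U}_1'\geq \rho_g/\rho_b$ is strict, and the diffusion equation must hold; (ii) a region where the gradient constraint saturates, on which necessarily $\widehat{\mathcal U}_1(u^b)=(\rho_g/\rho_b)u^b+\kappa$ for some constant. Condition \eqref{saturation} dictates that region (i) is of the form $[B/(r+\widehat\lambda_1^1),x^\star)$ and region (ii) is $[x^\star,+\infty)$, and that the two pieces are glued in a $C^1$ manner at $x^\star$. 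This $C^1$ matching together with the saturation condition will uniquely determine the constant $C$.

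Concretely, I first compute, for each $C>0$, the derivative profile of $\widehat{\mathcal U}_1^C$. On the linear piece $[B/(r+\widehat\lambda_1^1),x_1^{C,\star})$ the slope is the constant $C^{(r+\widehat\lambda_1^1)/(r+\widehat\lambda_1^0)}$; on the power piece $[x_1^{C,\star},\widehat b_1)$, since the exponent $(r+\widehat\lambda_1^0)/(r+\widehat\lambda_1^1)<1$, the derivative is strictly decreasing and, by the $C^1$ construction of Lemma \ref{lemma:diffusion}, equals $C^{(r+\widehat\lambda_1^1)/(r+\widehat\lambda_1^0)}$ at $x_1^{C,\star}$ and $C$ at $\widehat b_1^{-}$; on $[\widehat b_1,+\infty)$ the slope is the constant $C$. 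Hence the minimal value of $(\widehat{\mathcal U}_1^C)'$ over its diffusion region is exactly $C$, attained for the first time at $\widehat b_1$. Matching it to $\rho_g/\rho_b$ forces $C=\rho_g/\rho_b$, and the natural candidate for $x^\star$ is $\widehat b_1$ (the condition $(\widehat{\mathcal U}_1^\star)'(u^b)>\rho_g/\rho_b$ for $u^b<\widehat b_1$ then follows at once from $\rho_g/\rho_b>1$ and $(r+\widehat\lambda_1^1)/(r+\widehat\lambda_1^0)>1$, which imply that the constant slope on the first piece strictly exceeds $\rho_g/\rho_b$, while on the middle piece the slope decreases continuously from that value down to $\rho_g/\rho_b$ at $\widehat b_1$).

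It remains to verify that the function defined by \eqref{HJBsolution1} (with $C=\rho_g/\rho_b$) actually solves the variational inequality \eqref{dp1}. On $[B/(r+\widehat\lambda_1^1),\widehat b_1)$ the gradient term is positive by the previous step, and the diffusion term vanishes by Lemma \ref{lemma:diffusion}. On $[\widehat b_1,+\infty)$ the gradient term vanishes by construction; I then plug $\widehat{\mathcal U}_1^\star(u^b)=(\rho_g/\rho_b)u^b$ and its derivative into the first bracket of \eqref{dp1}, noting that here $u^b\geq\widehat b_1$ implies $k^b=0$, and $\widehat{\mathcal U}_1^\star(u^b)=(\rho_g/\rho_b)u^b\geq(\rho_g/\rho_b)\widehat b_1>\widehat b_1$ implies $k^g=0$. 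A direct simplification shows the bracket evaluates exactly to $0$, so the variational inequality is satisfied with equality on both regions.

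Uniqueness within the class of $C^1$ solutions satisfying \eqref{saturation} follows from the fact that, in region (i), any solution must belong to the family $\widehat{\mathcal U}_1^C$ of Lemma \ref{lemma:diffusion}, and the analysis above shows that only $C=\rho_g/\rho_b$ is compatible with the $C^1$ pasting at $x^\star$ together with the sign condition on the derivative. The main subtlety I expect is justifying that the saturation point is located precisely at $\widehat b_1$ rather than somewhere inside the middle (power) piece; this is what forces the choice $C=\rho_g/\rho_b$ and rules out spurious $C^1$ pastings at interior points of the diffusion region, since any such paste would produce a slope strictly smaller than $\rho_g/\rho_b$ immediately after $x^\star$ on the power branch, violating \eqref{saturation} on the left and making the gradient constraint fail on the right.
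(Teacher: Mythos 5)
Your overall architecture matches the paper's: restrict to the family $\widehat{\mathcal U}_1^C$ of Lemma \ref{lemma:diffusion} in the region where the gradient constraint is strict, continue linearly with slope $\rho_g/\rho_b$ past the saturation point, and verify the variational inequality. Your verification that the candidate with $C=\rho_g/\rho_b$ solves \eqref{dp1} on both regions is correct (the bracket does vanish identically on $[\widehat b_1,+\infty)$).

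However, the uniqueness step --- the mechanism by which you single out $C=\rho_g/\rho_b$ --- has a genuine gap. You claim that pasting at an interior point of the power branch would ``produce a slope strictly smaller than $\rho_g/\rho_b$ immediately after $x^\star$ on the power branch, violating \eqref{saturation} on the left and making the gradient constraint fail on the right.'' This is not what happens. For any $C$ with $(\rho_g/\rho_b)^{(r+\widehat\lambda_1^0)/(r+\widehat\lambda_1^1)}<C<\rho_g/\rho_b$, the pasted candidate $\widehat{\mathcal U}_1^{C,\star}$ follows $\widehat{\mathcal U}_1^C$ only up to the first point where its derivative hits $\rho_g/\rho_b$ (which lies strictly inside the power piece) and is then replaced by the straight line of slope exactly $\rho_g/\rho_b$; it does \emph{not} continue along the power branch. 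Consequently this candidate is $C^1$, satisfies the gradient constraint everywhere, satisfies condition \eqref{saturation} (the slope on the linear piece is $C^{(r+\widehat\lambda_1^1)/(r+\widehat\lambda_1^0)}>\rho_g/\rho_b$ and the power piece's derivative stays above $\rho_g/\rho_b$ before the saturation point), and solves the diffusion equation wherever the constraint is strict. Your criterion therefore fails to exclude it. What actually excludes it is that the \emph{first} term of the min in \eqref{dp1} --- the supersolution inequality for the diffusion operator --- must also hold in the saturated region, and for $u^{b}>\widehat b_1$ one computes
$$
r\widehat{\mathcal U}_1^{C,\star}(u^{b})-\bigl(\widehat{\mathcal U}_1^{C,\star}\bigr)'(u^{b})\,(r+\widehat\lambda_1^0)u^{b}+\widehat\lambda_1^0\,\widehat{\mathcal U}_1^{C,\star}(u^{b})<0\quad\text{whenever }C<\tfrac{\rho_g}{\rho_b},
$$
with equality only at $C=\rho_g/\rho_b$ (an analogous computation disposes of the degenerate case where saturation occurs at the left endpoint). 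This explicit check in the saturated region is the missing ingredient; without it, your argument only establishes $C\leq\rho_g/\rho_b$, not $C=\rho_g/\rho_b$.
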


As an illustration, in Figure \ref{fig:cset1} we show the credible set which corresponds to the region delimited by its upper and lower boundaries. In this example, we considered $r=0.02$, $B=0.002$, $\varepsilon=0.25$, $\alpha_1=0.055$, $\frac{\rho_g}{\rho_b}=2$.

\begin{figure}[!ht]
\centering
\includegraphics[scale=0.9]{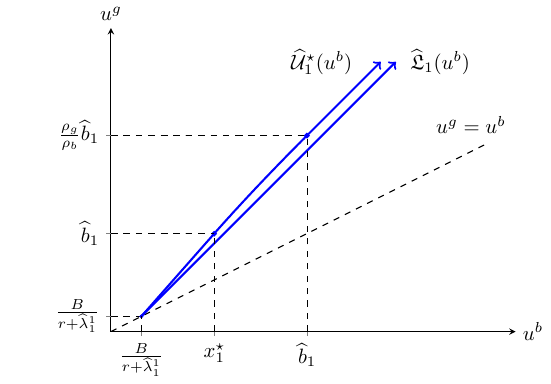}
\caption{Credible set with one loan left.}  \label{fig:cset1}
\end{figure}

\vspace{0.5em} 
\hspace{3em}{$\bullet$ \bf Step 3: solving the HJB equation in the general case} 

\vspace{0.5em} 
In the general case, when $j>1$, we can reduce the number of variables and rewrite the diffusion equation \eqref{HJBequation} in an equivalent form
\begin{equation}\label{hjbj}
r\widehat{\mathcal U}_j(u^{b})=\sup_{(\theta,h^1)\in \widehat{C}^j} 
\Big\{
\widehat{\mathcal U}_j^\prime(u^{b})\big(ru^{b}-Bk^{b}+[u^{b}-\theta(u^{b}-h^1)]\widehat\lambda^{k^{b}}_j \big) + \big(\widehat{\mathcal U}_{j-1}(u^{b}-h^1)\theta-\widehat{\mathcal U}_j(u^{b})\big)\widehat\lambda^{k^g}_j +Bk^g  
\Big\},
\end{equation}
where we recall that $k^{b}={\bf 1}_{\{u^{b}-\theta(u^{b}-h^1)<\widehat b_j\}} ,~ k^g={\bf 1}_{\{\widehat{\mathcal U}_j(u^{b})-\theta \widehat{\mathcal U}_{j-1}(u^{b}-h^1)<\widehat b_j\}}$ and the set of constraints is now given by
\begin{equation}\label{cjota}
\widehat{C}^j:=
\left\{
(\theta,h^1)\in[0,1]\times\R_+,~ u^{b}\geq h^1+\frac{B(j-1)}{r+\widehat\lambda_{j-1}^{\rm SH}}
 \right\}.
\end{equation}
When we proved that the lower boundary of the credible set is reachable we used contracts of maximum duration, which maintain the pool until the last default. This gives us the intuition that the longer the contract lasts, the smaller the difference between the utilities of the banks will be. Therefore the upper boundary of the credible set, where the difference between both utilities is maximal, should be reachable with contracts of minimum duration, which terminate the contractual relationship immediately after the first default. In the model this means that $\theta$ is equal to zero and the resulting HJB equation for the upper boundary has the same form as the one in the case with one loan left. We expect then that the solution of the diffusion equation will be the of the same form as \eqref{HJBsolution1}. The object of the next proposition is to prove our guess rigorously. We postpone the proof to Appendix \ref{sec:D}.

\begin{Proposition} \label{prop:uj functions} For any $j\geq 1$, the function $\widehat{\mathcal U}_j^\star$ defined by 
\begin{equation} \label{ujota} 
\widehat{\mathcal U}_j^\star(u^{b}):=\begin{cases}
\displaystyle\bigg(\frac{\rho_g}{\rho_b}\bigg)^\frac{r+\widehat\lambda_j^{\rm SH}}{r+\widehat\lambda_j^0}\bigg(u^{b}-\ds\frac{Bj}{r+\widehat\lambda_j^{\rm SH}}\bigg) +\ds\frac{Bj}{r+\widehat\lambda_j^{SH}}, \ u^{b}\in\bigg[\frac{Bj}{r+\widehat\lambda_j^{SH}},x_j^\star\bigg), \\[0.8em] 
\displaystyle \frac{\rho_g}{\rho_b}{\widehat b_j}^\frac{\widehat\lambda_j^{\rm SH}-\widehat\lambda_j^0}{r+\widehat\lambda_j^{\rm SH}}\bigg(\frac{r+\widehat\lambda_j^{SH}}{r+\widehat\lambda_j^0}\bigg)^\frac{r+\widehat\lambda_j^0}{r+\widehat\lambda_j^{\rm SH}}\bigg(u^{b}-\ds\frac{Bj}{r+\widehat\lambda_j^{\rm SH}}\bigg)^\frac{r+\widehat\lambda_j^0}{r+\widehat\lambda_j^{\rm SH}} ,\ u^{b}\in\big[x_j^\star,\widehat b_j\big), \\[0.8em]
\displaystyle\frac{\rho_g}{\rho_b}u^{b},\ u^{b}\in\big[\widehat b_j,+\infty\big), 
\end{cases}
\end{equation} 
where $\ds x_j^{\star}:=\bigg(\frac{\rho_b}{\rho_g}\bigg)^{\frac{r+\widehat\lambda_j^{\rm SH}}{r+\widehat\lambda_j^0}}\widehat b_j\frac{r+\widehat\lambda_j^0}{r+\widehat\lambda_j^{\rm SH}}+\frac{Bj}{r+\widehat\lambda_j^{\rm SH}}$, is a solution of the {\rm HJB} equation \eqref{DPequation}. 
\end{Proposition}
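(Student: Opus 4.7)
The proof proceeds by induction on $j \geq 1$, with $j=1$ being exactly Lemma \ref{lemma:HJB1}. Suppose the result is established at step $j-1$, so that $\widehat{\mathcal U}_{j-1}^\star$ given by \eqref{ujota} solves the corresponding HJB equation. My plan is then to reduce the multi-loan variational inequality \eqref{DPequation} at step $j$ to a one-dimensional ODE of exactly the same shape as the one-loan equation \eqref{hjb1}, by identifying the optimal liquidation policy in the supremum, and to mimic verbatim the analysis of Lemmas \ref{lemma:diffusion} and \ref{lemma:HJB1}. The guiding heuristic, spelled out just before the statement, is that minimum-duration contracts (liquidating right after the next default) maximise the spread between the two types' utilities, so I would conjecture $\theta^\star = 0$. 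Plugging $\theta = 0$ into \eqref{hjbj} collapses $h^1+(1-\theta)h^2$ to $u^b$, kills the term $\theta\widehat{\mathcal U}_{j-1}(u^b-h^1)$, and makes the indicators $h^1$-independent: $k^b = j\mathbf{1}_{\{u^b<\widehat b_j\}}$ and $k^g = j\mathbf{1}_{\{\widehat{\mathcal U}_j(u^b)<\widehat b_j\}}$. What remains has exactly the shape of \eqref{hjb1} with $\widehat\lambda_j^{SH}$, $\widehat\lambda_j^{0}$, $\widehat b_j$ and $Bj/(r+\widehat\lambda_j^{SH})$ replacing their $j=1$ counterparts, and the power-linear gluing at $x_j^\star$ is forced by $C^1$ matching.

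With the candidate \eqref{ujota} in hand, I would verify that it satisfies the full variational inequality \eqref{DPequation}, not merely the diffusion equation under the conjecture $\theta = 0$. I would split the domain into three pieces. On $[\widehat b_j,+\infty)$, $(\widehat{\mathcal U}_j^\star)'(u^b) = \rho_g/\rho_b$ saturates the gradient constraint, so it only remains to show that the supremum over $C^j$ of the bracketed expression is non-positive. On $[Bj/(r+\widehat\lambda_j^{SH}),x_j^\star)$ and $[x_j^\star,\widehat b_j)$, one has $(\widehat{\mathcal U}_j^\star)'(u^b)>\rho_g/\rho_b$, so the second argument of the $\min$ is strictly positive and the supremum must equal $r\widehat{\mathcal U}_j^\star(u^b)$; my choice $\theta = 0$ already attains equality, so the real work is to show that no other $(\theta,h^1,h^2)\in C^j$ does strictly better. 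Continuity and $C^1$ matching at $x_j^\star$ and $\widehat b_j$ follow by direct substitution using the very definition of $x_j^\star$.

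The main obstacle is precisely this verification that $\theta^\star = 0$ is a true maximiser and not merely a critical point. Differentiating the objective of \eqref{hjbj} with respect to $\theta$, away from the indicator switching sets, yields the sign
$$-\bigl(\widehat{\mathcal U}_j^\star\bigr)'(u^b)\bigl(u^b-h^1\bigr)\widehat\lambda_j^{k^b} + \widehat{\mathcal U}_{j-1}^\star\bigl(u^b-h^1\bigr)\widehat\lambda_j^{k^g},$$
which must be shown to be non-positive for every admissible $h^1$. I would run a case analysis driven by the positions of $u^b$, $u^b-h^1$, and $u^b-\theta(u^b-h^1)$ relative to $\widehat b_j$ (and of $u^b-h^1$ relative to $\widehat b_{j-1}$ for $\widehat{\mathcal U}_{j-1}^\star$), substituting the inductive formula \eqref{ujota} for $\widehat{\mathcal U}_{j-1}^\star$ and using the bound $(\widehat{\mathcal U}_j^\star)'(u^b)\geq \rho_g/\rho_b$. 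On the linear piece $[\widehat b_j,+\infty)$ this reduces to comparing explicit affine expressions in the indicators; on the concave pieces it is more delicate, as the concavity of $\widehat{\mathcal U}_{j-1}^\star$ combined with the scaling factor $\rho_g/\rho_b$ in the derivative of $\widehat{\mathcal U}_j^\star$ is what produces the right sign. Finally, I would check that the constraint $h^1 \leq u^b - B(j-1)/(r+\widehat\lambda_{j-1}^{SH})$ in $\widehat C^j$ is not binding at $\theta = 0$, so the candidate is a legitimate interior optimum, closing the induction.
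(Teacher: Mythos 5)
Your proposal is correct and follows essentially the same route as the paper: induction on $j$ with the one-loan case as base, showing that $\theta=0$ attains the supremum in \eqref{hjbj} (so the diffusion equation collapses to the one-loan form) by establishing that the coefficient of $\theta$, namely $\widehat{\mathcal U}_{j-1}^\star(u^{b}-h^1)\widehat\lambda_j^{k^g}-(\widehat{\mathcal U}_j^\star)'(u^{b})\widehat\lambda_j^{k^{b}}(u^{b}-h^1)$, is non-positive — which is precisely the paper's Corollary \ref{corollary:theta}, proved there from $\widehat{\mathcal U}_{j-1}^\star(x)/x\leq\rho_g/\rho_b\leq(\widehat{\mathcal U}_j^\star)'(u^b)$ together with the ordering $k^g\leq k^b$ of Corollary \ref{corollary:ordered lambdas}. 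The case analysis you sketch over the positions of $u^b$, $u^b-h^1$ and $u^b-\theta(u^b-h^1)$ relative to $\widehat b_j$ is exactly the three-case decomposition carried out in Appendix \ref{sec:D}.
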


\subsubsection{Verification theorem}

According to the maximisers in equation (\ref{hjbj}) we define the following controls 
\begin{equation} \label{optimalcontrols} 
\begin{cases}
\ds \delta^{j}(u^{b}):={\bf 1}_{\{u^{b}\geq \widehat b_j\}}\frac{u^{b}(r+\widehat\lambda_j^0)}{\rho_b}, \; \ds \theta^{j}(u^{b}):=0, \\
\ds h^{1,b,j}(u^{b}):=u^{b}-\frac{B(j-1)}{r+\widehat\lambda_{j-1}^{\rm SH}}, \; h^{2,b,j}(u^{b}):=\frac{B(j-1)}{r+\widehat\lambda_{j-1}^{\rm SH}}, \\ 
\ds k^{b,j}(u^{b}):=j {\bf 1}_{\{u^{b}<\widehat b_j\}}, \ k^{g,j}(u^{b}):=j {\bf 1}_{\{\widehat{\mathcal U}_j^\star(u^{b})<\widehat b_j\}}. 
\end{cases}
\end{equation}% Antes lo había puesto con I-N_t en vez de j, pero estas son solamente funciones que yo estoy definiendo, no tiene por qué estar involucrado el tiempo. En la proposición de optimalidad, la dinámica de la utilidad hace referencia a estas funciones con superíndices (I-N_t), en cada instante se calcula el número de loans (préstamos) que quedan y se ocupa la función correspondiente =) 
Before stating the verification result for the upper boundary, we make a comment about the domain of the functions $\widehat{\mathcal U}_j^\star$. Rigorously speaking, it is possible for the utilities of the banks to be zero but this happens only at time $\tau$ when all the pools are liquidated. The domain of $\widehat{\mathcal U}_j^\star$ is the set $\widehat{\mathcal{V}}_j$ but in the proof of the verification theorem it will be implicitly understood that $\widehat{\mathcal U}_j^\star(0)=0$. In any case, we do not need the functions $\widehat{\mathcal U}_j^\star$ to be defined at zero because It\^o's formula will be used on intervals which do not contain $\tau$.

\begin{Theorem} \label{verification theorem}
Consider any starting time $t\geq 0$. For any $u^{b}\geq\frac{B (I-N_t)}{r+\widehat\lambda_{I-N_t}^{\rm SH}}$, let the process $(u^{b}_s)_{s\in[t,\tau]}$ be the unique solution of the following {\rm SDE} 
\begin{equation} \label{bcutility} 
u_v^{b}=u^{b}+\ds\int_t^v\Big(\big(r+\lambda_s^{k^{b,I-N_s}}\big)u_s^{b} - Bk^{b,I-N_s}(u_s^{b})- \rho_b\delta^{I-N_s}(u_s^{b}) \Big)\mathrm{d}s-\int_t^v u_{s^-}^{b}\mathrm{d}N_s,~v\in[t,\tau]. 
\end{equation} 
Then, under the contract $\Psi^{\star}:=(D^{\star},\theta^{\star},h^{1,b,\star},h^{2,b,\star})\in\mathcal{D}\times\Theta\times\mathcal{H}^2$ defined for $s\in[t,\tau]$ by 
\[
\mathrm{d}D_s^{\star}:=\delta^{I-N_s}(u_s^{b})\mathrm{d}s,~\theta_s^{\star}\equiv 0,~ h_s^{1,b,\star}:=h^{1,b,I-N_s}(u_s^{b}),~ h_s^{2,b,\star}:=h^{2,b,I-N_s}(u_s^{b}),
\]
the value function of the bad bank is $U_t^{b}(\Psi^{\star})=u^{b}$ and the one of good bank is $U_t^g(\Psi^{\star})=\widehat{\mathcal U}_{I-N_t}^\star(u^{b})$. Moreover, $\Psi^{\star}\in\overline{\mathcal A}^{b}(t,u^{b})$ and for any other contract which belongs to $\overline{\mathcal A}^{b}(t,u^{b})$, the value function of the good bank under such a contract is less or equal to $\widehat{\mathcal U}_{I-N_t}^\star(u^{b})$. In particular, this implies that 
\[\widehat{\mathcal U}_{I-N_t}^\star(u^{b})=\widehat{\mathfrak U}_{I-N_t}(u^{b})=\mathfrak U_t(u^b).\] 
\end{Theorem}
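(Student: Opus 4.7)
The proof proceeds in three stages, mirroring standard verification arguments for singular stochastic control problems with jumps (cf.\ \cite{fleming2006controlled}), but adapted to the recursive nature of the HJB system across pool sizes $j$.

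\textbf{Stage 1: admissibility and identification of value functions under $\Psi^\star$.} I first check that the SDE \eqref{bcutility} has a unique strong solution on $[t,\tau]$: the coefficients are piecewise Lipschitz in $u^b$ and bounded for $u^b$ in the feasible region, and the jump term $-u^b_{s^-}dN_s$ is well defined since only finitely many defaults may occur (at most $I-N_t$). Next, I verify that $u^b_s \geq B(I-N_s)/(r+\widehat{\lambda}^{SH}_{I-N_s})$ for all $s\in[t,\tau]$. The drift of $u^b_s$ is $(r+\lambda_s^{k^{b,j}})u^b_s - Bk^{b,j}(u^b_s) - \rho_b\delta^j(u^b_s)$, and by construction $\delta^j$ vanishes on $[Bj/(r+\widehat{\lambda}_j^{SH}),\widehat{b}_j)$ while $k^{b,j}$ equals $j$ there; a direct computation shows the drift equals $0$ at the lower boundary $Bj/(r+\widehat{\lambda}_j^{SH})$ and is non-negative there, so $u^b_s$ cannot cross below. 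At each default time, the number of loans decreases from $j$ to $j-1$ and the jump $-u^b_{s^-}$ resets $u^b_s$ to $0$, but the constraint $h^{1,b,j}(u^b_{s^-}) = u^b_{s^-} - B(j-1)/(r+\widehat{\lambda}_{j-1}^{SH})$ guarantees $u^b_{s^-}-h^{1,b,j}(u^b_{s^-}) \geq B(j-1)/(r+\widehat{\lambda}_{j-1}^{SH})$, i.e.\ admissibility across the jump. Given admissibility, I check that the dynamics of $u^b_s$ coincides with the one in \eqref{eq:sde bad value} under $\Psi^\star$, so by uniqueness $U^b_s(\Psi^\star) = u^b_s$ and in particular $U^b_t(\Psi^\star)=u^b$.

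\textbf{Stage 2: the good bank's value under $\Psi^\star$.} I apply It\^o's formula to $e^{-r(s-t)}\widehat{\mathcal U}_{I-N_s}^\star(u^b_s)$ on $[t,\tau]$. Because $\widehat{\mathcal U}_j^\star$ is $C^1$ on $[Bj/(r+\widehat{\lambda}_j^{SH}),\infty)$ by Proposition \ref{prop:uj functions} and the jumps of $u^b$ are driven by $N$ (the pool index decreases) and by $D$ (absolutely continuous under $\Psi^\star$), the computation produces, via the diffusion equation \eqref{hjbj} being saturated by the choices $\theta=0$, $h^{1,b,j}$, $h^{2,b,j}$, a discounted drift exactly compensating the running term $Bk^{g,j} ds$ plus a martingale increment. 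After adding $\int_t^s e^{-r(v-t)}(\rho_g dD^\star_v + Bk^{g,j}(u^b_v)dv)$ and using the saturated gradient condition $\widehat{\mathcal U}_j^{\star\prime}=\rho_g/\rho_b$ on the region where $\delta^j>0$, the whole process becomes a true martingale under $\mathbb{P}^{k^{g,j}}$. Taking conditional expectation at $\tau$ (where $\widehat{\mathcal U}_0\equiv 0$) gives $U^g_t(\Psi^\star)=\widehat{\mathcal U}_{I-N_t}^\star(u^b)$. The optimality of $k^{g,j}$ for the good bank under $\Psi^\star$ follows from Proposition \ref{prop:rep} together with Remark \ref{remark incentive compatibility}.

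\textbf{Stage 3: upper bound for arbitrary $\Psi\in\overline{\mathcal A}^b(t,u^b)$.} I proceed by induction on $j=I-N_t$, the case $j=0$ being trivial since $\widehat{\mathcal U}_0\equiv 0$. Fix any admissible $\Psi$ and let $k^g(\Psi)$ be the good bank's best response. Apply It\^o's formula to
\begin{equation*}
\Xi_s := e^{-r(s-t)}\widehat{\mathcal U}_{I-N_s}^\star\bigl(U^b_s(\Psi)\bigr) + \int_t^s e^{-r(v-t)}\bigl(\rho_g dD_v + Bk^g_v(\Psi)dv\bigr),\ s\in[t,\tau].
\end{equation*}
The continuous drift part is bounded above by $0$ thanks to the HJB inequality \eqref{DPequation}: the supremum over $(\theta,h^1,h^2)\in C^j$ dominates any admissible choice coming from $\Psi$. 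The absolutely continuous part of $D$ contributes $-\rho_b\widehat{\mathcal U}_j^{\star\prime}(U^b_s)dD^c_s + \rho_g dD^c_s \leq 0$ by the gradient constraint $\widehat{\mathcal U}_j^{\star\prime}\leq \rho_g/\rho_b$, and the same argument handles the lump jumps of $D$ using the mean value theorem combined with the gradient bound. Consequently $\Xi$ is a $\mathbb{P}^{k^g(\Psi)}$-supermartingale, integrable because $\widehat{\mathcal U}_j^\star$ grows linearly and $\mathbb{E}^{\mathbb{P}^{k^g(\Psi)}}[D_\tau]<\infty$. Taking expectation and using $\widehat{\mathcal U}_0\equiv 0$ at $\tau$ yields
\begin{equation*}
\widehat{\mathcal U}_{I-N_t}^\star(u^b)\geq \mathbb{E}^{\mathbb{P}^{k^g(\Psi)}}\Bigl[\int_t^\tau e^{-r(s-t)}\bigl(\rho_g dD_s+Bk^g_s(\Psi)ds\bigr)\Bigm|\mathcal G_t\Bigr]=U^g_t(\Psi),
\end{equation*}
which with Stage 2 gives $\widehat{\mathfrak U}_{I-N_t}(u^b)=\widehat{\mathcal U}_{I-N_t}^\star(u^b)$.

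\textbf{Main obstacles.} The hardest part will be Stage 3: making the It\^o expansion fully rigorous at the default jumps (where both $N_s$ and the active branch of the $\widehat{\mathcal U}_j^\star$ family change simultaneously) and controlling the singular part of $D$ via the gradient constraint, including any simultaneous lump payment at a default time. Establishing integrability of $\Xi$ under $\mathbb{P}^{k^g(\Psi)}$ -- so the supermartingale inequality actually transfers to expectations at the random horizon $\tau$ -- will also require a careful uniform estimate using the linear growth of $\widehat{\mathcal U}_j^\star$ together with the admissibility bound on $\mathbb{E}^{\mathbb{P}^{k^g(\Psi)}}[D_\tau]$ and the finiteness of the number of jumps of $N$ before $\tau$.
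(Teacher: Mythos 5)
Your proposal is correct in substance and its decisive part, Stage~3, is essentially the paper's own argument: the paper defines the same process (called $G_w$ there), expands it piecewise between consecutive default times, kills the drift with the variational inequality \eqref{DPequation} and the gradient constraint, establishes integrability of the martingale parts via the a priori bound $U^b_s(\Psi)\leq u^b e^{(r+\lambda)s}$ and the linear growth of $\widehat{\mathcal U}^\star_j$, and then passes to the random horizon $\tau$ by Fatou's lemma along $\tau\wedge v$, $v\to\infty$ --- exactly the localisation issue you flag as your main obstacle. Where you genuinely diverge is in Stages~1--2: the paper does not run It\^o's formula a second time to identify $U^g_t(\Psi^\star)$. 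Instead it observes that, because $\theta^\star\equiv 0$ and $\delta^{I-N_t}$ only switches on when $u^b_s$ hits $\widehat b_{I-N_t}$, the contract $\Psi^\star$ is precisely a short--term contract with delayed constant payment as classified in Appendix~B, and it reads off both $U^b_t(\Psi^\star)=u^b$ and $U^g_t(\Psi^\star)=\widehat{\mathcal U}^\star_{I-N_t}(u^b)$ from the closed--form expressions there. Your martingale route for Stage~2 works but requires you to verify that the specific controls attain the supremum in \eqref{hjbj} (which is the content of the proof of Proposition \ref{prop:uj functions}); the paper's route is more elementary and sidesteps that. Two small corrections: first, in Stage~3 you invoke the gradient constraint as $\widehat{\mathcal U}_j^{\star\prime}\leq \rho_g/\rho_b$, but the variational inequality \eqref{DPequation} enforces $\widehat{\mathcal U}_j^{\star\prime}\geq \rho_g/\rho_b$, and it is this direction that makes $(\rho_g-\rho_b\widehat{\mathcal U}_j^{\star\prime})\,dD_s\leq 0$; the conclusion you draw is right but the stated inequality is reversed. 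Second, under $\Psi^\star$ the pool is liquidated at the first default (since $\theta^\star\equiv 0$), so the picture of the index dropping from $j$ to $j-1$ and the process continuing does not occur along the optimal contract itself --- the admissibility condition $u^b_{s^-}-h^{1,b,\star}_s\geq B(I-N_s)/(r+\lambda_s^{I-N_s})$ you check is nonetheless the right one for membership in $\overline{\mathcal A}^b(t,u^b)$.
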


To conclude the section, we state that $\overline{{\mathcal C}}_j$ is indeed equal to the credible set with $j$ loans left and therefore the functions $\widehat{\mathfrak U}_j$ and $\widehat{\mathfrak L}_j$ correspond to its upper and lower boundaries.

\begin{Proposition} \label{credibleset}
For every $t\geq 0$, ${{\mathcal C}_t}=\overline{\mathcal C}_t$.
\end{Proposition}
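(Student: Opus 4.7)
The plan is to prove $\widehat{\mathcal C}_j = {\mathcal C}_j$ by establishing the two inclusions separately. The direction ${\mathcal C}_j \subseteq \widehat{\mathcal C}_j$ is tautological: for any $(u^b, u^g) \in \mathcal C_j$ witnessed by some $(\theta, D) \in \mathcal A^b(t, u^b)$ with $U_t^g(\theta, D) = u^g$, the very definitions of $\widehat{\mathfrak U}_j(u^b)$ and $\widehat{\mathfrak L}_j(u^b)$ as the supremum and infimum of $U_t^g$ over this same set of contracts force $\widehat{\mathfrak L}_j(u^b) \leq u^g \leq \widehat{\mathfrak U}_j(u^b)$, while the feasibility $u^b, u^g \in \widehat{\mathcal V}_j$ follows from Lemma~\ref{lemma:feasible set}.

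For the harder direction $\widehat{\mathcal C}_j \subseteq {\mathcal C}_j$, the plan is first to attain the two boundaries. The upper boundary $\widehat{\mathfrak U}_j = \widehat{\mathcal U}_j^\star$ is realised by the optimal contract $\Psi^\star$ produced in Theorem~\ref{verification theorem}, which, via the BSDE representation of Proposition~\ref{prop:rep}, corresponds to a genuine admissible pair $(\theta,D) \in \Theta \times \mathcal D$. The lower boundary is realised by the contracts used to saturate the inequalities of Lemma~\ref{lemma:inequalities}: on the first affine piece where $\widehat{\mathfrak L}_j(u^b) = u^b$, by a short-duration \emph{shirk-and-liquidate-at-first-default} contract ($\theta \equiv 0$ with an appropriate initial lump-sum payment $D_0$), and on the second piece, by a maximum-duration shirking contract ($\theta \equiv 1$ until the final default) together with a calibrated $D$ realising equality in~\eqref{ineq2}.

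For an interior point $(u^b, u^g)$ with $\widehat{\mathfrak L}_j(u^b) < u^g < \widehat{\mathfrak U}_j(u^b)$, my plan is a public randomisation argument. Write $u^g = \alpha \widehat{\mathfrak U}_j(u^b) + (1-\alpha) \widehat{\mathfrak L}_j(u^b)$ for the unique $\alpha \in (0,1)$, enlarge the probability space by an independent Bernoulli random variable $\xi$ with $\mathbb P(\xi = 1) = \alpha$ declared $\mathcal G_t$-measurable and independent of $N$, and denote by $\Psi^U$ and $\Psi^L$ the two extreme contracts of the previous paragraph. Define the mixed contract $\Psi := \xi \Psi^U + (1-\xi) \Psi^L$. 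Because both extreme contracts share the same bad-bank value $u^b$, conditioning on $\xi$ yields $U_t^b(\Psi) = u^b$; the good-bank value inherits the weighted sum $\alpha \widehat{\mathfrak U}_j(u^b) + (1-\alpha) \widehat{\mathfrak L}_j(u^b) = u^g$ by linearity of the utility functional~\eqref{bank utility eq} in the distribution of the contract. Conditionally on $\xi$, $\Psi$ coincides with one of the two admissible extreme contracts, so the dynamic feasibility $(U_s^b, U_s^g) \in \widehat{\mathcal V}_{I-N_s}\times\widehat{\mathcal V}_{I-N_s}$ for $s \in [t, \tau)$ is inherited fibre-wise.

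The main technical obstacle will be justifying the compatibility of this filtration enlargement with the Girsanov construction underpinning the contracting problem: since the Radon--Nikodym density $Z^k$ is a functional of $N$ alone, independence of $\xi$ and $N$ under $\mathbb P$ persists under every $\mathbb P^k$, and the BSDE representation of Proposition~\ref{prop:rep} extends verbatim to each fibre $\{\xi = 0\}$ and $\{\xi = 1\}$, so that $\Psi$ qualifies as an admissible contract in the enlarged framework. This formalises the standard intuition that a publicly observed coin flip at the initial contracting time is a costless device allowing the investor to attain any convex combination of value pairs sharing the same bad-bank value.
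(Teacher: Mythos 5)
Your easy inclusion and your treatment of the two boundaries match the paper. The interior step, however, is where your argument and the paper's genuinely diverge, and yours has a gap. The paper does \emph{not} randomise: it attains every boundary point with an explicit $\G$-predictable contract and then sweeps out the interior by adding a deterministic initial lump--sum payment $\ell$, which translates the value pair by $\ell(\rho_b,\rho_g)$ without altering incentives; the geometry of $\widehat{\mathfrak L}_j$ (piecewise linear with slopes $1$ and $\rho_g/\rho_b$) and of $\widehat{\mathfrak U}_j$ (derivative $\geq \rho_g/\rho_b$) guarantees that the rays starting from the two boundaries in the direction $(\rho_b,\rho_g)$ cover all of $\widehat{\mathcal C}_j$. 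Note that the paper's sweep moves along $(\rho_b,\rho_g)$, not along vertical segments at fixed $u^b$ as your convex combination does.

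The gap in your public--randomisation argument is twofold. First, the credible set is defined relative to contracts $(\theta,D)\in\Theta\times\mathcal D$, i.e.\ processes that are predictable for the filtration $\G$ generated by $N$ and $\tau$. A contract depending on an auxiliary Bernoulli variable $\xi$ independent of $N$ lives in an enlarged filtration and is therefore \emph{not} admissible in the sense of the proposition; showing that $(u^b,u^g)$ is attainable with randomisation does not show it belongs to $\mathcal C_j$ as defined. You would need a separate argument that randomisation can be dispensed with --- which is precisely what the paper's lump--sum construction supplies. Second, even within the enlarged framework your construction is internally inconsistent on timing: if $\xi$ is declared $\mathcal G_t$--measurable (as you need for $(\theta,D)$ to be predictable from time $t$ on), then $U_t^g(\Psi)$, being a conditional expectation given $\mathcal G_t$, equals $\widehat{\mathfrak U}_j(u^b)$ or $\widehat{\mathfrak L}_j(u^b)$ on the respective fibres, not their convex combination $u^g$; if instead $\xi$ is revealed after $t$, the contract is not predictable. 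Either way the claimed identity $U_t^g(\Psi)=u^g$ fails, and the dynamic feasibility requirement at time $t$ itself is not met by the mixture.
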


\section{Optimal contracts}\label{sec:opti}

In this section we study two kind of contracts that the investor can offer to the bank, the shutdown contract, which corresponds to a single contract designed to be accepted only by the good bank and the screening contract, corresponding to a menu of contracts, one for each type of agent, providing incentives to the bank to accept the contract designed for her true type.

\subsection{Shutdown contract}

In the so--called shutdown contract, the investor designs a contract $\Psi_g=(k^g,D^g,\theta^g)$ only for the good bank and makes sure that the bad bank will not accept it. Under these conditions the utility of the investor at time $t=0$ is
\begin{equation}\label{p utility shutdown}
{v}^{\rm g,Shut}_{0}(\Psi_g) = p_g \E^{\P^{k^g}}\bigg[ \ds\int_0^\tau  \mu(I-N_s)\mathrm{d}s - \mathrm{d}D_s^g \bigg].
\end{equation}
So the investor will offer a contract which maximises \eqref{p utility shutdown} subject to the constraints
\begin{align*}
& u_{0}^g(k^g,\theta^g,D^g)\geq R_0^g, \  \underset{k\in\mathfrak K}{\sup}\ u_{0}^b(k,\theta^g,D^g)\leq R_0^b , \ u_{0}^g(k^g,\theta^g,D^g)= \underset{k\in\mathfrak K}{\sup}\ u_{0}^g(k,\theta^g,D^g).
\end{align*}

Recalling the dynamics \eqref{dynamic ug}--\eqref{dynamic ubc}, we can rewrite the investor's maximisation problem as follows
\[ v_0^{\rm Shut}:=\underset{(\theta^g,D^g)\in{\Ac}^g_{\rm Shut}}{\sup}p_g\E^{\P^{k^{\star,g}(\theta^g,D^g)}}\bigg[\int_0^\tau\mu(I-N_s)\mathrm{d}s-\mathrm{d}D^g_s\bigg],\]
where
\[{\Ac}^g_{\rm Shut}:=\left\{(\theta^g,D^g)\in\Theta\times\Dc: U_{0}^{b,c}(\theta^g,D^g)\leq R_0^b,  U_{0}^g(\theta^g,D^g) \geq R_0^g\right\}.\]

\begin{Remark}
We will use the notation $U^{b,c}(\theta^g,D^g)$ for the value function that the bad bank gets if she does not reveal her true type and accepts the contract designed for the good bank. We make a distinction between this process and $U^b(\theta^b,D^b)$, which corresponds to the value function that the bad bank obtains if she accepts the contract designed for her by the investor. We make the same distinction between the associated processes $h^{1,b,c}(\theta,D)$, $h^{2,b,c}(\theta,D)$ and $h^{1,b}(\theta,D)$, $h^{2,b}(\theta,D)$.
\end{Remark}

As in the previous section, we will define a simple set of contracts and consider the value functions of the agents as diffussion processes controlled by $(D,\theta,h^{1,g},h^{2,g},h^{1,b,c},h^{2,b,c})$. As explained before, by doing so we do not look at a larger class of "contracts". 

\vspace{.5em}
Define for any $(t,u^g,u^{b,c})\in [0,+\infty)\times\mathcal{C}_{t}$, $\widehat{\mathcal A}^{g}(t,u^g,u^{b,c})$ to be the set of $\Psi_g=(D^g,\theta^g,h^{1,g},h^{2,g},h^{1,b,c},h^{2,b,c})\in{\mathcal D}\times\Theta\times{\mathcal H}^4$ such that \eqref{eq:sde good value} and \eqref{eq:sde bad value} have at least one weak solution, which satisfy the first integrability condition in \eqref{eq:integ}, and in addition, for any $s\in[t,\tau]$
\begin{align*}
&U_{s^-}^{g} (\Psi_g)= h_s^{1,g}+h_s^{2,g} ,\
U_{s^-}^{g}(\Psi_g)-h_s^{1,g}\geq\frac{B(I-N_s)}{r+\lambda_s^{I-N_s}},\ U_t^{g}(\Psi_g)=u^{g} ,\\ &U_{s^-}^{b,c} (\Psi_g)= h_s^{1,b,c}+h_s^{2,b,c} ,\
U_{s^-}^{b,c}(\Psi_g)-h_s^{1,b,c}\geq\frac{B(I-N_s)}{r+\lambda_s^{I-N_s}},\ U_t^{b,c}(\Psi_g)=u^{b,c}.
\end{align*}

We will also consider in the sequel the following standard control problem, for any $(u^{b,c},u^g)\in\mathcal C_0$
\[
\widehat v^g_0(u^{b,c},u^g):=\sup_{\Psi_g\in\widehat{\mathcal{A}}^g(0,u^g,u^{b,c})}  p_g\E^{\P^{k^{\star,g}(\Psi_g)}}\bigg[\ds\int_0^\tau  \mu(I-N_s)\mathrm{d}s - \mathrm{d}D_s^g   \bigg].
\]
We abuse notations and also call elements of $\widehat{\Ac}^g(t,u^g,u^{b,c})$ contracts.

\subsubsection{Value function of the investor}
\label{sec:valueinvestor}
In this section, we characterise the value function of the investor when he offers only shutdown contracts. We will start by computing the value function on the boundaries of the credible set, before explaining how it can be characterised by a specific HJB equation in the interior of the credible set, under reasonable assumptions.

\paragraph{Value function of the investor on the lower boundary}

Recall the lower boundary with $j$ loans left
\[
\widehat{\mathfrak L}_j(u^{b,c})= \begin{cases}
u^{b,c}, \ c(j,1) \leq u^{b,c} \leq C(j), \\
\displaystyle \frac{\rho_g}{\rho_b}u^{b,c}-\frac{(\rho_g-\rho_b)}{\rho_b}C(j) , \ C(j)\leq u^{b,c}<\infty.
\end{cases}
\]
Consider any starting time $t\geq 0$. For $u^{b,c}\in{\mathcal C}_{t}$,  we denote by $V^{\mathfrak L,g}(u^{b,c})$ the value function of the investor on the lower boundary, that is
\begin{equation}\label{eq:lb investor problem}
V^{\mathfrak L,g}_t(u^{b,c}) := \underset{\Psi_g\in\widehat{\mathcal A}^{g}(t,\widehat{\mathfrak L}_{I-N_t}(u^{b,c}),u^{b,c})}{\rm ess\; sup} \E^{\P^{k^{\star,g}(\Psi_g)}} \bigg[\int_t^\tau \big(\mu(I-N_s)\mathrm{d}s-\mathrm{d}D_s^g\big) \bigg|\mathcal G_t \bigg].
\end{equation}

The following two propositions are proved in Appendix \ref{sec:E} and give explicitly the value of $V_t^{\mathfrak L,g}(u^{b,c})$.
\begin{Proposition} \label{prop:value function lb1}
For every $u^{b,c}\in\mathcal C_t$, if $u^{b,c}\geq C(I-N_t)$ then the value function of the investor on the lower boundary is given by
\[
V^{\mathfrak L,g}_t(u^{b,c}) = \ds\sum_{i=N_t}^{I-1} \frac{\mu(I-i)}{\widehat\lambda_{I-i}^{\rm SH}} - \bigg( \frac{ u^{b,c} - C(I-N_t) }{\rho_b} \bigg).
\]
\end{Proposition}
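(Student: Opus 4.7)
The plan is to construct an explicit contract in $\widehat{\mathcal A}^g(t, \widehat{\mathfrak L}_{I-N_t}(u^{b,c}), u^{b,c})$ that achieves the claimed value, and then to establish optimality via a saturation analysis of the inequality \reff{ineq2} of Lemma \ref{lemma:inequalities}. The form of the candidate contract is dictated by how the lower boundary was derived: inequality \reff{ineq2} is tight precisely when the bad bank shirks fully, the pool is maintained at every default so that her shirking rent attains the maximum value $C(I-N_t)$, and the good bank draws no improvement from her own optimal strategy over mimicking the bad one.

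Accordingly, I take $\Psi_g^\star$ with $\theta^\star \equiv 1$ on $[t,\tau)$ and a single lump-sum payment at time $t$ of size $D_0^\star := (u^{b,c} - C(I-N_t))/\rho_b$, the auxiliary processes $h^{i,g,\star}$ and $h^{i,b,c,\star}$ being pinned down by the dynamics of Proposition \ref{prop:rep} so that $\Psi_g^\star \in \widehat{\mathcal A}^g$. Under $\Psi_g^\star$ the pool is never liquidated, no further payment occurs, and each non-monitored loan delivers the private benefit $B$ per unit time with no downside, so full shirking $k^\star = I-N_\cdot$ is strictly optimal for both types of banks. The closed-form computation in Proposition \ref{prop:utility-shirking} with $m = I-N_t$ then gives $U_t^{b,c}(\Psi_g^\star) = \rho_b D_0^\star + C(I-N_t) = u^{b,c}$ and $U_t^g(\Psi_g^\star) = \rho_g D_0^\star + C(I-N_t) = \widehat{\mathfrak L}_{I-N_t}(u^{b,c})$, confirming admissibility. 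Under $\P^{k^{\star,g}(\Psi_g^\star)}$, which coincides with the shirking measure, successive defaults are independent exponentials with rates $\widehat\lambda_{I-N_s}^{SH}$, so the expected (undiscounted) pool revenue equals $\sum_{j=1}^{I-N_t} \mu j / \widehat\lambda_j^{SH}$, which reindexes to the first term in the claim, while the payment cost is exactly $D_0^\star$. This gives the ``$\geq$'' part of the identity.

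For the reverse inequality, I revisit the proof of \reff{ineq2}, which combines three steps: (a) the good bank can always mimic the bad bank's optimal response, giving $U^g \geq U^b + (\rho_g-\rho_b)\E^{\P^{k^{\star,b}}}[\int_{t\wedge\tau}^\tau e^{-r(s-t)} dD_s \mid \mathcal G_t]$; (b) the identity $\rho_b \E^{\P^{k^{\star,b}}}[\int e^{-r(s-t)} dD_s \mid \mathcal G_t] = U^b - \E^{\P^{k^{\star,b}}}[\int e^{-r(s-t)} B k_s^{\star,b} ds \mid \mathcal G_t]$ coming from the definition of $U^b$; and (c) the upper bound of $C(I-N_t)$ on the last shirking rent, attained exactly when $k^{\star,b} = I-N_\cdot$ and $\theta \equiv 1$. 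Any element of $\widehat{\mathcal A}^g$ sitting strictly on the lower boundary must saturate all three steps, which forces $k^{\star,b} = I-N_\cdot$, $\theta \equiv 1$ on $[t, \tau_I]$, and $k^{\star,g} = I-N_\cdot$. The law of $N$ under $\P^{k^{\star,g}}$ is then the shirking law, so the investor's revenue collapses to the constant $\sum_{j=1}^{I-N_t} \mu j / \widehat\lambda_j^{SH}$, and the bad-bank constraint reduces to $\rho_b \E^{\P^{k^{SH}}}[\int_t^{\tau_I} e^{-r(s-t)} dD_s \mid \mathcal G_t] = u^{b,c} - C(I-N_t)$. Since $D$ is non-decreasing and $e^{-r(s-t)} \leq 1$, the payment cost $\E^{\P^{k^{SH}}}[\int_t^{\tau_I} dD_s \mid \mathcal G_t]$ is bounded below by $(u^{b,c} - C(I-N_t))/\rho_b$, with equality iff $D$ concentrates all its mass at time $t$. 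Combining these bounds yields the matching upper bound.

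The main obstacle is the saturation argument: showing that being strictly on the lower boundary in the relaxed class $\widehat{\mathcal A}^g$ genuinely forces the pair of structural constraints, namely $\theta \equiv 1$ until $\tau_I$ and simultaneous full shirking of both banks. Once these are in hand, the stochastic control problem collapses to the deterministic question of how best to deliver $u^{b,c} - C(I-N_t)$ units of utility, whose solution is trivially a lump sum at time $t$.
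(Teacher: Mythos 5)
Your proposal is correct and follows essentially the same route as the paper: the saturation analysis of inequality \reff{ineq2} that you carry out is exactly the content of the paper's Lemma \ref{lemma:contracts lb2} (which forces $\theta\equiv 1$ and full shirking by both banks for any contract attaining the lower boundary with $u^{b,c}\geq C(I-N_t)$), and the final step reducing optimality to the bound $\E^{\P^{k^{SH}}}[\int dD_s^g\,|\,\mathcal G_t]\geq \E^{\P^{k^{SH}}}[\int e^{-r(s-t)}dD_s^g\,|\,\mathcal G_t]$, with equality iff the payment is a lump sum at $t$, is identical to the paper's argument. The explicit construction of the achieving contract is also the same one the paper uses implicitly.
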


\begin{Proposition}\label{prop:value function lb2}
Fix some $t\geq 0$. For every $u^{b,c}\in{\mathcal C}_{t}$, with $c(I-N_t,1)\leq u^{b,c} < C(I-N_t)$, let $\nu(u^{b,c})$ be the unique solution of the following equation in $\nu$
\[
\bigg(\frac{B(I-N_t)}{r+\widehat\lambda_{I-N_t}^{\rm SH}} - u^{b,c}\bigg) + \ds\sum_{i=N_t+1}^{I-1} \int_{s_i(\nu)}^{\infty} \bigg( \frac{B(I-i)}{r+\widehat\lambda_{I-i}^{\rm SH}}\mathrm{e}^{-rx}  \bigg)f_{\tau^i}(x)\mathrm{d}x = 0,
\]
where $f_{\tau^i}$ is the density of the law of $\tau^i$ under $\P^{k^{\rm SH}}$ and where
\[
s_i(\nu) := \begin{cases}
\displaystyle0 ,\ \nu\leq \frac{\mu(r+\widehat\lambda_{I-i}^{\rm SH})}{B\widehat\lambda_{I-i}^{\rm SH}}, \\
\displaystyle\frac{1}{r}\ln\bigg(\frac{\nu B \widehat\lambda_{I-i}^{\rm SH}}{\mu(r+\widehat\lambda_{I-i}^{\rm SH})}\bigg) ,\;  \nu\geq \frac{\mu(r+\widehat\lambda_{I-i}^{\rm SH})}{B\widehat\lambda_{I-i}^{\rm SH}}.
\end{cases}
\]
Then the value function of the investor in the lower boundary is given by
\[
V_t^{\mathfrak L,g}(u^{b,c}) = \frac{\mu(I-N_t)}{\widehat\lambda_{I-N_t}^{\rm SH}} + \ds\sum_{i=N_t+1}^{I-1}  \int_{s_i(\nu(u^{b,c}))}^\infty \frac{\mu(I-i)}{\widehat\lambda_{I-i}^{\rm SH}}  f_{\tau^i}(x)\mathrm{d}x.
\]
%this is ok without a minus sign !! 
\end{Proposition}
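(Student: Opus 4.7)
The strategy is to reduce the lower-boundary control problem to a pure optimal stopping problem, then to solve it by a Lagrangian approach in which the optimal policy is parametrised by a family of deterministic time thresholds indexed by the default count.

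\textbf{Step 1 -- Reduction to a stopping problem.} In the regime $c(I-N_t,1)\leq u^{b,c} < C(I-N_t)$, the lower-boundary equation $\widehat{\mathfrak L}_{I-N_t}(u^{b,c})=u^{b,c}$ saturates inequality \eqref{ineq1} of Lemma \ref{lemma:inequalities}, forcing $U_t^g(\Psi_g)=U_t^{b,c}(\Psi_g)$. Since $\rho_g>\rho_b$, any strictly positive cumulative payment would make the good bank's value strictly larger than the bad one's, so equality imposes $D^g\equiv 0$. Absent any payment, both types of bank strictly prefer to shirk, giving $k^{\star,g}_s=k^{\star,b,c}_s=I-N_s$. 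The investor's problem therefore collapses to choosing $\theta\in\Theta$ to maximise $\E^{\P^{k^{SH}}}\bigl[\int_t^\tau \mu(I-N_s)\,ds\bigr]$ subject to the linear constraint $\E^{\P^{k^{SH}}}\bigl[\int_t^\tau e^{-r(s-t)}B(I-N_s)\,ds\bigr]=u^{b,c}$. By linearity in $\theta$, one can restrict to $\theta\in\{0,1\}$, so that $\tau=\tau_\sigma$ for some stopping time $\sigma\in\{N_t+1,\dots,I\}$ adapted to the default filtration.

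\textbf{Step 2 -- Lagrangian and threshold rule.} Introducing a multiplier $\nu\geq 0$, the Lagrangian becomes the optimal stopping problem
\begin{equation*}
\sup_\sigma \E^{\P^{k^{SH}}}\Bigl[\int_t^{\tau_\sigma}(I-N_s)\bigl(\mu-\nu B e^{-r(s-t)}\bigr)\,ds\Bigr] + \nu u^{b,c}.
\end{equation*}
Using that $\tau_{i+1}-\tau_i\sim\mathrm{Exp}(\widehat{\lambda}_{I-i}^{SH})$ under $\P^{k^{SH}}$, the expected Lagrangian contribution of the sojourn at level $I-i$ starting at $\tau_i$ equals $\frac{\mu(I-i)}{\widehat{\lambda}_{I-i}^{SH}}-\frac{\nu B(I-i)e^{-r(\tau_i-t)}}{r+\widehat{\lambda}_{I-i}^{SH}}$, which vanishes precisely when $\tau_i-t=s_i(\nu)$. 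A backward induction on the default index together with a verification argument identifies the optimal rule as ``continue past default $i$ if and only if $\tau_i-t\geq s_i(\nu)$'', and one checks, using the monotonicity properties of $\widehat{\lambda}_j^{SH}$ inherited from Assumption \ref{assump}, that the joint survival event through default $i$ collapses to the single marginal event $\{\tau_i-t\geq s_i(\nu)\}$.

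\textbf{Step 3 -- Closed-form computation and conclusion.} Under this threshold policy, the probability of reaching and continuing past default $i$ reduces to $\int_{s_i(\nu)}^\infty f_{\tau_i}(x)\,dx$. Summing the expected investor contributions from the initial sojourn and from each subsequent sojourn (weighted by the corresponding survival probability) yields the stated formula for $V_t^{\mathfrak L,g}(u^{b,c})$, while the analogous calculation for the bank's discounted utility yields the equation pinning down $\nu(u^{b,c})$. Existence and uniqueness of $\nu(u^{b,c})$ follow from strict monotonicity in $\nu$ of the left-hand side of that equation (each $s_i(\nu)$ is non-decreasing in $\nu$), together with the endpoint limits which recover $C(I-N_t)$ as $\nu\to 0$ (always continue) and $c(I-N_t,1)$ as $\nu\to\infty$ (immediate liquidation at the next default), thus exactly bracketing the range of $u^{b,c}$ considered. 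The main obstacle is to rigorously justify that the ``one-sojourn indifference'' threshold $s_i(\nu)$ indeed coincides with the full dynamic-programming threshold past default $i$ -- the latter depending a priori on all subsequent sojourns -- which rests delicately on the exponential structure of the inter-default times and on the consistency of the family $(s_i(\nu))_i$ induced by Assumption \ref{assump}.
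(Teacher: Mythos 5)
Your overall strategy coincides with the paper's: reduce to the constrained optimisation over $\theta$ alone (no payments, both banks shirk, by Lemmas \ref{lemma:contracts lb1}--\ref{lemma:contracts lb2}), introduce a Lagrange multiplier $\nu$ for the utility constraint, obtain a threshold policy $s_i(\nu)$, and pin down $\nu$ by monotonicity of the constraint map together with the endpoint values $c(I-N_t,1)$ and $C(I-N_t)$. However, your Step 2 contains a genuine gap, and it is precisely the step you yourself flag as the ``main obstacle''. You propose to identify the optimal liquidation rule by backward induction on the default index plus a verification argument, and you assert that ``the joint survival event through default $i$ collapses to the single marginal event $\{\tau_i-t\geq s_i(\nu)\}$'', attributing this to monotonicity properties coming from Assumption \ref{assump}. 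Neither claim is established: a priori the event of still running past default $i$ under a threshold policy is the intersection $\bigcap_{l\leq i}\{\tau_l-t>s_l(\nu)\}$, and nothing in your argument shows this equals the marginal event, nor does Assumption \ref{assump} enter the paper's proof of this proposition at all.

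The paper sidesteps this obstacle entirely by a different organisation of the same Lagrangian idea: it first rewrites \emph{both} the objective and the constraint as sums over $i$ of terms of the form $\E^{\P^{k^{SH}}}[\theta_{\tau_i}\,|\,\Gc_t]$ and $\E^{\P^{k^{SH}}}[\theta_{\tau_i}e^{-r(\tau_i-t)}\,|\,\Gc_t]$ (by iterated conditioning at each $\tau_i$ and independence of the exponential inter-default times). After this rewriting the Lagrangian is \emph{linear and separable} in the variables $\theta_{\tau_i}(\omega)$, so the inner infimum is computed pointwise in $(i,\omega)$ with no dynamic programming: the minimiser is immediately the indicator ${\bf 1}_{\{\tau_i-t>s_i(\nu)\}}$, and the marginal law $f_{\tau_i}$ appears directly. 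The second piece your proposal leaves implicit is strong duality: you need complementary slackness $\nu\,g'(\nu)=0$ to conclude that the duality gap vanishes and that $\theta^{\nu}$ is optimal for the \emph{primal} problem; the paper checks this explicitly by comparing the primal value at $\theta^\nu$ with $g(\nu)$. To make your proof complete you should replace the backward-induction claim by this decoupling computation (or supply the verification argument in full) and add the zero-duality-gap check.
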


\begin{Remark}
Observe that the function $V_t^{\mathfrak{L},g}$ computed in Propositions \ref{prop:value function lb1} and \ref{prop:value function lb2} depends on $t$ only through the quantity $I-N_t$. Define, for any $j=1,\dots,J$ the map 
\[
\widehat V_j^{\mathfrak L,g}(u^{b,c}) := \left\{ \begin{array}{l} 
\displaystyle  \sum_{i=1}^j \frac{\mu i}{\widehat \lambda_i^{\rm SH} } - \left( \frac{u^{b,c}- C(j)}{\rho_b} \right),~  u^{b,c} \geq C(j), \\
 \displaystyle \frac{\mu j}{\widehat \lambda_j^{\rm SH} }  + \sum_{i=1}^{j-1}  \int_{s_{I-j}(\nu(u^{b,c}))}^\infty \frac{\mu i}{\widehat\lambda_i^{\rm SH}}  f_{\tau^{I-i}}(x)\mathrm{d}x,~  u^{b,c} \in \left(c(j,1), C(j)\right).
\end{array} \right.
\]
We have therefore, that $V^{\mathfrak L,g}_t(u^{b,c})=\widehat V^{\mathfrak L,g}_{I-N_t}(u^{b,c})$.
\end{Remark} 

%-----------------------------------------------------------------------------------------------------------------------------------------------------------------------------------
\paragraph{Value function of the investor on the upper boundary}

The next proposition states that the upper boundary of the credible set is absorbing in the following sense: if under any contract the pair of value functions of the banks reaches the upper boundary at some time, the pair will stay on the upper boundary until the pool is liquidated. 

\begin{Proposition}\label{ub absorbing}
Fix a triplet $(t,u^g,u^{b,c})\in[0,+\infty)\times\mathcal{C}_{t}$ such that $u^g = \widehat{\mathfrak{U}}_{I-N_t}(u^{b,c})$. For any contract $\Psi_g=(D^g,\theta^g,h^{1,g},h^{2,g},h^{1,b,c},h^{2,b,c})\in\widehat{\mathcal A}^{g}(t,u^g,u^{b,c})$, we have $U_s^g(\Psi_g)=\widehat{\mathfrak U}_{I-N_s}(U_s^{b,c}(\Psi_g))$ for every $s\in[t,\tau)$. 
\end{Proposition}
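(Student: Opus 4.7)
My approach is a pasting argument by contradiction, exploiting the extremal characterisation of the upper boundary. By the very definition of $\widehat{\mathfrak U}_{I-N_s}$ as the largest value of $U^g$ compatible with a prescribed value of $U^{b,c}$, any admissible contract satisfies $U_s^g(\Psi_g)\leq \widehat{\mathfrak U}_{I-N_s}(U_s^{b,c}(\Psi_g))$ for every $s\in[t,\tau)$, so the claim reduces to ruling out that the inequality can become strict before liquidation when it is initially tight.

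Assume therefore for contradiction that the stopping time
$$
s^\star := \inf\left\{s\in[t,\tau)\colon U_s^g(\Psi_g)<\widehat{\mathfrak U}_{I-N_s}(U_s^{b,c}(\Psi_g))\right\}
$$
satisfies $\mathbb P(s^\star<\tau)>0$. I would then splice a new contract $\widetilde \Psi_g$ by keeping $\Psi_g$ on $[t,s^\star]$ and running, from $s^\star$ onwards on the event $\{s^\star<\tau\}$, the optimal upper-boundary contract of Theorem \ref{verification theorem} started from $U_{s^\star}^{b,c}(\Psi_g)$ with $I-N_{s^\star}$ loans still alive. By construction, at time $s^\star$ the bad bank's continuation utility is unchanged, whereas the good bank's utility jumps to $\widehat{\mathfrak U}_{I-N_{s^\star}}(U_{s^\star}^{b,c}(\Psi_g))$, which is strictly greater than $U_{s^\star}^g(\Psi_g)$ on a set of positive probability.

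To transport this strict improvement back to time $t$, I would invoke the tower property of conditional expectations. Since $\widetilde \Psi_g$ and $\Psi_g$ coincide on $[t,s^\star]$, the restrictions of $\mathbb P^{k^{\star,g}(\widetilde \Psi_g)}$ and $\mathbb P^{k^{\star,g}(\Psi_g)}$ to $\mathcal G_{s^\star}$ agree, and the payments and monitoring costs accumulated on $[t,s^\star]$ match for the two contracts. The conditional expectation representation of $U_t^g$ and $U_t^{b,c}$ then yields $U_t^{b,c}(\widetilde \Psi_g)=u^{b,c}$ together with $U_t^g(\widetilde \Psi_g)>u^g=\widehat{\mathfrak U}_{I-N_t}(u^{b,c})$, contradicting the maximality of the upper boundary.

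The main technical difficulty I anticipate is verifying that the spliced contract genuinely belongs to $\widehat{\mathcal A}^g(t,u^g,u^{b,c})$, i.e.\ that the feasibility constraints $U_{s^-}^g-h_s^{1,g}\geq B(I-N_s)/(r+\lambda_s^{I-N_s})$ and its $U^{b,c}$ analogue, together with the non-negativity and integrability of the $h$-processes, survive the gluing at the random time $s^\star$. This should follow from the strong Markov property of the default process combined with the explicit form of the optimal controls $(\delta^\star,\theta^\star,h^{1,b,\star},h^{2,b,\star})$ from Theorem \ref{verification theorem} (together with the implicit incentive-compatibility relation for $(h^{1,g},h^{2,g})$ recorded in Remark \ref{remark incentive compatibility}), which satisfy these constraints pointwise on the upper boundary, so that the argument reduces to a routine concatenation of admissible processes.
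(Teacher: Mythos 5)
Your proposal is correct in outline, but it takes a genuinely different route from the paper. The paper's proof is a pathwise verification argument: it introduces the gap process $\ell_s:=\widehat{\mathfrak U}_{I-N_s}(U_s^{b,c}(\Psi_g))-U_s^g(\Psi_g)\geq 0$, applies It\^o's formula between consecutive default times using the $C^1$-regularity and explicit form of the $\widehat{\mathfrak U}_j$, and shows---via the HJB system \eqref{DPequation}, the gradient constraint $\widehat{\mathfrak U}_j^\prime\geq \rho_g/\rho_b$ (which kills the $dD^g$ term), and the identities of Remark \ref{remark incentive compatibility} for $(h^{1,g},h^{2,g})$---that the drift and jumps of $\ell$ vanish, whence $\ell\equiv 0$. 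Your argument instead uses only the extremal characterisation of $\widehat{\mathfrak U}$ from Theorem \ref{verification theorem} together with a splicing/dynamic-programming contradiction; it never touches the HJB equation, which makes it more conceptual and less dependent on the explicit solution, at the price of the concatenation and tower-property machinery you already identify. Beyond the admissibility of the spliced contract, two points deserve care. First, the infimum defining $s^\star$ can be attained with $\ell_{s^\star}=0$ (a c\`adl\`ag process may enter the open set $\{\ell>0\}$ from the right without jumping into it), so you should instead stop at $s^\star_\varepsilon:=\inf\{s\colon \ell_s\geq\varepsilon\}$ for some $\varepsilon>0$ with $\mathbb P(s^\star_\varepsilon<\tau)>0$, and then use equivalence of the measures $\mathbb P^k$ to turn the positive-probability improvement at $s^\star_\varepsilon$ into a strict increase of $U_t^g$. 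Second, both the inequality $\ell_s\geq 0$ and the tower property require identifying the controlled process $U^g(\Psi_g)$ of the relaxed class $\widehat{\mathcal A}^g$ with the good bank's genuine continuation value under $(\theta^g,D^g)$; this holds, but only because the constraints $U_{s^-}^{g}(\Psi_g)=h_s^{1,g}+h_s^{2,g}$ force $(U^g(\Psi_g),(h^{1,g},h^{2,g}))$ to solve the BSDE \eqref{bsde}, whose solution is unique---a step worth making explicit, since the paper's own proof deliberately avoids it by working directly with the controlled dynamics.
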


\vspace{0.5em}
The next proposition states an important property satisfied by the contracts which make the continuation utilities of the banks lie in the upper boundary of the credible set.

\begin{Proposition}\label{ub contract}
Fix a triplet $(t,u^g,u^{b,c})\in[0,+\infty)\times\mathcal{C}_{t}$ such that $u^g = \widehat{\mathfrak{U}}_{I-N_t}(u^{b,c})$. For any contract $\Psi_g=(D^g,\theta^g,h^{1,g},h^{2,g},h^{1,b,c},h^{2,b,c})\in\widehat{\mathcal A}^{g}(t,u^g,u^{b,c})$, we have
\begin{itemize}
\item[$(i)$] $\theta^g_s=0$ for every $s\in[t,\tau)$ such that $U_s^{b,c}(\Psi_g)<b_s$.
\item[$(ii)$] If $U_{s_0}^{b,c}(\Psi_g)\geq b_{s_0}$ for some $s_0\in[t,\tau)$ then $k_s^{\star,b,c}(\Psi_g)=0$ and $U_s^{b,c}(\Psi_g)\geq b_s$ for every $s\in[s_0,\tau)$.
\end{itemize}	
\end{Proposition}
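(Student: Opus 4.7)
The plan combines Proposition \ref{ub absorbing}, which gives $U^g_s=\widehat{\mathcal{U}}^\star_{I-N_s}(U^{b,c}_s)$ for every $s\in[t,\tau)$, with an It\^o-type analysis. Since on $[t,\tau)$ only $N$ can jump, applying the generalised It\^o formula to $\widehat{\mathcal{U}}^\star_{I-N_s}(U^{b,c}_s)$ and matching the $dN_s$-coefficients and $ds$-coefficients with those of the SDE \eqref{eq:sde good value} for $U^g_s$ yields the jump relations
\[
h^{1,g}_s=\widehat{\mathcal{U}}^\star_{I-N_{s^-}}(U^{b,c}_{s^-})-\widehat{\mathcal{U}}^\star_{I-N_{s^-}-1}(U^{b,c}_{s^-}-h^{1,b,c}_s),\quad h^{2,g}_s=\widehat{\mathcal{U}}^\star_{I-N_{s^-}-1}(U^{b,c}_{s^-}-h^{1,b,c}_s),
\]
together with the pointwise drift identity
\[
r\widehat{\mathcal{U}}^\star_j(U^{b,c}_s)=H\bigl(U^{b,c}_s,\theta^g_s,h^{1,b,c}_s\bigr)+\delta_s\bigl(\rho_g-\rho_b\widehat{\mathcal{U}}^{\star\prime}_j(U^{b,c}_s)\bigr),
\]
where $H$ is the Hamiltonian appearing in \eqref{hjbj} and $\delta_s$ is the rate of the absolutely continuous part of $D^g_s$. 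Since the variational inequality \eqref{DPequation} gives $r\widehat{\mathcal{U}}^\star_j\geq H(\cdot)$ and $\widehat{\mathcal{U}}^{\star\prime}_j\leq\rho_g/\rho_b$, both inequalities must be saturated: $(\theta^g_s,h^{1,b,c}_s)$ is a maximiser of $H$, and $\delta_s>0$ is possible only where $\widehat{\mathcal{U}}^{\star\prime}_j=\rho_g/\rho_b$, i.e., on region 3 of Proposition \ref{prop:uj functions}.

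For part $(i)$, fix $s$ with $U^{b,c}_s<b_s$, placing the state in regions 1 or 2 of Proposition \ref{prop:uj functions}, where $\widehat{\mathcal{U}}^\star_j$ was constructed as the unique $C^1$ solution of the diffusion equation \eqref{hjbj} using the maximiser $\theta=0$. A direct computation of $\partial_\theta H$ at $\theta=0$, using the explicit forms of $\widehat{\mathcal{U}}^\star_j$ and $\widehat{\mathcal{U}}^\star_{j-1}$, shows that this partial derivative is strictly negative on the admissible set $\widehat{C}^j$; hence $\theta=0$ is the unique maximiser of $H$ and $\theta^g_s=0$ follows.

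For part $(ii)$, the assumption $U^{b,c}_{s_0}\geq b_{s_0}$ places us in region 3 of $\widehat{\mathcal{U}}^\star_{I-N_{s_0}}$, where $\widehat{\mathcal{U}}^\star_j(u^b)=(\rho_g/\rho_b)u^b$ and $\widehat{\mathcal{U}}^{\star\prime}_j=\rho_g/\rho_b$. Substituting this linear form into the Hamiltonian equality $r\widehat{\mathcal{U}}^\star_j=H$ and simplifying yields
\[
k^{\star,g}_s\bigl[-B+(\rho_g/\rho_b)X\alpha_{I-N_s}\varepsilon\bigr]=k^{\star,b,c}_s(\rho_g/\rho_b)\bigl[-B+X\alpha_{I-N_s}\varepsilon\bigr],
\]
with $X:=h^{1,b,c}_s+(1-\theta^g_s)h^{2,b,c}_s$, as well as the auxiliary identity $\theta^g_s\bigl[\widehat{\mathcal{U}}^\star_{j-1}(U^{b,c}_s-h^{1,b,c}_s)-(\rho_g/\rho_b)(U^{b,c}_s-h^{1,b,c}_s)\bigr]=0$ obtained from the vanishing of the $\theta$-dependent coefficient. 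A case analysis on $(k^{\star,g}_s,k^{\star,b,c}_s)\in\{0,I-N_s\}^2$, using the threshold rule $k^{\star,b,c}_s=(I-N_s)\mathbf{1}_{\{X<\widehat{b}_{I-N_s}\}}$, rules out every case except $k^{\star,g}_s=k^{\star,b,c}_s=0$, which is equivalent to $X\geq \widehat{b}_{I-N_s}$; since $h^{2,b,c}_s\geq 0$ and $\theta^g_s\in[0,1]$, this gives $U^{b,c}_s\geq X\geq b_s$. The auxiliary identity propagates the region-3 property across default jumps: whenever $\theta^g_s>0$, the post-default value $U^{b,c}_s-h^{1,b,c}_s$ must itself lie in region 3 of $\widehat{\mathcal{U}}^\star_{I-N_s-1}$, while if $\theta^g_s=0$ the next default coincides with $\tau$ and the statement is vacuous; combined with a continuity argument in continuous time that forbids $U^{b,c}_s$ from continuously crossing $\widehat{b}_{I-N_s}$ without violating the drift identity, this extends the property to all $s\in[s_0,\tau)$.

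The main obstacle is precisely this propagation in part $(ii)$: controlling $U^{b,c}_s$ at the interface $U^{b,c}_s=\widehat{b}_{I-N_s}$ and immediately after default jumps, since the controls $(h^{1,b,c}_s,h^{2,b,c}_s,\delta_s)$ are a priori only constrained by the feasibility conditions in $\widehat{\mathcal{A}}^g$. The argument relies crucially on the $C^1$ gluing of $\widehat{\mathcal{U}}^\star_j$ at $\widehat{b}_j$ and $x^\star_j$ established in Proposition \ref{prop:uj functions}, together with the auxiliary identity derived from the drift matching, which ties the post-default value $U^{b,c}_s-h^{1,b,c}_s$ to region 3 of $\widehat{\mathcal{U}}^\star_{I-N_s-1}$.
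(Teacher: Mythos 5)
Part $(i)$ of your proposal is essentially the paper's argument: Proposition \ref{ub absorbing} forces $(\theta^g,h^{1,b,c},h^{2,b,c})$ to be pointwise maximisers of the Hamiltonian in \eqref{DPequation}, and Corollary \ref{corollary:theta} (your observation that the coefficient of $\theta$ is strictly negative when $U^{b,c}_s<b_s$) shows the unique maximising $\theta$ is $0$ there. That part is fine.

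For part $(ii)$ your argument has a genuine gap at the propagation step, which you yourself flag as ``the main obstacle''. The case analysis at the fixed time $s_0$ does yield $k^{\star,g}_{s_0}=k^{\star,b,c}_{s_0}=0$, but extending this to all $s\in[s_0,\tau)$ requires showing that $U^{b,c}$ never re-enters $\{u<\widehat b_{I-N}\}$, and the ``continuity argument that forbids $U^{b,c}_s$ from continuously crossing $\widehat b_{I-N_s}$ without violating the drift identity'' is not substantiated. At the interface $U^{b,c}_s=\widehat b_j$ the drift of $U^{b,c}$ is $r\widehat b_j+\widehat b_j\widehat\lambda^0_j-\rho_b\delta_s$, and precisely because the gradient constraint is saturated in region 3 the payment rate $\delta_s$ is a priori unbounded, so nothing in the drift identity alone prevents this drift from being negative and the state from sliding into region 2 — where the maximiser analysis of Cases 1--2 of Proposition \ref{prop:uj functions} (with $\theta=0$ and $k^{b,c}=j$) is locally consistent, so no immediate contradiction appears. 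The paper sidesteps all of this with a short global argument: for any admissible $\Psi_g$, evaluating the temptation value along the good bank's optimal effort gives
$$
U_{s_0}^{b,c}(\Psi_g)\;\geq\;\frac{\rho_b}{\rho_g}U^g_{s_0}(\Psi_g)+\left(1-\frac{\rho_b}{\rho_g}\right)\E^{\P^{k^{\star,g}(\Psi_g)}}\left[\left.\int_{s_0}^\tau e^{-r(s-s_0)}Bk^{\star,g}_s(\Psi_g)\,ds\,\right|\mathcal G_{s_0}\right]\;\geq\;\frac{\rho_b}{\rho_g}U^g_{s_0}(\Psi_g);
$$
the hypothesis $U^{b,c}_{s_0}\geq b_{s_0}$ combined with Proposition \ref{ub absorbing} places the state on the linear piece $\widehat{\mathfrak U}_{I-N_{s_0}}(u)=(\rho_g/\rho_b)u$, so both inequalities collapse to equalities, which forces $k^{\star,g}_s=k^{\star,b,c}_s=0$ for every $s\in[s_0,\tau)$ and hence $U^{b,c}_{s^-}=h^{1,b,c}_s+h^{2,b,c}_s\geq h^{1,b,c}_s+(1-\theta_s)h^{2,b,c}_s\geq b_s$ on all of $[s_0,\tau)$. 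If you wish to keep your local route, you must replace the continuity claim by an actual barrier argument at $\widehat b_{I-N_s}$; otherwise the global inequality is the clean way through.
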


\vspace{0.5em}
We are now ready to give the value function of the investor on the upper boundary of the credible set. In the last region of the upper boundary, in which both the good and the bad agent are monitoring all the loans, it coincides with the value function of the sub--problem\footnote{The authors only look at the contracts for which the agent performs the maximum effort, that is, monitors all the loans at every time.} studied in\cite{pages2012bank}, denoted by $v_j^b$. For the sake of presentation, we recall the results of \cite{pages2012bank} in Appendix \ref{app:dylan-previous}.

\begin{Proposition}\label{prop:value function ub}
Under Assumption \ref{assump}, we have that for any $t\geq 0$ and any $u^{b,c}\in\widehat{\mathcal V}_{I-N_t}$, the value function of the investor on the upper boundary, defined by
\begin{equation}\label{eq:ub investor problem}
V^{\mathfrak U,g}_t(u^{b,c}) := \underset{\Psi_g\in\widehat{\mathcal A}^{g}(t,\widehat{\mathfrak U}_{I-N_t}(u^{b,c}),u^{b,c})}{\rm ess\; sup} \E^{\P^{k^{\star,g}(\Psi_g)}} \bigg[\int_t^\tau \big(\mu(I-N_s)\mathrm{d}s-\mathrm{d}D_s^g\big) \bigg|\mathcal G_t \bigg],
\end{equation}
verifies $V^{\mathfrak U,g}_t(u^{b,c})=\widehat V^{\mathfrak U,g}_{I-N_t}(u^{b,c})$, where for any $j=1,\cdots, I$
\[
\widehat V^{\mathfrak U,g}_j(u^{b,c}): = \begin{cases}
\displaystyle \frac{\mu j}{\widehat\lambda_{j}^{\rm SH}} + \widehat{C}^j \bigg( u^{b,c} - \frac{Bj}{r + \widehat\lambda_{j}^{\rm SH}} \bigg)^\frac{\widehat\lambda_{j}^{\rm SH}}{ r + \widehat\lambda_{j}^{\rm SH} }, \; u^{b,c}<x_j^\star,  \\[.4cm]
\displaystyle\frac{\mu j}{\widehat\lambda_j^0} + \bigg( v_j^b(\widehat b_j) - \frac{\mu j}{\widehat\lambda_j^0} \bigg) \bigg( \widehat b_j\frac{r+\widehat\lambda_j^0}{r+\widehat\lambda_j^{\rm SH}} \bigg)^{-\frac{\widehat \lambda_j^0}{r+\widehat \lambda_j^{\rm SH}}} \bigg(  u^{b,c} - \frac{B j}{ r + \widehat\lambda_j^{\rm SH}} \bigg)^\frac{\widehat\lambda_j^0}{ r + \widehat\lambda_j^{\rm SH} },\; x^\star_j \leq u^{b,c} <\widehat b_j , \\[.4cm]
v_j^b(u^{b,c}) , \; u^{b,c} \geq \widehat b_j,
\end{cases}
\]
with $v^b_j$ given by \eqref{eq:vj} and
\[
\widehat{C}^j :=\bigg(   \frac{\mu j}{\widehat\lambda_j^0} - \frac{\mu j}{\widehat\lambda_j^{\rm SH}} +   \bigg( \frac{\rho_b}{\rho_g}  \bigg)^\frac{\widehat\lambda_j^0}{ r + \widehat\lambda_j^0 } \bigg( v_j^b(\widehat b_j) - \frac{\mu j}{\widehat\lambda_j^0}  \bigg) \bigg) \bigg( \frac{\rho_b}{\rho_g} \bigg)^{- \frac{\widehat\lambda_j^{\rm SH}}{ r + \widehat\lambda_j^0 } } \bigg(  \frac{ \widehat b_j (r+\widehat\lambda_j^0)}{r+\widehat\lambda_j^{\rm SH} } \bigg)^{- \frac{\widehat\lambda_j^{\rm SH}}{ r + \widehat\lambda_j^{\rm SH} } } .
\]
\end{Proposition}

%-----------------------------------------------------------------------------------------------------------------------------------------------------------------------------------
\paragraph{Value function of the investor in the credible set}

We define, for any $t\geq 0$ and any $(u^{b,c},u^g)\in\widehat{\mathcal C}_{I-N_t}$, the value function of the investor in the credible set by
\begin{equation}\label{eq:investor problem}
V^{g}_t(u^{b,c},u^g) := \underset{\Psi_g\in\widehat{\mathcal A}^{g}(t,u^g,u^{b,c})}{\rm ess\; sup} \E^{\P^{k^{\star,g}(\Psi_g)}} \bigg[\int_t^\tau \big(\mu(I-N_s)ds-\mathrm{d}D_s^g\big) \bigg|\mathcal G_t \bigg].
\end{equation}

The system of HJB equations associated to this control problem is given by $\widehat V^g_0\equiv 0$, and for any $1\leq j\leq I$, on $\widehat\Vc_j\times\widehat\Vc_j$
\begin{align}\label{eq:hjb value function}
\max\left\{
 \sup_{\overline C^j} 
\left\{ 
\begin{array}{l} 
\partial _{u^{b,c}}\widehat V^g_j \big( ru^{b,c} - Bk^{b,c} + (h^{1,b,c} + (1-\theta)h^{2,b,c}) \widehat\lambda_j^{k^{b,c}} \big) \\[.3cm]
+ \partial _{u^{g}}\widehat V^g_j \big( ru^g - Bk^g + (h^{1,g} + (1-\theta)h^{2,g}) \widehat\lambda_j^{k^g} \big) \\[.3cm] 
+ \big(\widehat V^g_{j-1}(u^{b,c} - h^{1,b,c},u^g - h^{1,g}) - \widehat V^g_j(u^{b,c},u^g)\big) \widehat\lambda_j^{k^g} \\[.3cm] 
- \widehat V^g_{j-1}(u^{b,c}-h^{1,b,c},u^g - h^{1,g})(1-\theta)\widehat\lambda_j^{k^g} + \mu j   
\end{array} 
\right\}  , -\rho_b \partial _{u^{b,c}}\widehat V^g_j-\rho_g \partial _{u^{g}}\widehat V^g_j - 1  
\right\}
=0,
\end{align}
where we defined $k^{b,c}=j\cdot 1_{\{h^{1,b,c}+(1-\theta)h^{2,b,c}<\widehat b_j\}},~ k^g=j\cdot 1_{\{h^{1,g}+(1-\theta)h^{2,g}<\widehat b_j\}}$ and the set of constraints 
\[
\overline C^j=\bigg\{ (\theta,h^{1,b,c},h^{2,b,c},h^{1,g},h^{2,g})\in\R_+^5: \theta\in[0,1], u^g = h^{1,g}+h^{2,g}, u^{b,c}=h^{1,b,c}+h^{2,b,c}, h^{2,g}; h^{2,b,c}\geq\frac{B(j-1)}{r+\widehat\lambda_{j-1}^{\rm SH}}  \bigg\}.
\]
The boundary conditions of \eqref{eq:hjb value function} are given, for every $u^{b,c}\in \widehat{\mathcal{V}}_j,$ by
\begin{equation}\label{eq:boundary-conditions-PDE}
 \widehat V^g_j(u^{b,c},\widehat{\mathfrak U}_j(u^{b,c})) = \widehat V^{\mathcal U,g}_{j}(u^{b,c}), \; \widehat V^g_j(u^{b,c},\widehat{\mathfrak L}_j(u^{b,c})) = \widehat V^{\mathcal L,g}_{j}(u^{b,c}).
\end{equation}
The last step is now to use classical arguments to prove that $\widehat V^g_j$ is a viscosity solution of the above PDE for every $j=1,\dots,I$ and that the functions are sufficiently smooth (at least weakly dfferentiable) in order to obtain the optimal contract as the maximisers above.
%.%, a result we should then complement with a comparison theorem ensuring uniqueness of the viscosity solution. This would provide a complete characterisation of the value function of the investor, and more importantly would make the problem amenable to numerical computations, using for instance classical finite difference methods. As for the optimal contract, it will correspond to the maximisers in the Hamiltonian above, and therefore would require that we prove that $\widehat V^g_j$ is at least weakly differentiable (for instance if $\widehat V^g_j$ is concave or Lipschitz continuous, which we expect from the form of the problem) to be well defined. 
This program can in principle be carried out using standard arguments in viscosity theory of Hamilton--Jacobi equations. However, given the length of the paper, we believe that it would not serve a specific purpose and decided to just describe the main steps that lead to this result. We list them below

\begin{itemize}[leftmargin=*]

\item[$(i)$] For $j=1$, we can use the abstract results of \cite{karoui2013capacities2} to prove that \eqref{eq:investor problem} coincides with the strong formulation of itself. This fact allows to prove directly that the value function $\widehat{V}_1^g$ is concave and therefore differentiable almost everywhere.

\item[$(ii)$] For $j>1$, let us define the penalised Hamiltonians for the diffusion equation, with $j$ loans left. Given $\widehat V^g_{j-1}$, define for instance
\begin{equation} \label{eq:penalizedH}
H_j^n(u^{b,c},u^g,v,p^{b,c},p^g) =  \sup_{\overline C^j} 
\left\{ 
\begin{array}{l} 
p^{b,c} \big( ru^{b,c} - Bk^{b,c} + (h^{1,b,c} + (1-\theta)h^{2,b,c}) \lambda_j^{k^{b,c}} \big) \\[.3cm]
+p^g \big( ru^g - Bk^g + (h^{1,g} + (1-\theta)h^{2,g}) \widehat\lambda_j^{k^g} \big) \\[.3cm] 
+ \big(\widehat V^g_{j-1}(u^{b,c} - h^{1,b,c},u^g - h^{1,g}) - v(u^{b,c},u^g)\big) \widehat\lambda_j^{k^g} \\[.3cm] 
- \widehat V^g_{j-1}(u^{b,c}-h^{1,b,c},u^g - h^{1,g})(1-\theta)\widehat\lambda_j^{k^g} + \mu j   
\end{array} 
\right\}    + n(-\rho_b p^{b,c} - \rho_g p^g - 1)^+. 
\end{equation}
Let $v_j^n$ be the value function of the penalised version of our problem, in which payments are absolutely continuous with bounded density. Then it can be argued as in \cite{elie2017on} that $v_j^n$ is a viscosity solution to $H^n_j(u,v,p)=0$, %with boundary condition \eqref{eq:boundary-conditions-PDE}.
with appropriate credible set and boundary conditions.

\item[$(iii)$] Note that $H^n_j$ is convex in $p$, as a supremum of linear functionals and composition of convex functions. Moreover, for any $R<+\infty$ we have
\[
H^n_j(u,v_1,p) - H^n_j(u,v_2,p) \geq -\widehat \lambda_j^{\rm SH}(v_1 - v_2), \; \forall (u,p) \text{ and } R \geq v_1\geq v_2\geq -R.
\]
$H_j^n$ is also locally Lipschitz and $H_j^n(u,v,p)\longrightarrow\infty$ as $|p|\to\infty$ for any $u>\frac{B j}{r+\widehat\lambda_j^{\rm SH}}$. Finally, noticing that interior maximisers take place in the interior of the credible set (to be more precise, boundary maximisers correspond to contracts leading the agents to the absorbing boundaries of the credible set) and by using the envelope theorem, we can show that $H_j^n$ is actually strictly convex on the interior of the credible set and therefore satisfies 
\[
\forall R>0, \exists \alpha_R >0, ~ 
\left( \frac{\partial H_j^n}{\partial p}(u,v,p) - \frac{\partial H_j^n}{\partial p}(u,v,q), p-q \right) \geq \alpha_R |p-q|^2,
~ |p|,|q|,|u|\leq R, \text{ for any }u. 
\]
By Theorem 3.3 in Lions \cite{lions1982generalized}, it follows then that $v_j^n\in W^{1,\infty}_{loc}$ and $v_j^n$ is SSH (semi--super harmonic).%(and semiconcave?).

\item[$(iv)$] Arguing again as in \cite{elie2017on}, it can be proved the sequence $v_j^n$ converges to the value function $\widehat V_j^g$ of our problem, which is therefore SSH %(and semiconcave ?) 
and a viscosity solution to \eqref{eq:hjb value function} with boundary condition \eqref{eq:boundary-conditions-PDE}.
\end{itemize}

Finally, since $\widehat{V}_j^g$ is differentiable almost everywhere, we can define the optimal contract through the maximisers in the Hamiltonian \eqref{eq:hjb value function}. Then, using the classical result (see for instance \cite{hynd2012eigenvalue} for related arguments) that the domain in which the diffusion equation is not saturated is bounded, it follows that the optimal controls $(h^{1,g,\star},h^{2,g,\star},h^{1,b,c,\star},h^{2,b,c,\star})$ are bounded and the corresponding SDEs admit weak solutions
\begin{align*}
\mathrm{d}U_s^{\star,g} =&\ \big(rU_s^{\star,g}-Bk_s^{\star,g}(U^{\star,b,c}_s,U^{\star,g}_s)-\rho_g \delta_{I-N_s}^{\star,g}(U^{\star,b,c}_s,U^{\star,g}_s)\big)\mathrm{d}s-h_{I-N_s}^{\star,1,g}(U^{\star,b,c}_s,U^{\star,g}_s)\big(\mathrm{d}N_s-\lambda_s^{k^{\star,g}((U^{\star,b,c},U^{\star,g}))}\mathrm{d}s\big)\\
&-h_{I-N_s}^{\star,2,g}(U^{\star,b,c}_s,U^{\star,g}_s)\big(\mathrm{d}H_s-(1-\theta_{I-N_s}^{\star,g}(U^{\star,b,c}_s,U^{\star,g}_s))\lambda_s^{k^{\star,g}((U^{\star,b,c},U^{\star,g}))}\mathrm{d}s\big),\\
\mathrm{d}U_s^{\star,b,c}=&\ \big(rU_s^{\star,b,c}-Bk_s^{\star,b,c}(U^{\star,b,c}_s,U^{\star,g}_s)-\rho_b \delta_{I-N_s}^{\star,g}(U^{\star,b,c}_s,U^{\star,g}_s)\big)\mathrm{d}s-h_{I-N_s}^{\star,1,b,c}(U^{\star,b,c}_s,U^{\star,g}_s)\big(\mathrm{d}N_s-\lambda_s^{k^{\star,b,c}((U^{\star,b,c},U^{\star,g}))}\mathrm{d}s\big)\\
&-h_{I-N_s}^{\star,2,b,c}(U^{\star,b,c}_s,U^{\star,g}_s)\big(\mathrm{d}H_s-(1-\theta_s^{\star,g}(U^{\star,b,c}_s,U^{\star,g}_s))\lambda_s^{k^{\star,b,c}((U^{\star,b,c},U^{\star,g}))}\mathrm{d}s\big).
\end{align*}
Indeed, this can be proved by noticing that in--between two jump times, the above are actually first--order ODEs, which admit weak solutions in appropriately exponentially weighted $L^1$ space (to make sure that bounded functions are integrable over the credible set), thanks to Carath\'eodory's theorem for ODEs. Thus, we have the equivalence
\[
v_0^{\rm Shut}=\underset{u^{b,c}\leq R_0^b,~ u^g\geq R_0^g}{\sup}\widehat v_{0}^g(u^{b,c},u^g)=\underset{u^{b,c}\leq R_0^b,~ u^g\geq R_0^g}{\sup} p_g \widehat V_{I}^g(u^{b,c},u^g).
\]

\subsection{Screening contract}

Recall that in the screening contract the investor designs a menu of contracts, one for each agent, and his expected utility is given by
\begin{equation}
v_{0}\big((\Psi_i)_{i\in\{g,b\}} \big)=\sum_{i\in\{g,b\}}p_i\mathbb{E}^{\mathbb{P}^{k^i}}\bigg[ \int_{0}^{\tau
}\big( I-N_s\big) \mu \mathrm{d}s-\mathrm{d}D^i_s\bigg] .  \label{investors utility eq-2}
\end{equation}
In this case, we will have to keep track of the value functions of both banks, when they choose the contract designed for them, as well as when they do not truthfully reveal their type. We will denote by $v_0$ the maximal utility that the investor can get out of the screening contract.
\[ v_0:=\underset{(\theta^g,\theta^b,D^g,D^b)\in{\Ac}_{\rm Scr}}{\sup}p_g\E^{\P^{k^{\star,g}(\theta^g,D^g)}}\bigg[\int_0^\tau\mu(I-N_s)\mathrm{d}s-\mathrm{d}D^g_s\bigg] + p_b\E^{\P^{k^{\star,b}(\theta^b,D^b)}}\bigg[\int_0^\tau\mu(I-N_s)\mathrm{d}s-\mathrm{d}D^b_s\bigg],\]
where
\[{\Ac}_{\rm Scr}:=\Big\{(\theta^g,\theta^b,D^g,D^b)\in\Theta^2\times\Dc^2: U_{0}^i(\theta^i,D^i)\geq R_0^i, U_{0}^j(\theta^j,D^j)\geq U_{0}^{j,c}(\theta^i,D^i), (i,j)\in\{g,b\}^2,\; i\neq j \Big\}.\]
%$${\Ac}_{\rm Scr}:=\Big\{(\theta^g,\theta^b,D^g,D^b)\in\Theta^2\times\Dc^2, U_{0}^b(\theta^b,D^b)\geq R_0, U_{0}^j(\theta^j,D^j)\geq U_{0}^{j,c}(\theta^i,D^i), (i,j)\in\{g,b\}^2,\; i\neq j \Big\}.$$

%\begin{Remark}
%Observe that we can omit the condition $U_0^g(\theta^g,D^g)\geq R_0$ in the definition of ${\Ac}_{\rm Scr}$. Indeed, it is implied by the inequality $U_{0}^{g,c}(\theta^b,D^b) \geq U_{0}^b(\theta^b,D^b)$, which follows from Lemma \ref{lemma:inequalities}.
%\end{Remark}

Different from the study of the shutdown contract, where the investor contracts only the good bank, in order to obtain the optimal screening contract we need to characterise also the value function of the investor when he contracts the bad bank. We will therefore follow Section \ref{sec:valueinvestor}, but by replacing the good bank by the bad bank. Hence, we define similarly, for any $(t,u^b,u^{g,c})\in[0,+\infty)\times\mathcal C_{t}$ the set $\widehat{\mathcal A}^{b}(t,u^{g,c},u^b)$.
%\begin{align*}
%&:=\bigg\{  \Psi_b=(D^b,\theta^b,h^{1,g,c},h^{2,g,c},h^{1,b},h^{2,b})\in{\mathcal D}\times\Theta\times{\mathcal H}^4:\forall s\in[t,\tau], \;U_{s^-}^b (\Psi_b)= h_s^{1,b}+h_s^{2,b},    \\
%&\hspace{8.2em} U_{s^-}^b(\Psi_b)-h_s^{1,b}\geq\frac{B(I-N_s)}{r+\lambda_s^{\rm SH}},~  U_t^b(\Psi_b)=u^b ,\; U_{s^-}^{g,c} (\Psi_b)= h_s^{1,g,c}+h_s^{2,g,c}, \\
%&\hspace{8.2em}  U_{s^-}^{g,c}(\Psi_b)-h_s^{1,g,c}\geq\frac{B(I-N_s)}{r+\lambda_s^{\rm SH}},~ U_t^{g,c}(\Psi_b)=u^{g,c}  \bigg\}.
%\end{align*}
We also introduce the following stochastic control problem for any $(u^{b},u^{g,c})\in\mathcal C_I$
\[
\widehat v^{\rm b}_0(u^{b},u^{g,c}):=\sup_{\Psi_b\in\widehat{\mathcal{A}}^b(0,u^{g,c},u^{b})}  p_b\E^{\P^{k^{\star,b}(\Psi_b)}}\bigg[ \ds\int_0^\tau  \mu(I-N_s)\mathrm{d}s - \mathrm{d}D_s^b   \bigg].
\]
The aim of the next sections is to compute the function $\widehat v^{\rm b}_0(u^{g,c},u^{b})$, representing the utility of the investor when hiring the bad bank. We start by studying it on the boundary of the credible set.

%-----------------------------------------------------------------------------------------------------------------------------------------------------------------------------------
\subsubsection{Boundary study} 
We denote by $V^{\mathfrak L,b}(u^{g,c})$ the value function of the investor in the lower boundary, when hiring the bad bank, defined by
\begin{equation} \label{eq:lb investor problem-b}
V^{\mathfrak L,b}_t(u^{b}) := \underset{\Psi_b\in\widehat{\mathcal A}^{b}(t,\widehat{\mathfrak L}_{I-N_t}(u^{b}),u^{b})}{\rm ess\; sup} \E^{\P^{k^{\star,b}(\Psi_b)}} \bigg[\int_t^\tau \mu(I-N_s)\mathrm{d}s-\mathrm{d}D_s^b \bigg|\mathcal G_t \bigg].
\end{equation}
The first result is that the value function of the investor on the lower boundary of the credible set is the same when hiring either the bad or the good bank. This is mainly due to the fact that both banks shirk on the lower boundary.
\begin{Proposition} \label{prop:value function lower boundary-b}
For every $u^b\in\mathcal C_{I-N_t}$, we have $ V^{\mathfrak L,b}_t(u^b) = V^{\mathfrak L,g}_t(u^b)$.
\end{Proposition}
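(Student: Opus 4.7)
The plan is to construct a reward-preserving bijection between the feasible contract classes in the two essential supremum problems defining $V^{\mathfrak L,b}_t(u^b)$ and $V^{\mathfrak L,g}_t(u^b)$, thereby showing that the two stochastic control problems actually coincide on the lower boundary.

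The starting point is the observation, already implicit in the proofs of Lemma \ref{lemma:inequalities} and Proposition \ref{lowerboundary}, that any contract whose pair of value functions sits on the lower boundary $\widehat{\mathfrak L}_{I-N_t}$ must force \emph{both} banks to shirk throughout the life of the relationship, i.e.\ $k^{\star,g}=k^{\star,b}=k^{SH}$. Indeed, saturation of the inequality $U^g\geq U^b$ on the first piece of $\widehat{\mathfrak L}_j$ forces $D\equiv 0$ combined with shirking by both banks, while saturation of the affine inequality on the second piece requires the pool to be maintained until the last default with both banks shirking. In particular, along such a contract one has $\P^{k^{\star,g}}=\P^{k^{\star,b}}=\P^{k^{SH}}$.

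Under this shirking regime the distinction between actual and temptation values disappears: the quantities $U^b(\Psi)$ and $U^{b,c}(\Psi)$ (respectively $U^g(\Psi)$ and $U^{g,c}(\Psi)$) both reduce to the same Lebesgue--Stieltjes integral
$$\mathbb{E}^{\P^{k^{SH}}}\Bigl[\int_{\cdot\wedge\tau}^\tau e^{-r(s-\cdot)}\bigl(\rho_\star\, dD_s + B(I-N_s)\,ds\bigr)\,\Big|\, \mathcal{G}_\cdot\Bigr],\qquad \star\in\{b,g\}.$$
Consequently, the constraints defining $\widehat{\mathcal A}^b(t,\widehat{\mathfrak L}_{I-N_t}(u^b),u^b)$ and $\widehat{\mathcal A}^g(t,\widehat{\mathfrak L}_{I-N_t}(u^b),u^b)$ impose \emph{identical} conditions on the underlying $(\theta,D)$. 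The map $(\theta,D,h^{1,g},h^{2,g},h^{1,b,c},h^{2,b,c})\mapsto (\theta,D,h^{1,g,c},h^{2,g,c},h^{1,b},h^{2,b})$, which keeps $(\theta,D)$ fixed and relabels the $h$-components via the martingale representations under $\P^{k^{SH}}$, is therefore a bijection between the two admissible classes. Since the investor's reward $\mathbb{E}^{\P^{k^{SH}}}[\int_t^\tau(\mu(I-N_s)ds-dD_s)\,|\,\mathcal G_t]$ depends only on $(\theta,D)$ and on the common law $\P^{k^{SH}}$, the bijection preserves the objective, and taking essential suprema yields the claimed equality.

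The main obstacle is to make rigorous the claim that every $\Psi$ in the admissible set defining these boundary value functions forces both banks to shirk (not only the optimal one). This essentially amounts to revisiting the saturation conditions of Lemma \ref{lemma:inequalities} and checking that the state-constraint $U^{\cdot}_{s^-}-h^{1,\cdot}_s\geq B(I-N_s)/(r+\lambda_s^{I-N_s})$ is preserved under the relabelling; once these technical bookkeeping steps are carried out, the identification of the two control problems on the lower boundary and the conclusion $V^{\mathfrak L,b}_t(u^b)=V^{\mathfrak L,g}_t(u^b)$ follow at once.
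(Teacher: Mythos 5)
Your proposal is correct and follows essentially the same route as the paper: the admissible sets coincide up to relabelling of the $h$--components, and on the lower boundary both banks necessarily shirk, so the objective functionals agree and the essential suprema are equal. The step you flag as the ``main obstacle''---that \emph{every} admissible contract attaining the lower boundary forces both banks to shirk---is exactly what Lemmas \ref{lemma:contracts lb1} and \ref{lemma:contracts lb2} establish, so the paper's proof simply cites them and concludes.
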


Let us now consider the upper boundary. We denote by $V^{\mathfrak U,b}(u^b) $ the value function of the investor on the upper boundary when hiring the bad agent.
\begin{equation}\label{eq:ub investor problem-b}
V^{\mathfrak U,b}_t(u^b) := \underset{\Psi_b\in\widehat{\mathcal A}^{b}(t,\widehat{\mathfrak U}_{I-N_t}(u^b),u^b)}{\rm ess\; sup} \E^{\P^{k^{\star,b}(\Psi_b)}} \bigg[\int_t^\tau \mu(I-N_s)\mathrm{d}s-\mathrm{d}D_s^b \bigg|\mathcal G_t \bigg].
\end{equation}

We have the following result.

\begin{Proposition}\label{prop:value function ub-b}
Under Assumption \ref{assump}, for any $t\geq 0$ and any $u^b\in\widehat{\mathcal V}_{I-N_t}$, we have that $V^{\mathfrak U,b}_t(u^b)=\widehat V^{\mathfrak U,b}_{I-N_t}(u^b)$, where for any $j=1,\cdots, I$
\[
\widehat V^{\mathfrak U,b}_j(u^b): = \begin{cases}
\displaystyle \frac{\mu j}{\widehat\lambda_{j}^{\rm SH}} + \tilde{C}^j \bigg( u^b - \frac{Bj}{r+\widehat\lambda_{j}^{\rm SH}} \bigg)^\frac{\widehat\lambda_{j}^{\rm SH}}{ r + \widehat\lambda_{j}^{\rm SH} }, \ u^b<\widehat b_j, \\[.4cm]
v_j^b(u^b) , \ u^b \geq \widehat b_j,
\end{cases}
\]
with $v_j^b$ given by \eqref{eq:vj} and
\[
\tilde{C}^j = \bigg( v_j^b(\widehat b_j) - \frac{\mu j}{\widehat\lambda_j^{\rm SH}}  \bigg)	 \bigg(  \frac{ \widehat b_j (r+\widehat\lambda_j^{0})}{r+\widehat\lambda_j^{\rm SH} } \bigg)^{ \frac{- \widehat\lambda_j^{\rm SH}}{ r + \widehat\lambda_j^{\rm SH} } }.
\]
\end{Proposition}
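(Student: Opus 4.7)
The plan is to mirror the derivation of Proposition \ref{prop:value function ub}, exploiting the fact that on the upper boundary $u^{g,c}=\widehat{\mathfrak U}_j(u^b)$ is determined by $u^b$, so the control problem collapses to one dimension. First, I would establish bad-bank analogues of Propositions \ref{ub absorbing} and \ref{ub contract}: by the maximality property defining the upper boundary, any admissible $\Psi_b\in\widehat{\mathcal A}^b(t,\widehat{\mathfrak U}_{I-N_t}(u^b),u^b)$ must keep $(U^b,U^{g,c})$ on the upper boundary at every instant; this in turn forces $\theta^b_s=0$ whenever $U^b_{s^-}<\widehat b_{I-N_{s^-}}$, and once $U^b_s\geq\widehat b_{I-N_s}$ we have $k^{\star,b}_s=0$ and this regime is preserved until $\tau$. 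The proofs are verbatim adaptations of those in the good-bank setting, with the roles of $(g,b,c)$ and $(b,g,c)$ swapped.

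Next, the analysis splits into two regimes. For $u^b\geq\widehat b_j$, the upper boundary is the straight line of slope $\rho_g/\rho_b$ (third branch of $\widehat{\mathfrak U}_j$), which coincides exactly with the direction along which a lump--sum payment moves $(u^b,u^{g,c})$, since $dU^b=-\rho_b\,dD$ and $dU^{g,c}=-\rho_g\,dD$. Thus the boundary constraint does not bind, and the problem reduces to the pure moral hazard problem for the bad bank studied in Section \ref{sec:pure}; under Assumption \ref{assump} the results of \cite{pages2012bank} give $V^{\mathfrak U,b}_t(u^b)=v^b_{I-N_t}(u^b)$. For $u^b<\widehat b_j$, the bad bank shirks ($k^{\star,b}=j$), $\theta^b=0$, and no payments can be made, since the boundary is strictly convex with slope strictly smaller than $\rho_g/\rho_b$ on this range, so any payment would push $(u^b,u^{g,c})$ strictly outside the credible set. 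The state therefore evolves deterministically as $dU^b_s=[(r+\widehat\lambda_j^{SH})U^b_s-Bj]ds$ between defaults, with liquidation at the first default, and a standard dynamic programming argument yields the linear ODE
\begin{equation*}
\bigl[(r+\widehat\lambda_j^{SH})u^b-Bj\bigr]\,\widehat V'(u^b)-\widehat\lambda_j^{SH}\,\widehat V(u^b)+\mu j=0.
\end{equation*}
Its general solution is $\mu j/\widehat\lambda_j^{SH}+C\bigl(u^b-Bj/(r+\widehat\lambda_j^{SH})\bigr)^{\widehat\lambda_j^{SH}/(r+\widehat\lambda_j^{SH})}$, and continuity at $u^b=\widehat b_j$ (matching $v^b_j(\widehat b_j)$) pins down the constant $\tilde C^j$ as announced. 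At the left endpoint $u^b=Bj/(r+\widehat\lambda_j^{SH})$ the state is stationary and the investor simply collects $\mu j$ until the first default, a time of rate $\widehat\lambda_j^{SH}$, giving $\mu j/\widehat\lambda_j^{SH}$, consistent with the formula.

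The proof is completed by a verification argument in the spirit of Theorem \ref{verification theorem}: one exhibits the candidate contract (no payments and $k^{\star,b}=j$ for $u^b<\widehat b_j$; the reflecting payment policy of pure moral hazard for $u^b\geq\widehat b_j$; $\theta^b\equiv 0$ throughout), applies It\^o's formula to $\widehat V^{\mathfrak U,b}_{I-N_t}(U^b_t)$ along any competitor in $\widehat{\mathcal A}^b$, and uses the sub--HJB property to obtain a supermartingale inequality that becomes an equality under the candidate. The main obstacle is transferring Propositions \ref{ub absorbing} and \ref{ub contract} to the bad-bank framework, in particular the rigorous justification that payments cannot be used in the strictly convex region and that the boundary is absorbing; once these structural facts are in hand, the ODE resolution, the matching to $v^b_j$ at $\widehat b_j$, and the verification step are routine adaptations of the arguments already developed in the excerpt.
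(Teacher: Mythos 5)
Your proposal is correct and follows essentially the same route as the paper, which simply observes that the argument of Proposition \ref{prop:value function ub} carries over verbatim, the only change being that the investor now faces the bad bank's intensity, so the diffusion equation for $u^b<\widehat b_j$ reads $\widehat{\mathcal V}_j'(u^b)\bigl((r+\widehat\lambda_j^{SH})u^b-Bj\bigr)-\widehat\lambda_j^{SH}\widehat{\mathcal V}_j(u^b)+\mu j=0$ with boundary condition $\widehat{\mathcal V}_j(Bj/(r+\widehat\lambda_j^{SH}))=\mu j/\widehat\lambda_j^{SH}$, yielding the single power branch and the constant $\tilde C^j$ by matching $v_j^b(\widehat b_j)$. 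One small slip: your justification that no payments can occur for $u^b<\widehat b_j$ should invoke that the slope of the upper boundary is strictly \emph{greater} than $\rho_g/\rho_b$ there (the gradient constraint is not saturated), not "strictly smaller"; the conclusion you draw from it is nonetheless the right one.
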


\subsubsection{Study of the credible set} 
We define, for any $t\geq 0$ and any $(u^b,u^{g,c})\in\widehat{\mathcal C}_{I-N_t}$, the value function of the investor in the credible set when hiring the bad bank by
\begin{equation}\label{eq:investor problem-b}
V^{b}_t(u^b,u^{g,c}) := \underset{\Psi_b\in\widehat{\mathcal A}^{b}(t,u^{g,c},u^b)}{\rm ess\; sup} \E^{\P^{k^{\star,b}(\Psi_b)}} \left[\left.\int_t^\tau \left(\mu(I-N_s)ds-dD_s^b\right) \right|\mathcal G_t \right].
\end{equation}

The system of HJB equations associated to this control problem is given by $\widehat V^b_0\equiv 0$, and for any $1\leq j\leq I$
\begin{multline}\label{eq:hjb value function bad}
\max\left\{
 \sup_{\overline C^j} 
\left\{ 
\begin{array}{l} 
\partial _{u^b}\widehat V^b_j \big( ru^b - Bk^b + (h^{1,b} + (1-\theta)h^{2,b}) \widehat\lambda_j^{k^b} \big) \\[.3cm]
+ \partial _{u^{g,c}}\widehat V^b_j \big( ru^{g,c} - Bk^{g,c} + (h^{1,g,c} + (1-\theta)h^{2,g,c}) \widehat\lambda_j^{k^{g,c}} \big) \\[.3cm] 
+ \big(\widehat V^b_{j-1}(u^b - h^{1,b},u^{g,c} - h^{1,g,c}) - \widehat V^b_j\big) \widehat\lambda_j^{k^b} \\[.3cm] 
- \widehat V^b_{j-1}\big(u^b-h^{1,b},u^{g,c} - h^{1,g,c}\big)(1-\theta)\widehat\lambda_j^{k^b} + \mu j   
\end{array} 
\right\}  ,\; -\rho_b \partial _{u^b}\widehat V^b_j-\rho_g \partial _{u^{g,c}}\widehat V^b_j - 1  
\right\}
=0.
\end{multline}
With $k^b=j\cdot 1_{\{h^{1,b}+(1-\theta)h^{2,b}<\widehat b_j\}},~ k^{g,c}=j\cdot 1_{\{h^{1,g,c}+(1-\theta)h^{2,g,c}<\widehat b_j\}}$ and the same set of constraints $\overline C^j$ as in the system of HJB equations associated to the functions $\widehat V^g_j(u^{b,c},u^g)$. The boundary conditions of \eqref{eq:hjb value function bad} are given, for every $u^{b}\in \widehat{\mathcal{V}}_j$ by
\begin{align*}
 \widehat V^b_j(u^b,\widehat{\mathfrak U}_j(u^b)) = \widehat V^{\mathcal U,b}_{j}(u^b),\ \widehat V^b_j(u^b,\widehat{\mathfrak L}_j(u^b)) =  \widehat V^{\mathcal L,g}_{j}(u^b).
\end{align*}
Similarly to the shutdown contract, we can argue that the functions $\widehat{V_j^b} $ are viscosity solutions to the system \eqref{eq:hjb value function bad}, differentiable almost everywhere and the maximizers in the Hamiltonian define an admissible contract. This implies the equivalence 

\begin{align*}
v_0&=\underset{\{R_0^b\vee u^{b,c} \leq u^b, R_0^g\vee u^{g,c} \leq u^g\}}{\sup}\widehat v_{0}^g(u^{b,c},u^g)+\widehat v_{0}^{\rm b}(u^b,u^{g,c})=\underset{\{R_0^b\vee u^{b,c} \leq u^b, R_0^g\vee u^{g,c} \leq u^g\}}{\sup} p_g \widehat V_{ I}^g(u^{b,c},u^g)+ p_b \widehat V_{ I}^b(u^b,u^{g,c}).
\end{align*}
%\end{Theorem}

\subsection{Description of the optimal contracts}

In this section we describe the optimal contracts for the investor when he designs a contract for the good or the bad bank. We explain in detail the optimal contracts on the boundaries of the credible set, which can be obtained explicitly from the value function of the investor. In the interior of the credible set, we discuss the properties we expect the optimal contracts to have given the verification results described in the previous sections. % theorems \ref{th:ham} and \ref{th:ham-b} hold.

\subsubsection{Optimal contracts on the boundaries of the credible set}

We start with the upper boundary of the credible set. The following result is a direct consequence of the proofs of Proposition \ref{prop:value function ub} and \ref{prop:value function ub-b}, and the optimal contract for the pure moral hazard case. In the last part of the upper boundary, in which both agents monitor all the loans, the optimal contract coincides with the one of the subproblem studied in \cite{pages2012bank}. Their main results can be found in Appendix \ref{app:dylan-previous}.

\begin{Proposition}\label{prop:optimal contracts upper boundary} Under Assumption \ref{assump}, consider for any $t\geq0$ and $(u^b,\mathfrak{U}_t(u^b))\in\mathcal{C}_{t}$ the process $(u_s^b)_{s\geq t}$ as the solution of the following {\rm SDE} on $[t,\tau)$
\begin{equation}\label{dynamic:ub upper boundary}
\mathrm{d}u^b_s=\big((ru^b_s - Bk_s^{b,\star} + \lambda_s^{k^{b,\star}}(h_s^{1,b,\star} + (1-\theta_s^\star)h_s^{2,b,\star})  \big)\mathrm{d}s -\rho_b \mathrm{d}D^{\star}_s - h_s^{1,b,\star}  \mathrm{d}N_s -h_s^{2,b,\star} \mathrm{d}H_s,
\end{equation}
with initial value $u^b$ at $t$, and with
\begin{align*}D^{\star}_s&:={\bf 1}_{\{s=t\}}\frac{(u^b-\gamma_{I-N_t}^b)^+}{\rho_b}+\int_t^s\delta^{I-N_r}(u^b_r)\mathrm{d}r,\ \theta^\star_s:=\theta^{I-N_s}(u^b_s),\ h_s^{1,b,\star} := h^{1,b,I-N_s}(u_s^b),\ k_s^{b,\star}:= k^{b,j}(u_s^b),
\end{align*}
for $s\in [t,\tau)$ and $j=1,\dots,I$, where $\gamma_j^b$ is given by \eqref{eq:gammaj} and
\begin{align*}
\delta^j(u)&:={\bf 1}_{\{u=\gamma_j^b\}}\frac{\widehat \lambda^0_j\widehat b_j+r\gamma^b_j}{\rho_i},\ \theta^j(u):={\bf 1}_{\{u\in[\widehat b_j,\widehat b_{j-1}+\widehat b_j)\}}\frac{u-\widehat b_j}{\widehat b_{j-1}}+{\bf 1}_{\{u\in[\widehat b_j+\widehat b_{j-1},\gamma_j^b)\}},\\
h^{1,b,j}(u)&:= {\bf 1}_{\{u \in[c(j,1), \widehat b_j)\}} u   + {\bf 1}_{\{u\in[\widehat b_j,\widehat b_{j-1}+\widehat b_j)\}}(u-\widehat b_{j-1})+{\bf 1}_{\{u\in[\widehat b_j+\widehat b_{j-1},\gamma_j^b)\}}\widehat b_j,\; k^{b,j}(u) = j{\bf 1}_{\{ \theta^j(u)h^{1,b,j}(u) + (1-\theta^j(u))u < \widehat b_j \}}.
\end{align*}
Then, the contract $\Psi^\star=(D^\star,\theta^\star,h^{1,b,\star},h^{2,b,\star})$ is the unique solution of problems \eqref{eq:ub investor problem} and \eqref{eq:ub investor problem-b}.
\end{Proposition}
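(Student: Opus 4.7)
The proof strategy exploits two structural reductions that collapse both problems \eqref{eq:ub investor problem} and \eqref{eq:ub investor problem-b} on the upper boundary to the pure moral hazard problem of Section \ref{sec:pure} for a single bank of type $\rho_b$.

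First, I would use Proposition \ref{ub absorbing} (the upper boundary is absorbing) to reduce the state space. Any admissible contract starting at $\bigl(u^b,\widehat{\mathfrak U}_{I-N_t}(u^b)\bigr)$ maintains the identity $U^g_s=\widehat{\mathfrak U}_{I-N_s}(U^b_s)$ for all $s\in[t,\tau)$, so in problem \eqref{eq:ub investor problem} the controls $(h^{1,g},h^{2,g})$ are pinned down by Remark \ref{remark incentive compatibility} as explicit functions of $(U^b_{s^-},h^{1,b,c}_s)$, and analogously for $(h^{1,g,c},h^{2,g,c})$ in problem \eqref{eq:ub investor problem-b}. The only remaining state variable is $u^b$. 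Proposition \ref{ub contract} further prescribes that $\theta\equiv 0$ as long as $u^b<\widehat b$, and that once $u^b\geq\widehat b$ one has $k^{b,\star}=0$ with $u^b$ staying above $\widehat b$. Thus the set of admissible contracts along the upper boundary coincides (modulo the absorbing identification) with the set of admissible contracts for a single bank of type $\rho_b$.

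Second, under these reductions the investor's objective in both problems becomes $\E^{\P^{k^{b,\star}}}\bigl[\int_t^\tau\mu(I-N_s)\,ds-dD_s\bigr]$, which is precisely the pure moral hazard objective for a bank of type $\rho_b$. I would then invoke Theorem 3.15 of \cite{pages2014mathematical} under Assumption \ref{assump}, which produces the explicit singular optimal contract with liquidation thresholds $\widehat b_j$ and payment boundaries $\gamma_j^b$. A direct inspection shows that $(D^\star,\theta^\star,h^{1,b,\star},h^{2,b,\star})$ as defined in the statement coincides with this optimal contract, with the initial lump-sum $(u^b-\gamma_{I-N_t}^b)^+/\rho_b$ pushing the continuation utility of the bad bank instantaneously into the interior region $[c(I-N_t,1),\gamma_{I-N_t}^b]$. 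Admissibility of $\Psi^\star$ in both $\widehat{\mathcal A}^{g}$ and $\widehat{\mathcal A}^{b}$ follows automatically from the pure moral hazard analysis combined with the absorbing property.

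Third, I would verify that the induced value matches Propositions \ref{prop:value function ub} and \ref{prop:value function ub-b} by applying It\^o's formula to $s\mapsto e^{-r(s-t)}\widehat V^{\mathfrak U,i}_{I-N_s}(u_s^b)$ for $i\in\{g,b\}$, exploiting that both functions solve the reduced HJB equation on $(Bj/(r+\widehat\lambda_j^{SH}),\gamma_j^b)$ and saturate the gradient constraint $\rho_b\widehat V^{\mathfrak U,i,\prime}_{j}(\gamma_j^b)=-1$ at $\gamma_j^b$. By construction of $\Psi^\star$, the drift term vanishes Lebesgue-almost everywhere and the singular payments exactly absorb the boundary terms, so taking expectations under $\P^{k^{b,\star}}$ yields both claimed identities. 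The main technical difficulty lies in this verification step, since $D^\star$ is a singular measure supported on $\{u_s^b=\gamma_{I-N_s}^b\}$, so It\^o's formula must be applied in the Lebesgue--Stieltjes sense and the reflection condition invoked carefully. Uniqueness then follows from strict concavity of $\widehat V^{\mathfrak U,i}_j$ in the interior of the credible set, which ensures that the maximisers of the associated Hamiltonian are uniquely determined.
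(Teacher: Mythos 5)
Your overall architecture is the one the paper intends: the paper explicitly derives this proposition as a direct consequence of the proofs of Propositions \ref{prop:value function ub} and \ref{prop:value function ub-b} together with the pure moral hazard contract of \cite{pages2014mathematical}, and your use of Propositions \ref{ub absorbing} and \ref{ub contract} to collapse the problem to a one--dimensional state $u^b$, followed by an It\^o/verification argument against $\widehat V^{\mathfrak U,g}$ and $\widehat V^{\mathfrak U,b}$, is exactly that route.

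There is, however, one step that is wrong as written. You claim that \emph{both} problems \eqref{eq:ub investor problem} and \eqref{eq:ub investor problem-b} reduce to the objective $\E^{\P^{k^{b,\star}}}\bigl[\int_t^\tau \mu(I-N_s)ds - dD_s\bigr]$, i.e.\ to the pure moral hazard problem for a bank of type $\rho_b$, and that Theorem 3.15 of \cite{pages2014mathematical} can then be invoked for both. This is only true for \eqref{eq:ub investor problem-b}. In \eqref{eq:ub investor problem} the expectation is taken under $\P^{k^{\star,g}}$, the measure induced by the \emph{good} bank's effort, and on the upper boundary the good bank's monitoring indicator is ${\bf 1}_{\{\widehat{\mathcal U}^\star_{j}(u^{b})<\widehat b_j\}}$, which switches at $x_j^\star$ rather than at $\widehat b_j$. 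On the band $[x_j^\star,\widehat b_j)$ the good bank monitors while the bad bank shirks, so the default intensity (and hence the investor's value) differs between the two problems; this is precisely why $\widehat V^{\mathfrak U,g}_j$ in Proposition \ref{prop:value function ub} has a three--piece formula while $\widehat V^{\mathfrak U,b}_j$ in Proposition \ref{prop:value function ub-b} has only two pieces. What coincides is the optimal \emph{contract} $(D^\star,\theta^\star,h^{1,b,\star},h^{2,b,\star})$, not the two control problems. The correct reduction for \eqref{eq:ub investor problem} is the one in the paper's proof of Proposition \ref{prop:value function ub}: only on $\{u^{b}\geq\widehat b_j\}$ does the problem become literally the Pag\`es--Possama\"{\i} problem; on $\{u^{b}<\widehat b_j\}$ one must solve the modified one--dimensional HJB equation in which the drift of $u^b$ is driven by $k^{b}$ but the intensity weighting the continuation terms is $\widehat\lambda_j^{k^g}$. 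Your step 3 (separate verification with $\widehat V^{\mathfrak U,g}$ and $\widehat V^{\mathfrak U,b}$) would in fact detect and repair this, but as stated step 2 would yield the wrong value function for the good--bank problem, so the reduction needs to be restated problem by problem.
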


Let us comment the optimal contract for the investor on the upper boundary of the credible set. It is the same if he designs a contract for the good or the bad bank. The state process $(u_s^b)_{s\geq t}$ defined by \eqref{dynamic:ub upper boundary} corresponds to the value function of the bad bank under the optimal contract. The optimal contract offers no payments to the banks when $u_s^b$ is smaller than $\gamma_{I-N_s}^b$. In this case the continuation utility of the bad bank is an increasing process and eventually reaches the value $\gamma_{I-N_s}^b$, if no default happens in the meantime. Payments are postponed until this moment. If the initial value for the bad agent $u^b$ is greater than $\gamma_{I-N_t}^b$, a lump-sum payment is made at $t^-$ in order to have $u_{t}^b=\gamma_{I-N_t}^b$. When $u_s^b=\gamma_{I-N_s}^b$, the banks receive constant payments which keep the value function of the bad bank constant at this level. Concerning the liquidation of the project, if, at the default time $\tau^j$, it holds that $u_{\tau^j}^b < \widehat b_j$, then  the project is liquidated. In case of $u_{\tau^j}^b \in[\widehat b_j + \widehat b_{j-1}, \gamma_j^b)$, the project will continue with probability $\theta_j\in(0,1)$ which will be closer to one as $u_{\tau^j}^b$ gets closer to $\gamma_j^b$. If $u_{\tau^j}^b \geq \gamma_j^b$, the project will be maintained. Finally, the bad bank will monitor all the loans only when her value function is greater than $\widehat b_{I-N_s}$, whereas the good bank will monitor when the value of the bad bank is greater than $x^\star_{I-N_s}$. Figure \ref{figure:optimal contract upper boundary} depicts the optimal contract of the investor on the upper boundary of the credible set, denoting $\widehat B_j:=\widehat b_j + \widehat b_{j-1}$.

\begin{figure}[H]
\centering
\includegraphics[scale=1]{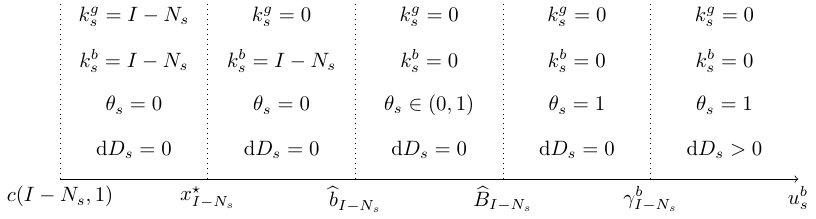}
\caption{Optimal contract on the upper boundary.}\label{figure:optimal contract upper boundary}
\end{figure}

For the lower boundary of the credible set, we have the following result.

\begin{Proposition}\label{prop:optimal contracts lower boundary} Under Assumption \ref{assump}, consider for any $t\geq0$ and $(u^b,\mathfrak{L}_t(u^b))\in\mathcal{C}_{t}$ the process $(u_s^b)_{s\geq t}$ as the solution of the following {\rm SDE} on $[t,\tau)$ 
\begin{equation}\label{dynamic:ub lower boundary}
\mathrm{d}u^b_s=\big((ru^b_s - Bk_s^{b,\star} + \lambda_s^{k^{b,\star}}(h_s^{1,b,\star} +(1-\theta_s^\star)h_s^{2,b,\star})  \big)\mathrm{d}s -\rho_b \mathrm{d}D^{\star}_s - h_s^{1,b,\star}  \mathrm{d}N_s -h_s^{2,b,\star} \mathrm{d}H_s,
\end{equation}
with initial value $u^b$ at $t$, and with
\[
D^{\star}_s:={\bf 1}_{\{s=t\}}\frac{(u^b-C(I-N_s))^+}{\rho_b},\ \theta^\star_s:={\bf 1}_{\{ u_s^b \geq C(I-N_s) \}},\]

\[
h_s^{1,b,\star} := u_s^b - C(I-N_s-1){\bf 1}_{\{ u_s^b \geq C(I-N_s)\}},\ h_s^{2,b,\star} := C(I-N_s-1){\bf 1}_{\{u_s^b \geq C(I-N_s)\}},\ k_s^{b,\star} = (I-N_s){\bf 1}_{\{ h_s^{1,b,\star}+(1-\theta_s^\star)h_s^{2,b,\star} < b_s \}},
\]
for $s\in [t,\tau)$. Then, the contract $\Psi^\star=(D^\star,\theta^\star,h^{1,b,\star},h^{2,b,\star})$ is the unique solution of \eqref{eq:lb investor problem} and \eqref{eq:lb investor problem-b}.
\end{Proposition}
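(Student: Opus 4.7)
The plan is to proceed in four stages, exploiting the explicit decomposition of the lower boundary into its two linear pieces and the fact (established in the proof of Lemma \ref{lemma:inequalities}) that the lower boundary is attained only by shirking contracts.

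First, I would show that $\Psi^\star$ is admissible in both $\widehat{\mathcal A}^b(t,\widehat{\mathfrak L}_{I-N_t}(u^b),u^b)$ and $\widehat{\mathcal A}^g(t,\widehat{\mathfrak L}_{I-N_t}(u^b),u^b)$. Since the statement specifies only $(D^\star,\theta^\star,h^{1,b,\star},h^{2,b,\star})$, the companion controls $(h^{1,g,\star},h^{2,g,\star})$ are recovered by applying Proposition \ref{prop:rep} to the BSDE characterizing $U^g(\Psi^\star)$ once $(D^\star,\theta^\star)$ have been fixed. The required inequalities $h_s^{2,b,\star}\geq B(I-N_s-1)/(r+\widehat\lambda_{I-N_s-1}^{SH})$ and analogous positivity conditions follow from the definitions: on the region $\{u^b_s\geq C(I-N_s)\}$ one has $h_s^{2,b,\star}=C(I-N_s-1)$, which automatically dominates the corresponding threshold $c(I-N_s-1,1)$, while on $\{u^b_s<C(I-N_s)\}$ the constraint is trivially satisfied because $h_s^{2,b,\star}=0$ and $h_s^{1,b,\star}=u^b_s$.

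Second, I would solve the SDE \eqref{dynamic:ub lower boundary} explicitly by cases in order to verify that $U_s^g(\Psi^\star)=\widehat{\mathfrak L}_{I-N_s}(u_s^b)$ for every $s\in[t,\tau)$. On $\{u_s^b<C(I-N_s)\}$, $\theta^\star\equiv 0$, $h^{1,b,\star}_s=u_s^b$, and $k^{b,\star}_s=I-N_s$, so $du_s^b=[(r+\widehat\lambda^{SH}_{I-N_s})u_s^b - B(I-N_s)]ds$ between defaults and the contract terminates at the first jump of $N$; the equilibrium $c(I-N_s,1)$ is a fixed point of this ODE. On $\{u_s^b\geq C(I-N_s)\}$, which arises only through the initial lump-sum reducing $u^b$ to $C(I-N_t)$, the value $C(I-N_s)$ is a fixed point and defaults translate $u^b$ from $C(j)$ to $C(j-1)$ without liquidation (since $\theta^\star=1$). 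In both cases there are no continuing payments after time $t$, so the banks' utilities between $t$ and $\tau$ coincide; the initial lump-sum $(u^b-C(I-N_t))^+/\rho_b$ then contributes $\rho_g$ (resp.~$\rho_b$) to the good (resp.~bad) bank's value at $t$, and a direct calculation shows $U_t^g=\rho_g(u^b-C)^+/\rho_b+U_{t+}^g=\widehat{\mathfrak L}_{I-N_t}(u^b)$.

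Third, I would compute the investor's expected payoff under $\Psi^\star$ and match it to the explicit formulas of Propositions \ref{prop:value function lb1}, \ref{prop:value function lb2} and \ref{prop:value function lower boundary-b}. For $u^b\geq C(I-N_t)$, the investor receives the full shirking revenue $\sum_{i=N_t}^{I-1}\mu(I-i)/\widehat\lambda_{I-i}^{SH}$ minus the initial lump-sum $(u^b-C(I-N_t))/\rho_b$, exactly as in Proposition \ref{prop:value function lb1}. For $c(I-N_t,1)\leq u^b<C(I-N_t)$, let $s^\star(u^b)$ denote the deterministic hitting time of $C(I-N_t)$ by the ODE solution starting at $u^b$; conditioning on whether the first default occurs before or after $s^\star(u^b)$ and iterating yields a sum of terms of the form $\int_{s_i}^\infty \mu(I-i)/\widehat\lambda_{I-i}^{SH} f_{\tau_i}(x)dx$. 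Identifying these thresholds with $s_i(\nu(u^b))$ from Proposition \ref{prop:value function lb2} amounts to matching the ODE hitting-time equation with the implicit equation defining $\nu(u^b)$; this is the place where I expect the most bookkeeping.

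Finally, for uniqueness and optimality, I would combine the explicit value computation above with the fact that Propositions \ref{prop:value function lb1}, \ref{prop:value function lb2} and \ref{prop:value function lower boundary-b} already identify the suprema in \eqref{eq:lb investor problem} and \eqref{eq:lb investor problem-b}; thus $\Psi^\star$ attains the supremum. For uniqueness, the key observation is that the saturation of inequality \eqref{ineq1} in Lemma \ref{lemma:inequalities} on $\{u^b\leq C(I-N_t)\}$ and of \eqref{ineq2} on $\{u^b>C(I-N_t)\}$ forces $k^{\star,g}=k^{\star,b}=I-N_s$ almost surely on $[t,\tau)$, and then the requirement that $(U^b,U^g)$ remain on the boundary pins down $\theta$, the singular control $D$, and the jumps $(h^{1,b},h^{2,b})$ uniquely via the dynamics of Proposition \ref{prop:rep}. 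The main obstacle will be ruling out mixed strategies on the critical set $\{u_s^b=C(I-N_s)\}$; I would handle this by showing that any randomization of $\theta$ at that level would strictly violate one of the two inequalities of Lemma \ref{lemma:inequalities} at a later time with positive probability, contradicting the boundary constraint.
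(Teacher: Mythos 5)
Your proposal rests on the same two pillars as the paper's argument, namely Propositions \ref{prop:value function lb1} and \ref{prop:value function lb2} (together with Lemmas \ref{lemma:contracts lb1} and \ref{lemma:contracts lb2}, which are exactly the "saturation" facts you invoke for uniqueness). The difference is one of organisation: the paper does not re-verify anything, it simply reads the maximiser off the proofs of those two propositions --- the lump-sum-then-$\theta\equiv 1$ policy from Proposition \ref{prop:value function lb1} when $u^b\geq C(I-N_t)$, and the dual optimiser $\theta^\nu$ from the Lagrangian argument of Proposition \ref{prop:value function lb2} when $u^b<C(I-N_t)$ --- and then observes that this open-loop control, which is $0$ before a deterministic threshold time and $1$ after, coincides with the feedback rule ${\bf 1}_{\{u^b_s\geq C(I-N_s)\}}$ because the threshold is exactly the time at which the shirking bank's continuation utility reaches $C(I-N_s)$. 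Your plan instead recomputes the investor's payoff under $\Psi^\star$ from scratch and matches it to the known value; this is a legitimate alternative, and your stage 3 correctly isolates the only genuinely delicate point, namely the identification of the ODE hitting time with the dual threshold.

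There is, however, one concrete error to fix. In your second stage you assert that the region $\{u^b_s\geq C(I-N_s)\}$ "arises only through the initial lump-sum", and that below $C$ "the contract terminates at the first jump of $N$". This contradicts the mechanism that makes the contract optimal when $c(I-N_t,1)< u^b<C(I-N_t)$: there the drift $(r+\widehat\lambda^{SH}_{I-N_t})u^b_s-B(I-N_t)$ is strictly positive, so $u^b_s$ increases deterministically and, if no default occurs first, crosses into $\{u^b_s\geq C(I-N_s)\}$ at a finite hitting time, at which instant $\theta^\star$ switches from $0$ to $1$ and the pool is thereafter maintained until the last default. This crossing is precisely the content of the paper's proof (if $\theta^\star$ jumps to one while the project is still running, the bad bank's continuation utility must equal $C(I-N_s)$), and it is also what your own stage 3 uses through the hitting time $s^\star(u^b)$; as written, your stage 2 describes a different, strictly suboptimal contract that liquidates at the first default no matter when it occurs. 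Once corrected, the verification that $(U^b_s(\Psi^\star),U^g_s(\Psi^\star))$ stays on the lower boundary goes through because after time $t$ there are no payments and both banks shirk, so $U^g_s=U^b_s$, and the process sits either strictly below $C(I-N_s)$, where $\widehat{\mathfrak L}_{I-N_s}$ is the identity, or exactly at the kink $u^b_s=C(I-N_s)$, where the two branches of $\widehat{\mathfrak L}_{I-N_s}$ agree.
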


\vspace{0.5em}
On the lower boundary of the credible set, the optimal contract for the investor also does not depend on the type of the bank. If the initial value of the bad bank $u^b$ is greater than $C(I-N_t)$, the banks receive a lump-sum payment such that $u^b_{t}=C(I-N_t)$. This is the only payment offered by the contract. If there is a default at some time $s$ such that $u_s^b<C(I-N_s)$, the project is liquidated. When $u_s^b=C(I-N_s)$ the contract maintains the project until the last default. Since the optimal contract does not provides incentives to the banks to monitor the loans, the good and the bad bank shirk until the liquidation of the project. Figure \ref{figure:optimal contract lower boundary} depicts the optimal contract of the investor on the lower boundary of the credible set. 

\begin{figure}[ht!]
\centering
\includegraphics[scale=0.8]{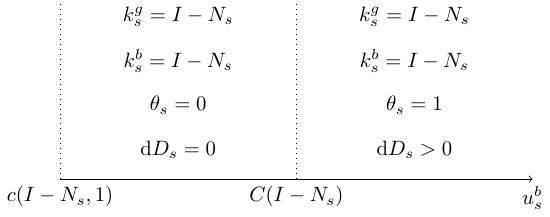}
\caption{Optimal contract on the lower boundary.}\label{figure:optimal contract lower boundary}
\end{figure}

\subsubsection{Discussion about the optimal contracts in the interior of the credible set}

Figure \ref{fig:discussion boundaries} represents the optimal contracts on the boundaries of the credible set as well as the movements of the values of the banks along these curves. The green zone corresponds to the region where the contract offers payments to the agents and the project is maintained if there is a default. The red zone corresponds to the region where there are no payments and the project is liquidated immediately after a default. Intermediate situations correspond to the yellow zone. We remark that the banks are paid only on the green zone.

\begin{figure}[!ht]
\centering
\includegraphics[scale=0.8]{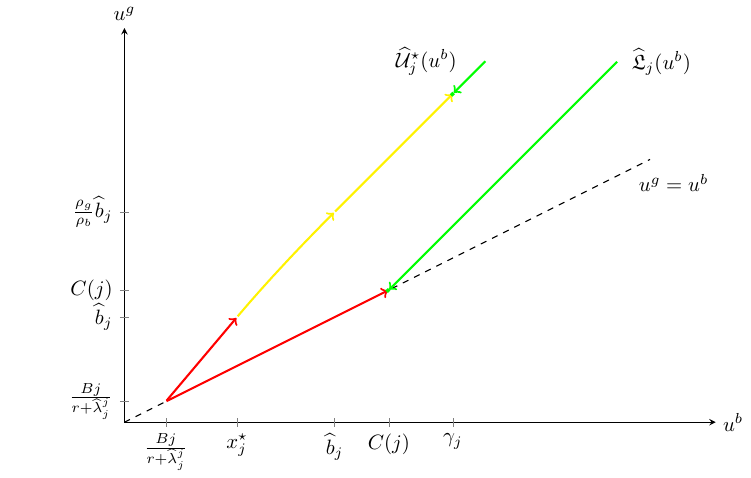}
\caption{Optimal contract on the boundaries of the credible set.}  \label{fig:discussion boundaries}
\end{figure}

\vspace{.5em}
Let us now consider the whole credible set and explain how the green and red zones on the boundaries propagate towards the interior region, %If the verification theorems \ref{th:ham} and \ref{th:ham-b} hold, 
given that the optimal contracts for problems \eqref{eq:investor problem} and \eqref{eq:investor problem-b} correspond to the maximisers in the Hamiltonian of the systems \eqref{eq:hjb value function} and \eqref{eq:hjb value function bad}. %Moreover, 
Recall that payments only take place when the value function of the investor saturates the gradient constraint. Therefore, if at some point of the credible set the banks are paid, this will also be the case under movements in the direction $(\rho_b,\rho_g)$. The interpretation of this property is that the green region, where the banks are paid and the project is maintained after a default, is formed by the points where the banks have a good performance and they are rewarded. A movement in the direction $(\rho_b,\rho_g)$ correspond to a better performance of both banks, so it seems unnatural to deprive them of the reward. We can do the opposite interpretation for the red region, consisting of the points where the banks receive no payments and the project is liquidated after a default. In consequence, under the optimal contracts, it is possible to identify red and green areas in the credible set, where the characteristics described in the boundaries will remain, and that will be delimited by some curves similar to those shown in Figure \ref{fig:discussion credible set} below. Mathematically, these curves are delimiting the region where the gradient constraint is saturated.

\begin{figure}[!ht]
\centering
\includegraphics[scale=0.9]{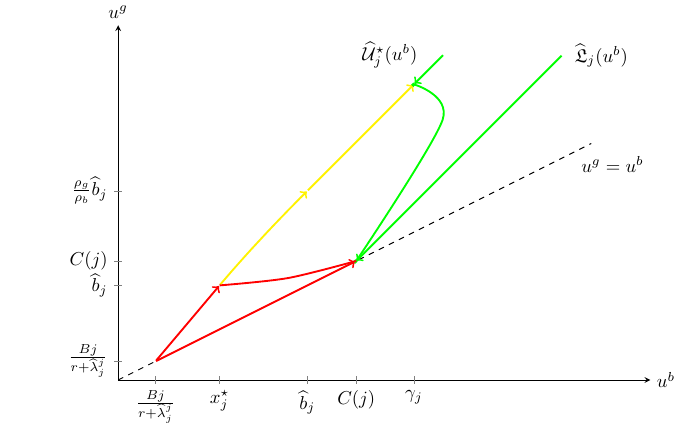}
\caption{Optimal contract on the credible set.}  \label{fig:discussion credible set}
\end{figure}

\subsection{A word on implementability of the contracts}

Any real--world application of our model requires to discuss the practical implementability of the contract. Fortunately for us, the form of the menus of contracts we obtained is completely similar to the one obtained in \cite{pages2012bank,pages2014mathematical}, in the sense that all rely on a probation zone, where stochastic liquidation may occur, and a zone of good performance, where the liquidation never occurs. The only difference is of course that in \cite{pages2012bank,pages2014mathematical} these zones are simply intervals, while they are more complex regions of the plan in our case, since we have to keep track of both the continuation and the temptation values of the Agent. Nonetheless, the practical implementation proposed by Pag\`es \cite[Proposition 7]{pages2012bank} can readily be adapted to our context. Given the length of the paper, we leave the exact detail to the reader, and simply recall how the implementation works.

\vspace{0.5em}
First, a natural way of implementing the contract is to replicate dynamically both the continuation and the temptation values of the Agent by use of two cash reserve accounts. The accounts should be managed by an independent trust, and actually serves to both provide protection to the investors, and to manage exactly the performance--based compensation scheme described in the optimal contract. The current balances reveal outright performance of both type of banks, and can be used to determine the amount and timing of fees that are released. Then, the implementation basically takes the form of a whole loan sale with monitoring retained. The reserve accounts then offer protection in the form of ABS credit default swaps (ABCDS), and serve as instruments to tie the amount and timing of compensation to performance (meaning that payments are made from the cash reserve only when the continuation and temptation values of the Agent are in the domain where the gradient constraint is satisfied). The reserve account reveals the level of underlying performance, which reduces the rent of the monitoring bank and allows it to retain risk at a lower cost than if it were funded with deposits.

\section{Extensions}
\subsection{Endogenous reservation utility}

In a standard Principal--Agent problem, it is assumed that the Agent possesses a minimum level of utility that must be provided by the Principal in order to make him accept the contract. This reservation value represents the utility that the Agent would obtain if the contract offered by the Principal was not sufficiently attractive and he made use of an outside option (see Condition \reff{const-reser} in Section \ref{sec-ppalproblem}).

\vspace{0.5em}
In this section we provide an endogenous characterisation of the reservation utilities of the banks by assuming that if they do not enter in a contractual relationship with the investor, they can manage the project by themselves. When the outside option of a bank is to manage the pool of loans on its own, we can find the explicit value of its reservation utility. Moreover, we outline an extension of our model to the case in which the bank can break the contract at any time if it can do better by itself. Different from the \emph{full-commitment} problem studied in the previous sections, the ability of the bank to break the contract makes the investor offer only the so called \emph{renegotiation-proof} contracts, which keep the utility of bank above a dynamic reservation utility until the end of the contract. 

\vspace{0.5em}
If the bank of type $\rho_i$ manages the project, it receives the cash flows from the loans and does not face the threat of liquidating the whole pool when one of the loans defaults. Consequently, its reservation utility $R_0^i$ is given by the following expression
\begin{equation} \label{eq:endogenous-reservation-utility}
R^i_0 = \sup_{k\in\mathfrak K} \mathbb E^{\P^k}\bigg[\int_0^{\tau^I} \mathrm{e}^{-rs}(\rho_i\mu(I-N_s)+Bk_s) \mathrm{d}s \bigg].
\end{equation}
The value of $R^i_0$ can be obtained as an application of the results from the previous sections, since \eqref{eq:endogenous-reservation-utility} corresponds to the utility of the bank under a contract with no liquidation at all, $\theta\equiv 1$, and with absolutely continuous payments, $\mathrm{d}D_t = \mu(I-N_t)\mathrm{d}t$. Its explicit value is provided in Proposition \ref{prop:reservationutilities}.

\begin{Proposition}\label{prop:reservationutilities}
Define the recursive sequence of numbers  %value of $R^i_j$ is given by 
\[
\widehat{R}^i_j = \max\bigg\{ \frac{\rho_i\mu j}{r+\widehat\lambda_j^0} - \frac{r \widehat{R}^i_{j-1}}{r+\widehat\lambda_j^0} , ~ \frac{\rho_i\mu j+jB}{r+\widehat\lambda_j^{SH}} - \frac{r \widehat{R}^i_{j-1}}{r+\widehat\lambda_j^{\rm SH}}  \bigg\}, \; j\in\{1,\dots,I\},
\]
with $\widehat{R}^i_0=0$. The endogenous reservation utility of the bank of type $\rho_i$ is given by $R_0^i=\widehat{R}_I^i$. Moreover, the optimal action in Problem \eqref{eq:endogenous-reservation-utility} is constant in every interval $(\tau^{I-j},\tau^{I-j+1})$ and it is equal to $k^{\star,i}\equiv 0$ if the maximum in the definition of $\widehat{R}^i_j$  is attained at the first term, and to $k^{\star,i}\equiv j$ if the maximum is attained at the last term.
\end{Proposition}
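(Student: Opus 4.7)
The plan is to solve the problem by dynamic programming on the number of remaining loans. Let $\widehat R^i_j$ denote the optimal value started with $j$ non-defaulted loans (so that $R_0^i = \widehat R^i_I$), with $\widehat R^i_0=0$ by convention since an empty pool produces no cash flow nor shirking rent. Applying the strong Markov property at $\tau_1$, the first jump of $N$, and exploiting that $I-N_s\equiv j$ on $[0,\tau_1)$, one obtains the Bellman recursion
$$
\widehat R^i_j \;=\; \sup_{k}\;\mathbb E^{\mathbb P^k}\!\left[\int_{0}^{\tau_1} e^{-rs}\bigl(\rho_i\mu j + B k_s\bigr)\,ds \;+\; e^{-r\tau_1}\,\widehat R^i_{j-1}\right],
$$
where the supremum is taken over $\mathbb G$-predictable processes valued in $\{0,\dots,j\}$.

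The second step is to reduce to constant controls on $[0,\tau_1)$ and compute explicitly. Since the running reward $\rho_i\mu j + Bk_s$ and the intensity $\lambda_j^{k_s}=\alpha_j(j+\varepsilon k_s)$ depend on $s$ only through $k_s$ and the problem is stationary on $[0,\tau_1)$, the pointwise Hamiltonian $k\mapsto \rho_i\mu j + Bk + \lambda_j^k(\widehat R^i_{j-1}-y)$ is deterministic and time-independent, so a measurable-selection argument shows that an optimal control can be chosen constant on this interval. For such constant $k$, $\tau_1$ is exponential with rate $\lambda_j^k$ under $\mathbb P^k$, and direct integration gives
$$
V_j(k)\;=\;\frac{\rho_i\mu j + Bk + \lambda_j^k\widehat R^i_{j-1}}{r+\lambda_j^k}.
$$
Since $\lambda_j^k$ is affine in $k$, $V_j$ is a ratio of affine functions, hence monotonic on $[0,j]$; consequently the maximum over the grid $\{0,\dots,j\}$ is attained at one of the endpoints. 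Using the algebraic identity $\lambda_j^k/(r+\lambda_j^k)=1-r/(r+\lambda_j^k)$ to rewrite $V_j(0)$ and $V_j(j)$ recovers exactly the two candidate values appearing in the statement, with optimal action $k^\star\equiv 0$ or $k^\star\equiv j$ accordingly.

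Concatenating these constant feedback controls over successive inter-default intervals — setting $k^\star\equiv k^\star_{I-N_s}$ on $(\tau_{I-j},\tau_{I-j+1})$ — yields a $\mathbb G$-predictable admissible control, and a standard verification argument (the discounted value process along $k^\star$ is a true martingale whereas along any competitor it is a supermartingale) together with the induction on $j$ then identifies $R_0^i$ with $\widehat R^i_I$. The main obstacle is the rigorous reduction to constant controls on each inter-default interval: the admissible controls are a priori general $\mathbb G$-predictable processes, and one must invoke a measurable-selection theorem to replace the essential supremum over such processes by the maximum of the finite deterministic set $\{V_j(0), V_j(j)\}$. Once this point is settled, the remainder consists of elementary computations of Laplace transforms of exponential laws and a routine induction.
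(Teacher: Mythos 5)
Your overall strategy coincides with the paper's: both proofs exploit the Markov structure to set up a recursion over inter--default intervals, reduce the comparison on each interval to the two extreme constant actions $k=0$ and $k=j$, and conclude by induction. Where you differ is in how the endpoint reduction is justified: the paper invokes the bang--bang incentive--compatibility characterisation of Proposition \ref{prop:rep} (the optimal response is $j{\bf 1}_{\{h^{1,i}<b_j\}}$, and the two incentive--compatibility conditions for $k=0$ and $k=j$ are shown to be complementary), whereas you argue that $V_j(k)$ is a ratio of affine functions of $k$, hence monotone on $[0,j]$. Both work. Note also that the ``main obstacle'' you single out --- a measurable--selection argument to reduce arbitrary predictable controls to constant ones --- is not actually needed: the supermartingale verification you yourself sketch handles arbitrary predictable competitors directly, because the drift of the candidate value process is affine in $k_s$, so the supermartingale inequality only has to be checked at $k_s\in\{0,j\}$, where it holds by the definition of $\widehat R^i_j$ as a maximum.

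There is, however, one concrete algebraic slip at the point where you match your formulas to the statement. Your value for the constant control $k$,
$$
V_j(k)=\frac{\rho_i\mu j+Bk+\lambda_j^k\,\widehat R^i_{j-1}}{r+\lambda_j^k},
$$
is correct and agrees with the quantities $u^i(0)$, $u^i(j)$ computed in the paper's own proof. But the identity $\lambda_j^k/(r+\lambda_j^k)=1-r/(r+\lambda_j^k)$ yields
$$
V_j(0)=\frac{\rho_i\mu j}{r+\lambda_j^0}-\frac{r\,\widehat R^i_{j-1}}{r+\lambda_j^0}+\widehat R^i_{j-1},
$$
that is, the first displayed term of the Proposition \emph{plus} $\widehat R^i_{j-1}$, and similarly for $V_j(j)$; so the two candidates do not coincide ``exactly'' with the displayed terms. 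The correct recursion is $\widehat R^i_j=\max\{V_j(0),V_j(j)\}$, which equals $\widehat R^i_{j-1}$ plus the maximum displayed in the statement. (The displayed terms are in fact the jump sizes $h^{1,i}(k)=u^i(k)-\widehat R^i_{j-1}$ appearing in the paper's proof, so the statement itself seems to carry a typo; since both candidates are shifted by the same constant, the argmax --- and hence the characterisation of the optimal action --- is unaffected, but the value $\widehat R^i_I$ is, for $I\geq 2$.) You should correct the identification, or at least flag the discrepancy, rather than assert exact agreement.
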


\subsection{Renegotiation--proof contracts}\label{sec:}
Suppose that the bank of type $\rho_i$ can decide at any time to break the contract with the investors and manage the loans by itself. By doing so, the bank's utility at time $t$ would be 
\[
R^i_t := \underset{k\in\mathfrak K} {\rm ess\; sup}\; \mathbb E^{\P^k}\bigg[\int_t^{\tau^I} \mathrm{e}^{-r(s-t)}(\rho_i\mu(I-N_s)+Bk_s) \mathrm{d}s \bigg| \Gc_t \bigg].
\]
Notice that the previous expression depends on $t$ only through the number of loans left at the time. It is straightforward then that $R^i_t=\widehat{R}^i_{I-N_t}$, for every $t\in[0,\tau^I]$.

\vspace{0.5em}
In this setting, a shutdown contract $(D,\theta)$ is one which is never broken by the bank of type $\rho_g$ and is rejected by the bank of type $\rho_b$, who prefers to run the project on its own. That is, $U_t^g(D,\theta)\geq R_t^g$ for every $t\in(0,\tau)$ and $U_0^b(D,\theta)<R_0^b$. % Notice that $R_t^i=R^i_{I-N_t}$ depends on $t$ only through the number of loans left $I-N_t$. Therefore, the only modification to our approach to solve the investor's problem is to consider additional state constraints in the credible sets and to write the recursive system of HJB equations on the sets
%$$
%\widehat{\mathcal{C}}_j \cap \left\{ (U^b,U^g) : U^b \geq R^b_j, ~ U^g \geq R^g_j  \right\}. 
%$$
To find the optimal shutdown contract, we need to characterize first the new credible set which includes additional state constraints for the good bank. Let us mention immediately that the right of the bank to break the contract generates differences between the credible sets associated to the shutdown and the screening problem, which are no longer equal.

\vspace{0.5em}
Define the renegotiation--proof feasible set for the good bank with $j$ loans left
\[
\widetilde {\mathcal{V}}^g_j = \widehat{\mathcal{V}}_j \cap [\widehat{R}^g_j,\infty), \quad j=1,\dots,I.
\]

\begin{Definition}
For any time $t\geq 0$, we define the shutdown renegotiation--proof credible set $\widetilde{\mathcal C}_{t}$ as the set of $(u^{b},u^g)\in\widehat{\mathcal{V}}_{I-N_t}\times\widetilde{\mathcal{V}}_{I-N_t}^g$ such that there exists an admissible contract $(\theta,D)\in \Theta \times \mathcal D$ satisfying $U_t^b(\theta,D)=u^b$, $U_t^g(\theta,D)=u^g$ and $(U_s^b(\theta,D), U_s^g(\theta,D)) \in\widehat{\mathcal{V}}_{I-N_s}\times\widetilde{\mathcal{V}}_{I-N_s}^g$ for every $s\in [t,\tau)$, $\P-a.s.$
\end{Definition}
% Para un tiempo t, definimos el conjunto creible en el tiempo t, que depende del numero prestamos disponibles.

Given a starting time $t\geq 0$ and $u^{b}\in\widehat{\mathcal{V}}_{I-N_t}$, define the set of contracts which are not broken by the good bank and under which the value function of the bad bank at time $t$ is equal to $u^b$,

\begin{equation*}
\mathcal{A}^{{\rm SH},b}(t,u^b) = \big\{ (\theta,D)\in\Theta\times \mathcal D: U_t^b(\theta,D)=u^b, U_s^g(\theta,D) \geq R_s^g, \text{ for every }s\in[t,\tau) \big\}.
\end{equation*}

We denote by $\mathfrak{U}^{\rm SH}_t(u^{b})$ the largest value $U_t^g(\theta,D)$ that the good bank can obtain from all the contracts $(\theta,D)\in\mathcal{A}^{{\rm SH},b}(t,u^b)$ and by $\mathfrak{L}^{SH}_t(u^{b})$ the lowest one. Again, these sets can be proved to depend on $t$ only through the value of $I-N_t$ so defining $\overline{\mathfrak U}_{I-N_t}(u^{b}):=\mathfrak{U}^{SH}_t(u^{b})$ and $\overline{\mathfrak L}_{I-N_t}(u^{b}):=\mathfrak{L}^{SH}_t(u^{b})$ we finally have
\[
\overline{\mathcal C}_j := \big\{ (u^{b},u^g)\in\widehat{\mathcal{V}}_j\times\widetilde{\mathcal{V}}_j^g : \overline{\mathfrak L}_j(u^{b}) \leq u^g \leq \overline{\mathfrak U}_j(u^{b}) \big\}.
\]
As depicted in Figure \ref{fig:discussion boundaries}, the upper boundary in the problem with full commitment is absorbing and it generates a movement of the utilities of the banks in the direction $(\rho_b,\rho_g)$. We conclude that the upper boundary in this extension is the same as before and it is given by
\[
\overline{\mathfrak U}_{I-N_t}(u^{b}) = \widehat{\mathfrak U}_{I-N_t}(u^{b}), \; \text{for every } u^b \text{ such that }  \widehat{\mathfrak U}_{I-N_t}(u^{b}) \geq \widehat{R}^g_{I-N_t}.
\]
On the other hand, since the former lower boundary $\widehat{\mathfrak L}_{I-N_t}(u^{b})$ generates a movement in the direction $(-\rho_b,-\rho_g)$, it cannot be used to obtain $\overline{\mathfrak L}_{I-N_t}(u^{b})$ which is the solution to the following control problem
\[
{\mathfrak L}^{\rm SH}_t(u^b)=\underset{(k,\Psi)\in\mathfrak K\times {\overline{\mathcal{A}}}^{{\rm SH},b}(t,u^b)}{\rm{ess\ inf}} \ \E^{\P^{k(\Psi)}} \bigg[ \int_t^\tau \mathrm{e}^{-r(s-t)} (\rho_g \mathrm{d}D_s+B k_s \mathrm{d}s) \bigg| \mathcal{G}_t \bigg],
\]
subject to the dynamics, for $t\in[ r, \tau]$
\begin{align*}
U_r^b(\Psi_g)=&\ u^b+\int_t^r\big(  ru_s^{b}-Bk_s^{\star,b}(\Psi) +h_s^{1,b}\lambda_s^{k^{\star,b}} + h_s^{2,b}(1-\theta_s)\lambda_s^{k^{\star,b}} \big)\mathrm{d}s -\rho_b \int_t^r\mathrm{d}D_s-\int_t^rh_s^{1,b}\mathrm{d}N_s-\int_t^rh_s^{2,b}\mathrm{d}H_s,
\end{align*}
with
\begin{align*}
& k_s^{\star ,b}(\Psi)  = (I-N_s){\bf1}_{\{h_s^{1,b}+(1-\theta_s)h_s^{2,b}<\widehat b_{I-N_s}\}},
\end{align*}
and where $\overline{\mathcal A}^{{\rm SH},b}(t,u^b)$ is defined similarly as $\overline{\mathcal A}^{b}(t,u^b)$ in Section \ref{sec:credi}. %Section 4.4
%\begin{align*}
%&:=\Bigg\{  \Psi=(D,\theta,h^{1,b},h^{2,b})\in{\mathcal D}\times\Theta\times{\mathcal H}^2,\text{ such that } \forall s\in[t,\tau],     \\
%&\hspace{7em} U_{s^-}^{b} (\Psi)= h_s^{1,b}+h_s^{2,b} ,\
%%&\hspace{5.5em}  
%U_{s^-}^{b}(\Psi)-h_s^{1,b}\geq\frac{B(I-N_s)}{r+\lambda_s^{I-N_s}},\ U_s^g(\Psi) \geq R_s^g,\ U_t^{b}(\Psi)=u^{b}\Bigg\}.
%\end{align*}
Once the boundaries are determined and the credible set is found, a system of recursive HJB equations can be associated to the Principal's problem, as in the original problem, and the same kind of study explained in Section \ref{sec:opti} follows.

\vspace{0.5em}
The optimal screening renegotiation--proof problem can be studied analogously, by defining the corresponding credible set, which is no longer equivalent to the credible set for the shutdown problem but will also keep the upper boundary from the original problem with full commitment of the banks.

{\small
 \bibliographystyle{plain}
\bibliography{bibliographyDylan}}

\begin{thebibliography}{10}

\bibitem{agarwal2014inconsistent}
S.~Agarwal, D.~Lucca, A.~Seru, and F.~Trebbi.
\newblock Inconsistent regulators: evidence from banking.
\newblock {\em The Quarterly Journal of Economics}, 129(2):889--938, 2014.

\bibitem{antle1980moral}
R.~Antle.
\newblock {\em Moral hazard and auditor contracts : an approach to auditors'
  legal liability and independence}.
\newblock PhD thesis, Stanford university, 1980.

\bibitem{baron1982model}
D.P. Baron.
\newblock A model of the demand for investment banking advising and
  distribution services for new issues.
\newblock {\em The Journal of Finance}, 37(4):955--976, 1982.

\bibitem{baron1987monitoring}
D.P. Baron and D.~Besanko.
\newblock Monitoring, moral hazard, asymmetric information, and risk sharing in
  procurement contracting.
\newblock {\em The RAND Journal of Economics}, 18(4):509--532, 1987.

\bibitem{baron1988monitoring}
D.P. Baron and D.~Besanko.
\newblock Monitoring of performance in organizational contracting: the case of
  defense procurement.
\newblock {\em The Scandinavian Journal of Economics}, 90(3):329--356, 1988.

\bibitem{baron1980investment}
D.P. Baron and B.~Holmstr{\"o}m.
\newblock The investment banking contract for new issues under asymmetric
  information: delegation and the incentive problem.
\newblock {\em The Journal of Finance}, 35(5):1115--1138, 1980.

\bibitem{baron1982regulating}
D.P. Baron and R.B. Myerson.
\newblock Regulating a monopolist with unknown costs.
\newblock {\em Econometrica}, 50(4):911--930, 1982.

\bibitem{bhhha1997div}
N.~Bhattacharyya.
\newblock Good managers work more and pay less dividends -- a model of dividend
  policy.
\newblock Technical report, University of British Columbia, 1997.

\bibitem{biais2010large}
B.~Biais, T.~Mariotti, J.-C. Rochet, and S.~Villeneuve.
\newblock Large risks, limited liability, and dynamic moral hazard.
\newblock {\em Econometrica}, 78(1):73--118, 2010.

\bibitem{caillaud1992noisy}
B.~Caillaud, R.~Guesnerie, and P.~Rey.
\newblock Noisy observation in adverse selection models.
\newblock {\em The Review of Economic Studies}, 59(3):595--615, 1992.

\bibitem{carlier2001general}
G.~Carlier.
\newblock A general existence result for the principal--agent problem with
  adverse selection.
\newblock {\em Journal of Mathematical Economics}, 35(1):129--150, 2001.

\bibitem{cvitanic2014moral}
J.~Cvitani{\'c}, D.~Possama{\"\i}, and N.~Touzi.
\newblock Moral hazard in dynamic risk management.
\newblock {\em Management Science}, 63(10):3328--3346, 2017.

\bibitem{cvitanic2015dynamic}
J.~Cvitani{\'c}, D.~Possama{\"\i}, and N.~Touzi.
\newblock Dynamic programming approach to principal--agent problems.
\newblock {\em Finance and Stochastics}, 22(1):1--37, 2018.

\bibitem{cvitanic2013dynamics}
J.~Cvitani{\'c}, X.~Wan, and H.~Yang.
\newblock Dynamics of contract design with screening.
\newblock {\em Management Science}, 59(5):1229--1244, 2013.

\bibitem{darling1997backwards}
R.W.R. Darling and \'E. Pardoux.
\newblock Backwards {SDE} with random terminal time and applications to
  semilinear elliptic {PDE}.
\newblock {\em The Annals of Probability}, 25(3):1135--1159, 1997.

\bibitem{diamond1983bank}
D.W. Diamond and P.H. Dybvig.
\newblock Bank runs, deposit insurance, and liquidity.
\newblock {\em Journal of Political Economy}, 91(3):401--419, 1983.

\bibitem{dionne1985dealing}
G.~Dionne and P.~Lasserre.
\newblock Dealing with moral hazard and adverse selection simultaneously.
\newblock Cahier de recherche 8559, D{\'e}partement de sciences
  {\'e}conomiques, universit{\'e} de Montr{\'e}al, 1985.

\bibitem{el2018optimal}
O.~El~Euch, T.~Mastrolia, M.~Rosenbaum, and N.~Touzi.
\newblock Optimal make--take fees for market making regulation.
\newblock {\em arXiv preprint arXiv:1805.02741}, 2018.

\bibitem{el1995dynamic}
N.~El~Karoui and M.-C. Quenez.
\newblock Dynamic programming and pricing of contingent claims in an incomplete
  market.
\newblock {\em SIAM Journal on Control and Optimization}, 33(1):29--66, 1995.

\bibitem{karoui2013capacities2}
N.~El~Karoui and X.~Tan.
\newblock Capacities, measurable selection and dynamic programming part {II}:
  application in stochastic control problems.
\newblock {\em arXiv preprint arXiv:1310.3364}, 2013.

\bibitem{elie2017on}
R.~\'Elie, L.~Moreau, and D.~Possama{\"\i}.
\newblock On a class of non--{M}arkovian singular stochastic control problems.
\newblock {\em SIAM Journal on Control and Optimization}, to appear, 2017.

\bibitem{faynzilberg2000generalized}
P.S. Faynzilberg and P.~Kumar.
\newblock On the generalized principal--agent problem: decomposition and
  existence results.
\newblock {\em Review of Economic Design}, 5(1):23--58, 2000.

\bibitem{figalli2011multidimensional}
A.~Figalli, Y.-H. Kim, and R.J. McCann.
\newblock When is multidimensional screening a convex program?
\newblock {\em Journal of Economic Theory}, 146(2):454--478, 2011.

\bibitem{fleming2006controlled}
W.H. Fleming and H.M. Soner.
\newblock {\em Controlled {M}arkov processes and viscosity solutions},
  volume~25 of {\em Stochastic modelling and applied probability}.
\newblock Springer--Verlag New York, 2nd edition, 2006.

\bibitem{gottlieb2014simultaneous}
D.~Gottlieb and H.~Moreira.
\newblock Simultaneous adverse selection and moral hazard.
\newblock Technical report, The university of Pennsylvania and escola
  Brasileira de economia e finan{\c c}as, 2011.

\bibitem{grossman1983analysis}
S.J. Grossman and O.D. Hart.
\newblock An analysis of the principal--agent problem.
\newblock {\em Econometrica}, 51(1):7--45, 1983.

\bibitem{guesnerie1984complete}
R.~Guesnerie and J.-J. Laffont.
\newblock A complete solution to a class of principal--agent problems with an
  application to the control of a self-managed firm.
\newblock {\em Journal of Public Economics}, 25(3):329--369, 1984.

\bibitem{guesnerie1989adverse}
R.~Guesnerie, P.~Picard, and P.~Rey.
\newblock Adverse selection and moral hazard with risk neutral agents.
\newblock {\em European Economic Review}, 33(4):807--823, 1989.

\bibitem{hamadene1995backward}
S.~Hamad\`ene and J.-P. Lepeltier.
\newblock Backward equations, stochastic control and zero--sum stochastic
  differential games.
\newblock {\em Stochastics: An International Journal of Probability and
  Stochastic Processes}, 54(3-4):221--231, 1995.

\bibitem{holmstrom1979moral}
B.~H{\"o}lmstrom.
\newblock Moral hazard and observability.
\newblock {\em The Bell Journal of Economics}, 10(1):74--91, 1979.

\bibitem{holmstrom1987aggregation}
B.~Holmstr{\"o}m and P.~Milgrom.
\newblock Aggregation and linearity in the provision of intertemporal
  incentives.
\newblock {\em Econometrica}, 55(2):303--328, 1987.

\bibitem{hynd2012eigenvalue}
R.~Hynd.
\newblock The eigenvalue problem of singular ergodic control.
\newblock {\em Communications on Pure and Applied Mathematics}, 65(5):649--682,
  2012.

\bibitem{jewitt1988justifying}
I.~Jewitt.
\newblock Justifying the first--order approach to principal--agent problems.
\newblock {\em Econometrica}, 56(5):1177--1190, 1988.

\bibitem{jullien2007screening}
B.~Jullien, B.~Salani{\'e}, and F.~Salani{\'e}.
\newblock Screening risk--averse agents under moral hazard: single--crossing
  and the {CARA} case.
\newblock {\em Economic Theory}, 30(1):151--169, 2007.

\bibitem{laffont1986using}
J.-J. Laffont and J.~Tirole.
\newblock Using cost observation to regulate firms.
\newblock {\em The Journal of Political Economy}, 94(3, part 1):614--641, 1986.

\bibitem{lewis1995optimal}
T.R. Lewis and D.E.M. Sappington.
\newblock Optimal capital structure in agency relationships.
\newblock {\em The RAND Journal of Economics}, 26(3):343--361, 1995.

\bibitem{lions1982generalized}
P.-L. Lions.
\newblock {\em Generalized solutions of {H}amilton--{J}acobi equations},
  volume~69 of {\em Research notes in mathematics}.
\newblock Pitman Advanced Publishing Program, Boston, London, Melbourne, 1982.

\bibitem{maskin1984monopoly}
E.~Maskin and J.~Riley.
\newblock Monopoly with incomplete information.
\newblock {\em The RAND Journal of Economics}, 15(2):171--196, 1984.

\bibitem{mcafee1986bidding}
R.P. McAfee and J.~McMillan.
\newblock Bidding for contracts: a principal--agent analysis.
\newblock {\em The RAND Journal of Economics}, 17(3):326--338, 1986.

\bibitem{melumad1987centralization}
N.D. Melumad and S.~Reichelstein.
\newblock Centralization versus delegation and the value of communication.
\newblock {\em Journal of Accounting Research}, 25:1--18, 1987.

\bibitem{melumad1989value}
N.D. Melumad and S.~Reichelstein.
\newblock Value of communication in agencies.
\newblock {\em Journal of Economic Theory}, 47(2):334--368, 1989.

\bibitem{mirrlees1972population}
J.A. Mirrlees.
\newblock Population policy and the taxation of family size.
\newblock {\em Journal of Public Economics}, 1(2):169--198, 1972.

\bibitem{mirrlees1974notes}
J.A. Mirrlees.
\newblock Notes on welfare economics, information and uncertainty.
\newblock In M.S. Balch, D.L. McFadden, and S.Y. Wu, editors, {\em Essays on
  economic behavior under uncertainty}, pages 243--261. Amsterdam: North
  Holland, 1974.

\bibitem{mirrlees1976optimal}
J.A. Mirrlees.
\newblock The optimal structure of incentives and authority within an
  organization.
\newblock {\em The Bell Journal of Economics}, 7(1):105--131, 1976.

\bibitem{mirrlees1999theory}
J.A. Mirrlees.
\newblock The theory of moral hazard and unobservable behaviour: part {I}
  (reprint of the unpublished 1975 version).
\newblock {\em The Review of Economic Studies}, 66(1):3--21, 1999.

\bibitem{mirrlees1971exploration}
James~A Mirrlees.
\newblock An exploration in the theory of optimum income taxation.
\newblock {\em The review of economic studies}, 38(2):175--208, 1971.

\bibitem{mussa1978monopoly}
M.~Mussa and S.~Rosen.
\newblock Monopoly and product quality.
\newblock {\em Journal of Economic Theory}, 18(2):301--317, 1978.

\bibitem{myerson1982optimal}
R.B. Myerson.
\newblock Optimal coordination mechanisms in generalized principal--agent
  problems.
\newblock {\em Journal of Mathematical Economics}, 10(1):67--81, 1982.

\bibitem{page1991optimal}
F.H.Jr. Page.
\newblock Optimal contract mechanisms for principal--agent problems with moral
  hazard and adverse selection.
\newblock {\em Economic Theory}, 1(4):323--338, 1991.

\bibitem{pages2012bank}
H.~Pag{\`e}s.
\newblock Bank monitoring incentives and optimal {ABS}.
\newblock {\em Journal of Financial Intermediation}, 22(1):30--54, 2013.

\bibitem{pages2014mathematical}
H.~Pag{\`e}s and D.~Possama{\"\i}.
\newblock A mathematical treatment of bank monitoring incentives.
\newblock {\em Finance and Stochastics}, 18(1):39--73, 2014.

\bibitem{papapantoleon2016existence}
A.~Papapantoleon, D.~Possama{\"\i}, and A.~Saplaouras.
\newblock Existence and uniqueness for {BSDE}s with jumps: the whole nine
  yards.
\newblock {\em Electronic Journal of Probability}, 23(121):1--68, 2018.

\bibitem{peng1991probabilistic}
S.~Peng.
\newblock Probabilistic interpretation for systems of quasilinear parabolic
  partial differential equations.
\newblock {\em Stochastics and Stochastic Reports}, 37(1-2):61--74, 1991.

\bibitem{picard1987design}
P.~Picard.
\newblock On the design of incentive schemes under moral hazard and adverse
  selection.
\newblock {\em Journal of Public Economics}, 33(3):305--331, 1987.

\bibitem{roberts1979welfare}
K.W.S. Roberts.
\newblock Welfare considerations on nonlinear pricing.
\newblock {\em The Economic Journal}, 89(353):66--83, 1979.

\bibitem{rochet1998ironing}
J.-C. Rochet and P.~Chon{\'e}.
\newblock Ironing, sweeping, and multidimensional screening.
\newblock {\em Econometrica}, 66(4):783--826, 1998.

\bibitem{rogerson1985first}
W.P. Rogerson.
\newblock The first--order approach to principal--agent problems.
\newblock {\em Econometrica}, 53(6):1357--1368, 1985.

\bibitem{royer2006backward}
M.~Royer.
\newblock Backward stochastic differential equations with jumps and related
  non--linear expectations.
\newblock {\em Stochastic Processes and their Applications},
  116(10):1358--1376, 2006.

\bibitem{salanie1990adverse}
B.~Salani{\'e}.
\newblock S{\'e}lection adverse et aversion pour le risque.
\newblock {\em Annales d'{\'E}conomie et de Statistique}, 18:131--149, 1990.

\bibitem{sannikov2008continuous}
Y.~Sannikov.
\newblock A continuous--time version of the principal--agent problem.
\newblock {\em The Review of Economic Studies}, 75(3):957--984, 2008.

\bibitem{sannikov2012contracts}
Y.~Sannikov.
\newblock Contracts: the theory of dynamic principal--agent relationships and
  the continuous--time approach.
\newblock In D.~Acemoglu, M.~Arellano, and E.~Dekel, editors, {\em Advances in
  economics and econometrics, 10th world congress of the Econometric Society,
  volume 1, economic theory}, number~49 in Econometric Society Monographs,
  pages 89--124. Cambridge University Press, 2013.

\bibitem{spence1980multi}
A.M. Spence.
\newblock Multi--product quantity--dependent prices and profitability
  constraints.
\newblock {\em The Review of Economic Studies}, 47(5):821--842, 1980.

\bibitem{spence1971insurance}
M.A. Spence and R.~Zeckhauser.
\newblock Insurance, information, and individual action.
\newblock {\em The American Economic Review}, 61(2):380--387, 1971.

\bibitem{sung2005optimal}
J.~Sung.
\newblock Optimal contracts under adverse selection and moral hazard: a
  continuous--time approach.
\newblock {\em Review of Financial Studies}, 18(3):1021--1073, 2005.

\bibitem{theilen2003simultaneous}
B.~Theilen.
\newblock Simultaneous moral hazard and adverse selection with risk averse
  agents.
\newblock {\em Economics Letters}, 79(2):283--289, 2003.

\bibitem{weitzman1976new}
M.L. Weitzman.
\newblock The new {S}oviet incentive model.
\newblock {\em The Bell Journal of Economics}, 7(1):251--257, 1976.

\bibitem{wilson1993non}
R.~Wilson.
\newblock {\em Non linear pricing}.
\newblock Oxford University Press, 1993.

\bibitem{zeckhauser1970medical}
R.~Zeckhauser.
\newblock Medical insurance: a case study of the tradeoff between risk
  spreading and appropriate incentives.
\newblock {\em Journal of Economic Theory}, 2(1):10--26, 1970.

\bibitem{zou1992threat}
L.~Zou.
\newblock Threat--based incentive mechanisms under moral hazard and adverse
  selection.
\newblock {\em Journal of Comparative Economics}, 16(1):47--74, 1992.

\end{thebibliography}

\appendix
\small
\section{Proofs for the pure moral hazard case}\label{app:1}
We provide in this section all the proofs of the results of Section \ref{sec:pure}. We start with the

\vspace{0.5em}
\begin{proof}[Proof of Proposition \ref{prop:rep}]
Using the martingale representation theorem\footnote{We emphasise that since the filtration $\G$ is augmented and generated by point processes, the predictable martingale representation holds for any of the probability measures $(\P^k)_{k\in\mathfrak K}$, see for instance \cite[Lemma A.1]{el2018optimal}.} (recall that $D$ is supposed to be integrable and that $k$ is bounded by definition), we deduce that for any $k\in\mathfrak K$ there exist $\G-$predictable processes $h^{1,i,k}$ and $h^{2,i,k}$ such that, ${\mathbb P}-\mathrm{a.s.}$
\begin{align}\label{dyn-u}
 \mathrm{d}u_t^i(k,\theta^i,D^i)=& \big(ru_t^i(k,\theta^i,D^i)-Bk_t\big)\mathrm{d}t-\rho_i\mathrm{d}D^i_t-h^{1,i,k}_t\big(\mathrm{d}N_t-\lambda_t^k\mathrm{d}t\big)-h^{2,i,k}_t\big(\mathrm{d}H_t-(1-\theta_t^i)\lambda_t^k\mathrm{d}t\big),\; 0\leq t<\tau,
\end{align}
Let us then define
\begin{align*}
&Y^{i,k}_t:=u_t^i(k,\theta^i,D^i),\; Z_t^{i,k}:=(h^{1,i,k}_t,h^{2,i,k}_t)^\top,\; M_t:=(N_t,H_t)^\top,\ \widetilde M_t^i:=M_t-\int_0^t\lambda^0_s(1,1-\theta^i_s)^\top \mathrm{d}s,\; K_t^i:=\rho_iD^i_t,
\end{align*}
so that we can rewrite \reff{dyn-u} as follows
\[
Y_t^{i,k}=0-\int_t^\tau f^i(s,k_s,Y_s^{i,k},Z_s^{i,k})\mathrm{d}s+\int_t^\tau Z^{i,k}_s\cdot \mathrm{d}\widetilde M^i_s+\int_t^\tau \mathrm{d}K^i_s,\; 0\leq t\leq \tau,\; \mathbb P-\mathrm{a.s.}
\]
In other words, $(Y^{i,k},Z^{i,k})$ appears as a (super--)solution to a BSDE with (finite) random terminal time, as studied for instance by Peng \cite{peng1991probabilistic} or Darling and Pardoux \cite{darling1997backwards}. Notice that by direct computations, it is immediate that it is equivalent to look for a solution $(Y^i,Z^i)$ of BSDE \eqref{bsde} or to look for solution to the following BSDE
\begin{equation}\label{eq:bsde}
\widetilde Y_t^{i}=\xi_\tau-\int_t^\tau \tilde g^i\big(s,\widetilde Z^i_s\big)\mathrm{d}s+\int_t^\tau \widetilde Z^{i}_s\cdot \mathrm{d}\widetilde M^i_s,\; 0\leq t\leq \tau,\; \mathbb P-\mathrm{a.s.},
\end{equation}
where we defined
\[
\widetilde Y^i_t:=\mathrm{e}^{-rt}Y^i_t+\int_0^t\mathrm{e}^{-rs}\mathrm{d}K_s,\; \widetilde Z^i_t:=\mathrm{e}^{-rt}Z^i_t,\; t\geq 0,\; \xi_\tau:=\int_0^\tau\mathrm{e}^{-rs}\mathrm{d}K_s,\; \tilde g^i(s,z):=(I-N_s)\bigg(\alpha_{I-N_s}\varepsilon z\cdot\begin{pmatrix}1\\ 1-\theta^i_s\end{pmatrix}-B\mathrm{e}^{-rs}\bigg)^-.
\]
 By direct computations, it is easy to see that $\tilde g^i$ satisfies, for any $(t,z,z')\in\R_+\times\R^2\times\R^2$
\begin{align*}
&\tilde g^i(t,z)-\tilde g^i(t,z')\leq =\gamma_t(z,z')\lambda^0_t(z-z')\cdot (1,1-\theta^i_t)^\top, 
\end{align*}
where $
\gamma_t(z,z'):=\varepsilon{\bf 1}_{\{(z-z')\cdot (1,1-\theta^i_t)^\top> 0\}},\ \text{verifies $0\leq \gamma_t(z,z')\leq \varepsilon$}.
$ Since in addition $\tilde g^i(t,0)$ is bounded, \eqref{eq:integD} holds, $\gamma_t(z,z^\prime)$ is bounded and non--negative, the intensity of $\widetilde M^i$ is also bounded, as well as its jumps, we deduce that all the assumptions of Theorems 3.5 and 3.24 in \cite{papapantoleon2016existence} hold in our setting, proving wellposedness of \eqref{bsde} in the space described by \eqref{eq:integ}, and that we can apply a comparison theorem. Therefore, we deduce immediately that for any $k\in\mathfrak K$
\[
Y^{i,k}_t\leq Y^i_t=Y^{i,k^{\star,i}}_t,\ \mathbb P-\mathrm{a.s.},
\]
where we defined 
\[
k^{\star,i}_t:=(I-N_t){\bf 1}_{\{Z^i_t\cdot(1,1-\theta^i_t)^\top<b_t\}},\text{ \rm and }b_t:=\frac{B}{\alpha_{I-N_t}\varepsilon},\ t\geq 0.
\]
This means that $Y^i$ is the value function of the bank, and that her optimal response given $(\theta^i,D^i)\in\Theta\times\mathcal D$ is $k^{\star,i}$. 
\end{proof}

 \vspace{0.5em}
We finish with the % We continue with the 
 
 \vspace{0.5em}
 \begin{proof}[Proof of Lemma \ref{lemma:feasible set}]
First of all, it is clear that the bank of type $\rho_i$ can get arbitrarily large levels of utility (it suffices for the investor to set $\mathrm{d}D^i_s:=n\mathrm{d}s$ for $n$ large enough, starting from time $t$). The bank's maximal level of utility is therefore $+\infty$, which corresponds to a utility equal to $-\infty$ for the investor. Then, coming back to the definition of the bank's problem, or to the BSDE \reff{bsde}, it is clear, for instance by using the comparison theorem for super solutions to \reff{bsde} (see \cite[Theorem 2.5]{royer2006backward}), that in order to minimise the utility that the bank obtains, the investor has to set $D^i=0$. Moreover, since by definition we must always have $Y_t^i\geq 0$ and $Y^i_\tau=0$, and since the totally inaccessible jumps of $Y$ (recall that $D$ is assumed to be predictable) are given by $\Delta Y_t^i=-Z^i_t\cdot \Delta M_t,$
we must have that 
\begin{equation}\label{eq:yz}
Y_{t^-}^i=Z^i_t\cdot (1,1)^\top,\text{ \rm and }Y_{t^-}^i\geq Z^i_t\cdot (1,0)^\top,\ t>0,\ \mathbb P-\mathrm{a.s.},
\end{equation}
Indeed, the support of the laws of $\tau$ and the $\tau^j$ under $\P$ is $[0,+\infty)$. This implies in particular that we must have $Z_t^i\cdot (0,1)^\top\geq 0$, which in turn implies that the generator $g^i$ is then non--increasing with respect to $\theta^i$, and thus that the minimal utility for the bank is attained, as expected, when $\theta^i=0$. Then, if $(\theta^i,D^i)=(0,0)$ (which is obviously in $\Theta\times\mathcal D$) starting from time $t$, it is clear that the bank will never monitor and will obtain
\begin{align*}
U_{t}^i(0,0)=B(I-N_t)\mathbb E^{\mathbb P^{I-N_\cdot}}\bigg[\int_t^\tau e^{-r(s-t)}\mathrm{d}s\bigg|\mathcal G_t\bigg]&=\frac{B(I-N_t)}{r}\Big(1-\mathbb E^{\mathbb P^{I-N_\cdot}}\Big[\mathrm{e}^{-r(\tau-t)}\Big|\mathcal G_t\Big]\Big)\\
&=\frac{B(I-N_t)}{r}\bigg(1-\int_0^{+\infty}\lambda^{I-N_t}_t\mathrm{e}^{-xr-x\lambda^{I-N_t}_t}\mathrm{d}x\bigg)=\frac{B(I-N_t)}{r+\lambda^{I-N_t}_t}.
\end{align*}
Notice that this corresponds to the investor getting 
\[
\mu(I-N_t)\mathbb E^{\mathbb P^{I-N_\cdot}}[ \tau-t|\mathcal G_t]=\frac{\mu(I-N_t)}{\lambda^{I-N_t}_t}.
\]
\end{proof}

%\vspace{0.5em}
%We finish with the 

%\vspace{0.5em}
%\begin{proof}[Proof of Lemma \ref{lemma:cred0}]
%Let us show that for any $(\theta^i,D^i)\in\Theta\times\mathcal D$ enforcing $k=0$ from time $t$, we have $
%U_t^i(\theta^i,D^i)\geq b_t.
%$
%With such a contract, we must have
%$$
%Z_s^i\cdot (1,1-\theta^i_s)^\top\geq b_s,\ s\geq t.
%$$
%By \reff{eq:yz}, this implies that for $s\geq t$, $Y_{s^-}^i\geq b_s$, which, by right-continuity at time $t$ leads to the desired result. Notice also that this result implies the so--called {\it limited liability} property of the bank, which reads
%$$
%Y^i_{t^-}-Z^i_t\cdot(1,0)^\top\geq b_t.
%$$
%Now, in order for the investor to ensure that $U_t^i(\theta^i,D^i)=b_t$, it suffices for him, after time $t$, to offer the optimal contract derived in \cite{pages2014mathematical} (with initial condition $b_t$ at time $t$), which we recall below (see Theorem \ref{th:fm}). By \cite[Proposition 3.16]{pages2014mathematical}, the utility of the bank will then be $b_t$.
%\end{proof}

\section{Short--term contracts with constant payments}\label{sec:appcons}

In this section we first analyse the optimal responses and the value functions of the banks at a starting time $t\geq0$, under contracts with constant payments of the form $\mathrm{d}D_s=c\mathrm{d}s$, where $c$ is any $\mathcal G_t-$measurable random variable, and with $\theta\equiv 0$, so that the pool is liquidated immediately after the first default. Then, we extend the study to the case in which the payments are delayed and they happen only after a certain time $t^\star> t$.

\subsection{Contracts with no delay} 
\label{constantpayment}

\begin{Proposition}
For any $t\geq0$, consider the contract $(\theta,D)\in\Theta\times\mathcal{D}$ such that
\[
\theta_s=0, \; \mathrm{d}D_s = c \mathrm{d}s, \; \forall s\geq t,
\]
where $c$ is any $\mathcal G_t-$measurable random variable. For $i\in\{g,b\}$, define $\bar{c}_i:=\ds\frac{b_{I-N_t}(r+\lambda_t^{0})}{\rho_i}$. The optimal effort of the bank of type $\rho_i$ and her expected utility under the contract are
\begin{itemize}
\item If $c \leq \bar{c}_i$ then $k_s^{\star,i}(\theta,D)=k_s^{\rm SH}, \forall s\in[t,\tau)$ and  $U_t^i(\theta,D)=\ds\frac{\rho_i c+B(I-N_t)}{r+\lambda_t^{k^{\rm SH}}}$.
\item If $c \geq \bar{c}_i$ then $k_s^{\star,i}(\theta,D)=0, \forall s\in[t,\tau)$ and $U_t^i(\theta,D)=\ds\frac{\rho_i c}{r+\lambda_t^{0}}$.
\end{itemize}
\end{Proposition}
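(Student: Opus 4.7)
The plan is to exploit both Proposition \ref{prop:rep} and the very simple structure of the contract: since $\theta\equiv 0$, the pool is liquidated at the first default time $\tau^\star:=\inf\{s>t,~ N_s>N_t\}$, so on the stochastic interval $[t,\tau)$ we have $N_s\equiv N_t$, and therefore the hazard rates $\lambda^{k^0}_s=\alpha_{I-N_t}(I-N_t)$ and $\lambda^{k^{SH}}_s=\alpha_{I-N_t}(I-N_t)(1+\varepsilon)$ as well as the threshold $b_s=b_{I-N_t}=B/(\alpha_{I-N_t}\varepsilon)$ are all $\mathcal{G}_t-$measurable constants. This strongly suggests that the value $U_t^i(\theta,D)$ and the associated control process $Z^i$ given by Proposition \ref{prop:rep} are in fact constant on $[t,\tau)$, an ansatz that I would verify by solving the BSDE \reff{bsde} in closed form.

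More precisely, I would look for $(Y^i,Z^i)$ of the form $Y_s^i=u^i$ and $Z_s^i\cdot(1,1)^\top=z^i$ on $[t,\tau)$, with $Y^i_\tau=0$. Identifying the jump at $\tau$ in \reff{bsde} forces $z^i=u^i$ (since $\Delta N_\tau=\Delta H_\tau=1$), while vanishing of the drift between $t$ and $\tau$ reads
\begin{equation*}
g^i(s,u^i,Z^i_s)+\lambda^0_s\, z^i-\rho_i c=0,\quad g^i(s,y,z)=ry-(I-N_t)\bigl(\alpha_{I-N_t}\varepsilon\, z\cdot(1,1)^\top-B\bigr)^-.
\end{equation*}
Substituting $z^i=u^i$ gives two candidate equations depending on the sign of $\alpha_{I-N_t}\varepsilon u^i-B$: if $u^i\geq b_{I-N_t}$, one finds $u^i=\rho_i c/(r+\lambda^{k^0}_t)$, while if $u^i<b_{I-N_t}$, one finds $u^i=(\rho_i c+B(I-N_t))/(r+\lambda^{k^{SH}}_t)$. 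Each formula is admissible precisely when its defining inequality on $u^i$ is satisfied, which is a direct algebraic manipulation showing that the first case corresponds to $c\geq\overline{c_i}$ and the second to $c\leq\overline{c_i}$; at $c=\overline{c_i}$ the two expressions coincide.

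Finally, to identify the optimal effort, I would invoke the bang--bang characterisation in Proposition \ref{prop:rep}: $k_s^{\star,i}=(I-N_s)\mathbf{1}_{\{Z_s^i\cdot(1,1)^\top<b_s\}}$. Since $Z^i_s\cdot(1,1)^\top=u^i$ and $b_s=b_{I-N_t}$ on $[t,\tau)$, the condition $u^i\geq b_{I-N_t}$ (Case 1) yields $k_s^{\star,i}=0=k_s^0$, and $u^i<b_{I-N_t}$ (Case 2) yields $k_s^{\star,i}=I-N_s=k_s^{SH}$, which is perfectly consistent with the two regimes found above. Uniqueness of the solution to the BSDE \reff{bsde} (which holds by the arguments already invoked in the proof of Proposition \ref{prop:rep}) then ensures that the candidate $(u^i,z^i)$ constructed is the true value function, so that the claimed formulas for $U_t^i(\theta,D)$ and $k^{\star,i}$ follow.

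The only genuinely non--routine step is the consistency check tying together the ansatz on $Y^i$, the value of $Z^i$, and the indicator switching in $g^i$; once it is done, the two computations are straightforward expectation formulas for the exponential random variable $\tau^\star-t$ under $\P^{k^0}$ and $\P^{k^{SH}}$ respectively, which I would include as a sanity check but not as the main argument.
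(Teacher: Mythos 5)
Your proposal is correct and is essentially the paper's argument in a slightly more formal packaging: the paper computes the (constant) continuation utility directly under each of the two candidate constant efforts $k^0$ and $k^{SH}$ and then checks incentive compatibility via the threshold $u^i \gtrless b_{I-N_t}$ from Proposition \ref{prop:rep}, observing that the two critical payments $\underline{c_i}$ and $\overline{c_i}$ coincide, which is exactly the content of your vanishing-drift/jump-identification computation for the BSDE together with the identity $b_{I-N_t}(r+\lambda_t^{k^{SH}})-B(I-N_t)=b_{I-N_t}(r+\lambda_t^{k^0})$. Both routes rest on the same two closed-form expectations and the same switching condition, so there is nothing to add.
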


\begin{proof}
\vspace{0.5em}
$(i)$ If the bank of type $\rho_i$ always monitors, we have
\[
u_t^i(0,\theta,D)=\E^{\mathbb P}\bigg[\int_t^\tau \mathrm{e}^{-r(s-t)}\rho_i c\mathrm{d}s \bigg| \Gc_t \bigg]=\frac{\rho_i c}{r+\lambda_t^{0}}.
\]
Hence, the continuation utility is constant in time and if the payment $c$ is exactly equal to $u^i(r+\lambda_{t}^{0})/\rho_i,$ for some $u^i\geq 0$,
then the bank receives exactly $u^i$. In this case, the strategy of always monitoring is incentive compatible if and only if $u^i\geq b_{I-N_t}$. The minimum payment such that the bank of type $\rho_i$ will always work is therefore
\[
\overline{c}_i=\frac{b_{I-N_t}(r+\lambda_t^{0})}{\rho_i}.
\]
$(ii)$ If the bank of type $\rho_i$ always shirks, her continuation utility is constant and equal to
\[
u_t^i\big(k^{\rm SH},\theta,D\big)=\E^{\mathbb P^{k^{\rm SH}}}\bigg[\int_t^\tau \mathrm{e}^{-r(s-t)}(\rho_i c+B)\mathrm{d}s  \bigg| \Gc_t\bigg]=\frac{\rho_i c+B(I-N_t)}{r+\lambda_t^{k^{\rm SH}}}.
\]
Then, if for some $u_i\geq 0$ one takes $c$ equal to
\[
\frac{u^i\big(r+\lambda_t^{k^{\rm SH}}\big)-B(I-N_t)}{\rho_i},
\]
the bank receives $u^i$. Therefore $k^{\rm SH}$ is incentive compatible if and only if $u^i<b_{I-N_t}$. Nevertheless, since the payment $c$ must be positive, $u^i$ must be greater than $B(I-N_t)/(r+\lambda_t^{k^{\rm SH}})$. The supremum of the payments such that the bank of type $\rho_i$ will always shirk is therefore equal to
\[
\frac{b_{I-N_t}\big(r+\lambda_t^{k^{\rm SH}}\big)-B(I-N_t)}{\rho_i}=\frac{b_{I-N_t}(r+\lambda_t^{0})}{\rho_i}=\overline{c}_i.
\]
\end{proof}

\subsection{Contracts with delayed payments}
\label{contractswithdelay}

\begin{Proposition}
For any $t\geq0$, consider the contract $(\theta,D)\in\Theta\times\mathcal{D}$ such that
\[
\theta_s=0, ~ \mathrm{d}D_s = c {\bf 1}_{s\geq t^\star} \mathrm{d}s, \; \forall s\geq t,
\]
where $c$ is any $\mathcal G_t-$measurable random variable and $t^\star>t$ is a fixed constant. For $i\in\{g,b\}$, define the time 
\[\bar{t}_i(c):=t+\ds\frac{1}{r+\lambda_t^{0}}\log\bigg(\frac{\rho_i c}{b_{I-N_t}(r+\lambda_t^{0})}\bigg).
\] The optimal effort of the bank of type $\rho_i$ and her expected utility under the contract are
\begin{itemize}
\item If $c \leq \bar{c}_i$, then $k_s^{\star,i}(0,D)=k_s^{\rm SH}, \forall s\in[t,\tau)$ and  $U_t^i(0,D)=\ds \exp\Big(-(r+\lambda_t^{k^{\rm SH}})(t^\star-t)\Big)\frac{\rho_i c}{r+\lambda_t^{k^{\rm SH}}}+\frac{B(I-N_t)}{r+\lambda_t^{k^{\rm SH}}}$.
\item If $c > \bar{c}_i,~ t^\star\leq \bar{t}_i(c)$, then $k_s^{\star,i}(0,D)=0, \forall s\in[t,\tau)$ and $U_t^i(0,D)=\ds \exp\Big(-(r+\lambda_t^{0})(t^\star-t)\Big)\frac{\rho_i c}{r+\lambda_t^{0}}$.
\item If $c > \bar{c}_i,~ t^\star> \bar{t}_i(c)$, then $k_s^{\star,i}(0,D)=k_s^{\rm SH}{\bf 1}_{\{s<\bar{t}_i(c)\}},\; \forall s\in[t,\tau)$ and \[
U_t^i(0,D)=\exp\Big(-(r+\lambda_t^{k^{\rm SH}})(t^\star-t)\Big)\bigg(\frac{\rho_i c}{b_{I-N_t}(r+\lambda_t^{0})}\bigg)^\frac{r+\lambda_t^{k^{\rm SH}}}{r+\lambda_t^{0}}\frac{b_{I-N_t}(r+\lambda_t^{0})}{r+\lambda_t^{k^{\rm SH}}}+\frac{B(I-N_t)}{r+\lambda_t^{k^{\rm SH}}}.
\]
\end{itemize}
\end{Proposition}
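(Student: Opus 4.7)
The key simplification is that $\theta\equiv 0$ forces $\tau=\tau^1$, so on $[t,\tau)$ the pool size is frozen at $I-N_t$, making $\lambda_t^{k^0}$, $\lambda_t^{k^{SH}}$ and $b_{I-N_t}$ $\mathcal G_t$-measurable constants for the whole life of the contract. Under any constant monitoring regime $k$ the conditional survival probability is simply $\P^k(\tau>u\mid\mathcal G_s)=e^{-\lambda_t^k(u-s)}$, so every continuation utility reduces to an explicit integral of exponentials against $c{\bf 1}_{u\geq t^\star}\,du$ and $B(I-N_t){\bf 1}_{\{k_u=k^{SH}\}}\,du$. On the other hand, Proposition \ref{prop:rep} tells us that the bank's best response is bang-bang, switching between the pure strategies $k^0$ and $k^{SH}$ as the process $Z^i_s\cdot(1,1-\theta^i_s)^\top$ crosses the threshold $b_{I-N_t}$; since $\theta\equiv 0$ and the jump-matching conditions \eqref{eq:yz} give $Z^i_s\cdot(1,1)^\top=h^1_s+h^2_s=U^i_{s^-}$, the threshold is carried directly by $U^i_{s^-}$. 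The plan is then, in each of the three cases, to conjecture a candidate strategy, compute the resulting $U^i_s$ explicitly, and verify that the pointwise IC condition holds for every $s\in[t,\tau)$, with global optimality closed by the BSDE comparison argument already used to prove Proposition \ref{prop:rep}.

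For Case 1 I would guess always-shirking and obtain
$$
U^i_s=\frac{B(I-N_t)}{r+\lambda_t^{k^{SH}}}+e^{-(r+\lambda_t^{k^{SH}})(t^\star-s)^+}\frac{\rho_ic}{r+\lambda_t^{k^{SH}}},
$$
a non-decreasing function of $s$ whose supremum equals $(B(I-N_t)+\rho_ic)/(r+\lambda_t^{k^{SH}})$; the algebraic identity already produced in the preceding no-delay proposition shows this supremum is $\leq b_{I-N_t}$ exactly when $c\leq\overline{c_i}$, so the IC threshold is respected throughout. For Case 2 I would guess always-monitoring, which produces the non-decreasing $U^i_s=e^{-(r+\lambda_t^{k^0})(t^\star-s)^+}\rho_ic/(r+\lambda_t^{k^0})$; the binding IC constraint now sits at $s=t$, and solving $U^i_t\geq b_{I-N_t}$ for the delay $t^\star-t$ produces precisely $t^\star\leq\overline{t_i}(c)$.

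Case 3 is the delicate one. When $t^\star>\overline{t_i}(c)$ neither pure strategy is consistent: always-monitor fails the IC threshold near $s=t$ since $U^i_t<b_{I-N_t}$, while always-shirk eventually violates it because $U^i_s$ grows above $b_{I-N_t}$ as $s\uparrow t^\star$. My candidate is then a single-switch strategy, shirking on $[t,\widetilde t)$ and monitoring on $[\widetilde t,\tau)$, with the switching instant $\widetilde t\in(t,t^\star]$ pinned down by the indifference condition $U^i(\widetilde t)=b_{I-N_t}$ evaluated along the monitoring branch, i.e. $e^{-(r+\lambda_t^{k^0})(t^\star-\widetilde t)}\rho_ic/(r+\lambda_t^{k^0})=b_{I-N_t}$. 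On $[\widetilde t,\tau)$ the resulting $U^i$ is non-decreasing and stays $\geq b_{I-N_t}$, so monitoring is consistent; on $[t,\widetilde t)$ the shirking continuation utility solves the linear ODE
$$
du_s=\big((r+\lambda_t^{k^{SH}})u_s-B(I-N_t)\big)ds,\quad u_{\widetilde t}=b_{I-N_t},
$$
whose explicit backward integration to $s=t$, together with the exponential identity $e^{(r+\lambda_t^{k^{SH}})(\overline{t_i}(c)-t)}=\big(\rho_ic/(b_{I-N_t}(r+\lambda_t^{k^0}))\big)^{(r+\lambda_t^{k^{SH}})/(r+\lambda_t^{k^0})}$, produces the stated formula for $U^i_t$. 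The hard part will be ruling out any finer switching structure (multiple switches, or a switch in the opposite direction); I will close this by observing that the piecewise $U^i$ built above is by construction a solution of the BSDE \eqref{bsde} with the correct terminal value at $\tau$ and the correct jump structure, so by uniqueness of the solution of \eqref{bsde} (via the comparison theorem used in the proof of Proposition \ref{prop:rep}) it must coincide with the bank's value function under the contract.
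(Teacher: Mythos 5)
Your proposal is correct and follows essentially the same route as the paper: in each of the three regimes it conjectures the candidate monitoring strategy, exploits the fact that $\theta\equiv 0$ freezes the pool size so the continuation utility is an explicit exponential integral, and checks incentive compatibility against the threshold $b_{I-N_t}$, with the Case 3 switching instant pinned down (exactly as in the paper) by the indifference condition on the monitoring branch. The only addition is your explicit closure of global optimality via uniqueness/comparison for the BSDE \eqref{bsde}, a step the paper leaves implicit by appealing to the bang--bang structure of Proposition \ref{prop:rep}.
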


\begin{proof}
\vspace{0.5em}
$(i)$ If the bank of type $\rho_i$ always works, at any time $t\leq s<t^\star$, her continuation utility is, noticing that since $\theta=0$, we have that $(\lambda_{u}^{0})_{u\geq t}$ is constant
\[
u_s^i(0,0,D)=\E^{\mathbb{P}^0}\bigg[\int_{t^\star\wedge\tau}^\tau \mathrm{e}^{-r(u-s)}\rho_i c\mathrm{d}u\bigg| \Gc_s\bigg]=\frac{\mathrm{e}^{-(r+\lambda_t^{0})(t^\star-s)}\rho_i c}{r+\lambda_t^{0}}=u_t^i(0,0,D)\mathrm{e}^{(r+\lambda_t^{0})(s-t)}.
\]
Therefore, at $s=t^\star$ the continuation utility of the bank is $
u_{t^\star}^i(0,0,D)=u_t^i(0,0,D)\mathrm{e}^{(r+\lambda_t^{0})(t^\star-t)}.
$
Next, for any $s>t^\star$, the continuation utility of the bank will be 
\[
u_s^i(0,0,D)=\E^{\mathbb{P}^0}\bigg[\int_s^\tau \mathrm{e}^{-r(u-s)}\rho_i c\mathrm{d}s \bigg| \Gc_s \bigg]=\frac{\rho_i c}{r+\lambda_t^{0}}.
\]
Then, we see that once the bank starts being paid, her continuation utility becomes constant and it must be equal to $u_{t^\star}^i(0,0,D)$. Then, if for some $u^i\geq 0$, one chooses $c$ equal to
\begin{equation} \label{firstpayment} 
\frac{u^i\mathrm{e}^{(r+\lambda_t^{0})t^\star}(r+\lambda_t^{0})}{\rho_i},
\end{equation} 
the continuation utility of the bank will be an increasing process with initial value $u^i$. Therefore, $0$ is incentive compatible if and only if $u^i\geq b_{I-N_t}$. The minimum payment and delay such that the bank always works are $t^\star=0$ and $\overline{c}_i=\frac{b_{I-N_t}(r+\lambda_t^{0})}{\rho_i}.$	

\medskip
$(ii)$ If the bank of type $\rho_i$ always shirks, at any time $t\leq s<t^\star$, her continuation utility is
\[
u_s^i\big(k^{\rm SH},0,D\big)=\E^{\mathbb{P}^{k^{\rm SH}}}\bigg[\int_{t^\star\wedge\tau}^\tau \mathrm{e}^{-r(u-s)}\rho_i c \mathrm{d}u+\int_s^\tau B\mathrm{e}^{-r(u-s)}(I-N_t)\mathrm{d}u\bigg|\mathcal G_s\bigg]=\frac{\mathrm{e}^{-\big(r+\lambda_t^{k^{\rm SH}}\big)(t^\star-s)}\rho_i c}{r+\lambda_t^{k^{\rm SH}}}+\frac{B(I-N_t)}{r+\lambda_t^{k^{\rm SH}}}.
\]
Therefore
\[
u_s^i\big(k^{\rm SH},0,D\big)=\mathrm{e}^{\big(r+\lambda_t^{k^{\rm SH}}\big)(s-t)}\bigg(u_t^i(k^{\rm SH},0,D)-\frac{B(I-N_t)}{r+\lambda_t^{k^{\rm SH}}}\bigg)+\frac{B(I-N_t)}{r+\lambda_t^{k^{\rm SH}}},
\]
and the continuation utility is an increasing process. Recall that $k^{\rm SH}$ is incentive compatible if and only if $u_s^i\big(k^{\rm SH},0,D\big)<b_{I-N_t}$ for every $s\geq t$. However, if $t^\star$ is large, there will exist $t_w$ such that $u_{t_w}^i\big(k^{\rm SH},0,D\big)=b_{I-N_t}$ and the bank will start to work. More precisely, $t_w$ depends on the initial value $u_t^i(k^{\rm SH},0,D)$ and is given by 
\[
t_w:=t+\frac{1}{r+\lambda_t^{k^{\rm SH}}}\log\bigg(\frac{b_{I-N_t}(r+\lambda_t^{k^{\rm SH}})-B(I-N_t)}{u_t^i(k^{\rm SH},0,D)(r+\lambda_t^{k^{\rm SH}})-B(I-N_t)}\bigg).
\]
Notice that $t_w\geq t$ if and only if $b_{I-N_t}\geq u_t^i(k^{\rm SH},0,D).$ Therefore, $k^{\rm SH}$ is incentive compatible if and only if $t^\star<t_w$. Under this condition, at $t=t^\star$ the continuation utility of the bank is
\[
u_{t^\star}^i\big(k^{\rm SH},0,D\big)=\mathrm{e}^{(r+\lambda_t^{k^{\rm SH}})(t^\star-t)}\bigg(u_t^i(k^{\rm SH},\theta,D)-\frac{B(I-N_t)}{r+\lambda_t^{k^{\rm SH}}}\bigg)+\frac{B(I-N_t)}{r+\lambda_t^{k^{\rm SH}}}<b_{I-N_t}.
\]
Once the bank starts being paid her continuation utility is constant and equal to 
\[
u_s^i\big(k^{\rm SH},0,D\big)=\E^{\mathbb{P}^{k^{\rm SH}}}\bigg[\int_s^\tau \mathrm{e}^{-r(u-s)}(\rho_i c+B(I-N_t)) \mathrm{d}u\bigg|\Gc_s\bigg]=\frac{\rho_i c+B(I-N_t)}{r+\lambda_t^{k^{\rm SH}}}.
\]
So if for some $u^i\geq 0$ the payment $c$ is equal to 
\begin{equation}\label{secondpayment} 
\frac{\mathrm{e}^{(r+\lambda_t^{k^{\rm SH}})(t^\star-t)}\big(u^i(r+\lambda_t^{k^{\rm SH}})-B(I-N_t)\big)}{\rho_i},
\end{equation} 
the expected payoff of the bank at time $t$ is $u^i$. The supremum of the delays and payments such that the bank always shirks are respectively $t_w$ and
\[
\frac{\mathrm{e}^{(r+\lambda_t^{k^{\rm SH}})(t_w-t)} \big(b_{I-N_t}(r+\lambda_t^{k^{\rm SH}})-B(I-N_t)\big)}{\rho_i}=\frac{b_{I-N_t}(r+\lambda_t^{0})}{\rho_i}=\overline{c}_i.
\]

$(iii)$ Finally, consider the case when $t^\star$ is greater than $t_w$. Under this contract, the bank will shirk until time $t_w$ and will work afterwards. Indeed, from the previous analysis we know that this strategy is incentive compatible. At time $t_w$ we have that $u_{t_w}^i(k^{\rm SH},0,D)=b_{I-N_t}$ and for $s\in[t_w,t^\star)$ the continuation utility is given by 
\begin{align*}
u_s^i(0,0,D) &= \E^{\mathbb{P}}\bigg[\int_{t^\star\wedge\tau}^\tau \mathrm{e}^{-r(u-s)}\rho_i c\mathrm{d}u\bigg| \Gc_s\bigg]=\frac{\mathrm{e}^{-(r+\lambda_t^{0})(t^\star-s)}\rho_i c}{r+\lambda_t^{0}}= \mathrm{e}^{(r+\lambda_t^{0})(s-t_w)}u_{t_w}^i(k^{\rm SH},0,D)=b_{I-N_t}\mathrm{e}^{(r+\lambda_t^{0})(s-t_w)}.
\end{align*}
Therefore, at $t=t^\star$ the continuation utility of the bank is 
\[
u_{t^\star}^i(0,0,D)=b_{I-N_t}\mathrm{e}^{(r+\lambda_t^{0})(t^\star-t_w)},
\]
and for any $s>t^{\star}$, the continuation utility of the bank is constant and equal to 
\[
u_s^i(0,0,D)=\E^{\mathbb P}\bigg[\int_s^\tau \mathrm{e}^{-r(u-s)}\rho_i c\mathrm{d}u \bigg| \Gc_s \bigg]=\frac{\rho_i c}{r+\lambda_t^{0}}.
\]
So if for some $u^i\geq 0$ the payment $c$ is equal to 
\begin{equation} \label{thirdpayment} 
\frac{b_{I-N_t}(r+\lambda_t^{0})\mathrm{e}^{(r+\lambda_t^{0})(t^\star-t)}}{\rho_i}\bigg(\frac{u^i(r+\lambda_t^{k^{\rm SH}})-B(I-N_t)}{b_{I-N_t}(r+\lambda_t^{0})}\bigg)^{\frac{r+\lambda_t^{0}}{r+\lambda_t^{k^{\rm SH}}}},
\end{equation} 
the expected payoff of the bank at time $t$ is $u^i$. The minimum payment and delay such that the bank shirks first and works afterwards are $t^\star=t_w$ and $\overline{c}_i=\frac{b_{I-N_t}(r+\lambda_t^{0})}{\rho_i}.$ Notice that $\overline{t_i}(c)$ in the statement of the Proposition is the corresponding expression for $t_w$ as a function of the payments $c$. 
\end{proof}

We conclude this section with the following result, saying that every point in the upper boundary of the credible set can be attained by short--term contracts with delay. 

\begin{Proposition}
Fix some $t\geq 0$. For any point $(u^b,u^g)$ in the upper boundary of the credible set $\mathcal{C}_{t}$, there exists a $\Gc_t-$ measurable payment $c$, and $t^\star\geq t$ such that the contract $(\theta,D)$ with $\theta_s=0, ~ \mathrm{d}D_s = c {\bf 1}_{s\geq t^\star} \mathrm{d}s, \; \forall s\geq t,$ is such that $U_t^b(\theta,D)=u^b$ and $U_t^g(\theta,D)=u^g$.
\end{Proposition}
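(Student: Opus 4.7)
The strategy is to match the three pieces of the explicit formula for $\widehat{\mathfrak U}_{j}$ (with $j = I - N_t$) given in Proposition \ref{prop:uj functions} against the three regimes analysed in Section \ref{contractswithdelay} for contracts of the form $\theta \equiv 0$, $dD_s = c\mathbf{1}_{\{s \geq t^\star\}} ds$. The key observation is that we have two parameters $(c, t^\star)$ to match two target values $(u^b, u^g) = (u^b, \widehat{\mathcal U}^\star_{j}(u^b))$, and the regime thresholds $\overline{c_b} > \overline{c_g}$ together with the times $\overline{t_b}(c) > \overline{t_g}(c)$ create exactly three admissible pairs of regimes for the two banks, matching the three pieces of $\widehat{\mathfrak U}_j$.

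\textbf{Step 1 (the linear piece, $u^b \geq \widehat b_j$).} Take $t^\star := t$ and $c := u^b(r + \widehat\lambda_j^0)/\rho_b$. Then $c \geq \overline{c_b} > \overline{c_g}$, so both banks are in Case $(ii)$ (no delay, always work). Direct substitution gives $U_t^b = u^b$ and $U_t^g = (\rho_g/\rho_b)u^b$, matching the third piece of $\widehat{\mathcal U}_j^\star$.

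\textbf{Step 2 (the other two pieces).} For $u^b \in \bigl[Bj/(r+\widehat\lambda_j^{SH}), \widehat b_j\bigr)$, set $c := \overline{c_b} = \widehat b_j(r+\widehat\lambda_j^0)/\rho_b$, and let $t^\star := t + T_b$, where $T_b \geq 0$ is uniquely determined by forcing the bad bank's Case $(i)$ formula to equal $u^b$:
\[
u^b - \frac{Bj}{r+\widehat\lambda_j^{SH}} \;=\; e^{-(r+\widehat\lambda_j^{SH})T_b}\;\frac{\widehat b_j(r+\widehat\lambda_j^0)}{r+\widehat\lambda_j^{SH}}.
\]
With this choice, $c \leq \overline{c_b}$ so the bad bank shirks forever (Case $(i)$) and attains exactly $u^b$. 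Since $c > \overline{c_g}$, the good bank monitors at some point: she is in Case $(ii)$ if $T_b \leq \overline{t_g}(c) - t$, and in Case $(iii)$ otherwise. A direct computation using the definition of $\overline{t_g}$ shows that $\overline{t_g}(\overline{c_b}) - t = \log(\rho_g/\rho_b)/(r+\widehat\lambda_j^0)$, and that $T_b \leq \log(\rho_g/\rho_b)/(r+\widehat\lambda_j^0)$ is equivalent to $u^b \geq x_j^\star$. Thus the two sub-intervals for the good bank's regime align exactly with the intervals $[x_j^\star, \widehat b_j)$ (Region 2) and $[Bj/(r+\widehat\lambda_j^{SH}), x_j^\star)$ (Region 1).

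\textbf{Step 3 (algebraic verification).} Substituting the expression for $e^{-(r+\widehat\lambda_j^{SH})T_b}$ and $c = \overline{c_b}$ into the good bank's Case $(ii)$ (resp. Case $(iii)$) formula, and collecting the powers of $\widehat b_j$, $(r+\widehat\lambda_j^0)$, $(r+\widehat\lambda_j^{SH})$, and $\rho_g/\rho_b$, one recovers respectively the middle and first pieces of $\widehat{\mathcal U}_j^\star(u^b)$ from Proposition \ref{prop:uj functions}. The measurability of $(c, t^\star)$ with respect to $\mathcal G_t$ is trivial since $u^b$ is $\mathcal G_t$-measurable and $T_b$ is a deterministic function of $u^b$.

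\textbf{Main obstacle.} The only delicate step is the algebraic verification in Step 3, where several fractional exponents in $(r+\widehat\lambda_j^0)/(r+\widehat\lambda_j^{SH})$ must be reconciled. The key identities are $(\widehat\lambda_j^{SH}-\widehat\lambda_j^0)/(r+\widehat\lambda_j^{SH}) + (r+\widehat\lambda_j^0)/(r+\widehat\lambda_j^{SH}) = 1$ and $Bj = \widehat b_j(\widehat\lambda_j^{SH}-\widehat\lambda_j^0)$ (which follows from $B = \widehat b_j \alpha_j \varepsilon$ and $\widehat\lambda_j^{SH} - \widehat\lambda_j^0 = \alpha_j j \varepsilon$). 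With these in hand, the computation is mechanical, and continuity of the construction at $u^b = x_j^\star$ and $u^b = \widehat b_j$ is automatic from the matching of the explicit formulas.
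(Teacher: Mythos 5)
Your construction is correct and takes essentially the same route as the paper: both proofs match the three regimes of the short--term delayed constant--payment contracts analysed in Appendix B.2 (always work / good works while bad shirks-then-works / both shirk-then-work) to the three pieces of the explicit formula for $\widehat{\mathcal U}_j^\star$. The only cosmetic difference is that you fix $c=\overline{c_b}$ and solve for the delay $t^\star$ in the region $u^b<\widehat b_j$, whereas the paper lets $(c,t^\star)$ range over each regime and verifies that the resulting pairs of utilities trace out exactly the corresponding piece of the upper boundary.
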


%In this section we show that in some cases the short-term contracts with delay provide to the agents a pair of value functions lying in the upper boundary of the credible set.

\begin{proof}
$(i)$ Let $c>\bar{c}_b>\bar{c}_g$ and $t^\star \leq \bar{t}_b(c)<\bar{t}_g(c)$. Then $k^{\star,b}(\theta,D)=k^{\star,g}(\theta,D)=0$ and the values of the banks are 
\[
U_t^g(\theta,D)=\frac{\rho_g c}{r+\lambda_t^{0}}e^{-(r+\lambda_t^{0})(t^\star-t)}, ~U_t^b(\theta,D)=\frac{\rho_b c}{r+\lambda_t^{0}}e^{-(r+\lambda_t^{0})(t^\star-t)}.
\]
Therefore the utilities satisfy
\[
U_t^g(\theta,D)=\frac{\rho_g}{\rho_b}U_t^b(\theta,D),\ \text{with }U_t^g(\theta,D)\in\left[\frac{\rho_g}{\rho_b}b_{I-N_t},\infty\right),~ U_t^b(\theta,D)\in\left[b_{I-N_t},\infty\right).
\]

\vspace{0.5em}
$(ii)$ If $c>\bar{c}_b$ and $\bar{t}_b(c)<t^{\star}\leq\bar{t}_g(c)$, we have that the good bank will always work and the bad bank will start working at time $\bar{t}_b(c)$. Their value functions are 
\begin{align*}
U_t^g(\theta,D) &= \frac{\rho_g c}{r+\lambda_t^{0}}\mathrm{e}^{-(r+\lambda_t^{0})(t^\star-t)}, \; U_t^b(\theta,D) =  \mathrm{e}^{-(r+\lambda_t^{k^{\rm SH}})(t^\star-t)}\bigg(\frac{\rho_b c}{b_{I-N_t}(r+\lambda_t^{0})}\bigg)^\frac{r+\lambda_t^{k^{\rm SH}}}{r+\lambda_t^{0}}\frac{b_{I-N_t}(r+\lambda_t^{0})}{r+\lambda_t^{k^{\rm SH}}}+\frac{B(I-N_t)}{r+\lambda_t^{k^{\rm SH}}},
\end{align*} 
so they belong to the curve 
\[
U_t^g(\theta,D) =\frac{\rho_g}{\rho_b}b_{I-N_t}^\frac{\lambda_t^{k^{\rm SH}}-\lambda_t^{0}}{r+\lambda_t^{k^{\rm SH}}}\bigg(U_t^b(\theta,D)- \frac{B(I-N_t)}{r+\lambda_t^{k^{\rm SH}}}\bigg)^\frac{r+\lambda_t^{0}}{r+\lambda_t^{k^{\rm SH}}}\bigg(\frac{r+\lambda_t^{k^{\rm SH}}}{r+\lambda_t^{0}}\bigg)^{\frac{r+\lambda_t^{0}}{r+\lambda_t^{k^{\rm SH}}}},\]
and take values in the sets (recall the definition of $x_j^\star$ in Proposition \ref{prop:uj functions})
\[
U_t^g(\theta,D) \in\left[b_{I-N_t},\frac{\rho_g}{\rho_b}b_{I-N_t}\right),~U_t^b(\theta,D) \in [x_{I-N_t}^\star,b_{I-N_t}).
\]

\vspace{0.5em}
$(iii)$ If $c>\bar{c}_b$ and $\overline{t}_g(c)<t^\star$, the good bank will start working at time $\bar{t}_g(c)$ and the bad bank will start to work at time $\bar{t}_b(c)$. Their value functions are
\begin{align*}
U_t^g(\theta,D)&=\mathrm{e}^{-(r+\lambda_t^{k^{\rm SH}})(t^\star-t)}\bigg(\frac{\rho_g c}{b_{I-N_t}(r+\lambda_t^{0})}\bigg)^\frac{r+\lambda_t^{k^{\rm SH}}}{r+\lambda_t^{0}}\frac{b_{I-N_t}(r+\lambda_t^{0})}{r+\lambda_t^{k^{\rm SH}}}+\frac{B(I-N_t)}{r+\lambda_t^{k^{\rm SH}}},\\
U_t^b(\theta,D)&=\mathrm{e}^{-(r+\lambda_t^{k^{\rm SH}})(t^\star-t)}\bigg(\frac{\rho_b c}{b_{I-N_t}(r+\lambda_t^{0})}\bigg)^\frac{r+\lambda_t^{k^{\rm SH}}}{r+\lambda_t^{0}}\frac{b_{I-N_t}(r+\lambda_t^{0})}{r+\lambda_t^{k^{\rm SH}}}+\frac{B(I-N_t)}{r+\lambda_t^{k^{\rm SH}}},
\end{align*}
so they belong to the line 
\[
U_t^g(\theta,D) =\bigg(\frac{\rho_g}{\rho_b}\bigg)^\frac{r+\lambda_t^{k^{\rm SH}}}{r+\lambda_t^{0}}\bigg(U_t^b(\theta,D) -\frac{B(I-N_t)}{r+\lambda_t^{k^{\rm SH}}}\bigg)+\frac{B(I-N_t)}{r+\lambda_t^{k^{\rm SH}}},
\]
with 
\[U_t^g(\theta,D) \in\bigg[\frac{B(I-N_t)}{r+\lambda_t^{k^{\rm SH}}},b_{I-N_t}\bigg),~U_t^b(\theta,D) \in\bigg[\frac{B(I-N_t)}{r+\lambda_t^{k^{SH}}},x_{I-N_t}^\star\bigg).\] 
\end{proof}

\subsection{Initial lump--sum payment} \label{lumpsum}
Take any point $(u^b,u^g)$ in the credible set at time $t$. We know that there exists an admissible contract $(\theta,D)$, such that $U_t^b(\theta,D)=u^b$ and $U_t^g(\theta,D)=u^g$. Consider the payments $D^\ell$ which differ from $D$ only at time $t$, where a lump-sum payment of size $\ell>0$ is made. This added lump-sum payment will not change the banks' incentives and the new value functions at time $t$ will be 
\[
U_t^g(\theta,D^\ell)=u^g+\rho_g\ell,~U_t^b(\theta,D^\ell)=u^b+\rho_b\ell.
\]
Hence, the new pair of values of the banks belong to the line with slope $\frac{\rho_g}{\rho_b}$ which passes through the point $(u^b,u^g)$. Since in our setting there is no upper bound on the payment, by increasing the value of $\ell$ it is possible to reach every point of the ray which starts at $(u^b,u^g)$ and goes in the positive direction. %The subregion of the credible set that can be obtained by short-term contracts with constant payments and initial lump--sum payments is shown in Figure \ref{crconstantlump}, with the same conventions as in Figure \ref{crconstant}.

%\begin{figure}[H]
%\centering
%\includegraphics[scale=1]{figure7.pdf}
%\caption{Credible region under short-term contracts with constant payment and lump-sum payments.}\label{crconstantlump}
%\end{figure}

\subsection{Credible region under contracts with delay} 
\label{credibleregiondelay}

From the previous subsection we know that for every point $(u^b,u^g)$ on the upper boundary there exists a pair $(c,t^\star)$, with $c>\bar{c}_b$, such that under the contract $(\theta\equiv 0,\; \mathrm{d}D_s=c{\bf 1}_{\{s\geq t^\star\}}\mathrm{d}s)$ we have $U_t^b(\theta,D)=u^b$ and $U_t^g(\theta,D)=u^g$. As explained in Section \ref{lumpsum}, if we consider the contract $(\theta, D^\ell)$ with an additional initial lump--sum payment, the incentives of the banks will not change and the new value functions of the agents will be $U_t^b(\theta,D^\ell)=u^b+\rho_b \ell$, $U_t^g(\theta,D)=u^g+\rho_g \ell$. Therefore under short--term contracts with delay which reach the upper boundary and lump--sum payments, all the subregion of the credible set delimited by the lines shown in Figure \ref{crdelay} can be reached. We will not enter into details but it can be proved that under all the short--term contracts with delay (not only the ones who reach the upper boundary) and lump-sum payments, the subregion of the credible set which can be reached is exactly the same. When there is only one loan left, this region is equal to the whole credible set but when $j>1$ the credible set is strictly bigger due to the pair of utilities that can be achieved in situations when $\theta \not\equiv 0$.

\begin{figure}[H]
\centering
\includegraphics[scale=0.8]{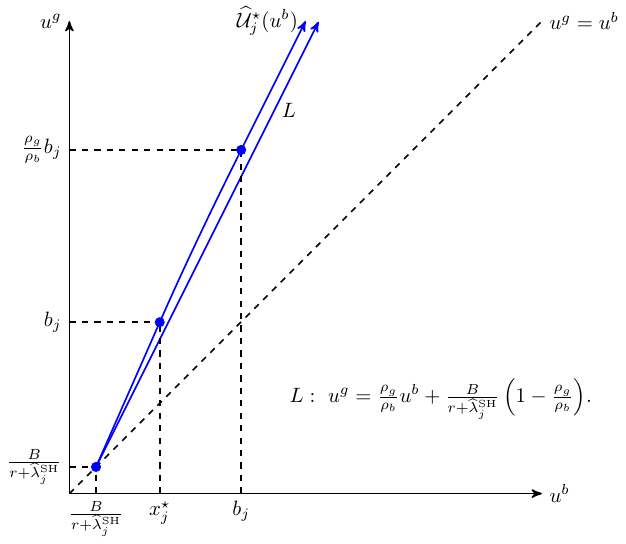}
\caption{Credible region under short-term contracts with delay and lump-sum payment.}\label{crdelay}
\end{figure}

\section{Technical results for the lower boundary}\label{sec:lower}
We begin this section with the 

\vspace{0.5em}
\begin{proof}[Proof of Proposition \ref{prop:utility-shirking}]
Observe first that we have
\[
\E^{\mathbb P^{k^{\rm SH}}} \big[  \mathrm{e}^{-r(\tau^{N_t+1} - t)}  \big| \mathcal G_t  \big]  =  \int_0^\infty \mathrm{e}^{-r x} \widehat\lambda_{I-N_t}^{\rm SH}  \mathrm{e}^{-\widehat\lambda_{I-N_t}^{\rm SH} x} \mathrm{d}x = \frac{\widehat \lambda_{I-N_t}^{\rm SH}}{r+\widehat\lambda_{I-N_t}^{\rm SH}},
\]
and for any $\ell\in\{N_t+1,\dots,I-1\}$
\[
\E^{\mathbb P^{k^{\rm SH}}} \big[   \mathrm{e}^{-r(\tau^{\ell+1} - \tau^\ell)}  \big| \mathcal G_t   \big]  =  \int_0^\infty  \mathrm{e}^{-r x} \widehat\lambda_{I-\ell}^{\rm SH}  \mathrm{e}^{-\widehat\lambda_{I-\ell}^{\rm SH} x} \mathrm{d}x = \frac{\widehat\lambda_{I-\ell}^{\rm SH}}{r+\widehat\lambda_{I-\ell}^{\rm SH}}.
\]
Thus, the utility that the bank gets from shirking (without considering the payments in the contract) is 
\begin{align*}
u_t(k^{\rm SH},\theta,0) & =  \E^{\mathbb P^{k^{\rm SH}}}\bigg[\int_t^\tau  \mathrm{e}^{-r(s-t)} B(I-N_s) \mathrm{d}s \bigg| \mathcal G_t \bigg] \\
& =  \E^{\mathbb P^{k^{\rm SH}}}\bigg[\int_t^{\tau^{N_t+1}}  \mathrm{e}^{-r(s-t)} B(I-N_t) \mathrm{d}s + \ds\sum_{i=N_t+1}^{N_t+m-1} \int_{\tau^i}^{\tau^{i+1}}  \mathrm{e}^{-r(s-t)} B(I-i) \mathrm{d}s \bigg| \mathcal G_t  \bigg] \\
& =  \frac{B(I-N_t)}{r} \E^{\mathbb P^{k^{\rm SH}}} \big[1 -  \mathrm{e}^{-r(\tau^{N_t+1} - t)} \big| \mathcal G_t \big]  + \ds\sum_{i=N_t+1}^{N_t+m-1} \frac{B(I-i)}{r} \E^{\mathbb P^{k^{\rm SH}}}\big[ \mathrm{e}^{-r(\tau^i - t)} -  \mathrm{e}^{-r(\tau^{i+1} - t)} \big| \mathcal G_t  \big] \\
& =  \frac{B(I-N_t)}{r+\widehat\lambda_{I-N_t}^{\rm SH}} + \ds\sum_{i=N_t+1}^{N_t+m-1} \frac{B(I-i)}{r} \E^{\mathbb P^{k^{\rm SH}}} \bigg[ \big( 1 -  \mathrm{e}^{-r(\tau^{i+1} - \tau^i)} \big) \prod_{\ell=N_t}^{i-1}  \mathrm{e}^{-r(\tau^{\ell+1} - \tau^\ell)} \bigg| \mathcal G_t  \bigg].
\end{align*}
Therefore, by independence we have
\begin{align*}
u_t(k^{\rm SH},\theta,0) & =   \frac{B(I-N_t)}{r+\widehat\lambda_{I-N_t}^{\rm SH}} + \ds\sum_{i=N_t+1}^{N_t+m-1} \frac{B(I-i)}{r + \widehat\lambda_{I-i}^{\rm SH}}   \prod_{\ell=N_t}^{i-1} \ds\frac{\widehat\lambda_{I-\ell}^{\rm SH}}{r+\widehat\lambda_{I-\ell}^{\rm SH}}  =   \frac{B(I-N_t)}{r+\widehat\lambda_{I-N_t}^{\rm SH}} + \ds\sum_{i=I-N_t-m+1}^{I-N_t-1} \frac{Bi}{r +\widehat \lambda_{i}^{\rm SH}}   \prod_{\ell=i+1}^{I-N_t} \ds\frac{\widehat\lambda_{\ell}^{\rm SH}}{r+\widehat\lambda_{\ell}^{\rm SH}}.
\end{align*}
\end{proof}

We proceed with the

\begin{proof}[Proof of Lemma \ref{lemma:inequalities}]
The value functions of the banks under $\Psi:=(\theta,D)$ are given by
\begin{align*}
U_t^g(\Psi) & =  \E^{\P^{k^{\star,g}(\Psi)}}\bigg[\int_t^\tau  \mathrm{e}^{-r(s-t)}(\rho_g  \mathrm{d}D_s+Bk_s^{\star,g}(\Psi) \mathrm{d}s)\bigg| \mathcal{G}_t\bigg], \; U_t^{b}(\Psi) & = \E^{\P^{k^{\star,b}(\Psi)}}\bigg[\int_t^\tau  \mathrm{e}^{-r(s-t)}(\rho_b  \mathrm{d}D_s+Bk_s^{\star,b}(\Psi) \mathrm{d}s)\bigg|\mathcal{G}_t\bigg].
\end{align*}
Thus, we first have, $\P-$a.s.
\begin{align*}
U_t^g(\Psi) & \geq  \E^{\P^{k^{\star,b}(\Psi)}}\bigg[ \int_t^\tau  \mathrm{e}^{-r(s-t)}(\rho_g  \mathrm{d}D_s+Bk_s^{\star,b}(\Psi) \mathrm{d}s) \bigg| \mathcal{G}_t \bigg]  \geq  \E^{\P^{k^{\star,b}(\Psi)}}\bigg[\int_t^\tau  \mathrm{e}^{-r(s-t)}(\rho_b  \mathrm{d}D_s^g+Bk_s^{\star,b}(\Psi) \mathrm{d}s)\bigg| \mathcal{G}_t\bigg] = U_t^{b}(\Psi).
\end{align*}
But we also have
\begin{align*}
U_t^g(\Psi)  \geq  \E^{\P^{k^{\star,b}(\Psi)}}\bigg[ \int_t^\tau  \mathrm{e}^{-r(s-t)}(\rho_g  \mathrm{d}D_s+Bk_s^{\star,b}(\Psi) \mathrm{d}s)\bigg| \mathcal{G}_t\bigg] & =  U_t^{b}(\Psi) + (\rho_g-\rho_b) \E^{\P^{k^{\star,b}(\Psi)}}\bigg[\int_t^\tau  \mathrm{e}^{-r(s-t)} \mathrm{d}D_s \bigg| \mathcal{G}_t\bigg] \\ 
%& =  U_t^{b}(\Psi)+\frac{(\rho_g-\rho_b)}{\rho_b} \lr{U_t^{b}(\Psi)-\E^{\P^{k^{\star,b}(\Psi)}}\lrs{\left.\int_t^\tau e^{-r(s-t)}Bk_s^{\star,b}(\Psi) ds \right| \mathcal{G}_t}}\\ 
& =  \frac{\rho_g}{\rho_b}U_t^{b}(\Psi)-\frac{(\rho_g-\rho_b)}{\rho_b} \E^{\P^{k^{\star,b}(\Psi)}}\bigg[\int_t^\tau  \mathrm{e}^{-r(s-t)}Bk_s^{\star,b}(\Psi)  \mathrm{d}s\bigg|\mathcal{G}_t\bigg].
\end{align*}
Observe next that 
\[
\sup_{k\in{\mathfrak{K}}} \E^{\P^k}\bigg[\int_t^\tau  \mathrm{e}^{-r(t-s)} B k_s  \mathrm{d}s\bigg| \mathcal{G}_t\bigg]=\E^{\P^{k^{\rm SH}}}\bigg[\int_t^\tau  \mathrm{e}^{-r(t-s)} B k_s^{\rm SH}  \mathrm{d}s \bigg| \mathcal{G}_t \bigg],
\]
because the left--hand side is the value function of a bank who is offered a contract with no payments. Therefore, we have that
\begin{align*}
U_t^g(\Psi)& \geq \frac{\rho_g}{\rho_b}U_t^{b}(\Psi)-\frac{(\rho_g-\rho_b)}{\rho_b} \E^{\P^{k^{\rm SH}}}\bigg[\int_t^\tau  \mathrm{e}^{-r(s-t)}Bk_s^{\rm SH}  \mathrm{d}s \bigg| \mathcal{G}_t\bigg] \geq \frac{\rho_g}{\rho_b}U_t^{b}(\Psi) -\frac{(\rho_g-\rho_b)}{\rho_b} C(I-N_t),
\end{align*}
because the utility that the banks get from shirking is non--decreasing with respect to the process $\theta$ and its maximum value is equal to $C(I-N_t)$, attained when $\theta\equiv 1$ (see \eqref{eq:C}).
\end{proof}

\vspace{0.5em}
We continue this section with the

\vspace{0.5em}
\begin{proof}[Proof of Proposition \ref{lowerboundary}]
Thanks to Lemma \ref{lemma:inequalities}, it suffices to prove the existence of contracts under which the value functions of the banks satisfy the equalities.

\vspace{0.5em}
\hspace{2em}{$\bullet$ \bf Step 1:} First, fix some $t\geq 0$, take any $u^{b}\in\left[c(I-N_t,1),C(I-N_t)\right]$ and fix $m\in\{1,\dots,I-N_t-1\}$ such that $
c(I-N_t,m)\leq u^{b} \leq c(I-N_t,m+1).
$
Next, take $\theta^0_t(u^{b})\in[0,1]$ such that $u^{b} = c(I-N_t,m) + \theta^0_t(u^{b}) (c(I-N_t,m+1)-c(I-N_t,m)).$ Then, there is a contract $(\theta,D)\in\Theta\times\mathcal D$ such that $U_t^g(\theta,D)=U_t^{b}(\theta,D)=u^{b}$. Such a contract can be defined as follows
\[
 \mathrm{d}D_s:=0,\ \theta_s:={\bf 1}_{ \left\{t\leq s\leq\tau^{N_t+m}\right\}}+(1-\theta^0_t(u^{b})){\bf 1}_{\left\{\tau^{N_t+m}<s\leq\tau^{N_t+m+1}\right\}},\ \text{ for every $s\geq t$}.
\]
The contract has no payments, it always maintains the pool after the first $m$ defaults, maintains the pool with probability $\theta^0$ after default $m+1$, and liquidates the pool at default $m+2$. It is clear that under this contract both banks always shirk in $[t,\tau]$, since they are not paid, and their value functions satisfy 
\begin{align*}
U_t^g(\theta,D)=U_t^{b}(\theta,D) &= \E^{\P^{k^{\rm SH}}}\bigg[\int_t^\tau  \mathrm{e}^{-r(s-t)}Bk_s^{\rm SH} \mathrm{d}s \bigg| \mathcal{G}_t\bigg] =c(I-N_t,m) + \theta_t^0(u^{b}) (c(I-N_t,m+1)-c(I-N_t,m))=u^{b}.
\end{align*}

\hspace{2em}{$\bullet$ \bf Step 2:} Fix again some $t\geq 0$, and choose now any $u^{b}\geq C(I-N_t)$ and define $u^g:=\frac{\rho_g}{\rho_b}u^{b}-\frac{(\rho_g-\rho_b)}{\rho_b}C(I-N_t).$ Let $\ell_t:=(u^{b}-C(I-N_t))/\rho_b$ and consider the admissible contract satisfying, $\theta_s=1,\  \mathrm{d}D_s=\ell_t{\bf 1}_{\left\{s=t\right\}},$ for every $s\geq t$. The optimal strategy for both banks under this contract is to always shirk and then
\begin{align*}
U_t^{b}(\theta,D) & =\E^{\P^{k^{\rm SH}}}\bigg[ \int_t^\tau  \mathrm{e}^{-r(s-t)}\big(\rho_b  \mathrm{d}D_s+Bk_s^{\rm SH} \mathrm{d}s\big)\bigg|\mathcal{G}_t\bigg]=\rho_b \ell_t+C(I-N_t)=u^{b}, \\
 U_t^g(\theta,D) & =\E^{\P^{k^{\rm SH}}}\bigg[\int_t^\tau  \mathrm{e}^{-r(s-t)}\big(\rho_g  \mathrm{d}D_s+Bk_s^{\rm SH} \mathrm{d}s)\bigg|\mathcal{G}_t\bigg]=\rho_g \ell_t+C(I-N_t)=u^g.
\end{align*}
\end{proof}

We conclude this section by proving some useful results that will be used in Section \ref{sec:valueinvestor} in the study of the value function of the investor on the lower boundary. We show that there are several ways of reaching the lower boundary and that all the contracts which can achieve it have the same structure as the ones used in the proof of Proposition \ref{lowerboundary}. 

\begin{Lemma} \label{lemma:contracts lb1}
Consider any $(t,u^b, u^g)\in[0,\tau]\times\widehat{\mathcal V}_{I-N_t}\times\widehat{\mathcal V}_{I-N_t}$ such that in addition $u^{b}=u^g$. Any contract $\Psi=(\theta,D)\in\Theta\times D$ such that $U_t^b(\Psi)=u^b$ and $U_t^g(\Psi)=u^g$, has no payments on $[t,\tau]$ and consequently both banks always shirk under $\Psi$.
\end{Lemma}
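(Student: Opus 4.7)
My approach is to reuse the refined chain of inequalities established in the proof of Lemma \ref{lemma:inequalities}: for any admissible $\Psi=(\theta,D)$,
\[
U_t^g(\Psi)\;\geq\;\E^{\P^{k^{\star,b}(\Psi)}}\!\left[\int_t^\tau e^{-r(s-t)}\!\left(\rho_g\, dD_s+B k^{\star,b}_s(\Psi)\,ds\right)\bigg|\,\mathcal{G}_t\right]\;\geq\;U_t^b(\Psi),
\]
where the first inequality reflects that $k^{\star,b}(\Psi)$ is sub-optimal for the good bank, and the second uses $\rho_g>\rho_b$ together with $dD\geq 0$. Under the hypothesis $u^g=u^b$, both inequalities must saturate.

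From the equality in the rightmost bound one extracts
\[
(\rho_g-\rho_b)\,\E^{\P^{k^{\star,b}(\Psi)}}\!\left[\int_t^\tau e^{-r(s-t)}\,dD_s\,\bigg|\,\mathcal{G}_t\right]=0,
\]
and since $\rho_g>\rho_b$, $D$ is non-decreasing and the discount factor is strictly positive, this forces $\int_t^\tau e^{-r(s-t)}\,dD_s=0$, $\P^{k^{\star,b}(\Psi)}$-a.s., and hence $\P$-a.s.\ by the Girsanov equivalence of these two measures. Consequently $D$ remains $\P$-a.s.\ constant on $[t,\tau]$---no payments are made from time $t$ onwards---which is the first claim.

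Once the contract carries no payments on $[t,\tau]$, the utility of each bank reduces to $\sup_{k\in\mathfrak{K}}\E^{\P^k}[\int_t^\tau e^{-r(s-t)}B k_s\,ds\,|\,\mathcal{G}_t]$, whose maximum is attained at $k^{SH}$---this is precisely the identity already invoked inside the proof of Lemma \ref{lemma:inequalities}. Therefore both banks always shirk under $\Psi$, proving the second conclusion.

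\textbf{Main obstacle.} The only delicate step is the last one: rigorously justifying that shirking is optimal under a zero-payment contract. This is taken for granted in the proof of Lemma \ref{lemma:inequalities}; a fully careful argument would either proceed pathwise via Girsanov, comparing $\P^k$ with $\P^{k^{SH}}$ and exploiting the monotonicity of the integrand $Bk_s$ in $k$ against the shrinking of the integration interval caused by the accelerated default intensity, or appeal to the BSDE characterisation of Proposition \ref{prop:rep} applied with $K^i\equiv 0$, forcing the shirking indicator $\mathbf{1}_{\{Z^i\cdot(1,1-\theta)^\top<b\}}$ to equal one throughout $[t,\tau]$. Beyond that, the proof is essentially a purely algebraic saturation argument.
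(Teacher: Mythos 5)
Your proposal is correct and follows essentially the same route as the paper: the paper's proof of Lemma \ref{lemma:contracts lb1} likewise revisits the chain of inequalities from the proof of \eqref{ineq1}, observes that $u^g=u^b$ forces both to saturate (hence $dD_s=0$ on $[t,\tau]$), and then concludes that with no payments both banks' optimal response is $k^{SH}$. The "delicate step" you flag (optimality of shirking under a zero-payment contract) is handled at the same level of detail in the paper, by appeal to the identity $\sup_{k}\E^{\P^k}[\int_t^\tau e^{-r(s-t)}Bk_s\,ds\,|\,\mathcal G_t]=\E^{\P^{k^{SH}}}[\int_t^\tau e^{-r(s-t)}Bk_s^{SH}\,ds\,|\,\mathcal G_t]$ already used in Lemma \ref{lemma:inequalities}.
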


\begin{proof} Looking at the proof of \eqref{ineq1} we deduce that necessarily 
$
k^{\star,g}_s(\Psi)=k^{\star,b}_s(\Psi),~  dD_s=0, ~\forall s\geq t.
$ 
Since there are no payments, we have that $k_s^{\star,g}(\Psi)=k_s^{\star,b}(\Psi)=k_s^{\rm SH}$ for $s\in[t,\tau]$ and indeed have
\[
U_t^g(\Psi)= U_t^{b}(\Psi) = \E^{\P^{k^{\rm SH}}}\bigg[\int_t^\tau  \mathrm{e}^{-r(s-t)} B(I-N_s) \mathrm{d}s\bigg|\mathcal{G}_t\bigg].
\]
\end{proof}

\begin{Lemma} \label{lemma:contracts lb2}
Consider any $(t,u^g, u^{b})\in\mathbb R_+\times\widehat{\mathcal V}_{I-N_t}\times\widehat{\mathcal V}_{I-N_t}$ such that in addition
\[
u^g = \ds\frac{\rho_g}{\rho_b}u^{b}-\frac{(\rho_g-\rho_b)}{\rho_b}C(I-N_t).
\]
Under any contract $\Psi=(\theta,D)\in\Theta\times D$ such that $U_t^b(\Psi)=u^b$ and $U_t^g(\Psi)=u^g$, the pool is not liquidated until the last default $(\tau=\tau^I)$ and both banks always shirk on $[t,\tau]$.
\end{Lemma}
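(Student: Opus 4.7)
The plan is to reopen the chain of inequalities that established \eqref{ineq2} in Lemma \ref{lemma:inequalities}, and to show that the hypothesis $u^g = \frac{\rho_g}{\rho_b} u^b - \frac{\rho_g - \rho_b}{\rho_b} C(I - N_t)$ forces every inequality along the chain to be tight. Specifically, starting from the optimality of $k^{\star, g}(\Psi)$ for the good bank and rewriting $u_t^g(k^{\star, b}(\Psi), \theta, D)$ in terms of $U_t^b(\Psi)$ exactly as in that proof, one has
\begin{align*}
u^g = U_t^g(\Psi) &\geq u_t^g(k^{\star, b}(\Psi), \theta, D) = \frac{\rho_g}{\rho_b}\, u^b - \frac{\rho_g - \rho_b}{\rho_b}\, \mathbb{E}^{\mathbb{P}^{k^{\star, b}(\Psi)}}\!\left[\int_t^\tau e^{-r(s-t)} B\, k^{\star, b}_s(\Psi)\, ds \,\bigg|\, \mathcal{G}_t\right] \\
&\geq \frac{\rho_g}{\rho_b}\, u^b - \frac{\rho_g - \rho_b}{\rho_b}\, \mathbb{E}^{\mathbb{P}^{k^{SH}}}\!\left[\int_t^\tau e^{-r(s-t)} B\,(I - N_s)\, ds \,\bigg|\, \mathcal{G}_t\right] \\
&\geq \frac{\rho_g}{\rho_b}\, u^b - \frac{\rho_g - \rho_b}{\rho_b}\, C(I - N_t),
\end{align*}
where the second step uses that $k^{SH}$ maximises the pure shirking utility over $k \in \mathfrak{K}$, and the third uses that this utility is itself maximised over $\theta \in \Theta$ at $\theta \equiv 1$ on $[t, \tau^I)$, with maximum value $C(I - N_t)$ (see \eqref{eq:C}).

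Given the hypothesis, each of the three inequalities must be an equality, from which I would read off the three conclusions in turn. The rightmost equality forces $\theta_s = 1$ for every $s \in [t, \tau^I)$, hence $\tau = \tau^I$ almost surely, i.e.\ the pool is maintained until the last default. The middle equality then reads $\mathbb{E}^{\mathbb{P}^{k^{\star, b}(\Psi)}}[\int_t^{\tau^I} e^{-r(s-t)} B\, k^{\star, b}_s ds \mid \mathcal{G}_t] = \mathbb{E}^{\mathbb{P}^{k^{SH}}}[\int_t^{\tau^I} e^{-r(s-t)} B(I - N_s)\, ds \mid \mathcal{G}_t]$, and by almost-sure uniqueness of $k^{SH}$ as the maximiser of this pure shirking utility (see below) one concludes $k^{\star, b}(\Psi) = k^{SH}$ a.e.\ on $[t, \tau^I]$, so the bad bank shirks throughout. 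Finally, the leftmost equality, combined with the identification $k^{\star, b}(\Psi) = k^{SH}$ just obtained, reads $U_t^g(\Psi) = u_t^g(k^{SH}, \theta, D)$, so $k^{SH}$ is also an optimal response for the good bank; invoking the bang-bang characterisation of Proposition \ref{prop:rep}, this pins down $k^{\star, g}(\Psi) = k^{SH}$ a.e., so the good bank shirks as well.

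The only non-routine point is the uniqueness claim used in the middle step, namely that $k^{SH}$ is, up to a $\mathbb{P} \otimes ds$-null set, the unique maximiser of $\mathbb{E}^{\mathbb{P}^k}[\int_t^{\tau^I} e^{-r(s-t)} B\, k_s\, ds \mid \mathcal{G}_t]$ when $\theta \equiv 1$. I would handle it by induction on the remaining pool size $I - N_t$: at a single remaining loan, a direct exponential-sojourn-time computation shows that full shirking strictly dominates any effort with $k < I - N_t$; the induction step conditions on the waiting time until the next default to reduce to the case with one fewer loan, using that within each sojourn interval the contribution to the expected discounted shirking benefit is strictly maximised by full shirking (the loss from the shortened interval being outweighed by the strictly larger integrand). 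This is the main technical obstacle, but it is self-contained and uses only standard computations for inhomogeneous Poisson processes.
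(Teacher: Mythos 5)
Your proof is correct and follows essentially the same route as the paper: the paper's own argument simply revisits the chain of inequalities establishing \eqref{ineq2} and observes that equality forces $\theta\equiv 1$ and $k^{\star,g}=k^{\star,b}=k^{SH}$, exactly as you do. Your extra care about the strict optimality of $k^{SH}$ for the no-payment problem (which the paper takes for granted) is sound and can in fact be settled directly from $C(j)-C(j-1)\leq Bj/(r+\widehat\lambda_j^{SH})<\widehat b_j$, without the induction on sojourn times.
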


\begin{proof}
Looking at the proof of \eqref{ineq2}, we deduce that necessarily $
k_s^{\star,g}(\Psi)=k_s^{\star,b}(\Psi)=k_s^{\rm SH},~ \theta_s=1$, for every $s\geq t.$ Thus, the value functions of the banks are given by
\begin{align*}
U_t^g(\Psi) & = \rho_g \E^{\P^{k^{\rm SH}}}\bigg[\int_t^{\tau^I}  \mathrm{e}^{-r(s-t)} \mathrm{d}D_s\bigg|\mathcal{G}_t\bigg] + C(I-N_t), \ U_t^{b}(\Psi)  = \rho_b \E^{\P^{k^{\rm SH}}}\bigg[\int_t^{\tau^I}  \mathrm{e}^{-r(s-t)} \mathrm{d}D_s\bigg| \mathcal{G}_t\bigg]+ C(I-N_t).
\end{align*}
\end{proof}

\section{Technical results for the upper boundary}\label{sec:D}

\begin{Lemma} \label{lemma:theta=0} 
For every $j\geq 1$, $x_{j}^{\star}>\ds\frac{\rho_b}{\rho_g} \widehat b_j$. 
\end{Lemma}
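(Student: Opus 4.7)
The plan is to reduce the claimed inequality to a clean scalar inequality in two parameters that both lie in $(0,1)$, and then verify it by elementary calculus.

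First, I would rewrite $x_j^\star$ in a more transparent form. Since $\widehat b_j = B/(\alpha_j\varepsilon)$ and $\widehat\lambda_j^{SH}-\widehat\lambda_j^0 = \alpha_j j\varepsilon$, we have
$$Bj \;=\; \widehat b_j(\widehat\lambda_j^{SH}-\widehat\lambda_j^0),\qquad \frac{Bj}{r+\widehat\lambda_j^{SH}} \;=\; \widehat b_j\left(1-\frac{r+\widehat\lambda_j^0}{r+\widehat\lambda_j^{SH}}\right).$$
Plugging this into the definition of $x_j^\star$ from Proposition \ref{prop:uj functions} and factoring $\widehat b_j$, I would introduce the shorthand
$$\alpha \;:=\; \frac{r+\widehat\lambda_j^0}{r+\widehat\lambda_j^{SH}} \;\in\;(0,1),\qquad q \;:=\; \frac{\rho_b}{\rho_g}\;\in\;(0,1),$$
(with $\alpha<1$ because $\widehat\lambda_j^0 < \widehat\lambda_j^{SH}$ by definition, and $q<1$ since $\rho_g>\rho_b$) to obtain
$$x_j^\star \;=\; \widehat b_j\bigl[\,q^{1/\alpha}\alpha + 1-\alpha\,\bigr].$$
The desired inequality $x_j^\star > q\,\widehat b_j$ is therefore equivalent to
$$g(q)\;:=\;q^{1/\alpha}\alpha + 1-\alpha - q \;>\;0\quad\text{for all }q\in(0,1).$$

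Next I would verify this scalar inequality by a monotonicity argument. A direct computation gives $g(1)=0$, while
$$g'(q) \;=\; q^{1/\alpha - 1} - 1.$$
Since $1/\alpha>1$, the exponent $1/\alpha-1$ is strictly positive, so $q^{1/\alpha-1}<1$ for every $q\in(0,1)$, i.e.\ $g'(q)<0$ on $(0,1)$. Thus $g$ is strictly decreasing on $(0,1)$ and reaches the value $0$ at $q=1$, which yields $g(q)>0$ for all $q\in(0,1)$. This gives the claim for every $j\geq 1$.

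There is no real obstacle here: the whole argument reduces, once the correct algebraic rearrangement of $x_j^\star$ is made, to the elementary fact that the map $q\mapsto q^{1/\alpha}$ lies strictly below the chord joining $(0,0)$ and $(1,1)$ on $(0,1)$ whenever $1/\alpha>1$. The only point requiring a little care is to use $Bj/(r+\widehat\lambda_j^{SH})$ rewritten through $\widehat b_j$ so that $\widehat b_j$ factors out cleanly and the inequality becomes dimensionless.
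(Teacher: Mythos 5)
Your proof is correct and is essentially the paper's own argument in normalized form: the paper compares $g(x)=x^{1/\alpha}\,\widehat b_j\,\alpha+\widehat b_j(1-\alpha)$ with the line $h(x)=\widehat b_j x$, notes they are tangent at $x=1$, and invokes strict convexity of $g$, while you divide through by $\widehat b_j$ and run the equivalent monotone-difference computation ($g(1)=0$ and $g'(q)=q^{1/\alpha-1}-1<0$ on $(0,1)$). One small quibble with your closing gloss only: the elementary fact actually being used is that $q\mapsto q^{1/\alpha}$ lies strictly \emph{above its tangent line at $q=1$} (convexity), not below the chord through $(0,0)$ and $(1,1)$ — the chord inequality $q^{1/\alpha}<q$ would bound $\alpha q^{1/\alpha}+1-\alpha$ from above rather than from below and so points the wrong way; your actual derivative argument, however, is the right one.
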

\begin{proof}
For any $j\geq 1$, define the functions $g,h:\R\longrightarrow\R$ by 
\[
g(x):=x^{\frac{r+\widehat\lambda_j^{\rm SH}}{r+\widehat\lambda_j^0}}\widehat b_j\frac{r+\widehat\lambda_j^0}{r+\widehat\lambda_j^{\rm SH}}+\frac{Bj}{r+\widehat\lambda_j^{\rm SH}},~ h(x):=\widehat b_jx.
\]
Then $g$ is strictly convex in $\R_+$ and we have that $g(1)=h(1)=\widehat b_j$ and $g^\prime(1)=h^\prime(1)=\widehat b_j$. Thus, $h$ is the tangent line to $g$ at $x=1$ so $g(x)>h(x)$ for every $x\neq 1$ and therefore
\[
x_j^{\star}=g\bigg(\frac{\rho_b}{\rho_g}\bigg)>h\bigg(\frac{\rho_b}{\rho_g}\bigg)=\frac{\rho_b}{\rho_g}\widehat b_j.
\] 
\end{proof}

\begin{Proposition} 
For every $j\geq 1$, the function $\widehat{\mathcal U}_j^\star$ defined by \eqref{ujota} satisfies
\[
\frac{\widehat{\mathcal U}_j^\star(x)}{x}\leq \frac{\rho_g}{\rho_b}, \ \forall x\geq \frac{Bj}{r+\widehat\lambda_j^{\rm SH}}.
\]
Moreover, equality holds if and only if $x \geq \widehat b_j$.
\end{Proposition}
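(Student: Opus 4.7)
My plan is to study the sign of the auxiliary function $\psi(x) := \widehat{\mathcal{U}}_j^\star(x) - (\rho_g/\rho_b)\,x$ on the relevant domain, which, since $x>0$ throughout, is equivalent to the claim up to dividing by $x$.

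From the third branch of \eqref{ujota}, $\widehat{\mathcal{U}}_j^\star(x) = (\rho_g/\rho_b)\,x$ on $[\widehat b_j,+\infty)$, so $\psi \equiv 0$ there. This gives the equality assertion and the ``if'' direction of the equality statement in one stroke. It remains to prove $\psi(x) < 0$ strictly on $[Bj/(r+\widehat\lambda_j^{SH}),\widehat b_j)$, which will also yield the ``only if'' direction.

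By construction, $\widehat{\mathcal{U}}_j^\star$ is $C^1$, hence so is $\psi$, and $\psi(\widehat b_j)=0$. It therefore suffices to check $\psi'(x) > 0$ for every $x\in [Bj/(r+\widehat\lambda_j^{SH}),\widehat b_j)$; then strict monotonicity plus $\psi(\widehat b_j)=0$ forces $\psi<0$ on the whole interval. I split into the two subintervals of the definition of $\widehat{\mathcal{U}}_j^\star$. Set $a := (r+\widehat\lambda_j^{SH})/(r+\widehat\lambda_j^0)$; since full shirking multiplies the intensity by $1+\varepsilon>1$, we have $\widehat\lambda_j^{SH}>\widehat\lambda_j^0$, hence $a>1$. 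On the linear piece $[Bj/(r+\widehat\lambda_j^{SH}),x_j^\star)$ the derivative of $\widehat{\mathcal{U}}_j^\star$ is the constant $(\rho_g/\rho_b)^a$; since $\rho_g/\rho_b>1$ and $a>1$, this exceeds $\rho_g/\rho_b$, giving $\psi'>0$. On the power piece $[x_j^\star,\widehat b_j)$, $\widehat{\mathcal{U}}_j^\star$ is strictly concave (the exponent $(r+\widehat\lambda_j^0)/(r+\widehat\lambda_j^{SH})$ lies in $(0,1)$), so its derivative is strictly decreasing; a direct computation shows it equals $(\rho_g/\rho_b)^a$ at $x_j^\star$ and $\rho_g/\rho_b$ at $\widehat b_j^-$ (the latter identity is precisely the $C^1$ matching condition built into the definition of $\widehat{\mathcal U}_j^\star$ on $[\widehat b_j,\infty)$). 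Hence $(\widehat{\mathcal{U}}_j^\star)'(x)>\rho_g/\rho_b$ on $[x_j^\star,\widehat b_j)$, so $\psi'>0$ there as well.

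There is no serious obstacle; the only step requiring attention is the explicit evaluation of $(\widehat{\mathcal{U}}_j^\star)'$ at the two endpoints $x_j^\star$ and $\widehat b_j^-$ on the power piece, but both are routine algebraic checks using the formula for $x_j^\star$ and the exponent $(r+\widehat\lambda_j^0)/(r+\widehat\lambda_j^{SH})$. Patching the two pieces, $\psi$ is strictly increasing on $[Bj/(r+\widehat\lambda_j^{SH}),\widehat b_j]$ with $\psi(\widehat b_j)=0$, so $\widehat{\mathcal{U}}_j^\star(x)/x < \rho_g/\rho_b$ for $x\in[Bj/(r+\widehat\lambda_j^{SH}),\widehat b_j)$ and $\widehat{\mathcal{U}}_j^\star(x)/x = \rho_g/\rho_b$ for $x\geq \widehat b_j$, which is exactly the claim.
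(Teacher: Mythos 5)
Your proof is correct, and it takes a genuinely different route from the paper's. The paper works with the ratio $A(x):=\widehat{\mathcal U}_j^\star(x)/x$ piece by piece: it shows $A$ is increasing on the linear piece and decreasing on the power piece, so the maximum over $[Bj/(r+\widehat\lambda_j^{SH}),\widehat b_j)$ is attained at $x_j^\star$ with value $\widehat b_j/x_j^\star$, and then invokes a separate convexity lemma (the tangent-line argument showing $x_j^\star>(\rho_b/\rho_g)\widehat b_j$) to conclude that this maximum is strictly below $\rho_g/\rho_b$. You instead study the difference $\psi(x)=\widehat{\mathcal U}_j^\star(x)-(\rho_g/\rho_b)x$, verify that $(\widehat{\mathcal U}_j^\star)'>\rho_g/\rho_b$ strictly to the left of $\widehat b_j$ (constant slope $(\rho_g/\rho_b)^a$ with $a>1$ on the linear piece; strictly decreasing from $(\rho_g/\rho_b)^a$ at $x_j^\star$ to $\rho_g/\rho_b$ at $\widehat b_j^-$ on the concave power piece, both values being exactly the $C^1$ pasting conditions), and integrate backward from the anchor $\psi(\widehat b_j)=0$. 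I checked the two endpoint derivative evaluations using $\widehat b_j(\widehat\lambda_j^{SH}-\widehat\lambda_j^0)=Bj$ and they are correct. Your route has the advantage of bypassing the auxiliary lemma entirely and of exploiting the structural fact that the gradient constraint $(\widehat{\mathcal U}_j^\star)'\geq\rho_g/\rho_b$ is saturated only from $\widehat b_j$ onward; the paper's route has the minor advantage of explicitly locating the maximiser of the ratio at $x_j^\star$ and its value $\widehat b_j/x_j^\star$, which is slightly more quantitative information. Both arguments deliver the strict inequality on $[Bj/(r+\widehat\lambda_j^{SH}),\widehat b_j)$ and hence the full equality characterisation.
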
 

\begin{proof} Define $A(x):=\ds\frac{\widehat{\mathcal U}_j^\star(x)}{x}$. If $x \geq \widehat b_{j-1}$ then $A(x)=\rho_g/\rho_b$. If now $x\in[x_j^{\star},\widehat b_j)$, we have 
\[
A(x)=\ds\frac{\rho_g}{\rho_b}(\widehat b_j)^{\frac{\widehat\lambda_j^{SH}-\widehat\lambda_j^0}{r+\widehat\lambda_j^{SH}}}\bigg(\frac{r+\widehat\lambda_j^{SH}}{r+\widehat\lambda_j^0}\bigg)^\frac{r+\widehat\lambda_j^0}{r+\widehat\lambda_j^{SH}}\frac1x \bigg(x-\frac{Bj}{r+\widehat\lambda_j^{SH}}\bigg)^\frac{r+\widehat\lambda_j^0}{r+\widehat\lambda_j^{SH}}.
\]
This function is decreasing so that $A$ reaches its maximum value over $[x_j^{\star},\widehat b_j)$ at $x_j^{\star}$. Next, we have
\[
A(x_j^{\star})=\frac{\widehat b_j}{x_j^{\star}}<\frac{\rho_g}{\rho_b}~\iff~ x_j^{\star}>\frac{\rho_b}{\rho_g}\widehat b_j,
\]
and the last inequality holds as a consequence of Lemma \ref{lemma:theta=0}. Finally, if $x\in\big[\frac{Bj}{r+\widehat\lambda_j^{\rm SH}},x_j^{\star}\big)$ then
\[
A(x)=\frac1x \bigg(\frac{\rho_g}{\rho_b}\bigg)^\frac{r+\widehat\lambda_j^{\rm SH}}{r+\widehat\lambda_j^0}\bigg(x-\frac{Bj}{r+\widehat\lambda_j^{\rm SH}}\bigg)+\frac1x \frac{Bj}{r+\widehat\lambda_j^{\rm SH}}.
\]
This function is increasing, hence $
A(x) \leq A(x_j^{\star})<\frac{\rho_g}{\rho_b}, \ \forall x\in\Big[ \frac{Bj}{r+\widehat\lambda_j^{SH}}, x^\star_j\Big].
$
\end{proof}

\begin{Corollary} \label{corollary:theta}
Let $j\geq 2$ and $\widehat{\mathcal U}_j^\star,$ $\widehat{\mathcal U}_{j-1}^\star$ defined by \eqref{ujota}, and assume that $\widehat\lambda_j^{k^g}\leq\widehat\lambda_j^{k^{b}}$. Then, for any $u^{b}\geq h^{1,b}+ \frac{B(j-1)}{r+\widehat\lambda_{j-1}^{SH}}$ we have 
\[
\widehat{\mathcal U}_{j-1}^\star(u^{b}-h^{1,b})\widehat\lambda_j^{k^g}-\big(\widehat{\mathcal U}_j^\star\big)^\prime(u^{b})\widehat\lambda_j^{k^{b}}(u^{b}-h^{1,b})\leq 0.
\]
Furthermore, equality holds if and only if $u^{b}-h^{1,b}\geq \widehat b_j$, $u^{b} \geq \widehat b_j$ and $\widehat\lambda_j^{k^{b}}=\widehat\lambda_j^{k^g}$.
\end{Corollary}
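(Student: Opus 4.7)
The plan is to factor the inequality into two independent monotonicity statements, one about $\widehat{\mathcal U}_{j-1}^\star$ evaluated at $u^b-h^{1,b}$ and one about the derivative of $\widehat{\mathcal U}_j^\star$ at $u^b$, and then chain them. First I would use the hypothesis $\widehat\lambda_j^{k^g}\leq\widehat\lambda_j^{k^b}$ together with the non-negativity of $\widehat{\mathcal U}_{j-1}^\star$ on its admissible domain (which is clear from \eqref{ujota} since $u^b-h^{1,b}\geq B(j-1)/(r+\widehat\lambda_{j-1}^{SH})$ lies in its domain) to replace $\widehat\lambda_j^{k^g}$ by $\widehat\lambda_j^{k^b}$ on the left-hand side. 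This reduces the claim to
$$\widehat{\mathcal U}_{j-1}^\star(u^b-h^{1,b}) \leq \left(\widehat{\mathcal U}_j^\star\right)'(u^b)\,(u^b-h^{1,b}).$$

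Next I would invoke the preceding proposition (applied with $j-1$) to get $\widehat{\mathcal U}_{j-1}^\star(u^b-h^{1,b})\leq \frac{\rho_g}{\rho_b}(u^b-h^{1,b})$, with equality iff $u^b-h^{1,b}\geq \widehat b_{j-1}$. It then suffices to show the companion bound $(\widehat{\mathcal U}_j^\star)'(u^b)\geq \rho_g/\rho_b$, with equality iff $u^b\geq \widehat b_j$. I would establish this by a direct piecewise inspection of \eqref{ujota}: on $[Bj/(r+\widehat\lambda_j^{SH}),x_j^\star)$ the derivative is the constant $(\rho_g/\rho_b)^{(r+\widehat\lambda_j^{SH})/(r+\widehat\lambda_j^0)}$, which strictly exceeds $\rho_g/\rho_b$ because $\widehat\lambda_j^{SH}>\widehat\lambda_j^0$ makes the exponent larger than $1$; on $[x_j^\star,\widehat b_j)$ the derivative is a power function of $(u^b-Bj/(r+\widehat\lambda_j^{SH}))$ with exponent $(r+\widehat\lambda_j^0)/(r+\widehat\lambda_j^{SH})-1<0$, so it is strictly decreasing, and by the $C^1$ matching at the two endpoints it interpolates continuously between the value on the linear piece and $\rho_g/\rho_b$ at $\widehat b_j$; on $[\widehat b_j,\infty)$ the derivative equals $\rho_g/\rho_b$ exactly.

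Chaining the three inequalities will give the desired bound. For the equality case, tracing back through the chain requires simultaneously $\widehat\lambda_j^{k^g}=\widehat\lambda_j^{k^b}$ (the first step being strict otherwise, using $\widehat{\mathcal U}_{j-1}^\star>0$ on the relevant domain), $u^b-h^{1,b}\geq \widehat b_{j-1}$, and $u^b\geq \widehat b_j$; since Assumption \ref{assump}$(iii)$ yields $\widehat b_{j-1}\leq \widehat b_j$, the condition $u^b-h^{1,b}\geq \widehat b_j$ stated in the corollary is sufficient, and I would verify that it is also implied under the admissibility constraint and $u^b\geq \widehat b_j$ to match the wording exactly.

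The routine parts are Steps 1 and 2, which are one-line consequences of the hypotheses and the previous proposition. The only mildly delicate step is verifying monotonicity of $(\widehat{\mathcal U}_j^\star)'$ on the middle piece and checking that the $C^1$-matching at $x_j^\star$ and $\widehat b_j$ produces values bracketing $\rho_g/\rho_b$ on the correct side; I expect this to be the main (but still purely computational) obstacle.
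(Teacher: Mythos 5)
Your proposal is correct and follows essentially the same route as the paper, which disposes of the inequality via the two-sided chain $\widehat{\mathcal U}_{j-1}^\star(u^{b}-h^{1,b})/(u^{b}-h^{1,b})\leq \rho_g/\rho_b\leq (\widehat{\mathcal U}_j^\star)'(u^{b})$ combined with $\widehat\lambda_j^{k^g}\leq\widehat\lambda_j^{k^{b}}$; your piecewise verification that $(\widehat{\mathcal U}_j^\star)'\geq\rho_g/\rho_b$ simply makes explicit what the paper treats as known from the construction of $\widehat{\mathcal U}_j^\star$. Your observation that the equality case naturally produces the threshold $\widehat b_{j-1}$ (with $\widehat b_{j-1}\leq\widehat b_j$ under Assumption 2.1(iii)) rather than the stated $\widehat b_j$ is a fair and slightly more careful reading than the paper's own one-line argument.
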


\begin{proof}
Under the conditions of the corollary, the following allows us to conclude immediately
\[
\frac{\widehat{\mathcal U}_{j-1}^\star(u^{b}-h^{1,b})}{u^{b}-h^{1,b}}\leq \frac{\rho_g}{\rho_b}\leq \big(\widehat{\mathcal U}_j^\star\big)^\prime(u^{b}).
\]
\end{proof}

\begin{Corollary}\label{corollary:ordered lambdas}   
For $j\geq 1$, let $\widehat{C}_j$ and $\widehat{\mathcal U}_j^\star$ be defined by \eqref{cjota} and \eqref{ujota} respectively. If $(\theta,h^{1,b})\in\widehat{C}_j$ is such that $u^{b}-\theta(u^{b}-h^{1,b})\geq \widehat b_j$ then  $\widehat{\mathcal U}_j^\star(u^{b})-\theta \widehat{\mathcal U}_{j-1}^\star(u^{b}-h^{1,b})\geq \widehat b_j$. As a consequence, in the context of equation \eqref{hjbj}, for every $(\theta,h^{1,b})\in\widehat{C}_j$ we have $k^g\leq k^{b}$ and $\widehat\lambda_j^{k^g}\leq\widehat\lambda_j^{k^{b}}$. 
\end{Corollary}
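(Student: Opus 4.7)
\proof[Proof plan for Corollary \ref{corollary:ordered lambdas}]
The plan is to reduce the whole statement to the first claim and then to exploit the already--established pointwise bound $\widehat{\mathcal U}_{j-1}^\star(w)\le (\rho_g/\rho_b)\, w$, together with the fact that the explicit piecewise formula \eqref{ujota} gives $\widehat{\mathcal U}_j^\star(u^{b})=(\rho_g/\rho_b)\, u^{b}$ as soon as $u^{b}\ge \widehat b_j$. The consequence on $k^g,k^{b}$ and the hazard rates will then follow by a simple contrapositive applied to the definitions
\[
k^{b}=j\,{\bf 1}_{\{u^{b}-\theta(u^{b}-h^1)<\widehat b_j\}},\qquad
k^{g}=j\,{\bf 1}_{\{\widehat{\mathcal U}_j^\star(u^{b})-\theta\,\widehat{\mathcal U}_{j-1}^\star(u^{b}-h^1)<\widehat b_j\}},
\]
together with the monotonicity $k\mapsto \widehat\lambda_j^k=\alpha_j(j+k\varepsilon)$.

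For the first claim, set $w:=u^{b}-h^{1}$, which by the defining constraint of $\widehat{C}^j$ in \eqref{cjota} satisfies $w\ge B(j-1)/(r+\widehat\lambda_{j-1}^{SH})\ge 0$, so that $w$ lies in the domain of $\widehat{\mathcal U}_{j-1}^\star$. Assume $u^{b}-\theta w\ge \widehat b_j$. Since $\theta\in[0,1]$ and $w\ge 0$, one has $\theta w\ge 0$, hence $u^{b}\ge u^{b}-\theta w\ge \widehat b_j$, which by the third case of \eqref{ujota} implies
\[
\widehat{\mathcal U}_j^\star(u^{b})=\frac{\rho_g}{\rho_b}\, u^{b}.
\]
Combining this identity with the inequality $\widehat{\mathcal U}_{j-1}^\star(w)\le (\rho_g/\rho_b)\, w$ (which was established in the proof of the preceding proposition through the analysis of the map $x\mapsto \widehat{\mathcal U}_{j-1}^\star(x)/x$), we get
\[
\widehat{\mathcal U}_j^\star(u^{b})-\theta\,\widehat{\mathcal U}_{j-1}^\star(w)\;\ge\;\frac{\rho_g}{\rho_b}\bigl(u^{b}-\theta w\bigr)\;\ge\;\frac{\rho_g}{\rho_b}\,\widehat b_j\;\ge\;\widehat b_j,
\]
using $\rho_g\ge\rho_b$ at the last step. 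This settles the inequality.

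For the second part, the contrapositive of what has just been shown reads: if $\widehat{\mathcal U}_j^\star(u^{b})-\theta\,\widehat{\mathcal U}_{j-1}^\star(u^{b}-h^1)<\widehat b_j$, then $u^{b}-\theta(u^{b}-h^1)<\widehat b_j$. In other words, $\{k^g=j\}\subset\{k^{b}=j\}$, which gives $k^g\le k^{b}$, and then monotonicity of $k\mapsto\widehat\lambda_j^k$ yields $\widehat\lambda_j^{k^g}\le\widehat\lambda_j^{k^{b}}$. The only step that requires any real work here is the pointwise bound $\widehat{\mathcal U}_{j-1}^\star(w)\le (\rho_g/\rho_b)\, w$, but this is already in hand from the preceding proposition; once one has it, the argument is essentially a one--line manipulation of the piecewise expression \eqref{ujota}. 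I therefore do not anticipate a genuine obstacle.
\ep
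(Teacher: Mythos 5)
Your proof is correct and rests on the same key ingredient as the paper's, namely the pointwise bound $\widehat{\mathcal U}_{j-1}^\star(w)\le(\rho_g/\rho_b)\,w$ from the preceding proposition combined with the explicit linear form of $\widehat{\mathcal U}_j^\star$ above $\widehat b_j$; your version factors out $\rho_g/\rho_b$ directly instead of first bounding $\theta$ by $(u^{b}-\widehat b_j)/(u^{b}-h^{1})$ as the paper does, which is a cosmetic difference (and avoids the division by $u^{b}-h^{1}$). The deduction of $k^g\le k^{b}$ by contrapositive and the monotonicity of $k\mapsto\widehat\lambda_j^k$ matches the intended argument.
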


\begin{proof}
First observe that $u^{b}-\theta(u^{b}-h^{1,b})\geq \widehat b_j$ implies $u^{b}\geq \widehat b_j$. Then we have 
\[
\widehat{\mathcal U}_{j}^\star(u^{b})-\widehat b_j \geq \frac{\rho_g}{\rho_b}(u^{b}-\widehat b_j)\geq \frac{\widehat{\mathcal U}^\star_{j-1}(u^{b}-h^{1,b})}{u^{b}-h^{1,b}}(u^{b}-\widehat b_j).
\]
Also, $\theta\leq\ds\frac{u^{b}-\widehat b_j}{u^{b}-h^{1,b}}$ and thus
\[
\widehat{\mathcal U}_j^\star(u^{b})-\theta \widehat{\mathcal U}^\star_{j-1}(u^{b}-h^{1,b}) \geq \widehat{\mathcal U}_j^\star(u^{b}) - \bigg(\frac{u^{b}-\widehat b_j}{u^{b}-h^{1,b}}\bigg)\widehat{\mathcal U}^\star_{j-1}(u^{b}-h^{1,b}) \geq \widehat b_j.
\]
\end{proof}

\vspace{0.5em}
We now proceed with the

\vspace{0.5em}
\begin{proof}[Proof of Lemma \ref{lemma:diffusion}]
We start with the region $u^{b}<\widehat b_1,~ \widehat{\mathcal U}_1(u^{b})<\widehat b_1$. For these points, we have that $k^{b}=k^g=1$, so \eqref{hjb1} can be solved easily and leads to, for some $C_1\in\R$
\begin{align*}
 \widehat{\mathcal U}_1(u^{b}) & = C_1\bigg(u^{b}-\ds\frac{B}{r+\widehat\lambda_1^1}\bigg)+\ds\frac{B}{r+\widehat\lambda_1^1} .
\end{align*}
If $u^{b}<\widehat b_1$ and $ \widehat{\mathcal U}_1(u^{b})\geq \widehat b_1$, then $k^{b}=1$, $k^g=0$ and we can solve \eqref{hjb1} to obtain for some $C_2\in\R$
\begin{align*}
 \widehat{\mathcal U}_1(u^{b}) & = C_2\bigg(u^{b}-\ds\frac{B}{r+\widehat\lambda_1^1}\bigg)^\frac{r+\widehat\lambda_1^0}{r+\widehat\lambda_1^1}.
\end{align*}
Finally, when $u^{b}\geq \widehat b_1$ and $ \widehat{\mathcal U}(u^{b})\geq \widehat b_1$ the optimal strategies are $k^{b}=k^g=0$ and we have for some $C_3\in\R$, $\widehat{\mathcal U}_1(u^{b}) = C_3  u^{b}.$ We are interested in smooth solutions of \eqref{hjb1}. Denote by $ \widehat{\mathcal U}_1^{(1)}$, $ \widehat{\mathcal U}_1^{(2)}$ and $ \widehat{\mathcal U}_1^{(3)}$ the following functions
\begin{align*}
\widehat{\mathcal U}_1^{(1)}(u^{b})&:= C_1\bigg(u^{b}-\ds\frac{B}{r+\widehat\lambda_1^1}\bigg)+\ds\frac{B}{r+\lambda_1^1}, \
\widehat{\mathcal U}_1^{(2)}(u^{b}):= C_2\bigg(u^{b}-\ds\frac{B}{r+\widehat\lambda_1^1}\bigg)^\frac{r+\widehat\lambda_1^0}{r+\widehat\lambda_1^1}, \
\widehat{\mathcal U}_1^{(3)}(u^{b}):=C_3 u^{b}.
\end{align*}
We will determine the relations between the constants which allow the smooth fitting of $\widehat{\mathcal U}_1$. First we impose $\widehat{\mathcal U}_1^{(2)}(\widehat b_1)=\widehat{\mathcal U}_1^{(3)}(\widehat b_1)$ and we get  
\[
C_2\bigg(\widehat b_1\ds\frac{r+\widehat\lambda_1^0}{r+\widehat\lambda_1^1}\bigg)^{\frac{r+\widehat\lambda_1^0}{r+\widehat\lambda_1^1}}=C_3 \widehat b_1.
\]
It can be checked that this relation between $C_1$ and $C_2$ ensures also that $({\widehat{\mathcal U}_1^{(2)}})^{\prime}(\widehat b_1)=(\widehat{\mathcal U}_1^{(3)})^{\prime}(\widehat b_1)$. Next, define $x_1$ as the point such that $\widehat{\mathcal U}_1^{(1)}(x_1)=\widehat b_1$, {\it i.e.}
\[
x_1=\frac{\widehat b_1}{C_1}\bigg( \frac{r+\widehat\lambda_1^0}{r+\widehat\lambda_1^1} \bigg) +\ds\frac{B}{r+\widehat\lambda_1^1}.
\]
Also, define $x_2$ as the point such that $\widehat{\mathcal U}_1^{(2)}(x_2)=\widehat b_1$, {\it i.e.}
\[
x_2=\bigg(\frac{\widehat b_1}{C_2}\bigg)^\frac{r+\widehat\lambda_1^1}{r+\widehat\lambda_1^0}+\ds\frac{B}{r+\widehat\lambda_1^1}.
\]
We impose $x_1=x_2$ and we get 
\[
\frac{\widehat b_1}{C_1}\bigg( \frac{r+\widehat\lambda_1^0}{r+\widehat\lambda_1^1} \bigg) = \bigg(\frac{\widehat b_1}{C_2}\bigg)^{\frac{r+\widehat\lambda_1^1}{r+\widehat\lambda_1^0}},
\]
and this relation ensures also that $(\widehat{\mathcal U}_1^{(1)})^\prime(x_1)=(\widehat{\mathcal U}_1^{(2)})^\prime(x_2)$. Expressing both $C_1$ and $C_2$ in terms of $C_3$ we get $\widehat{\mathcal U}_1^{(3)}(u^{b})  = C_3  u^{b}$, and
\begin{align*}
\widehat{\mathcal U}_1^{(1)}(u^{b}) & =  {C_3}^\frac{r+\widehat\lambda_1^1}{r+\widehat\lambda_1^0}\bigg(u^{b}-\ds\frac{B}{r+\widehat\lambda_1^1}\bigg) +\ds\frac{B}{r+\widehat\lambda_1^1}, \ \widehat{\mathcal U}_1^{(2)}(u^{b})  =  C_3  {\widehat b_1}^\frac{\widehat\lambda_1^1-\widehat\lambda_1^0}{r+\widehat\lambda_1^1}\bigg(\ds\frac{r+\widehat\lambda_1^1}{r+\widehat\lambda_1^0}\bigg)^\frac{r+\widehat\lambda_1^0}{r+\widehat\lambda_1^1}\bigg(u^{b}-\ds\frac{B}{r+\widehat\lambda_1^1}\bigg)^\frac{r+\widehat\lambda_1^0}{r+\widehat\lambda_1^1}.
\end{align*}
\end{proof}

\vspace{0.5em}
We pursue with the

\vspace{0.5em}
\begin{proof}[Proof of Lemma \ref{lemma:HJB1}]
For $C_0>0$, define the following modification $\widehat{\mathcal U}^{C_0,\star}_1$ of $\widehat{\mathcal U}^{C_0}_1$
\[
\widehat{\mathcal U}^{C_0,\star}_1(u^{b}):=\begin{cases}
\displaystyle \widehat{\mathcal U}_1^{C_0}(u^{b}) , \  u^{b} \leq x_1^{C_0,\star} , \\[0.5em] 
\displaystyle \frac{\rho_g}{\rho_b}(u^{b}-x_1^{C_0,\star})+ \widehat{\mathcal U}_1^{C_0}(x_1^{C_0,\star}),\ u^{b} \geq x_1^{C_0,\star},
\end{cases}
\]
where
\[
x_1^{C_0,\star}:=\ds\inf\bigg\{u^{b}\in\bigg[\frac{B}{r+\widehat\lambda_1^1},+\infty\bigg): \big(\widehat{\mathcal U}^{C_0}_1\big)^\prime(u^{b})\leq\frac{\rho_g}{\rho_b}\bigg\} .
\]
The function $\widehat{\mathcal U}^{C_0,\star}_1$ is continuously differentiable, solves the diffusion equation in $[B/(r+\widehat\lambda_1^1),x_1^{C_0,\star})$ and satisfies $\big({\widehat{\mathcal U}_1^{C_0,\star}}\big)^{\prime}=\rho_g/\rho_b$ in $(x_1^{C_0,\star},\infty)$. In the following we will study for which values of $C_0$ this function indeed solves the HJB equation. 

\vspace{0.5em}
\hspace{2em}$-$ First of all, if $C_0^\frac{r+\widehat\lambda_1^1}{r+\widehat\lambda_1^0}\leq \frac{\rho_g}{\rho_b}$, we have that 
\[
x_1^{C_0,\star}=\frac{B}{r+\widehat\lambda_1^1},\ \widehat{\mathcal U}_1^{C_0,\star}(u^{b})=\frac{\rho_g}{\rho_b}\left(u^{b}-\frac{B}{r+\widehat\lambda_1^1}\right)+\frac{B}{r+\widehat\lambda_1^1},\ \left({\widehat{\mathcal U}_1^{C_0,\star}}\right)^\prime(u^{b})\rho_b-\rho_g=0,
\]
so that we need to check that for every $u^{b}$ in $[B/(r+\widehat\lambda_1^1),\infty)$
\[
r  \widehat{\mathcal U}_1^{C_0,\star}(u^{b}) -\big({ \widehat{\mathcal U}_1^{C,\star}}\big)^\prime(u^{b})\big(ru^{b}-Bk^{b}+u^{b}\widehat\lambda_1^{k^{b}} \big) + \widehat{\mathcal U}_1^{C,\star}(u^{b})\widehat\lambda_1^{k^g} -Bk^g \geq0.
\]
Take $u^{b}>\widehat b_1$. Then $k^g=k^{b}=0$, and we have
\begin{align*}
& r  \widehat{\mathcal U}_1^{C_0,\star}(u^{b}) -\big({ \widehat{\mathcal U}_1^{C_0,\star}}\big)^\prime(u^{b})\big(ru^{b}-Bk^{b}+u^{b}\widehat\lambda_1^{k^{b}} \big) + \widehat{\mathcal U}_1^{C_0,\star}(u^{b})\widehat\lambda_1^{k^g} -Bk^g \\
&=  r\bigg(\frac{\rho_g}{\rho_b}\bigg(u^{b}-\frac{B}{r+\widehat\lambda_1^1}\bigg)+\frac{B}{r+\widehat\lambda_1^1}\bigg )-\frac{\rho_g}{\rho_b}\big((r+\widehat\lambda_1^0)u^{b}\big)+\widehat\lambda_1^0\bigg(\frac{\rho_g}{\rho_b}\bigg(u^{b}-\frac{B}{r+\widehat\lambda_1^1}\bigg)+\frac{B}{r+\widehat\lambda_1^1} \bigg) =  (r+\widehat\lambda_1^0)\frac{B}{r+\widehat\lambda_1^1}\bigg(1-\frac{\rho_g}{\rho_b} \bigg) <0 .
\end{align*}
Hence $ \widehat{\mathcal U}_1^{C_0,\star}$ is not a solution of \eqref{dp1}. 

\vspace{0.5em}
\hspace{2em}$-$ If $\left(\frac{\rho_g}{\rho_b}\right)^\frac{r+\widehat\lambda^0_1}{r+\widehat\lambda_1^1}<C_0\leq \frac{\rho_g}{\rho_b} $, then $
x_1^{C_0,\star}=\widehat b_1\frac{r+\widehat\lambda^0_1}{r+\widehat\lambda_1^1}\big(C_0\rho_b/\rho_g\big)^\frac{r+\widehat\lambda_1^1}{\widehat\lambda_1^1-\widehat\lambda^0_1}+\frac{B}{r+\widehat\lambda_1^1}.
$
%We need to check if
%$$
%r U_C^s(u^{b}) -{U_C^s}^\prime(u^{b})\lr{ru^{b}-Bk^{b}+u^{b}\lambda^{k^{b}} } +U_C^s(u^{b})\lambda^{k^g} -Bk^g \geq0, \quad\forall u^{b}\in[x_C^s,\infty).
%$$
Take $u^{b}>\widehat b_1$, then $k^g=k^{b}=0$ and
\begin{align*}
 r  \widehat{\mathcal U}_1^{C,\star}(u^{b}) -\big({ \widehat{\mathcal U}_1^{C_0,\star}}\big)^\prime(u^{b})\big(ru^{b}-Bk^{b}+u^{b}\widehat\lambda_1^{k^{b}}\big) + \widehat{\mathcal U}_1^{C,\star}(u^{b})\widehat\lambda_1^{k^g} -Bk^g 
%& =  r \lr{ \frac{\rho_g}{\rho_b}(u^{b}-x_C^s)+U_C(x_C^s)} -\frac{\rho_g}{\rho_b}(r+\lambda_0)u^{b}+\lambda_0 \lr{ \frac{\rho_g}{\rho_b}(u^{b}-x_C^s)+U_C(x_C^s) } \\
%& =  (r+\lambda_0)\left(U_C(x_C^s)-\frac{\rho_g}{\rho_b}x_C^s \right) \\
& =  (r+\widehat\lambda_1^0)\bigg(\widehat b_1 C^\frac{r+\widehat\lambda_1^1}{\widehat\lambda_1^1-\widehat\lambda^0_1}\bigg(\frac{\rho_b}{\rho_g}\bigg)^\frac{r+\widehat\lambda^0_1}{\widehat\lambda_1^1-\widehat\lambda^0_1}\frac{\widehat\lambda_1^1-\widehat\lambda^0_1}{r+\widehat\lambda_1^1}-\frac{\rho_g}{\rho_b}\frac{B}{r+\widehat\lambda_1^1}\bigg) \\
%& \leq  (r+\widehat\lambda^0_1)\lr{\widehat b_1 \lr{\frac{\rho_g}{\rho_b}}^\frac{r+\widehat\lambda_1^1}{\widehat\lambda_1^1-\widehat\lambda^0_1}\lr{\frac{\rho_b}{\rho_g}}^\frac{r+\widehat\lambda^0_1}{\widehat\lambda^1_1-\widehat\lambda^0_1}\frac{\widehat\lambda_1^1-\widehat\lambda^0_1}{r+\widehat\lambda_1^1}-\frac{\rho_g}{\rho_b}\frac{B}{r+\widehat\lambda_1^1}} \\
& \leq  (r+\widehat\lambda^0_1)\bigg(\widehat b_1\frac{\rho_g}{\rho_b}\frac{\widehat\lambda_1^1-\widehat\lambda^0_1}{r+\widehat\lambda_1^1}-\frac{B}{r+\widehat\lambda_1^1}\frac{\rho_g}{\rho_b}\bigg) = 0.
\end{align*}
The inequality is strict if $C_0<\frac{\rho_g}{\rho_b}$ so the only value of $C_0$ such that $ \widehat{\mathcal U}_1^{C_0,\star}$ solves the HJB equation is $C_0=\frac{\rho_g}{\rho_b}$. 

\vspace{0.5em}
\hspace{2em}$-$ For large values of $C_0$, {\it i.e.} $C_0>\frac{\rho_g}{\rho_b}$, we have that $x_1^{C_0,\star}=+\infty$ and then $ \widehat{\mathcal U}_1^{C_0,\star}=\widehat{\mathcal U}_1^{C_0}$. We exclude this case because these functions do not satisfy condition \eqref{saturation}. 
\end{proof}

\vspace{0.5em}
We end this section with the

\vspace{0.5em}
\begin{proof}[Proof of Proposition \ref{prop:uj functions}]
The proof is by induction. For $j=1$ the result is proved in Step 2, so we take any $j>1$ and assume that $\widehat{\mathcal U}_{j-1}^\star$ solves its corresponding diffusion equation. We will need to consider three different cases to prove that $\widehat{\mathcal U}_j^\star$ solves the equation (\ref{hjbj}). In each one of them we prove that the supremum in the right--hand side of (\ref{hjbj}) is attained with $\theta=0$, so that the diffusion equation takes the same form as the one in the case with one loan left. Then, it follows from the analysis in Step 2 that its solution satisfies also the variational inequality \eqref{DPequation}. 

\vspace{0.5em}
\hspace{2em}{$-$ \bf Case $1$:} $u^{b}<\widehat b_j$,  $\widehat{\mathcal U}_j^\star(u^{b})<\widehat b_j$. 

\vspace{0.5em}
In this case for any $(\theta,h^1)\in\widehat{C}^j$, we have that $k^g=k^{b}=j$. To ease notations, define $c_j(u^{b}):=\big(\widehat{\mathcal U}_j^\star\big)^\prime(u^{b})
\big(ru^{b} - Bj + u^{b}\widehat\lambda_j^{\rm SH} \big)$. Then the term inside the supremum in (\ref{hjbj}) becomes 
\[ 
c_j(u^{b})-\widehat{\mathcal U}_j^\star(u^{b})\widehat\lambda_j^{\rm SH}+Bj +\theta\widehat\lambda_j^{\rm SH}\big( \widehat{\mathcal U}_{j-1}^\star(u^{b}-h^1)-\big(\widehat{\mathcal U}_j^\star\big)^\prime(u^{b})(u^{b}-h^1)\big),
\]
and the optimal choice of $\theta$ in this case is $0$ (uniquely) because thanks to Corollary \ref{corollary:theta} we have
\[
\widehat{\mathcal U}_{j-1}^\star(u^{b}-h^1)-\big(\widehat{\mathcal U}_j^\star\big)^\prime(u^{b})(u^{b}-h^1)< 0.
\]

\hspace{2em}{$-$ \bf Case $2$:} $u^{b}<\widehat b_j$, $\widehat{\mathcal U}_j^\star(u^{b})\geq \widehat b_j$. 

\vspace{0.5em}
In this case $k^{b}=j$ for every $(\theta,h^1)\in\widehat{C}^j$. The term inside the supremum in (\ref{hjbj}) becomes 
\[
c_j(u^{b}) - \widehat{\mathcal U}_j^\star(u^{b})\widehat\lambda_j^{k^g}+Bk^g + \theta \big( \widehat{\mathcal U}_{j-1}^\star(u^{b}-h^1)\widehat\lambda_j^{k^g}-\big(\widehat{\mathcal U}_j^\star\big)^\prime(u^{b})\widehat\lambda_j^{\rm SH}(u^{b}-h^1) \big).
\]
Define the following sets
\[
\widehat{C}_j^0:=\big\{(\theta,h^1)\in\widehat{C}^j: \widehat{\mathcal U}_j^\star(u^{b})-\theta \widehat{\mathcal U}_{j-1}^\star(u^{b}-h^1)\geq \widehat b_j\big\},\;
\widehat{C}_j^j:=\big\{(\theta,h^1)\in\widehat{C}^j: \widehat{\mathcal U}_j^\star(u^{b})-\theta \widehat{\mathcal U}_{j-1}^\star(u^{b}-h^1)<\widehat b_j\big\},
\]
and note that $k^g=0$ for every $(\theta,h^1)\in\widehat{C}_j^0$ and $k^g=j$ for every $(\theta,h^1)\in\widehat{C}_j^j$. Also, the pair $(0,h^1)$ belongs to $\widehat{C}_j^0$ for every feasible $h^1$. 

\vspace{0.5em}
\hspace{3em}$\bullet$ If $(\theta,h^1)\in\widehat{C}_j^0$ we have 
\begin{align*}
& c_j(u^{b}) - \widehat{\mathcal U}_j^\star(u^{b})\widehat\lambda^{k^g}_j+Bk^g+\theta\big( \widehat{\mathcal U}_{j-1}^\star(u^{b}-h^1)\widehat\lambda^{k^g}_j-\big(\widehat{\mathcal U}_j^\star\big)^\prime(u^{b})\widehat\lambda_j^{\rm SH}(u^{b}-h^1)\big) \\
&=  c_j(u^{b}) - \widehat{\mathcal U}_j^\star(u^{b})\widehat\lambda_j^0+\theta\big( \widehat{\mathcal U}_{j-1}^\star(u^{b}-h^1)\widehat\lambda_j^0-\big(\widehat{\mathcal U}_j^\star\big)^\prime(u^{b})\widehat\lambda_j^{\rm SH}(u^{b}-h^1)\big)  \leq   c_j(u^{b}) - \widehat{\mathcal U}_j^\star(u^{b})\widehat\lambda_j^0,
\end{align*}
where the inequality is due to Corollary \ref{corollary:theta}.  

\vspace{0.5em}
\hspace{3em}$\bullet$ If $(\theta,h^1)\in\widehat{C}_j^j$ we have
\begin{align*}
& c_j(u^{b}) - \widehat{\mathcal U}_j^\star(u^{b})\widehat\lambda^{k^g}_j+Bk^g+\theta\big( \widehat{\mathcal U}_{j-1}^\star(u^{b}-h^1)\widehat\lambda_j^{k^g}-\big(\widehat{\mathcal U}_j^\star\big)^\prime(u^{b})\widehat\lambda_j^{\rm SH}(u^{b}-h^1)\big) \\
&=  c_j(u^{b}) - \widehat{\mathcal U}_j^\star(u^{b})\widehat\lambda_j^{\rm SH}+Bj+\theta\big( \widehat{\mathcal U}_{j-1}^\star(u^{b}-h^1)\widehat\lambda_j^{\rm SH}-\big(\widehat{\mathcal U}_j^\star\big)^\prime(u^{b})\widehat\lambda_j^{\rm SH}(u^{b}-h^1)\big) \\
&<  c_j(u^{b}) - \widehat{\mathcal U}_j^\star(u^{b})\widehat\lambda_j^{\rm SH} + Bj \\
&=  c_j(u^{b}) - \widehat{\mathcal U}_j^\star(u^{b})\widehat\lambda_j^{\rm SH} + \widehat b_j(\widehat\lambda_j^{SH}-\widehat\lambda_j^0) \leq  c_j(u^{b}) - \widehat{\mathcal U}_j^\star(u^{b})\widehat\lambda_j^{\rm SH} + \widehat{\mathcal U}_j^\star(u^{b})\big(\widehat\lambda_j^{\rm SH}-\widehat\lambda_j^0\big) =  c_j(u^{b}) - \widehat{\mathcal U}_j^\star(u^{b})\widehat\lambda_j^0,
\end{align*}
where the first inequality is a consequence of Corollary \ref{corollary:theta} and the second one holds because $\widehat{\mathcal U}_j^\star(u^{b})\geq \widehat b_j$.  
So we conclude that the optimal value for $\theta$ in this case is also 0 (uniquely). 

\vspace{0.5em}
\hspace{2em}{$-$ \bf Case $3$:} $u^{b}\geq \widehat b_j$, $\widehat{\mathcal U}_j^\star(u^{b})\geq \widehat b_j$. 

\vspace{0.5em}
Thanks to Proposition \ref{corollary:ordered lambdas} , we know that there are only three possibilities for the value of $(k^{b},k^g)$.  Define the sets 
\begin{align*}
\widehat{C}_j^{0,0} & :=\left\{(\theta,h^1)\in\widehat{C}^j: u^{b}-\theta(u^{b}-h^1)\geq \widehat b_j,~ \widehat{\mathcal U}_j^\star(u^{b})-\theta \widehat{\mathcal U}^\star_{j-1}(u^{b}-h^1)\geq \widehat b_j\right\}, \\
\widehat{C}_j^{j,0} & :=\left\{(\theta,h^1)\in\widehat{C}^j: u^{b}-\theta(u^{b}-h^1)<\widehat b_j,~ \widehat{\mathcal U}_j^\star(u^{b})-\theta \widehat{\mathcal U}^\star_{j-1}(u^{b}-h^1)\geq \widehat b_j\right\}, \\
\widehat{C}_j^{j,j} & :=\left\{(\theta,h^1)\in\widehat{C}^j: u^{b}-\theta(u^{b}-h^1)< \widehat b_j,~ \widehat{\mathcal U}_j^\star(u^{b})-\theta \widehat{\mathcal U}^\star_{j-1}(u^{b}-h^1)<\widehat b_j\right\}.
\end{align*}
Then, $(k^{b},k^g)=(0,0)$ for every $(\theta,h^1)\in\widehat{C}_j^{0,0}$, $(k^{b},k^g)=(j,0)$ for every $(\theta,h^1)\in\widehat{C}_j^{j,0}$ and $(k^{b},k^g)=(j,j)$ for every $(\theta,h^1)\in\widehat{C}_j^{j,j}$. Also, $(0,h^1)$ belongs to $\widehat{C}_j^{0,0}$ for any feasible $h^1$. 

\vspace{0.5em}
\hspace{3em}$\bullet$ If $(\theta,h^1)\in\widehat{C}_j^{0,0}$ then the term inside the supremum in (\ref{hjbj}) is, because of Corollary \ref{corollary:theta}, equal to
\begin{align*}
& \big(\widehat{\mathcal U}_j^\star\big)^\prime(u^{b})u^{b}\big(r+\widehat\lambda_j^{0} \big)-\widehat{\mathcal U}_j^\star(u^{b})\widehat\lambda_j^{0}+\theta\widehat\lambda_j^{0}\big( \widehat{\mathcal U}^\star_{j-1}(u^{b}-h^1)-\big(\widehat{\mathcal U}_j^\star\big)^\prime(u^{b})(u^{b}-h^1)\big)  \leq   \big(\widehat{\mathcal U}_j^\star\big)^\prime(u^{b})u^{b}\big(r+\widehat\lambda_j^{0} \big)-\widehat{\mathcal U}_j^\star(u^{b})\widehat\lambda_j^{0},
\end{align*}

\hspace{3em}$\bullet$ If $(\theta,h^1)\in\widehat{C}_j^{j,0}$, then $h^1<\widehat b_j$ and 
$
\frac{u^{b}-\widehat b_j}{u^{b}-h^1}<\theta\leq\frac{\widehat{\mathcal U}_j^\star(u^{b})-\widehat b_j}{\widehat{\mathcal U}^\star_{j-1}(u^{b}-h^1)}$. The term in the supremum in (\ref{hjbj}) is
\begin{align*}
 & c_j(u^{b}) - \widehat{\mathcal U}_j^\star(u^{b})\widehat\lambda_j^0+\theta\big( \widehat{\mathcal U}^\star_{j-1}(u^{b}-h^1)\widehat\lambda_j^0-\big(\widehat{\mathcal U}_j^\star\big)^\prime(u^{b})\widehat\lambda_j^{\rm SH}(u^{b}-h^1)\big) \\
&<  c_j(u^{b}) - \widehat{\mathcal U}_j^\star(u^{b})\widehat\lambda_j^0+\bigg(\frac{u^{b}-\widehat b_j}{u^{b}-h^1}\bigg)\big( \widehat{\mathcal U}^\star_{j-1}(u^{b}-h^1)\widehat\lambda_j^0-\big(\widehat{\mathcal U}_j^\star\big)^\prime(u^{b})\widehat\lambda_j^{\rm SH}(u^{b}-h^1)\big) \\
%&=  c_j(u^{b}) - \widehat{\mathcal U}_j^\star(u^{b})\widehat\lambda_j^0+(u^{b}-\widehat b_j)\left[ \ds\frac{\widehat{\mathcal U}^\star_{j-1}(u^{b}-h^1)}{u^{b}-h^1}\widehat\lambda_j^0-\left(\widehat{\mathcal U}_j^\star\right)^\prime(u^{b})\widehat\lambda_j^{SH}\right] \\
&\leq  c_j(u^{b})-\widehat{\mathcal U}_j^\star(u^{b})\lambda_j^0+(u^{b}-\widehat b_j)\big( \big(\widehat{\mathcal U}_j^\star\big)^\prime(u^{b})\widehat\lambda_j^0-\big(\widehat{\mathcal U}_j^\star\big)^\prime(u^{b})\widehat\lambda_j^{\rm SH}\big) =  \big(\widehat{\mathcal U}_j^\star\big)^\prime(u^{b})\big(ru^{b}+u^{b}\widehat\lambda_j^{0} \big)-\widehat{\mathcal U}_j^\star(u^{b})\widehat\lambda_j^0.
\end{align*}
Both inequalities are direct consequences of Corollary \ref{corollary:theta}. 

\vspace{0.5em}
\hspace{3em}$\bullet$ Finally, if $(\theta,h^1)\in\widehat{C}_j^{j,j}$, note that $h^1<\widehat b_j$, $\widehat{\mathcal U}_j^\star(u^{b})-\widehat{\mathcal U}_{j-1}^\star(u^{b}-h^1)<\widehat b_j$ and 
\[
\frac{u^{b}-\widehat b_j}{u^{b}-h^1}\leq\frac{\widehat{\mathcal U}_j^\star(u^{b})-\widehat b_j}{\widehat{\mathcal U}^\star_{j-1}(u^{b}-h^1)}<\theta.
\]
Then, the term inside the sup in (\ref{hjbj}) becomes 
\begin{align*}
& c_j(u^{b}) - \widehat{\mathcal U}_j^\star(u^{b})\widehat\lambda_j^{SH}+Bj+\theta\widehat\lambda_j^{\rm SH}\big( \widehat{\mathcal U}^\star_{j-1}(u^{b}-h^1)-\big(\widehat{\mathcal U}_j^\star\big)^\prime(u^{b})(u^{b}-h^1)\big) \\[-0.05em]
&\leq  c_j(u^{b}) -\widehat{\mathcal U}_j^\star(u^{b})\widehat\lambda_j^{\rm SH}+Bj + \frac{\widehat{\mathcal U}_j^\star(u^{b})-\widehat b_j}{\widehat{\mathcal U}^\star_{j-1}(u^{b}-h^1)}\widehat\lambda_j^{\rm SH}\left( \widehat{\mathcal U}^\star_{j-1}(u^{b}-h^1)-\big(\widehat{\mathcal U}_j^\star\big)^\prime(u^{b})(u^{b}-h^1)\right)  \\[-0.05em]
%&=  c_j(u^{b}) - \widehat{\mathcal U}_j^\star(u^{b})\widehat\lambda_j^{SH}+Bj +\widehat\lambda_j^{SH}\left( \widehat{\mathcal U}_j^\star(u^{b})-\widehat b_j-\left(\widehat{\mathcal U}_j^\star\right)^\prime(u^{b})(u^{b}-h^1)\frac{\widehat{\mathcal U}_j^\star(u^{b})-\widehat b_j}{\widehat{\mathcal U}^\star_{j-1}(u^{b}-h^1)}\right)  \\[-0.05em]
&\leq  c_j(u^{b}) - \widehat{\mathcal U}_j^\star(u^{b})\widehat\lambda_j^{\rm SH}+Bj+\widehat\lambda_j^{\rm SH}\bigg( \widehat{\mathcal U}_j^\star(u^{b})-\widehat b_j-\big(\widehat{\mathcal U}_j^\star\big)^\prime(u^{b})\frac{\widehat{\mathcal U}_j^\star(u^{b})-\widehat b_j}{\frac{\rho_g}{\rho_b}}\bigg)  \\[-0.05em]
&=  c_j(u^{b}) - \widehat b_j\widehat\lambda_j^0+\widehat\lambda_j^{\rm SH}\bigg(-\frac{\rho_b}{\rho_g}\big(\widehat{\mathcal U}_j^\star\big)^\prime(u^{b})\widehat{\mathcal U}_j^\star(u^{b})+\frac{\rho_b}{\rho_g}\big(\widehat{\mathcal U}_j^\star\big)^\prime(u^{b})\widehat b_j \bigg)  \\[-0.05em]
&= \widehat \lambda_j^{\rm SH}\big(\widehat{\mathcal U}_j^\star\big)^\prime(u^{b})\bigg(u^{b}-\frac{\rho_b}{\rho_g}\widehat{\mathcal U}_j^\star(u^{b})\bigg)+\big(\widehat{\mathcal U}_j^\star\big)^\prime(u^{b})\bigg(ru^{b}+\frac{\rho_b}{\rho_g}\widehat\lambda_j^{\rm SH}\widehat b_j-Bj\bigg)-\widehat\lambda_j^0\widehat b_j. 
\end{align*} 

The first inequality comes from Corollary \ref{corollary:theta} and the second one from the fact that the map $h^1\longmapsto\widehat{\mathcal U}^\star_{j-1}(u^{b}-h^1)/(u^{b}-h^1)$ is non--decreasing and constant for large values of $h^1$, which implies that $\widehat{\mathcal U}^\star_{j-1}(u^{b}-h^1)/(u^{b}-h^1)\leq\rho_g/\rho_b$. Now we use the explicit form of $\widehat{\mathcal U}_j^\star$ and compute
\begin{align*}
& \widehat\lambda_j^{\rm SH}\big(\widehat{\mathcal U}_j^\star\big)^\prime(u^{b})\bigg(u^{b}-\frac{\rho_b}{\rho_g}\widehat{\mathcal U}_j^\star(u^{b})\bigg)+\big(\widehat{\mathcal U}_j^\star\big)^\prime(u^{b})\bigg(ru^{b}+\frac{\rho_b}{\rho_g}\widehat\lambda_j^{\rm SH}\widehat b_j-Bj\bigg)-\widehat\lambda_j^0\widehat b_j \\
& =  \ds\frac{\rho_g}{\rho_b} r u^{b}+ \widehat\lambda_j^{\rm SH}\widehat b_j - \frac{\rho_g}{\rho_b}  Bj- \widehat\lambda_j^0 \widehat b_j  =  \frac{\rho_g}{\rho_b} r u^{b}+Bj\bigg(1-\frac{\rho_g}{\rho_b}\bigg)  <  \frac{\rho_g}{\rho_b} r u^{b}.
\end{align*}
The term in the last line corresponds to $\big(\widehat{\mathcal U}_j^\star\big)^\prime(u^{b})\big(ru^{b}+u^{b}\widehat\lambda_j^{0} \big)-\widehat{\mathcal U}_j^\star(u^{b})\widehat\lambda_j^0$ and therefore the optimal $\theta$ in this case is also $0$. Observe that in this case every $(\theta,h^1)\in \widehat{C}_j^{0,0}$ such that $u^{b}-h^1 \geq \widehat b_j$ is optimal.  
\end{proof}

\vspace{0.5em}
We next continue with the

\vspace{0.5em}
\begin{proof}[Proof of Theorem \ref{verification theorem}] We divide the proof in $3$ steps.

\vspace{0.5em}
\hspace{3em}{$\bullet$ \bf Step $1$:} Let us prove first that the SDE (\ref{bcutility}) has a unique solution, keeping in mind that $\Psi^\star$ liquidates the pool immediately after the first default. We consider two cases: if $u^{b}<\widehat b_{I-N_t}$, by right--continuity we can find for every solution of (\ref{bcutility}) some $\varepsilon\in(0,\tau-t)$ such that $u_s^{b}<\widehat b_{I-N_t}$ for $s\in[t,t+\varepsilon]$. Consequently $u^{b}$ solves the ODE 
\[
du_s^{b}=\big((r+\widehat\lambda_{I-N_t}^{\rm SH})u^{b}_s-B(I-N_t)\big)\mathrm{d}s,~s\in[t,t+\varepsilon],
\] 
whose unique solution is given by 
\[
u_s^{b}=\ds \mathrm{e}^{(r+\widehat\lambda_{I-N_t}^{SH})(s-t)}\bigg(u^{b}-\frac{B(I-N_t)}{r+\widehat\lambda_{I-N_t}^{\rm SH}}\bigg)+\frac{B(I-N_t)}{r+\widehat\lambda_{I-N_t}^{\rm SH}},~s\in[t,t+\varepsilon].
\] 
So, as long as there is no default and the project keeps running $u_s^{b}$ will be deterministic until it reaches the value $\widehat b_{I-N_t}$. That will eventually happen at time 
\[
t^{\star}(u^{b}):=t+\frac{1}{r+\widehat\lambda_{I-N_t}^{\rm SH}}\log\bigg(\frac{\widehat b_{I-N_t}(r+\widehat\lambda_{I-N_t}^0)}{u^{b}(r+\widehat\lambda_{I-N_t}^{\rm SH})-B(I-N_t)}\bigg),
\] 
and we see from \eqref{bcutility} that at time $t^{\star}(u^{b})$ we will have $\mathrm{d}u_s^{b}=0$, so $u_s^{b}=\widehat b_{I-N_t}$ for every $s\in[t^{\star}(u^{b}),\tau)$. In the second case, if $u^{b}\geq \widehat b_{I-N_t}$ then \eqref{bcutility} becomes $
\mathrm{d}u_s^{b}=-u_{s^-}^{b} \mathrm{d}N_s,~s\in[t,\tau],
$
and necessarily $u_s^{b}=u^{b}$ for every $s\in[t,\tau)$. This proves the existence and uniqueness of the solution of \eqref{bcutility} in both cases. It is then immediate that the first integrability condition in \eqref{eq:integ} is satisfied.

\vspace{0.5em}
\hspace{3em}{$\bullet$ \bf Step $2$:} Now we turn to the values of the banks under $\Psi^\star$. If $u^{b}\geq \widehat b_{I-N_t}$, we know from the previous analysis that $u_s^{b}=u^{b}\geq \widehat b_{I-N_t}$ for every $s\in[t,\tau)$, so in this case $\Psi^\star$ is a short--term contract with constant payment, see Section \ref{constantpayment}. Using the notations of that section, since the payment $c=\frac{u^{b}(r+\widehat\lambda_j^0)}{\rho_b}$ is such that $c\geq\bar{c}_b\geq\bar{c}_g$ both banks will always work, the value function of the bad bank is $U_t^{b}(\Psi^\star)=\rho_b c/(r+\widehat\lambda_{I-N_t}^0)=u^{b}$ and the one of the good bank is $U_t^g(\Psi^\star)=\rho_g c/(r+\widehat\lambda_{I-N_t}^0)=\rho_g/\rho_bu^{b}=\widehat{\mathcal U}_{I-N_t}^{\star}(u^{b})$. 

\vspace{0.5em}
In the case where $u^{b}<\widehat b_{I-N_t}$, $\Psi^\star$ is a short--term contract with delay $t^{\star}(u^{b})$ and constant payment, see Section \ref{contractswithdelay}. Using the notations of that section, since $c=\bar{c}_b$ the bad bank will always shirk and her value function is 
\[
U_t^{b}(\Psi^\star)=\rho_b c \frac{\mathrm{e}^{-(r+\widehat\lambda_{I-N_t}^{\rm SH})t^{\star}(u^{b})}}{r+\widehat\lambda_{I-N_t}^{\rm SH}}+\frac{B}{r+\widehat\lambda_{I-N_t}^{\rm SH}}=u^{b}.
\] 

For the good bank we have two sub--cases. First, if $u^{b}\in[x_{I-N_t}^{\star},\widehat b_{I-N_t})$ then $\bar{t}_g(c)\geq t^{\star}(u^{b})$, so the good bank will always work and her value function is
\begin{align*}
U_t^g(\Psi^\star)  
%= \frac{c\rho_g e^{-(r+\widehat\lambda_{I-N_t}^0)t^{\star}(u^{b})}}{r+\widehat\lambda_{I-N_t}^0} \\
& = \frac{\rho_g}{\rho_b}{\widehat b_{I-N_t}}^\frac{\widehat\lambda_{I-N_t}^{\rm SH}-\widehat\lambda_{I-N_t}^0}{r+\widehat\lambda_{I-N_t}^{\rm SH}}\bigg(\frac{r+\widehat\lambda_{I-N_t}^{\rm SH}}{r+\widehat\lambda_{I-N_t}^0}\bigg)^\frac{r+\widehat\lambda_{I-N_t}^0}{r+\widehat\lambda_{I-N_t}^{\rm SH}}\bigg(u^{b}-\frac{B(I-N_t)}{r+\widehat\lambda_{I-N_t}^{\rm SH}}\bigg)^\frac{r+\widehat\lambda_{I-N_t}^0}{r+\widehat\lambda_{I-N_t}^{\rm SH}}=\widehat{\mathcal U}_{I-N_t}^{\star}(u^{b}).
\end{align*}
If $u^{b}\in\big[\frac{B}{r+\widehat\lambda_{I-N_t}^{\rm SH}},x_{I-N_t}^{\star}\big)$ then $\bar{t}_g(c)< t^{\star}(u^{b})$, so the good bank will start working at time $t^{\star}(u^{b})$ and her value function is 
\begin{align*}
U_t^g(\Psi^\star) 
%& = \ds e^{-(r+\widehat\lambda_j^{SH})t^{\star}(R)}\widehat b_{I-N_t}\frac{r+\widehat\lambda_{I-N_t}^0}{r+\widehat\lambda_{I-N_t}^{SH}} \lr{\frac{\rho_g c}{\widehat b_{I-N_t}(r+\widehat\lambda_{I-N_t}^0)}}^\frac{r+\widehat\lambda_{I-N_t}^{SH}}{r+\widehat\lambda_{I-N_t}^0}+\frac{B(I-N_t)}{r+\widehat\lambda_{I-N_t}^{SH}} \\
 & =  \frac{\rho_g}{\rho_b}^\frac{r+\widehat\lambda_{I-N_t}^{\rm SH}}{r+\widehat\lambda_{I-N_t}^0}\bigg(u^{b}-\ds\frac{B(I-N_t)}{r+\widehat\lambda_{I-N_t}^{\rm SH}}\bigg) +\ds\frac{B(I-N_t)}{r+\widehat\lambda_{I-N_t}^{SH}} = \widehat{\mathcal U}^\star_{I-N_t}(u^{b}). 
\end{align*}  

\vspace{0.5em}
\hspace{3em}{$\bullet$ \bf Step $3$:} Since $U_t^{b}(\Psi^\star)=u^{b}$, we have $\Psi^{\star}\in\overline{\mathcal A}^{b}(t,u^{b})$. Consider now a contract $\Psi=(D,\theta,h^{1,b},h^{2,b})\in\overline{\mathcal A}^{b}(t,u^{b})$. We recall that the value function of the bad bank under $\Psi$ satisfies
\begin{equation*}
\mathrm{d}U_s^{b}(\Psi)=\Big( rU_s^{b}(\Psi)-Bk_s^{\star,b}(\Psi) + \big(h_s^{1,b} + h_s^{2,b}(1-\theta_s)\big)\lambda_s^{k^{\star,b}(\Psi)} \Big)\mathrm{d}s -\rho_b \mathrm{d}D_s-h_s^{1,b}\mathrm{d}N_s-h_s^{2,b}\mathrm{d}H_s,
\end{equation*}
with $
k_s^{\star,b}(\Psi) = {\bf 1}_{\{ h_s^{1,b} + (1-\theta_s)h_s^{2,b} < b_s \}}.
$ Define the process 
\[
G_w:=\int_t^w \mathrm{e}^{-r(s-t)}\big( \rho_g \mathrm{d}D_s + k_s^{\star,g}(\Psi) B \mathrm{d}s \big) + \mathrm{e}^{-r(w-t)} \widehat{\mathcal U}^\star_{I-N_w}(U_w^{b}(\Psi)),\ w\in[t,\tau]. 
%\mathrm{d}^{-r(w-t)} \widehat{\mathcal U}^\star_{I-N_w}(U_w^{b}(\Psi)),\ w\in[t,\tau]. 
\] 
Observe we can rewrite the second term in the following form (with the convention $\tau^{N_t}=t$, $\tau^{N_w+1}=w$)
\begin{align*}
\mathrm{e}^{-r(w-t)}\widehat{\mathcal U}^\star_{I-N_w}(U_w^{b}(\Psi)) =& \sum_{i=N_t}^{N_w} \mathrm{e}^{-r(\tau^{i+1}-t)}\widehat{\mathcal U}^\star_{I-i}\Big(U_{(\tau^{i+1})^-}^{b}(\Psi)\Big) - \mathrm{e}^{-r(\tau^i-t)} \widehat{\mathcal U}^\star_{I-i}\Big(U_{\tau^i}^{b}(\Psi)\Big) \\
& + \sum_{i=N_t}^{N_w-1} e^{-r(\tau^{i+1}-t)}\Big(  \widehat{\mathcal U}^\star_{I-(i+1)}\Big(U_{\tau^{i+1}}^{b}(\Psi)\Big) - \widehat{\mathcal U}^\star_{I-i}\Big(U_{(\tau^{i+1})^-}^{b}(\Psi)\Big) \Big)  + \widehat{\mathcal U}^\star_{I-N_t}\Big(U_t^{b}(\Psi)\Big).
\end{align*}
Since the functions $\widehat{\mathcal U}^\star_j$ are $C^1$, we can apply It\^o's formula on the intervals $[\tau^i\wedge\tau,\tau^{i+1}\wedge\tau)$ with $i\in\{N_t,\dots,N_w\}$ to obtain an integral expression for the first sum. Regarding the second sum, observe that
\begin{align*}
&  \widehat{\mathcal U}^\star_{I-(i+1)}\Big(U_{\tau^{i+1}}^{b}(\Psi)\Big) - \widehat{\mathcal U}^\star_{I-i}\Big(U_{(\tau^{i+1})^-}^{b}(\Psi)\Big) \\
&=  \Big( \widehat{\mathcal U}^\star_{I-(i+1)}\Big(U_{(\tau^{i+1})^-}^{b}(\Psi)-h_{\tau^{i+1}}^{1,b}\Big) - \widehat{\mathcal U}^\star_{I-i}\Big(U_{(\tau^{i+1})^-}^{b}(\Psi)\Big)\Big) \Delta N_{\tau^{i+1}}  - \widehat{\mathcal U}^\star_{I-(i+1)}\Big(U_{(\tau^{i+1})^-}^{b}(\Psi)-h_{\tau^{i+1}}^{1,b}\Big)\Delta H_{\tau^{i+1}} \\
 &=  \int_{\tau^i}^{\tau^{i+1}} \Big( \widehat{\mathcal U}^\star_{I-(i+1)}\Big(U_{s^-}^{b}(\Psi)-h_s^{1,b}\Big) - \widehat{\mathcal U}^\star_{I-i}\Big(U_{s^-}^{b}(\Psi)\Big)\Big)  \mathrm{d}N_s  - \int_{\tau^i}^{\tau^{i+1}} \widehat{\mathcal U}^\star_{I-(i+1)}\Big(U_{s^-}^{b}(\Psi)-h_s^{1,b}\Big)\mathrm{d}H_s. 
\end{align*}
Hence
\begin{align*}
G_{\tau\wedge v}  =&\   \widehat{\mathcal U}^\star_{I-N_t}(u^{b}) + \ds\sum_{i=N_t}^{I-1}\int_{\tau^i\wedge v}^{\tau^{i+1}\wedge v} \mathrm{e}^{-r(s-t)}\Big(\rho_g-\rho_b \big(\widehat{\mathcal U}^\star_{I-i}\big)^\prime\big(U_s^{b}(\Psi)\big)\Big)\mathrm{d}D_s  +  \ds\sum_{i=N_t}^{I-1}\int_{\tau^i\wedge v}^{\tau^{i+1}\wedge v} \mathrm{e}^{-r(s-t)}\Big(k_s^{\star,g}(\Psi)B-r\widehat{\mathcal U}^\star_{I-i}\big(U_s^{b}(\Psi)\big) \Big)\mathrm{d}s \\
& +  \ds\sum_{i=N_t}^{I-1}\int_{\tau^i\wedge v}^{\tau^{i+1}\wedge v} \mathrm{e}^{-r(s-t)} \lambda_s^{k^{\star,g}(\Psi)}\Big(  \theta_s\widehat{\mathcal U}^\star_{I-i-1}\big(U_{s^-}^{b}(\Psi)-h_s^{1,b}\big)-\widehat{\mathcal U}^\star_{I-i}\big(U_s^{b}(\Psi)\big) \Big) \mathrm{d}s \\
& +  \sum_{i=N_t}^{I-1}\int_{\tau^i\wedge v}^{\tau^{i+1}\wedge v} \mathrm{e}^{-r(s-t)}\widehat{\mathcal U}^{\star\prime}_{I-i}\big(U_s^{b}(\Psi)\big)\Big( rU_s^{b}(\Psi)-Bk_s^{\star,b}(\Psi)+\lambda_{s}^{k^{\star,b}(\Psi)}(h_s^{1,b}+(1-\theta_s)h_s^{2,b}) \Big)\mathrm{d}s \\
& +  \sum_{i=N_t}^{I-1}\int_{\tau^i\wedge v}^{\tau^{i+1}\wedge v} \mathrm{e}^{-r(s-t)}\Big(  \widehat{\mathcal U}^\star_{I-i-1}\big(U_{s^-}^{b}(\Psi)-h_s^{1,b}\big)-\widehat{\mathcal U}^\star_{I-i}\big(U_{s^-}^{b}(\Psi)\big) \Big) \big(\mathrm{d}N_s-\lambda_s^{k^{\star,g}(\Psi)}\mathrm{d}s\big) \\
& -  \sum_{i=N_t}^{I-1}\int_{\tau^i\wedge v}^{\tau^{i+1}\wedge v} \mathrm{e}^{-r(s-t)}\widehat{\mathcal U}^\star_{I-i-1}\big(U_{s^-}^{b}(\Psi)-h_s^{1,b}\big)\big(\mathrm{d}H_s-\lambda_s^{k^{\star,g}(\Psi)}(1-\theta_s)\mathrm{d}s\big).
\end{align*}
We know that the derivative of every $\widehat{\mathcal U}^\star_j$ is greater than $\rho_g/\rho_b$ by definition, and since $D$ is non--decreasing, the first sum of integrals is non--positive. Also, the functions $\widehat{\mathcal U}^\star_j$ are solutions of the system of HJB equations, which implies that for any admissible contract the second and the third sum of integrals are also non--positive. We deduce 
\begin{align}\label{corcho17} 
\nonumber G_{\tau\wedge v}  \leq & ~ \widehat{\mathcal U}^\star_{I-N_t}(u^{b})  + \ds\sum_{i=N_t}^{I-1}\int_{\tau^i\wedge v}^{\tau^{i+1}\wedge v} \mathrm{e}^{r(t-s)}\Big( \widehat{\mathcal U}^\star_{I-i-1}\big(U_{s^-}^{b}(\Psi)-h_s^{1,b}\big)-\widehat{\mathcal U}^\star_{I-i}\big(U_{s^-}^{b}(\Psi)\big) \Big)\big(\mathrm{d}N_s-\lambda_s^{k^{\star,g}}\mathrm{d}s\big) \\ 
& -  \ds\sum_{i=N_t}^{I-1}\int_{\tau^i\wedge v}^{\tau^{i+1}\wedge v} \mathrm{e}^{-r(s-t)}\widehat{\mathcal U}^\star_{I-i-1}\big(U_{s^-}^{b}(\Psi)-h_s^{1,b}\big)\big(\mathrm{d}H_s-\lambda_s^{k^{\star,g}(\Psi)}(1-\theta_s)\mathrm{d}s\big).
\end{align}  
Define $\lambda:=\max_{1\leq j\leq I} \widehat\lambda_j^{\rm SH}$. For every $i$ we have that, recalling that the functions $\widehat{\mathcal U}^\star_j$ are non--decreasing and null at $0$
\begin{align*}
\E^{\P^{k^{\star,g}}}\bigg[ \int_t^\tau \mathrm{e}^{-r(s-t)}\Big|\widehat{\mathcal U}^\star_{I-i-1}\big(U_s^{b}(\Psi)-h_s^{1,b}\big)\Big|\mathrm{d}s \bigg| \mathcal G_t\bigg] 
% \leq \E^{\P^{k^{\star,g}}}\bigg[ \int_t^\tau \mathrm{e}^{-r(s-t)}\frac{\rho_g}{\rho_b}U_s^{b}(\Psi)\mathrm{d}s \bigg| \mathcal G_t \bigg] 
\leq \E^{\P^{k^{\star,g}}}\bigg[ \int_t^\tau \mathrm{e}^{-r(s-t)}\frac{\rho_g}{\rho_b}u^{b} \mathrm{e}^{(r+\lambda)s}\mathrm{d}s \bigg| \mathcal G_t\bigg],
\end{align*} 
which is finite. Indeed, we have between two consecutive jump times of $N$
\begin{align*} 
\mathrm{d}U_s^{b}(\Psi) & =  \Big(rU_s^{b}(\Psi)-Bk_s^{\star,b}(\Psi)+(h_s^{1,b}+(U_s^{b}(\Psi)-h_s^{1,b})(1-\theta_s))\lambda_s^{k^{\star,b}(\Psi)}\Big)\mathrm{d}s-\rho_b\mathrm{d}D_s \\ 
& \leq  \Big(rU_s^{b}(\Psi)+h_s^{1,b}\lambda_s^{k^{\star,b}(\Psi)}+(U_s^{b}(\Psi)-h_s^{1,b})(1-\theta_s)\lambda_s^{k^{\star,b}(\Psi)}\Big)\mathrm{d}s \\ 
& =  U_s^{b}(\Psi)\left(r+(1-\theta_s)\lambda_s^{k^{\star,b}(\Psi)}\right)ds+h_s^{1,b}\theta_s\lambda_s^{k^{\star,b}(\Psi)}\mathrm{d}s  \leq  U_s^{b}(\Psi)\left(r+\lambda_s^{k^{\star,b}(\Psi)}\right)\mathrm{d}s,
\end{align*}
where we used the facts that $h_s^{1,b}\in[0,U_s^{b}(\Psi)]$, the functions $\widehat{\mathcal U}^\star_j$ are non--decreasing and $U_s^{b}(\Psi)$ is bounded from below and has positive jumps. Similarly 
\begin{align*}
& \E^{\P^{k^{\star,g}}}\bigg[ \ds\int_t^\tau \mathrm{e}^{-r(s-t)}\Big|\widehat{\mathcal U}^\star_{I-i-1}\big(U_{s^-}^{b}(\Psi)-h_s^{1,b}\big)-\widehat{\mathcal U}^\star_{I-i}\big(U_{s^-}^{b}(\Psi)\big)\Big|\mathrm{d}s \bigg| \mathcal G_t\bigg] \\ 
\leq &~ \E^{\P^{k^{\star,g}}}\bigg[ \ds\int_t^\tau \mathrm{e}^{-r(s-t)}\Big|\widehat{\mathcal U}^\star_{I-i-1}\big(U_{s^-}^{b}(\Psi)-h_s^{1,b}\big)\Big|\mathrm{d}s \bigg| \mathcal G_t \bigg]+\E^{\P^{k^{\star,g}}}\bigg[\ds\int_t^\tau \mathrm{e}^{-r(s-t)}\Big|\widehat{\mathcal U}^\star_{I-i}\big(U_{s^-}^{b}(\Psi)\big)\Big|\mathrm{d}s \bigg| \mathcal G_t \bigg] \\
\leq &~ \E^{\P^{k^{\star,g}}}\bigg[\ds\int_t^\tau \mathrm{e}^{-r(s-t)}\frac{\rho_g}{\rho_b}U_s^{b}(\Psi)\mathrm{d}s \bigg| \mathcal G_t\bigg]+\E^{\P^{k^{\star,g}}}\bigg[\ds\int_t^\tau \mathrm{e}^{-r(s-t)}\frac{\rho_g}{\rho_b}U_s^{b}(\Psi)\mathrm{d}s \bigg| \mathcal G_t\bigg]   \leq 2\E^{\P^{k^{\star,g}}}\bigg[\ds\int_t^\tau \mathrm{e}^{-r(s-t)}\frac{\rho_g}{\rho_b}u^b \mathrm{r}^{(r+\lambda)s}\mathrm{d}s\bigg| \mathcal G_t \bigg]<\infty. 
\end{align*}
The stochastic integrals appearing above are martingales, and taking conditional expectation in (\ref{corcho17}) we get $\E^{\P^{k^{\star,g}}}[ G_{\tau\wedge v}| \mathcal G_t] \leq \widehat{\mathcal U}^\star_{I-N_t}(u^{b})$ and from Fatou's Lemma we obtain 
\[
\widehat{\mathcal U}^\star_{I-N_t}(u^{b})\geq \underset{v\to\infty}{\underline{\rm{lim}}} \E^{\P^{k^{\star,g}}}\big[ G_{\tau\wedge v}\big| \mathcal G_t\big]\geq\E^{\P^{k^{\star,g}}}\bigg[\underset{v\to\infty}{\underline{\rm{lim}}}G_{\tau\wedge v}\bigg| \mathcal G_t\bigg] = U_t^g(\Psi),
\] 
where we used that, $\P^{k^{\star,g}}-$a.s.
\begin{align*} 
\underset{v\to\infty}{\underline{\lim}}G_{\tau\wedge v}  = & \lim_{v\to\infty} \int_t^{\tau\wedge v} \mathrm{e}^{-r(s-t)}\big(\rho_g\mathrm{d}D_s+k_s^{\star,g}(\Psi)B\mathrm{d}s\big)+ {\bf 1}_{\{v<\tau\}}\mathrm{e}^{-r(v-t)}\widehat{\mathcal U}^\star_{I-N_v}(U_v^{b}(\Psi))  =  \int_t^{\tau} \mathrm{e}^{-r(s-t)}\big(\rho_g\mathrm{d}D_s+k_s^{\star,g}(\Psi)B\mathrm{d}s\big).
\end{align*}
\end{proof}

\vspace{0.5em}
We end this section with the

\vspace{0.5em}
\begin{proof}[Proof of Proposition \ref{credibleset}]
Notice that the inclusion ${\mathcal C}_t \subseteq  \overline{\mathcal C}_t$ holds by definition and therefore we only need to prove the reverse inclusion. We will make use of contracts with lump--sum payments to prove that every point in $ \overline{\mathcal C}_t$ belongs to the credible set ${\mathcal C}_t$. We start by defining the line with slope $\rho_g/\rho_b$ which passes through the point $(u^{b},u^g)=\big(\frac{B(I-N_t)}{r+\lambda_t^{\rm SH}},\frac{B(I-N_t)}{r+\lambda_t^{\rm SH}}\big)$
%This fact is consequence of the geometry of $ \widehat{{\mathcal C}_j}$, whose upper and lower boundary have slope greater than the quantity $\rho_g/\rho_b$. 
\[
\overline {\mathfrak M}_t(u^{b}):=\frac{\rho_g}{\rho_b}u^{b}+\frac{Bj}{r+\lambda_t^{\rm SH}}\bigg(1-\frac{\rho_g}{\rho_b}\bigg),
\]
and the sets
\begin{align*}
 \overline{{\mathcal C}}_t^1 &:= \left\{ (u^{b},u^g)\in  {\mathcal{V}}_t \times {\mathcal{V}}_t: \overline {\mathfrak M}_t(u^{b})\leq u^g \leq {\mathfrak U}_{t}(u^{b}) \right\}, \  \overline{{\mathcal C}}_t^2 := \left\{ (u^{b},u^g)\in  {\mathcal{V}}_t \times {\mathcal{V}}_t:\ {\mathfrak L}_t(u^{b})\leq u^g \leq \overline {\mathfrak M}_t(u^{b}) \right\}.
\end{align*}
From Section \ref{credibleregiondelay} in the Appendix, we know that $ \overline{{\mathcal C}}_t^1\subseteq {\mathcal C}_j$. Indeed, every point from the upper boundary $ {\mathfrak U}_t$ belongs to the credible set, and if we perturb a contract $\Psi=(\theta,D)$ only by adding a lump--sum payment $\varepsilon$ at time $t$, that is $\mathrm{d}D_s^{\Psi'}={\bf 1}_{\left\{s=t\right\}}\varepsilon + \mathrm{d}D_s^{\Psi}$, then the values of the banks under $\Psi'$ are $U_t^g(\Psi')=u^g+\varepsilon\rho_g$ and $U_t^b(\Psi')=u^{b}+\varepsilon\rho_b$, so $(U_t^{b}(\Psi'),U_t^g(\Psi'))=(u^b,u^g)+\varepsilon(\rho_b,\rho_g)$. We use this idea to prove also that 
$ \overline{{\mathcal C}}_t^2\subseteq {\mathcal C}_j$. From Proposition \ref{lowerboundary}, we know that the graph of ${\mathfrak L}_t$ is contained in ${\mathcal C}_t$. Therefore any point of the following form belongs to ${\mathcal C}_t$
\begin{equation}\label{desplazamiento-lumpsum}
(\widehat u^{b},\widehat u^g) = (u^{b},u^g) + \ell (\rho_b,\rho_g),~\ell\geq 0,~ u^g = {\mathfrak L}_t(u^{b}).
\end{equation} 
By the geometry of the lower boundary ${\mathfrak L}_t$, the set of points of the form \eqref{desplazamiento-lumpsum} is exactly $ \overline{{\mathcal C}}_t^2$.
\end{proof}

\section{Principal's value function on the boundary of the credible set}\label{sec:E}

\subsection{The optimal full--monitoring contract in pure moral hazard}
\label{app:dylan-previous}

The full-monitoring problem studied in \cite{pages2014mathematical}, considers that the only acceptable behaviour for the bank, from the social point of view, is that she never shirks away from her monitoring responsibilities. In other words, only contracts with a recommendation of $k=0$ are allowed. In this section there is no adverse selection, so there is only one type of bank, and the main result stands for both $i=b,g$, a good bank or a bad bank. The value function of the investor in this sub-problem is given by
\[
V_{t}^{{\rm pm},0}(R_0):=\underset{(D^i,\theta^i)\in\mathcal A^{0,i}(t,R_0)}{\rm{ess\ sup}}\ \mathbb E^{\mathbb P^{0}}\left[\left.\int_{t\wedge\tau}^\tau(I-N_s)\mu \mathrm{d}s-\mathrm{d}D^i_s\right|\mathcal G_t\right],
\]
where the set of admissible contracts $\mathcal A^{0,i}(t,R_0)$ is defined for $R_0\geq b_t$, by
\begin{align*}
\mathcal A^{0,i}(t,R_0):=\left\{(\theta^i,D^i)\in\Theta\times\mathcal D,\text{ s.t. $(\theta^i,D^i)$ enforces $k=0$ and $U_{t}^i(\theta^i,D^i)\geq R_0$}\right\}.
\end{align*}
Define for $x>0$ the functions
\[\phi(x):=\left(\frac{1+x}{1+2x}\right)^{\frac1x-1},\ \psi(x):=\frac{\phi(x)-x}{(1-x)\phi(x)}.\]
Let us then define some family of concave functions, unique solutions to the following system of ODEs
\begin{equation}\label{eq:vj}
\begin{cases} 
 \displaystyle \left( ru+\widehat \lambda_j^0 \widehat b_j \right) (v_j^i)^{\prime }(u)+j\mu -\widehat \lambda
_j^0\left( v^i_j(u)-\frac{u-\widehat b_j}{\widehat b_{j-1}}v^i_{j-1}(\widehat b_{j-1})\right) =0,\
u\in \left( \widehat b_j,\widehat b_j+\widehat b_{j-1}\right],\\[0.5em]
\displaystyle \left( ru+\widehat \lambda_j^0 \widehat b_j\right) (v^i_j)^{\prime }(u)+j\mu -\widehat \lambda_j^0 \left( v_{j}^i(u)-v_{j-1}^i(u-\widehat b_{j})\right) =0,\ u\in \left(\widehat b_j+\widehat b_{j-1},\gamma^i_{j}\right], \\[0.5em]
\displaystyle \rho_i(v^i_j)^{\prime}(u)+1=0,\ u>\gamma^i_j,
\end{cases}
\end{equation}
with initial values $\gamma^i_1:=\widehat b_1$ and
\[
v^i_1(u):=\overline{v}^i_1-\frac{1}{\rho_i}(u-\widehat b_1), u\geq \widehat b_1,\ \overline{v}^i_1:=\frac{\mu}{\widehat \lambda^0_1}-\frac{\widehat b_1(r+\widehat \lambda^0_1)}{\rho_i\widehat \lambda^0_1},
\]
and where for $j\geq 2$, $\gamma^i_j$ is defined recursively by 
\begin{equation}\label{eq:gammaj}
r/\widehat \lambda^0_j-1\in\partial v^i_{j-1}(\gamma^i_j-\widehat b_j),
\end{equation}
where $\partial v^i_{j-1}$ is the super--differential of the concave function $v^i_{j-1}$. The main result of \cite{pages2014mathematical} is 
\begin{Theorem}\label{thr:dylan-previous}
Assume that the $\big(\widehat \lambda^0_j\big)_{1\leq j\leq I}$ satisfy the following recursive conditions for $j\geq 2$
\[
\frac{r}{\widehat \lambda^0_j}-1\leq \frac{v^i_{j-1}\big(\widehat b_{j-1}\big)}{\widehat b_{j-1}}\text{ and }\left((v^i_{j-1})^{\prime}\Big(\widehat b_{j-1}\Big)\right)^+\frac{\widehat b_{j-1}}{v^i_{j-1}\Big(\widehat b_{j-1}\Big)}\leq \psi\bigg(\frac{r}{\widehat \lambda_j^0}\bigg).\]

Then, under Assumption \ref{assump}, the system \reff{eq:vj} is well--posed and we have
\[
V_t^{{\rm pm},0}(R_0)=\underset{u_{t}\geq R_0}{\sup}v^i_{I-N_t}\left(u_{t}\right),
\]
where $(u_s)_{s\geq t}$ is defined as the unique solution to the SDE on $[t,\tau)$
\begin{align*}
\mathrm{d}u_s=&\ \left(ru_s+\lambda^0_{I-N_s}\widehat b_{I-N_s}\right)\mathrm{d}s -\rho_i \mathrm{d}D^{\star,i}_s  \\
&-\left({\bf 1}_{\{u_s\in[\widehat b_{I-N_s},\widehat b_{I-N_s-1}+\widehat b_{I-N_s})\}}(u_s-\widehat b_{I-N_s-1})+\widehat b_{I-N_s}{\bf 1}_{\{u_s\in[\widehat b_{I-N_s}+\widehat b_{I-N_s-1},\gamma_{I-N_s}^i)\}}\right)\mathrm{d}N_s\\
&-\left({\bf 1}_{\{u_s\in[\widehat b_{I-N_s},\widehat b_{I-N_s-1}+\widehat b_{I-N_s})\}}\widehat b_{I-N_s-1}+(u_s-\widehat b_{I-N_s}){\bf 1}_{\{u_s\in[\widehat b_{I-N_s}+\widehat b_{I-N_s-1},\gamma_{I-N_s}^i)\}}\right)\mathrm{d}H_s,
\end{align*}
with initial value $u_{t}$ at $t$, and where we defined for $s\in [t,\tau)$ and $j=1,\dots,I$
\begin{align*}
&D^{\star,i}_s:={\bf 1}_{\{s=t\}}\frac{(u_t-\gamma_{I-N_t}^i)^+}{\rho_i}+\int_t^s\delta^{I-N_r}_i(u_r)\mathrm{d}r,\ \theta^\star_s:=\theta^{I-N_s}_i(u_s),\\
&\delta^j_i(u):={\bf 1}_{\{u=\gamma_j^i\}}\frac{\widehat \lambda^0_j\widehat b_j+r\gamma^i_j}{\rho_i},\ \theta^j_i(u):={\bf 1}_{\{u\in[\widehat b_j,\widehat b_{j-1}+\widehat b_j)\}}\frac{u-\widehat b_j}{\widehat b_{j-1}}+{\bf 1}_{\{u\in[\widehat b_j+\widehat b_{j-1},\gamma_j^i)\}}.
\end{align*}
\end{Theorem}

\subsection{Proofs of the main results}

We start this section with the

\vspace{0.5em}
\begin{proof}[Proof of Proposition \ref{prop:value function lb1}]
Consider any time $t\geq 0$ and take any $u^{b,c}\geq C(I-N_t)$, as well as
some $\Psi_g\in\widehat{\mathcal A}^{g}(t,\widehat{\mathfrak L}_{I-N_t}(u^{b,c}),u^{b,c})$. From Lemma \ref{lemma:contracts lb2}, we know that the components of $\Psi_g$ must satisfy $\theta^g\equiv1$ and that both banks shirk under $\Psi_g$. The payments determine the utility of the banks and the following holds by definition
\begin{align*}
 \E^{\P^{k^{\rm SH}}}\bigg[\int_t^{\tau^I} \mathrm{e}^{-r(s-t)} \mathrm{d}D_s^g \bigg|\mathcal G_t\bigg]=\frac{ u^{b,c} - C(I-N_t) }{\rho_b} . 
\end{align*}
Besides, the utility of the investor under the contract $\Psi_g$ is
\[
\E^{\P^{k^{\rm SH}}}\bigg[\int_t^{\tau^I} (\mu (I-N_s)\mathrm{d}s - \mathrm{d}D_s^g) \bigg| \mathcal G_t\bigg]= \ds\sum_{i=N_t}^{I-1} \frac{\mu(I-i)}{\widehat\lambda_{I-i}^{\rm SH}} - \E^{\P^{k^{\rm SH}}} \bigg[\int_t^{\tau^I} \mathrm{d}D_s^g \bigg| \mathcal G_t\bigg].
\]
Now, observe that
\[
\E^{\P^{k^{\rm SH}}} \bigg[\int_t^{\tau^I} \mathrm{d}D_s^g\bigg| \mathcal G_t \bigg] \geq \E^{\P^{k^{\rm SH}}} \bigg[\int_t^{\tau^I} \mathrm{e}^{-r(s-t)} \mathrm{d}D_s^g \bigg| \mathcal G_t\bigg]= \frac{ u^{b,c} - C(I-N_t) }{\rho_b},
\]
and the equality holds if and only if $D^g$ has a jump at time $t$ of size $\frac{ u^{b,c} - C(I-N_t) }{\rho_b}$ and $dD_s^g=0$ for every $s>t$. That means that it is optimal for the investor to use a contract with an initial lump--sum payment and to pay nothing afterwards. Consequently, the value function of the investor on the lower boundary is given by
\[
V^{\mathfrak L,g}_t(u^{b,c}) = \ds\sum_{i=N_t}^{I-1} \frac{\mu(I-i)}{\widehat\lambda_{I-i}^{\rm SH}} - \bigg( \frac{ u^{b,c} - C(I-N_t) }{\rho_b} \bigg).
\]
\end{proof}

\vspace{0.5em}
We continue this section with the

\vspace{0.5em}
\begin{proof}[Proof of Proposition \ref{prop:value function lb2}] 
Consider any time $t\geq 0$. Take any $u^{b,c}\in [c(I-N_t,1),C(I-N_t))$, and $\Psi_g\in\widehat{\mathcal A}^{g}(t,u^{b,c},u^{b,c})$. From Lemma \ref{lemma:contracts lb1}, we know that the components of $\Psi_g$ must satisfy $\mathrm{d}D_s^g=0$ for all $s\geq t$ and that both banks will shirk under this contract. Then, $\theta^g$ determines the continuation utilities of the banks in the following way
\[
u^{b,c} = \E^{\P^{k^{\rm SH}}}\bigg[ \int_t^\tau \mathrm{e}^{-r(s-t)} B(I-N_s)\mathrm{d}s \bigg|\mathcal G_t\bigg],
\]
so in this case, the problem (\ref{eq:lb investor problem}) reduces to
\[
(P)\ \ \sup_{\theta\in\Theta} \E^{\P^{k^{\rm SH}}}\bigg[\int_t^\tau \mu(I-N_s)\mathrm{d}s  \bigg|\mathcal G_t\bigg],\  
\textrm{s.t} \; \E^{\P^{k^{\rm SH}}}\bigg[\int_t^\tau \mathrm{e}^{-r(s-t)}B(I-N_s)\mathrm{d}s \bigg|\mathcal G_t\bigg]= u^{b,c}.
\]
Next, we rewrite the objective function in a more convenient way
\begin{align*}
&  \E^{\P^{k^{\rm SH}}}\bigg[ \int_t^\tau \mu(I-N_s)\mathrm{d}s \bigg|\mathcal G_t\bigg] \\
%& =  \E^{\P^{k^{SH}}}\lrs{ \left. \int_{t}^{\tau_{N_t+1}} \mu(I-N_t)ds + \ds\sum_{i=N_t+1}^{I-1} {\bf 1}_{\{ \tau > \tau_i \}} \int_{\tau_i}^{\tau_{i+1}} \mu(I-i)ds \right|\mathcal G_t } \\
& =  \mu(I-N_t) \E^{\P^{k^{\rm SH}}}\big[ \tau^{N_t+1} - t \big|\mathcal G_t\big] + \ds\sum_{i=N_t+1}^{I-1} \mu(I-i) \E^{\P^{k^{\rm SH}}}\big[ {\bf1}_{\{ \tau > \tau^i \}}(\tau^{i+1}-\tau^i) \big|\mathcal G_t\big]  \\
%& =  \frac{\mu(I-N_t)}{\widehat\lambda_{I-N_t}^{SH}} + \ds\sum_{i=N_t+1}^{I-1} \mu(I-i) \E^{\P^{k^{SH}}}\lrs{\left.\E^{\P^{k^{SH}}}\lrs{ \left. {\bf 1}_{\{ \tau > \tau_i \}}(\tau_{i+1}-\tau_i) \right| \mathcal G_{\tau_i}}\right| \mathcal G_t}  \\
& =  \frac{\mu(I-N_t)}{\widehat\lambda_{I-N_t}^{\rm SH}} + \ds\sum_{i=N_t+1}^{I-1} \mu(I-i) \E^{\P^{k^{\rm SH}}}\bigg[\E^{\P^{k^{\rm SH}}}\big[{\bf1}_{\{ \tau > \tau^i \}} \big| \Gc_{\tau^i}\big]     \E^{\P^{k^{\rm SH}}}\big [ \tau^{i+1}-\tau^i \big| \mathcal G_{\tau^i}\big]\bigg| \mathcal G_t\bigg]   =  \frac{\mu(I-N_t)}{\widehat\lambda_{I-N_t}^{\rm SH}} + \ds\sum_{i=N_t+1}^{I-1} \frac{\mu(I-i)}{\widehat\lambda_{I-i}^{\rm SH}} \E^{\P^{k^{\rm SH}}}[\theta_{\tau^i}| \mathcal G_t].
\end{align*}
We do the same with the constraint
\begin{align*}
  \E^{\P^{k^{\rm SH}}}\bigg[\int_t^\tau \mathrm{e}^{-r(s-t)}B(I-N_s)\mathrm{d}s\bigg| \mathcal G_t\bigg]
 &=  \E^{\P^{k^{\rm SH}}}\bigg[ \int_{t}^{\tau^{N_t+1}} B(I-N_t)\mathrm{e}^{-r(s-t)}\mathrm{d}s + \ds\sum_{i=N_t+1}^{I-1} {\bf 1}_{\{ \tau > \tau^i \}} \int_{\tau^i}^{\tau^{i+1}} \mathrm{e}^{-r(s-t)}B(I-i)\mathrm{d}s \bigg| \mathcal G_t\bigg] \\
% &=  \frac{B(I-N_t)}{r}\E^{\P^{k^{SH}}}\Big[1-e^{-r(\tau_{N_t+1}-t)}\Big| \mathcal G_t\Big] + \ds\sum_{i=N_t+1}^{I-1} \frac{B(I-i)}{r} \E^{\P^{k^{SH}}}\Big[ {\bf 1}_{\{ \tau > \tau_i \}}\Big(e^{-r(\tau_i-t)}-e^{-r(\tau_{i+1}-t)}\Big) \Big| \mathcal G_t\Big]  \\
 &=  \frac{B(I-N_t)}{r+\widehat\lambda_{I-N_t}^{\rm SH}} + \ds\sum_{i=N_t+1}^{I-1} \frac{B(I-i)}{r} \E^{\P^{k^{\rm SH}}}\bigg[\E^{\P^{k^{\rm SH}}}\Big[{\bf 1}_{\{ \tau > \tau^i \}}\big(\mathrm{e}^{-r(\tau^i-t)}-\mathrm{e}^{-r(\tau^{i+1}-t)}\big) \Big| \Gc_{\tau^i}\Big]\bigg| \mathcal G_t\bigg]  \\
 %&=  \frac{B(I-N_t)}{r+\widehat\lambda_{I-N_t}^{SH}} + \ds\sum_{i=N_t+1}^{I-1} \frac{B(I-i)}{r} \E^{\P^{k^{SH}}}\lrs{\left. e^{-r(\tau_i-t)} \E^{\P^{k^{SH}}}\lrs{ \left. {\bf1}_{\{ \tau > \tau_i \}} \right| \Gc_{\tau_i}}   \E^{\P^{k^{SH}}}\lrs{ \left. 1-e^{-r(\tau_{i+1}-\tau_i)} \right| \Gc_{\tau_i}}\right| \mathcal G_t}  \\
 &=  \frac{B(I-N_t)}{r+\widehat\lambda_{I-N_t}^{\rm SH}} + \ds\sum_{i=N_t+1}^{I-1} \frac{B(I-i)}{r+\widehat\lambda_{I-i}^{\rm SH}} \E^{\P^{k^{\rm SH}}}\big[\theta_{\tau^i} \mathrm{e}^{-r(\tau^i-t)} \big| \mathcal G_t\big].
\end{align*}
So we obtain the following expression for our problem
\[
(P)\begin{cases}
\ds\sup_{\theta\in\Theta} \ds\frac{\mu(I-N_t)}{\widehat\lambda_{I-N_t}^{\rm SH}} + \ds\sum_{i=N_t+1}^{I-1} \frac{\mu(I-i)}{\widehat\lambda_{I-i}^{\rm SH}} \E^{\P^{k^{\rm SH}}}[\theta_{\tau^i}| \mathcal G_t],\\[0.5em]
\textrm{s.t}  \ds\frac{B(I-N_t)}{r+\widehat\lambda_{I-N_t}^{\rm SH}} + \ds\sum_{i=N_t+1}^{I-1} \frac{B(I-i)}{r+\widehat\lambda_{I-i}^{\rm SH}} \E^{\P^{k^{\rm SH}}}\big[\theta_{\tau^i} \mathrm{e}^{-r(\tau^i-t)} \big| \mathcal G_t\big] = u^{b,c}.
\end{cases}
\]
We do not know how to solve $(P)$ directly, so we will define its dual problem, characterise its solution and show that the duality gap is zero. In order to do that, we define the Lagrangian function $L:\Theta\times\R\times\Omega\longrightarrow\R$ as follows
\begin{align*}
L(\theta,\nu,\omega) := & -\frac{\mu(I-N_t(\omega))}{\widehat\lambda_{I-N_t(\omega)}^{\rm SH}}  - \ds\sum_{i=N_t(\omega)+1}^{I-1} \frac{\mu(I-i)}{\widehat\lambda_{I-i}^{\rm SH}}\E^{\P^{k^{\rm SH}}}[\theta_{\tau^i}| \mathcal G_t](\omega) \\
& + \nu \bigg( \frac{B(I-N_t(\omega))}{r+\widehat\lambda_{I-N_t(\omega)}^{\rm SH}} + \ds\sum_{i=N_t(\omega)+1}^{I-1} \frac{B(I-i)}{r+\widehat\lambda_{I-i}^{\rm SH}} \E^{\P^{k^{S\rm H}}}\big[\theta_{\tau^i} \mathrm{e}^{-r(\tau^i-t)} \big| \mathcal G_t\big](\omega) - u^{b,c}\bigg),
\end{align*}
and also define the dual function and the dual problem respectively as
\[
g(\nu,\omega)  :=  \inf_{\theta\in\Theta} L(\theta,\nu,\omega),~ (D)\; 
\ds\sup_{\nu\in\R}  g(\nu,\omega)
\]
Then, we have the weak duality inequality (where val denotes the value of the optimisation problem)
\[
- \textrm{val}(P) = \inf_{\theta\in\Theta} \sup_{\nu\in\R} L(\theta,\nu,\omega) \geq \sup_{\nu\in\R} \inf_{\theta\in\Theta} L(\theta,\nu,\omega) = \textrm{val}(D).
\]
We rewrite the dual function as follows
\begin{align*}
g(\nu,\omega)  =  -\frac{\mu(I-N_t(\omega))}{\widehat\lambda_{I-N_t(\omega)}^{\rm SH}} + \nu \bigg(\frac{B(I-N_t(\omega))}{r+\widehat\lambda_{I-N_t(\omega)}^{\rm SH}} - u^{b,c}\bigg) + \inf_{\theta\in\Theta} \ds\sum_{i=N_t(\omega)+1}^{I-1} \int_\Omega \theta_{\tau^i}(\widetilde\omega) \bigg(\nu \frac{B(I-i)}{r+\widehat\lambda_{I-i}^{\rm SH}}\mathrm{e}^{-r(\tau^i(\widetilde\omega)-t)} - \frac{\mu(I-i)}{\widehat\lambda_{I-i}^{\rm SH}} \bigg) \mathrm{d}\P_{t,\omega}^{\rm SH}(\widetilde\omega),
\end{align*}
where $\P_{t,\omega}^{SH}$ is a regular conditional probability distribution for the conditional expectation with respect to the raw (that is to say not completed) version of $\Gc_t$. We have easily that it is optimal to set the optimal control $\theta^\nu$ to be $
\theta_{\tau^i}^\nu(\widetilde\omega) := \mathbf{1}_ {\widetilde\omega \in A_\nu^i}(\widetilde\omega)$, where the set $A_\nu^i$ is defined by
\[
A_\nu^i :=\begin{cases}
\displaystyle \Omega,\ \text{if }\nu < \frac{\mu}{B}\frac{r+\widehat\lambda_{I-i}^{\rm SH}}{\widehat\lambda_{I-i}^{\rm SH}},\\
\displaystyle\bigg\{ \widetilde \omega:   \tau^i(\widetilde\omega) - t > \frac{1}{r}\ln\bigg( \frac{\nu B \widehat\lambda_{I-i}^{\rm SH}}{ \mu(r+\widehat\lambda_{I-i}^{\rm SH}) } \bigg)    \bigg\},\ \text{if }\nu \geq \frac{\mu}{B}\frac{r+\widehat\lambda_{I-i}^{\rm SH}}{\widehat\lambda_{I-i}^{\rm SH}}.
\end{cases}
\]
Therefore, for any $\nu\in\R$ the dual function has the following form,
%\begin{align*}
%g(\nu,\omega) = & -\frac{\mu(I-N_t(\omega))}{\widehat\lambda_{I-N_t(\omega)}^{SH}} + \nu \lr{\frac{B(I-N_t(\omega))}{r+\widehat\lambda_{I-N_t(\omega)}^{SH}} - u^{b,c}} \\
%& + \ds\sum_{i=N_t(\omega)+1}^{I-1}\int_{A_i(\nu)} \lr{ \nu \frac{B(I-i)}{r+\widehat\lambda_{I-i}^{SH}}e^{-r(\tau_i(\widetilde\omega)-t)} - \frac{\mu(I-i)}{\widehat\lambda_{I-i}^{SH}} } d\P_{t,\omega}^{SH}(\widetilde\omega).
%\end{align*}
 using that the conditional law of $\tau^i-t$ given $\Gc_t$ is the same as the law of $\tau^i$
\begin{align}
  g(\nu,\omega) =  -\frac{\mu(I-N_t(\omega))}{\widehat\lambda_{I-N_t(\omega)}^{\rm SH}} + \nu \bigg(\frac{B(I-N_t(\omega))}{r+\widehat\lambda_{I-N_t(\omega)}^{\rm SH}} - u^{b,c}\bigg) + \ds\sum_{i=N_t(\omega)+1}^{I-1}  \int_{s_i(\nu)}^{\infty} \bigg(  \frac{\nu B(I-i)\mathrm{e}^{-rx}}{r+\widehat\lambda_{I-i}^{\rm SH}} - \frac{\mu(I-i)}{\widehat\lambda_{I-i}^{\rm SH}} \bigg) f_{\tau^i}(x)\mathrm{d}x.\label{g}
\end{align}
It is not difficult to see that $g$ is a continuous and differentiable function. As we want to maximise $g$ in the dual problem, we compute its derivative with respect to $\nu$ and we get
\begin{align*}
g^\prime(\nu,\omega)  
% &\ \frac{B(I-N_t(\omega))}{r+\widehat\lambda_{I-N_t(\omega)}^{SH}} - u^{b,c} + \ds\sum_{i=N_t(\omega)+1}^{I-1}  \int_{s_i(\nu)}^{\infty} \frac{B(I-i)}{r+\widehat\lambda_{I-i}^{SH}}e^{-rx}   f_{\tau_i}(x)dx \\
%&  - \sum_{i=N_t(\omega)+1}^{I-1} {s_i^\prime(\nu)}\lr{ \nu \frac{B(I-i)}{r+\widehat\lambda_{I-i}^{SH}}e^{-rs_i(\nu)} - \frac{\mu(I-i)}{\widehat\lambda_{I-i}^{SH}} } f_{\tau_i}(s_i(\nu)) \\
 = & \frac{B(I-N_t(\omega))}{r+\widehat\lambda_{I-N_t}^{\rm SH}} - u^{b,c} + \ds\sum_{i=N_t+1}^{I-1} \int_{s_i(\nu)}^{\infty}  \frac{B(I-i)}{r+\widehat\lambda_{I-i}^{\rm SH}}\mathrm{e}^{-rx}   f_{\tau^i}(x)\mathrm{d}x.
\end{align*}
Since $\nu\longmapsto s_i(\nu)$ is non--decreasing for any $i=1,\dots,I$, $g^\prime$ is non--increasing in $\nu$. Furthermore, since $u^{b,c}\geq c(I-N_t,1)$, we have the limit at $+\infty$ of $g^\prime$ is non--positive, and that its value for small $\nu$ is positive because $u^{b,c}< C(I-N_t)$ and 
\[
\frac{B(I-N_t(\omega))}{r+\widehat\lambda_{I-N_t}^{\rm SH}} + \ds\sum_{i=N_t+1}^{I-1} \int_{0}^{\infty}  \frac{B(I-i)}{r+\widehat\lambda_{I-i}^{\rm SH}}\mathrm{e}^{-rx}   f_{\tau^i}(x)\mathrm{d}x = C(I-N_t).
\]
Therefore, there is a unique value of $\nu$ that makes $g^\prime$ equal to $0$. Now, we compute for any $\nu$ the value of the constraint from the primal problem for the control $\theta^\nu$
\begin{align*}
\ds\sum_{i=N_t+1}^{I-1} \frac{B(I-i)}{r+\widehat\lambda_{I-i}^{\rm SH}} \E^{\P^{k^{\rm SH}}}\big[\theta_{\tau^i}^\nu \mathrm{e}^{-r(\tau^i-t)} \big|\mathcal G_t\big] =  \ds\sum_{i=N_t+1}^{I-1} \int_{s_i(\nu)}^{\infty} \frac{B(I-i)}{r+\widehat\lambda_{I-i}^{\rm SH}}\mathrm{e}^{-rx}   f_{\tau^i}(x)\mathrm{d}x,   
\end{align*}
so $\theta^\nu$ is feasible in problem $(P)$ if and only if $g^\prime(\nu,\omega)=0$. Next, we compute for $\theta^\nu$ the value of the objective function in the primal (minimisation) problem
\[
-\frac{\mu(I-N_t)}{\widehat\lambda_{I-N_t}^{\rm SH}} - \ds\sum_{i=N_t+1}^{I-1} \frac{\mu(I-i)}{\widehat\lambda_{I-i}^{\rm SH}} \E_t^{\P^{k^{\rm SH}}}\big[\theta_{\tau^i}^\nu\big] = -\frac{\mu(I-N_t)}{\widehat\lambda_{I-N_t}^{\rm SH}} - \ds\sum_{i=N_t+1}^{I-1}  \int_{s_i(\nu)}^\infty \frac{\mu(I-i)}{\widehat\lambda_{I-i}^{\rm SH}}  f_{\tau^i}(x)\mathrm{d}x.
\]
If this quantity is equal to $g(\nu,\cdot)$, the duality gap is zero. From (\ref{g}) we see that this happens if and only if 
\[
 \nu \bigg(\frac{B(I-N_t)}{r+\widehat\lambda_{I-N_t}^{\rm SH}} - u^{b,c} + \ds\sum_{i=N_t+1}^{I-1}  \int_{s_i(\nu)}^{\infty}  \frac{B(I-i)}{r+\widehat\lambda_{I-i}^{\rm SH}}\mathrm{e}^{-rx}  f_{\tau^i}(x)\mathrm{d}x \bigg) = 0 
 \iff \nu g^\prime(\nu,\cdot) = 0.
\]

We conclude that if $\nu\in\R$ is such that $g^\prime(\nu)=0$ then the control $\theta^\nu$ is optimal in the primal problem. 
\end{proof}

\vspace{0.5em}
We continue with the 

\vspace{0.5em}

\begin{proof}[Proof of Proposition \ref{ub absorbing}]
Define the process $\ell_s=\widehat{\mathfrak U}_{I-N_s}(U_s^{b,c}(\Psi_g))-U_s^g(\Psi_g)$ and note that $\ell_s\geq 0$ for every $s\geq 0$. We will prove that $\ell_t=0$ implies $\ell_v=0$ for every $v\geq t$. Assume thus that $\ell_t=0$. Following the same idea as in the proof of Theorem \ref{verification theorem}, we have for $v\geq t$ 
\begin{align*}
\ell_v  = & \  \ds\sum_{i=N_t}^{I-1}\int_{\tau^i\wedge v}^{\tau^{i+1}\wedge v} - \Big( rU_s^g(\Psi_g) - Bk_s^{\star,g}(\Psi_g) + [h_s^{1,g} + (1-\theta_s^g)h_s^{2,g}]\lambda_s^{k^{\star,g}(\Psi_g)} \Big)\mathrm{d}s \\
& +  \sum_{i=N_t}^{I-1}\int_{\tau^i\wedge v}^{\tau^{i+1}\wedge v} \widehat{\mathfrak U}_{I-i}^\prime(U_s^{b,c}(\Psi_g))\Big(rU_s^{b,c}(\Psi_g)-Bk_s^{\star,b,c}(\Psi_g)+\lambda_{I-i}^{k^{\star,b,c}(\Psi_g)}(h_s^{1,b,c}+(1-\theta^g_s)h_s^{2,b,c}) \Big)\mathrm{d}s \\
& +  \sum_{i=N_t}^{I-1}\int_{\tau^i\wedge v}^{\tau^{i+1}\wedge v} \Big(h_s^{1,g} + \widehat{\mathfrak U}_{I-i-1}(U_{s^-}^{b,c}(\Psi_g)-h_s^{1,b,c})-\widehat{\mathfrak U}_{I-i}(U_{s^-}^{b,c}(\Psi_g)) \Big) \mathrm{d}N_s \\
& +  \sum_{i=N_t}^{I-1}\int_{\tau^i\wedge v}^{\tau^{i+1}\wedge v} \Big( h_s^{2,g} - \widehat{\mathfrak U}_{I-i-1}(U_{s^-}^{b,c}(\Psi_g)-h_s^{1,b,c}) \Big)  \mathrm{d}H_s + \Big(\rho_g-\rho_b \widehat{\mathfrak U}_{I-i}^\prime(U_s^{b,c}(\Psi_g))\Big)\mathrm{d}D^g_s.
\end{align*}
Since the functions $\widehat{\mathfrak U}_i$ solve the system of HJB equations \eqref{DPequation}, and $\big(\rho_g-\rho_b\widehat{\mathfrak U}_i^\prime(U_s^{b,c}(\Psi_g))\big) \mathrm{d}D^g_s\leq0$ for every $s$, we have
\begin{align*}
\ell_v \leq & \ds\sum_{i=N_t}^{I-1}\int_{\tau^i\wedge v}^{\tau^{i+1}\wedge v} \Big( r\widehat{\mathfrak U}_{I-i}(U_s^{b,c}(\Psi_g)) -rU_s^g(\Psi_g) - [h_s^{1,g} + (1-\theta^g_s)h_s^{2,g}]\lambda_s^{k^{\star,g}(\Psi_g)} \Big)\mathrm{d}s \\
& -  \ds\sum_{i=N_t}^{I-1}\int_{\tau^i\wedge v}^{\tau^{i+1}\wedge v} \lambda_s^{k^{\star,g}(\Psi_g)}\Big(  \theta_s\widehat{\mathfrak U}_{I-i-1}(U_{s^-}^{b,c}(\Psi_g)-h_s^{1,b,c})-\widehat{\mathfrak U}_{I-i}(U_s^{b,c}(\Psi_g)) \Big) \mathrm{d}s \\
& +  \sum_{i=N_t}^{I-1}\int_{\tau^i\wedge v}^{\tau^{i+1}\wedge v} \Big( h_s^{1,g} + \widehat{\mathfrak U}_{I-i-1}(U_{s^-}^{b,c}(\Psi_g)-h_s^{1,b,c})-\widehat{\mathfrak U}_{I-i}(U_{s^-}^{b,c}(\Psi_g)) \Big) \mathrm{d}N_s  +   \Big( h_s^{2,g} - \widehat{\mathfrak U}_{I-i-1}(U_{s^-}^{b,c}(\Psi_g)-h_s^{1,b,c}) \Big)  \mathrm{d}H_s \\
= &    \ds\sum_{i=N_t}^{I-1}\int_{\tau^i\wedge v}^{\tau^{i+1}\wedge v}  \big(r+\lambda_s^{k^{\star,g}}\big)\Big( \widehat{\mathfrak U}_{I-i}(U_s^{b,c}(\Psi_g)) - U_s^g(\Psi_g) \Big) + \Big(h_s^{2,g} - \widehat{\mathfrak U}_{I-i-1}(U_{s}^{b,c}(\Psi_g)-h_s^{1,b,c}) \Big)\theta^g_s\lambda_s^{k^{\star,g}} \mathrm{d}s \\
& +  \sum_{i=N_t}^{I-1}\int_{\tau^i\wedge v}^{\tau^{i+1}\wedge v} \Big(h_s^{1,g} + \widehat{\mathfrak U}_{I-i-1}(U_{s^-}^{b,c}(\Psi_g)-h_s^{1,b,c})-\widehat{\mathfrak U}_{I-i}(U_{s^-}^{b,c}(\Psi_g)) \Big) \mathrm{d}N_s + \Big( h_s^{2,g} - \widehat{\mathfrak U}_{I-i-1}(U_{s^-}^{b,c}(\Psi_g)-h_s^{1,b,c}) \Big) \mathrm{d}H_s.
\end{align*}
Recall from Remark \ref{remark incentive compatibility} that on the upper boundary, we have 
\[
h_s^{1,g}= \widehat{\mathcal U}_{I-N_{s^-}}(U_{s^-}^{b,c}(\Psi_g))- \widehat{\mathcal U}_{I-N_{s^-}-1}(U_{s^-}^{b,c}(\Psi_g)-h_s^{1,b,c}(\Psi_g)),~ h_s^{2,g}= \widehat{\mathcal U}_{I-N_{s^-}-1}(U_{s^-}^{b,c}(\Psi_g)-h_s^{1,b,c}(\Psi_g)),
\]
so that for $i=N_t$ the drift of the right--hand side is $0$ in $[\tau^i,\tau^{i+1})$ and the jump at time $\tau^{i+1}$ is also $0$. It is easy to see that the same happens for every $i\in\{N_t,\dots,I\}$ and therefore $\ell_v\leq 0$ for every $v\geq 0$ which means $\ell_v=0$ for every $v\geq t$. 
\end{proof}

\vspace{0.5em}
We go on with the

\begin{proof}[Proof of Proposition \ref{ub contract}]

\vspace{0.5em}
$(i)$ We have from the proof of Proposition \ref{ub absorbing} that the processes $(\theta^g,h^{1,b,c},h^{2,b,c})$ are necessarily maximisers of the system of HJB equations \eqref{DPequation}. We can go back to the proof of Proposition \ref{prop:uj functions}, which is based on Corollary \ref{corollary:theta}, to observe that for $u^{b,c}<\widehat b_j$ the optimal $\theta\in C^j$ is uniquely given by $\theta=0$.

\vspace{0.5em}
$(ii)$ Observe that for every $(t,u^{b,c},u^g)\in[0,\tau]\times\widehat{\mathcal V}_{I-N_t}\times\widehat{\mathcal V}_{I-N_t}$ and $\Psi_g\in\widehat{\mathcal{A}}^g(t,u^g,u^{b,c})$ we have
\begin{align*}
U_t^{b,c}(\Psi_g) &\geq  \E^{\P^{k^{\star,g}(\Psi_g)}}\bigg[\int_t^\tau \mathrm{e}^{-r(s-t)}(\rho_b \mathrm{d}D_s^g+Bk_s^{\star,g}(\Psi_g)\mathrm{d}s)\bigg| \mathcal{G}_t\bigg] \\
&=  \frac{\rho_b}{\rho_g} U_t^g(\Psi_g) + \E^{\P^{k^{\star,g}(\Psi_g)}}\bigg[\int_t^\tau \mathrm{e}^{-r(s-t)}Bk_s^{\star,g}(\Psi_g)\mathrm{d}s\bigg| \mathcal{G}_t\bigg]\bigg( 1 - \frac{\rho_b}{\rho_g} \bigg) \geq  \frac{\rho_b}{\rho_g} U_t^g(\Psi_g).
\end{align*}
Then $U_{s_0}^{b,c}(\Psi_g)=\frac{\rho_g}{\rho_b} U_{s_0}^g(\Psi_g)$ implies that $
k_s^{\star,g}(\Psi_g)=k_s^{\star,b,c}(\Psi_g)=0, \text{ for every } s\in[s_0,\tau),
$
and in consequently 
\[
U_{s}^{b,c}(\Psi_g)=\frac{\rho_g}{\rho_b} U_{s}^g(\Psi_g)\geq b_s, \text{ for every } s\in[s_0,\tau).
\]
\end{proof}

\vspace{0.5em}
We continue with the

\begin{proof}[Proof of Proposition \ref{prop:value function ub}] We divide the proof in $2$ steps.

\vspace{0.5em}
\hspace{3em}{$\bullet$ \bf Step $1$:} We start with the region $u^{b,c}>\widehat b_{I-N_t}$. Let $\Psi_g=(D^g,\theta^g,h^{1,b,c},h^{2,b,c})\in\overline{\mathcal A}^{g}(t,u^{b,c})$ be such that $U_t^{b,c}(\Psi_g)=u^{b,c}\geq \widehat b_{I-N_t}$, $U_t^g(\Psi_g)= \widehat{\mathfrak U}_{I-N_t}(u^{b,c})$. From Proposition \ref{ub contract} we know that
\[
U_s^{b,c}(\Psi_g) \geq \widehat b_{I-N_s},~ k^{\star,b,c}(\Psi_g)=0,~ s\in[t,\tau).
\]
Therefore, Problem \eqref{eq:ub investor problem} is equivalent to
\[V_t^{\mathfrak{U},g}(u^{b,c}) = 
\underset{\Psi_g\in\overline{\mathcal A}^{g}(t,u^{b,c})}{\rm sup}\displaystyle\E^{\P}\bigg[ \int_t^\tau \mu(I-N_s)\mathrm{d}s - \int_t^\tau \mathrm{d}D_s^g \bigg],\; \textrm{s.t}\; \begin{cases} \displaystyle U_s^{b,c}(\Psi_g) \geq \widehat b_{I-N_s}, s\in[t,\tau), \\[0.5em]
 \displaystyle\E^{\P}\bigg[ \int_t^\tau \mathrm{e}^{-r(s-t)}\mathrm{d}D_s^g \bigg] = \frac{u^{b,c}}{\rho_b}.
\end{cases}
\]
This is exactly the problem considered in Pag\`es and Possama\"{i} \cite{pages2014mathematical} (see Theorem 3.15). We conclude that $ V_t^{\mathfrak{U},g}(u^{b,c}) =  v^b_{I-N_t}(u^{b,c}).$

\vspace{0.5em}
\hspace{3em}{$\bullet$ \bf Step $2$:} For the rest of the upper boundary, observe that the system of HJB equations associated to \eqref{eq:ub investor problem} is is given by $\widehat{\mathcal V}_0\equiv0$, and for any $1\leq j\leq I$
\begin{equation} \label{DPequation-valuefunction-upperboundary}
\min \Bigg\{ 
- \sup_{(\theta,h^1,h^2)\in C^{\mathfrak{U},j}} 
\left\{ 
\begin{array}{c} 
\widehat{\mathcal V}_j^\prime(u^{b,c}) \big( ru^{b,c} - Bk^{b,c} + (h^1 + (1-\theta)h^2) \widehat\lambda^{k^{b,c}}_j \big) \\[0.4em] 
+ \mu j + \widehat\lambda^{k^g}_j\theta \widehat{\mathcal V}_{j-1}(u^{b,c} - h^1) -\widehat\lambda^{k^g}_j \widehat{\mathcal V}_j(u^{b,c}) 
\end{array} 
\right\},  \
 \widehat{\mathcal V}_j^\prime(u^{b,c})+\frac{1}{\rho_b} 
\Bigg\}
=0,
\end{equation}
for every $u^{b,c}\geq \frac{Bj}{r+\widehat\lambda_j^{\rm SH}}$, with the boundary condition $\widehat{\mathcal V}_j(Bj/(r+\widehat\lambda_j^{\rm SH}))= \mu j/\widehat\lambda_j^{\rm SH},$
and where $$k^{b,c}:=j{\bf 1}_{\{h^1+(1-\theta)h^2<\widehat b_j\}},~ k^g:=j{\bf 1}_{\{\widehat{\mathcal U}_j^\star(u^{b,c})-\theta \widehat{\mathcal U}_{j-1}^\star(u^{b,c}-h^1)<\widehat b_j\}},$$ 
and the set of constraints $C^{\mathfrak{U},j}$ determined by Proposition \ref{ub contract} is defined by
\[
C^{\mathfrak{U},j}:=\bigg\{(\theta,h^1,h^2)\in[0,1]\times\R_+^2: h^1+h^2=u^{b,c},h^2\geq\frac{B(j-1)}{r+\widehat\lambda_{j-1}^{\rm SH}}, \theta{\bf 1}_{\{u^{b,c}< \widehat b_j\}}= (k^{b,c}+k^g){\bf 1}_{\{u^{b,c} \geq \widehat b_j\}}= 0 \bigg\} .
\]
Then, for any $u^{b,c}<\widehat b_j$, the diffusion equation in \eqref{DPequation-valuefunction-upperboundary} reduces to the ODE
\begin{equation} \label{investorODE}
0   =   \widehat{\mathcal V}_j^\prime(u^{b,c}) \big( \big(r+\widehat\lambda_j^{\rm SH}\big)u^{b,c}-B j \big)  - \widehat{\mathcal V}_j(u^{b,c})\widehat\lambda^{k^g}_j + \mu j ,
\end{equation}
with the boundary condition $\widehat{\mathcal V}_j \big(\frac{Bj}{r+\widehat \lambda_j^{\rm SH}} \big)=\frac{\mu j}{\widehat \lambda_j^{\rm SH}}$. If $u^{b,c}<x_j^\star$, we get that 
\[
\widehat{\mathcal V}_j(u^{b,c}) = \frac{\mu j}{\widehat\lambda_j^{\rm SH}} + C_1 \left( \left(\frac{r+\widehat\lambda_j^{\rm SH}}{\widehat\lambda_j^{\rm SH}}\right) u^{b,c} - \frac{B j}{ \widehat\lambda_j^{\rm SH}} \right)^\frac{\widehat\lambda_j^{\rm SH}}{ r + \widehat\lambda_j^{\rm SH} },
\]
for some $C_1\in\mathbb{R}$. If $u^{b,c} \in \left[ x_j^\star,\widehat b_j \right)$, equation \eqref{investorODE} is solved by
\[ 
\widehat{\mathcal V}_j(u^{b,c}) = \frac{\mu j}{\widehat\lambda_j^0} + C_2 \left( \left( \frac{r+\widehat\lambda_j^{SH}}{\widehat\lambda_j^0}\right) u^{b,c} - \frac{B j}{ \widehat\lambda_j^0} \right)^\frac{\widehat\lambda_j^0}{ r + \widehat\lambda_j^{SH} },
\]
for some $C_2\in\mathbb{R}$. The values of $C_1$ and $C_2$ for which the solution of equation \eqref{investorODE} is continuous are
%\tfrac pone la fracción sin displaystyle
\[
C_1 = \frac {  \frac{\mu j}{\widehat\lambda_j^0} - \frac{\mu j}{\widehat\lambda_j^{\rm SH}} +   \big( \frac{\rho_b}{\rho_g}  \big)^\frac{\widehat\lambda_j^0}{ r + \widehat\lambda_j^0 } \big( v_j^b(\widehat b_j) - \frac{\mu j}{\widehat\lambda_j^0}  \big) }{ \big( \frac{\rho_b}{\rho_g} \big)^{ \frac{\widehat\lambda_j^{\rm SH}}{ r + \widehat\lambda_j^0 } } \big(  \frac{ \widehat b_j (r+\widehat\lambda_j^0)}{\widehat\lambda_j^{\rm SH} } \big)^{ \frac{\widehat\lambda_j^{\rm SH}}{ r + \widehat\lambda_j^{\rm SH} } } },~ 
C_2 = \bigg( v_j^b(\widehat b_j) - \frac{\mu j}{\widehat\lambda_j^0} \bigg) \bigg( \widehat b_j\frac{r+\widehat\lambda_j^0}{\widehat\lambda_j^0} \bigg)^{-\frac{\widehat \lambda_j^0}{r+\widehat \lambda_j^{\rm SH}}}.
\]
It follows from the properties of the map $v_j^b$, that the resulting function $\widehat{\mathcal V}_j$ is a concave map with slope greater than $-1/\rho_b$ and therefore the family $\{\widehat{\mathcal V}_j\}_{1\leq j\leq I}$ is a solution of the system of HJB equations \eqref{DPequation-valuefunction-upperboundary}. It can be proved similarly as in the proof of Theorem \ref{verification theorem} (see also Theorem 3.15 in \cite{pages2014mathematical}), that the verification result holds for this family of functions. We therefore omit the proof of this result.
\end{proof}

\vspace{0.5em}
We continue with the

\begin{proof}[Proof of Proposition \ref{prop:value function lower boundary-b}]
By definition we have the set equality $\widehat{\mathcal A}^{g}(t,\widehat{\mathfrak L}_{I-N_t}(u^b),u^b)=\widehat{\mathcal A}^{b}(t,\widehat{\mathfrak L}_{I-N_t}(u^b),u^b)$. From Lemmas \ref{lemma:contracts lb1} and \ref{lemma:contracts lb2} we know that for every $\Psi_b\in\widehat{\mathcal A}^{b}(t,\widehat{\mathfrak L}_{I-N_t}(u^b),u^b)$, both agents always shirk under $\Psi_b$, therefore the objective functions in the definitions of $V^{\mathfrak L,g}_t(u^b)$ and $V^{\mathfrak L,b}_t(u^b)$ are also the same and equality holds.
\end{proof}

\vspace{1em}
We go on with the 

\vspace{0.5em}

\begin{proof}[Proof of Proposition \ref{prop:value function ub-b}]
The proof is identical to the proof of Proposition \ref{prop:value function ub}, with the only difference that since the principal is hiring the bad agent, for $u^b<\widehat b_j$ the ODE associated to the value function is
\begin{equation*} 
0   =   \widehat{\mathcal V}_j^\prime(u^b) \big( \big(r+\widehat\lambda_j^{\rm SH}\big)u^b-B j \big)  - \widehat{\mathcal V}_j(u^b)\widehat\lambda_j^{\rm SH} + \mu j ,
\end{equation*}
with the boundary condition $\widehat{\mathcal V}_j \big(\frac{Bj}{r+\widehat \lambda_j^{\rm SH}} \big)=\frac{\mu j}{\widehat \lambda_j^{\rm SH}}$.
\end{proof}

\vspace{0,5em}
We end this section with the

\begin{proof}[Proof of Proposition \ref{prop:optimal contracts lower boundary}]
The payments and the value of $\theta^\star$ in the case $u^b\geq C(I-N_t)$ are a direct consequence of the proof of Proposition \ref{prop:value function lb1}. From the proof of Proposition \ref{prop:value function lb2} we have that if $u^b < C(I-N_t)$ then 
\[
\theta_s^\star = {\bf 1}_{\Big\{ s - t > \frac{1}{r} \ln\Big( \frac{\nu(u^b)B\widehat \lambda_{I-N_t}^{\rm SH}}{\mu(r+\widehat \lambda_{I-N_t}^{\rm SH})} \Big) \Big\}}, 
\]
where $\nu(u^b)$ the solution of the associated dual problem. Since the quantity inside of the logarithm decreases with time, we have that $\theta^\star$ is a process which starts at zero, jumps to one at some instant and keeps constant afterwards. This means that if $\theta^\star$ jumps to one at some time $s$ and the project is still running, necessarily the continuation utility of the bad agent is equal to $C(I-N_s)$ because the project will continue until the last default. 
\end{proof}

\section{Extensions of the model}
\subsection{Endogenous reservation utility}

\begin{proof}[Proof of Proposition \ref{prop:reservationutilities}]
Define the dynamic version of $R_0^i$ by
\[
R^i_t := \sup_{k\in\mathfrak K} \mathbb E^{\P^k}\bigg[\int_t^{\tau^I} \mathrm{e}^{-r(s-t)}(\rho_i\mu(I-N_s)+Bk_s) \mathrm{d}s \bigg| \Fc_t \bigg].
\]
Observe that, thanks to our earlier results, we know that the previous expression depends on $t$ only through the value of $I-N_t$. Call then $\widehat{R}_{I-N_t}^i=R^i_t$, the value when there are $I-N_t$ loans left. The explicit value of $\widehat{R}^i_1$ and the optimal action of the agent of type $\rho_i$ in $(\tau^{I-1},\tau)$ were obtained in the study of short--term contracts with constant payments, in section \ref{sec:appcons}. Suppose now that $j>1$ and that the value of $\widehat{R}^i_{j-1}$ as well as the optimal action of the agent after default $\tau^{I-j}$ are known.

\vspace{0.5em} 
If the agent decides to monitor all the loans in $(\tau^{I-j},\tau^{I-j+1})$, his expected utility will be given by
\[
u^i(0):= \frac{\rho_i \mu j}{r+\widehat\lambda_j^0} + \frac{\widehat\lambda_j^0}{r+\lambda_j^0} \widehat{R}^i_{j-1}.
\]
The process $h^{1,i}(0)$ associated to this action is given by
\[
h^{1,i}(0):= u^i(0) - \widehat{R}^i_{j-1} = \frac{\rho_i \mu j}{r+\widehat\lambda_j^0} - \frac{r}{r+\widehat\lambda_j^0} \widehat{R}^i_{j-1}.
\]
Therefore, it is incentive compatible to monitor all the loans in $(\tau^{I-1},\tau)$ if and only if 
\[
h^{1,i}(0) \geq b_j ~ \iff ~ \rho_i \mu j - r \widehat{R}^i_{j-1} \geq b_j(r+\widehat\lambda_j^0).
\]
Similarly, if the agent chooses to shirk in $(\tau^{I-j},\tau^{I-j+1})$, his expected utility will be equal to
\[
u^i(j):= \frac{\rho_i \mu j + B j}{r+\widehat\lambda_j^{\rm SH}} + \frac{\widehat\lambda_j^{SH}}{r+\widehat\lambda_j^{\rm SH}} \widehat{R}^i_{j-1}.
\]
The process $h^{1,i}(0)$ associated to this action is given by
\[
h^{1,i}(j):= u^i(j) - \widehat{R}^i_{j-1} = \frac{\rho_i \mu j + B j}{r+\widehat\lambda_j^{\rm SH}} - \frac{r}{r+\widehat\lambda_j^{\rm SH}} \widehat{R}^i_{j-1},
\]
and it is incentive compatible to not monitor any loan in $(\tau^{I-1},\tau)$ if and only if 
\begin{align*}
h^{1,i}(j) < b_j  \iff &  \rho_i \mu j + B j - r \widehat{R}^i_{j-1} < b_j(r+\widehat\lambda_j^{\rm SH})
\iff  \rho_i \mu j - r \widehat{R}^i_{j-1} < b_j(r+\widehat\lambda_j^0).
\end{align*}

\end{proof}

\subsection{Unbounded relationship between utilities of the banks}\label{sec:extension}

A possible extension of our model could rely on a further differentiation between the work of the two banks, {\it i.e.} when both banks work, the good one would be more efficient in the sense that the associated default intensity is strictly smaller than that of the bad bank. We can do this by introducing an extra type variable with values $m_g$ and $m_b$, with $m_g<m_b$ and modelling the hazard rate of a non-defaulted loan $j$ at time $t$, when it is monitored by a bank of type $i$ as $\alpha_t^{j,i}=\alpha_{I-N_t}(1+e_t^{j,i}m_i+(1-e_t^{j,i})\varepsilon).$ Then, if the banks fails to monitor $k$ loans, the default intensity will be $$\lambda_t^{k,i}=\alpha_{I-N_t}((I-N_t)(1+m_i)+(\varepsilon-m_i)k_t).$$ We did not consider such a situation because it creates a degeneracy, in the sense that the credible set no longer has an upper boundary. Indeed, consider for simplicity the case $j=1$ and take any $u_0^b\geq b_1^j$, $t^{\star}\geq0$ and choose the corresponding payment 
$$
c(t^{\star}):=u_0^b\frac{e^{(r+\widehat\lambda_1^{0,b})t^{\star}}(r+\widehat\lambda_1^{0,b})}{\rho_b}\geq\frac{b_1^b(r+\widehat\lambda_1^{0,b})}{\rho_b}\geq\frac{b_1^g(r+\widehat\lambda_1^{0,g})}{\rho_g}.
$$  

Then, under the contract with delay and constant payments given by $dD_s=c(t^{\star}) 1_{\{s>t^{\star}\}}ds$ the bad bank will always work and her value function will be equal to $u_0^b$ (see section \ref{contractswithdelay}). Notice that the optimal strategy for the good bank will be also to work at every time. Then, her value function is equal to 
$$
u_0^g:=u_0^b\frac{\rho_g(r+\widehat\lambda_1^{0,b})}{\rho_b(r+\widehat\lambda_1^{0,g})}e^{(\widehat\lambda_1^{0,b}-\widehat\lambda_1^{0,g})t^{\star}}.
$$ 
We see that by increasing $t^{\star}$, it is possible to make $u_0^g$ as big as we want and keep fixed the value of the bad bank. This means that the credible set will have no upper boundary in the interval $[b_1^b,\infty)$. Moving to any $j>1$ and considering short-term contracts with delay, with $\theta=0$ and the analogous payments, we observe the same degeneracy and the credible set will have no upper boundary in the interval $[b_j^b,\infty)$.

\vspace{0.5em}
One way out of this problem would be to consider different discount rates for the banks, $r_b$ and $r_g$, and assume that the default intensities are such that $\lambda_t^{0,b}+r_b\leq \lambda_t^{0,g}+r_g$. However, this complicates things a lot because simple statements that we expect to be true are very difficult to prove or need assumptions on the parameters of the problem. For example the inequality $U_t^g(D,\theta) \geq U_t^b(D,\theta)$ is no longer clear at all. We therefore refrained from going into that direction, and leave it for potential future research.

%\newpage
%\begin{align*}
%\ds\sum_{i=N_t+1}^{I-1} \frac{B(I-i)}{r+\widehat\lambda_{I-i}^{SH}} \E^{\P^{k^{SH}}}\lrs{\left. e^{-r(\tau_i-t)} \right| \mathcal G_t} = & \ds\sum_{i=N_t(\omega)+1}^{I-1} \int_\Omega  \frac{B(I-i)}{r+\widehat\lambda_{I-i}^{SH}}e^{-r(\tau_i(\widetilde\omega)-t)}   d\P_{t,\omega}^{SH}(\widetilde\omega) \\
%= & \ds\sum_{i=N_t(\omega)+1}^{I-1}  \int_{0}^{\infty}   \frac{ B(I-i)e^{-rx}}{r+\widehat\lambda_{I-i}^{SH}}   f_{\tau_i}(x)dx
%\end{align*}
%so
%\begin{align*}
%& \frac{B(I-N_t(\omega))}{r+\widehat\lambda_{I-N_t}^{SH}} - u^{b,c} + \ds\sum_{i=N_t+1}^{I-1} \int_{0}^{\infty}  \frac{B(I-i)}{r+\widehat\lambda_{I-i}^{SH}}e^{-rx}   f_{\tau_i}(x)dx \\
%= &  \ds\frac{B(I-N_t)}{r+\widehat\lambda_{I-N_t}^{SH}} + \ds\sum_{i=N_t+1}^{I-1} \frac{B(I-i)}{r+\widehat\lambda_{I-i}^{SH}} \E^{\P^{k^{SH}}}\lrs{\left. e^{-r(\tau_i-t)} \right| \mathcal G_t} - u^{b,c} \\
%= & C(I-N_t) - u^{b,c} \geq 0
%\end{align*}

\end{document}